\newtheorem{proposition}{Proposition}
\newtheorem{lemma}[proposition]{Lemma}
\newtheorem{corollary}[proposition]{Corollary}
\newtheorem{theorem}[proposition]{Theorem}
\theoremstyle{definition}
\newtheorem{definition}[proposition]{Definition}
\newtheorem{example}[proposition]{Example}
\theoremstyle{remark}
\newtheorem{remark}[proposition]{Remark}
\numberwithin{proposition}{section}
\newcommand\R{{\ensuremath {\mathbb R} }}
\newcommand\C{{\ensuremath {\mathbb C} }}
\newcommand\N{{\ensuremath {\mathbb N} }}
\newcommand\Z{{\ensuremath {\mathbb Z} }}
\renewcommand\phi{\varphi}
\renewcommand\iint{\int \hspace{-0.3cm} \int}
\renewcommand\le{\leqslant}
\renewcommand\ge{\geqslant}
\renewcommand\epsilon{\varepsilon}
\renewcommand\hat{\widehat}
\renewcommand\tilde{\widetilde}
\newcommand{\gH}{\mathfrak{H}}
\newcommand{\gS}{\mathfrak{S}}
\newcommand{\wto}{\rightharpoonup}
\newcommand{\cD}{\mathcal{D}}
\newcommand{\cH}{\mathcal{H}}
\newcommand\ii{{\ensuremath {\infty}}}
\newcommand{\cK}{\mathcal{K}}
\newcommand{\cO}{\mathcal{O}}
\newcommand{\cL}{\mathcal{L}}
\newcommand{\cF}{\mathcal{F}}
\newcommand\1{{\ensuremath {\mathds 1} }}
\newcommand{\hcl}{\cH^{{\rm cl}}}
\newcommand{\dps}{\displaystyle}
\newcommand{\opw}{{\rm Op}_{\rm W}}
\newcommand{\oph}{{\rm Op}_{\rm H}}
\newcommand\cS{\mathcal{S}}
\newcommand\ccS{\mathscr{S}}
\DeclareMathOperator{\Div}{div}
\DeclareMathOperator{\ran}{ran}
\DeclareMathOperator{\supp}{supp}
\DeclareMathOperator{\tr}{Tr}
\title{The Hartree and Vlasov equations at positive density}
\author[M. Lewin]{Mathieu Lewin}
\address{CNRS \& CEREMADE, Universit\'e Paris-Dauphine, PSL University, 75016 Paris, France} 
\email{mathieu.lewin@math.cnrs.fr}
\author[J. Sabin]{Julien Sabin}
\address{Laboratoire de Math\'ematiques d'Orsay, Univ. Paris-Sud, CNRS, Universit\'e Paris-Saclay, 91405 Orsay, France} 
\email{julien.sabin@math.u-psud.fr}
\date{\today}
\begin{document}

\begin{abstract}
We consider the nonlinear Hartree and Vlasov equations around a translation-invariant (homogeneous) stationary state in infinite volume, for a short range interaction potential. For both models, we consider time-dependent solutions which have a finite relative energy with respect to the reference translation-invariant state. We prove the convergence of the Hartree solutions to the Vlasov ones in a semi-classical limit and obtain as a by-product global well-posedness of the Vlasov equation in the (relative) energy space. 
\end{abstract}

\maketitle

\tableofcontents

\newpage
\section{Introduction}

In this paper we study two mean-field-type models describing the dynamics of a gas of infinitely many interacting particles. These are the Hartree model where the particles are quantum and the kinetic Vlasov model in which they are purely classical. In both cases we give ourselves a reference stationary state which is translation-invariant (homogeneous) in infinite volume. Our goal is to give a proper mathematical meaning to the dynamics of the infinite gas when it is placed at initial time in a state which is a \emph{local}, but not necessarily small, perturbation of the reference stationary state. By ``local'' we mean that it has a finite relative (free) energy with respect to the reference state. As we will see, this implies in particular that the spatial distribution of particles is an $L^p$ perturbation of a constant (the positive density of the reference state), which is a sort of locality.

In a previous work~\cite{LewSab-13a} we have proved the global well-posedness of the Hartree equation, in the relative (free) energy space, for short range smooth-enough interactions in dimensions $d=1,2,3$. This was extended to delta interaction potentials in~\cite{CheHonPav-16}. The asymptotic stability of (linearly stable) stationary states was established in dimension $d=2$ in~\cite{LewSab-13b} and in higher dimensions in~\cite{CheHonPav-17,ColSuz-19_ppt}.

In the context of the Vlasov equation, the analogous asymptotic stability of homogeneous states is usually called \emph{Landau damping}~\cite{Landau-46}. This was proved in finite volume in~\cite{MouVil-11,BedMouMas-16}, and recently established in infinite volume in~\cite{BedMouMas-18,HanNguRou-19}. As explained in \cite[Remark 2.2]{BedMouMas-18}, the global well-posedness of the Vlasov equation for any (possibly large) local perturbation of the reference stationary state has not yet been proved in infinite volume. The main difficulty is to handle the reference state which has an infinite number of particles and an infinite energy (which is not the case in finite volume). 

In this paper we are able to show the \emph{global well-posedness of the Vlasov equation in the relative (free) energy space}, using our previous results about the Hartree equation~\cite{LewSab-13a}. Our strategy is to pass to the semi-classical limit in the Hartree equation in order to construct solutions to the Vlasov problem. This requires a uniform control of the Hartree solution in the semi-classical parameter $\hbar$ which, in our case, follows from the conservation of the relative (free) energy. 

\medskip

Let us now describe more precisely the two models we consider in this paper. In the Hartree model the state of the system is described by a \emph{density matrix} $\gamma$ which is a non-negative and bounded self-adjoint operator acting on $L^2(\R^d)$, where $d$ is the space dimension. The number of particles is given by the trace $\tr(\gamma)$ which will always be infinite for us. A translation-invariant state corresponds to an operator $\gamma$ commuting with space translations, that is, a Fourier multiplier $\gamma=g(-i\hbar\nabla)$ for some bounded velocity distribution $g:\R^d\to\R_+$. This means that $\widehat{\gamma u}(\xi)=g(\hbar\xi)\widehat{u}(\xi)$ for all $u\in L^2(\R^d)$. The parameter $\hbar>0$ plays the role of Planck's constant but will be taken to zero in a semi-classical limit. The density of particles $\rho_\gamma:\R^d\to\R_+$ is formally defined by $\rho_\gamma(x)=\gamma(x,x)$ where $\gamma(x,y)$ is the integral kernel of $\gamma$, when it exists. A Fourier multiplier $\gamma=g(-i\hbar\nabla)$ has the constant density
\begin{equation}
\forall x\in\R^d,\ g(-i\hbar\nabla)(x,x)=\frac1{(2\pi\hbar)^d}\int_{\R^d}g(v)\,dv 
 \label{eq:density_g_inabla}
\end{equation}
which is finite under the assumption that $g$ is integrable. The time evolution of a system of particles interacting with a pair potential $w$ is governed by the Hartree equation
\begin{equation}\label{eq:hartree}
 \begin{cases}
  i\hbar\partial_t\gamma=\big[-\hbar^2\Delta+\hbar^dw\ast\rho_\gamma,\gamma(t)\big],\\
  \gamma_{|t=0}=\gamma_0.
 \end{cases}
\end{equation}
Here $[\cdot,\cdot]$ denotes the commutator.  That the nonlinear interaction is multiplied by the small parameter $\hbar^d$ is typical of the mean-field scaling and compatible with the fact that the density of $g(-i\hbar\nabla)$ in~\eqref{eq:density_g_inabla} is of order $\hbar^{-d}$. Note that the translation-invariant states $\gamma=g(-i\hbar\nabla)$ are all stationary states for~\eqref{eq:hartree}, since they commute both with the Laplacian $-\Delta$ and the constant $\hbar^dw\ast\rho_\gamma=(2\pi)^{-d}\int_{\R^d}w\int_{\R^d}g$. The operator $\gamma_0$ is the one-body density matrix of the system at the initial time $t=0$, which is taken so as to have a finite relative energy with respect to such a reference state 
$$\gamma_{\rm ref}=g(-i\hbar\nabla).$$ 
The precise concept of relative energy will be discussed later. 

In the corresponding classical Vlasov model, the system is described by a non-negative bounded function $m$ on the phase space $\R^d\times\R^d$, and its spatial density is given by 
$$\rho_m(x)=\frac1{(2\pi)^d}\int_{\R^d}m(x,v)\,dv.$$
Translation-invariant states satisfy $m(x+\tau,v)=m(x,v)$ for all $\tau$ and therefore only depend on the velocity $v$. The classical equivalent of the quantum homogeneous state $g(-i\hbar \nabla)$ is simply the function $(x,v)\mapsto g(v)$. Its density is the same as in the quantum case (up to the factor $\hbar^{-d}$), namely $(2\pi)^{-d}\int_{\R^d}g(v)\,dv$. The time evolution is governed by the Vlasov equation
\begin{equation}\label{eq:vlasov}
 \begin{cases}
  \partial_t m + 2v\cdot\nabla_x m -\nabla_x\big(w\ast \rho_m\big)\cdot\nabla_v m=0,\\
  m_{|t=0}=m_0.
 \end{cases}
\end{equation}
Here again we think of $m_0$ as a local perturbation of some reference state 
$$m_{\rm ref}(x,v)=g(v).$$
Notice that the quantum evolution \eqref{eq:hartree} and the classical one \eqref{eq:vlasov} are formally very close, in the sense that \eqref{eq:vlasov} can be rewritten as 
$$\partial_tm=-\{H,m\}$$
where $H$ is the classical Hamiltonian $H(x,v)=|v|^2+w\ast \rho_m$ and $\{\cdot,\cdot\}$ denotes the Poisson bracket $\{a,b\}:=\nabla_v a\cdot\nabla_x b-\nabla_x a\cdot\nabla_v b$. Hence, passing from \eqref{eq:hartree} to \eqref{eq:vlasov} amounts formally to replace the commutator $(i/\hbar)[H_\hbar,\cdot]$ by the Poisson bracket $\{H,\cdot\}$ where $H_\hbar=-\hbar^2\Delta+\hbar^dw\ast\rho_\gamma$ is the mean-field quantum Hamiltonian, provided that $\rho_\gamma$ is of order $\hbar^{-d}$, as expected. 

Relevant examples of translation-invariant states $\gamma_{\rm ref}=g(-i\hbar\nabla)$ and $m_{\rm ref}(x,v)=g(v)$ are those given by the following choices of functions $g$.
\begin{itemize}[leftmargin=15pt]
 \item Fermi gas at zero temperature and chemical potential $\mu>0$:
 $$g(v)=\1(|v|^2\le\mu).$$
 \item Fermi gas at temperature $T>0$ and chemical potential $\mu\in\R$:
 $$g(v)=\frac{1}{e^{\frac{1}{T}(|v|^2-\mu)}+1}.$$
 \item Bose gas at temperature $T>0$ and chemical potential $\mu<0$:
 $$g(v)=\frac{1}{e^{\frac{1}{T}(|v|^2-\mu)}-1}.$$
 \item Boltzmann gas at temperature $T>0$ and chemical potential $\mu\in\R$:
 $$g(v)=e^{-\frac{1}{T}(|v|^2-\mu)}.$$
\end{itemize}
Our main results cover all these cases, except the first one where the singularity of $g$ creates some complications. A key property of all these cases is their variational interpretation as \emph{formal} minimizers of the \emph{free energy}
$$\gamma\mapsto \tr\big((-\hbar^2\Delta-\mu)\gamma-T S(\gamma)\big)$$
in the quantum case, respectively 
$$m\mapsto \frac1{(2\pi)^d}\iint_{\R^d\times\R^d}\Big((|v|^2-\mu)m(x,v)-TS\big(m(x,v)\big)\Big)\,dx\,dv$$
in the classical case, where the entropy $S$ is given by
\begin{equation}
S(m)=\begin{cases}
   -m\log(m)-(1-m)\log(1-m)&\text{(Fermi gas),}\\
   -m\log(m)+(1+m)\log(1+m)&\text{(Bose gas),}\\
   -m\log(m)+m&\text{(Boltzmann gas).}\\
   \end{cases} 
   \label{eq:def_entropy_intro}
\end{equation}
Of course, the minimum of the free energy is $-\ii$ in all cases. 

It is convenient to resort to the concept of \emph{relative entropy} which is the formal difference of the free energy of a state and that of the translation-invariant state. After dividing by $T$, in the quantum case this is just (formally) equal to
$$\cH(\gamma,\gamma_{\rm ref}):=\tr\Big(\frac{-\hbar^2\Delta-\mu}{T}\big(\gamma-\gamma_{\rm ref}\big)- S(\gamma)+S(\gamma_{\rm ref})\Big).$$
It is possible to give a clear mathematical meaning to $\cH(\gamma,\gamma_{\rm ref})$ for all possible $\gamma$, allowing the value $+\ii$~\cite{LewSab-13,DeuHaiSei-15}. This naturally defines a class of states such that $\cH(\gamma,\gamma_{\rm ref})$ is finite, which is our definition of the \emph{(relative) energy space}. In the defocusing case, this space is stable under the Hartree flow, with the relative entropy $\cH(\gamma(t),\gamma_{\rm ref})$ staying uniformly bounded for all times. This is the main property that can be used to obtain global solutions~\cite{LewSab-13a}.

In the classical case the relative entropy $\hcl(m,m_{\rm ref})$ is defined similarly by
\begin{multline*}
  \hcl(m,m_{\rm ref})=\\
  \frac{1}{(2\pi)^d}\iint_{\R^d\times\R^d}\left(\frac{|v|^2-\mu}{T}\big(m(x,v)-g(v)\big)-S\big(m(x,v)\big)+S\big(g(v)\big)\right)\,dx\,dv.
\end{multline*}
In the previous formula, the integrand is a non-negative function since $S'(g(v))=(|v|^2-\mu)/T$ and $S$ is concave. Therefore, $\hcl(m,m_{\rm ref})$ always makes sense in $[0,+\ii]$. Again, the relevant (relative) energy space in the classical case is the set of $m$ such that $\hcl(m,m_{\rm ref})<+\ii$. In this article, we show that the Vlasov equation is globally well-posed in this space for short-range defocusing interactions, from the corresponding result in the quantum case of~\cite{LewSab-13a} and performing the semi-classical limit $\hbar\to0$.

In statistical physics, translation-invariant equilibrium states describe fluid phases. At given values of thermodynamic parameters (temperature and chemical potential for instance), a fundamental question is to determine whether the system is in a fluid phase or in another phase such as a solid. Hence, studying the stability of fluid equilibrium states under perturbations breaking translation-invariance is a first step towards an answer to this question. In that respect, mean-field models (like the Hartree and Vlasov above) are very special. Indeed, the class of translation-invariant equilibrium states is very large and independent of the interaction potential $w$! The only effect of the interaction is to renormalize the chemical potential in the manner $\mu\to\mu-\rho\int_{\R^d} w$. Another feature of mean-field models is that the (density-density) linear response under local perturbations can be computed easily. This leads to the linear stability criterion which bears the name of Penrose~\cite{Penrose-60} for the Vlasov equation and has a well-known quantum analogue (see for instance \cite[Sec. 4.7]{GiuVig-05} and \cite[Ch. IV, Sec. 2.4]{LunMar-83}). Extending some of our results beyond mean-field theory is a challenging open problem.

\medskip

The paper is organized as follows. In the next section, we summarize the main findings of \cite{LewSab-13a} and state our main results about the semi-classical limit of the Hartree equation as well as the global well-posedness of the Vlasov equation in the energy space. In Section \ref{sec:sc-limit}, we follow the strategy of \cite{LioPau-93} to perform the semi-classical limit, which is based on the study of the Wigner function, a natural phase-space distribution associated with the density matrix $\gamma(t)$. The semi-classical limit of the Wigner distribution satisfies a transport equation with a potential that remains to be identified with the mean-field one. This is done in several steps. In Section \ref{sec:BL}, we prove Berezin-Lieb-type inequalities \cite{Berezin-72,Lieb-73} which relate the quantum relative entropy and the classical relative entropy of the Husimi transform. While they are well-known in the context of the \emph{entropy} (see for instance the review \cite{Simon-81}), Berezin-Lieb inequalities seem to be less understood in the context of \emph{relative} entropy. We mention several open problems in this direction. These Berezin-Lieb inequalities imply that the Husimi transform has a classical relative entropy which is uniformly bounded in $\hbar$. In Section~\ref{sec:high-freq}, we exploit this uniform boundedness to derive high-velocity estimates on the Husimi transform which enable us to identify the limiting potential in the Vlasov equation as the mean-field one. At this step we have constructed one solution to the Vlasov equation to the semi-classical limit. In Section \ref{sec:uniqueness}, we establish uniqueness of solutions to the Vlasov equations in the relative energy space. To this end, we follow the strategy of Loeper~\cite{Loeper-06} and use optimal transport methods. In our positive density case, we have to adapt these techniques to \emph{infinite mass measures}. This may be of independent interest, and is quickly explained in Appendix \ref{app:opt-transp}. Finally, in Section \ref{sec:wp-vlasov}, we solve the Cauchy problem for the Vlasov equation in the relative energy space by constructing a suitable sequence of quantum states converging towards the initial datum of the Vlasov equation.

\bigskip

\noindent{\textbf{Acknowledgements:}} This project has received funding from the European Research Council (ERC) under the European Union's Horizon 2020 research and innovation programme (grant agreement MDFT No 725528 of M.L.). J.S. is
supported by ANR DYRAQ ANR-17-CE40-001.

\section{Main results}

Since we want to study the limit as $\hbar\to0$ of solutions to \eqref{eq:hartree}, it is natural to ask that these solutions exist up to a time that is independent of $\hbar$. The most natural setting where it is the case is the context of global solutions (infinite time of existence). Global solutions are usually obtained either in a perturbative regime where the nonlinearity remains small for all times, or using coercive conservation laws that prevent the solution to explode in finite time in adequate function spaces. In the positive density context, the standard conservation laws (number of particles, energy) are all infinite. They have to be replaced by relative quantitites measured with respect to the infinite reference state $\gamma_{\rm ref}$, a concept that we recall below. In this setting, the existence of global solutions (under a defocusing condition on the interaction potential) to the Hartree equation \eqref{eq:hartree} has been proved in \cite{LewSab-13a}. Let us also mention \cite{CheHonPav-16} for an extension of these results to a Dirac interaction potential, in the particular case of the Fermi gas at zero temperature.

In order to state the main theorem of \cite{LewSab-13a}, which is the starting point of our analysis, we need to recall its functional setting. In particular, we start by recalling the important concept of relative energies developed in \cite{LewSab-13a}.

\subsection{Relative entropies}

As we recalled in the introduction, the Fermi, Bose, and Boltzmann gas at positive temperature have the variational property to minimize a relative entropy functional\footnote{The Fermi gas at zero temperature also has a variational interpretation that we used in \cite{LewSab-13a}, although a bit different. Here we only treat the positive temperature case.}. In \cite{LewSab-13a}, we considered a large class of translation-invariant states $\gamma_{\rm ref}$ which all share the variational property to minimize some relative entropy functional. Hence, let us first define these entropy functionals. 

\begin{definition}[Admissible entropy functional]\label{def:entropy}
 Let $d\ge1$. We say that $S:[0,1]\to\R_+$ is an \emph{admissible entropy functional} on $\R^d$ if:
 \begin{enumerate}
  \item $S$ is continuous on $[0,1]$ and differentiable on $(0,1)$;
  \item $-S'$ is operator monotone on $(0,1)$ and not constant (in particular, $S$ is $C^\ii$ on $(0,1)$ and $S$ is concave);
  \item $\lim_{m\to0}S'(m)=+\ii$ and $\lim_{m\to 1}S'(m)\leq0$. 
  \item $S''(S'_+)^{d/2-1}\in L^1(0,1)$.
 \end{enumerate}
\end{definition}
As we proved in \cite{LewSab-13}, the first two properties of the definition allow to define the \emph{relative entropy} 
\begin{equation}\label{eq:relative-entropy}
  \cH_S(A,B):=-\tr\Big(S(A)-S(B)-S'(B)(A-B)\Big)\in[0,+\ii]
\end{equation}
for any operators $0\le A,B\le1$ on any Hilbert space $\gH$ (see \cite{LewSab-13,DeuHaiSei-15} for the correct interpretation of formula \eqref{eq:relative-entropy}). The last two properties (which are adapted to the case $\gH=L^2(\R^d)$) allow to consider the translation invariant state $\gamma_{\rm ref}=(S')^{-1}(-\hbar^2\Delta)$ with a finite density. The main physical examples of admissible entropy functionals $S$ have already been mentioned in~\eqref{eq:def_entropy_intro} and will be denoted in the following by 
\begin{equation}\label{eq:physical-entropies}
\begin{cases}
   S_0(m)=-m\log(m)+m,\\
   S_b(m)=-m\log(m)+(1+m)\log(1+m),\\
   S_f(m)=-m\log(m)-(1-m)\log(1-m).
\end{cases} 
\end{equation}
In these cases, we have 
 $$(S_0')^{-1}(y)=e^{-y},\qquad (S_b')^{-1}(y)=\frac{1}{e^y-1},\qquad (S_f')^{-1}(y)=\frac{1}{e^y+1},$$
where one recognizes the Boltzmann, Bose-Einstein, and Fermi-Dirac distributions. Notice that $S_b$ is not an admissible entropy functional since $S_b'(1)=\log2>0$. The standard way to correct this slight issue is to introduce a chemical potential, as explained in the following remark.

\begin{remark}[Changing the temperature and the chemical potential]\label{rk:temp-mu}
 Introducing a temperature $T>0$ or a chemical potential $\mu\in\R$ amounts to changing $S(m)$ to $TS(m)+\mu m$. The properties of Definition \ref{def:entropy} are then preserved, except perhaps the fact that $\lim_{m\to 1}S'(m)\leq0$ which introduces constraints on $T$ and $\mu$.
\hfill$\diamond$\end{remark}

\begin{remark}
 We only consider entropy functionals defined on $[0,1]$, meaning that we will only consider solutions $\gamma$ to \eqref{eq:hartree} such that $0\le\gamma\le1$. By scaling, we can consider solutions satisfying $0\le\gamma\le M$ (notice that such a constraint is preserved along the Hartree flow), with perhaps bounds that are not optimal in $M$. In order not to overload the exposition, we stick to the case $M=1$.
\hfill$\diamond$\end{remark}

\begin{remark}
 Another interesting example of an admissible entropy functional is $S(m)=m^p-am$ for some $\max(0,1-2/d)<p<1$ and $a\ge p$, in which case 
 $$(S')^{-1}(-\hbar^2\Delta)=\left(\frac{p}{a-\hbar^2\Delta}\right)^\frac{1}{1-p}.$$
\hfill$\diamond$\end{remark}

For any operator $B$, the functional $A\mapsto\cH_S(A,B)$ is non-negative and uniquely minimized at $A=B$. There are obviously many functionals satisfying these properties; in our case the choice of $\cH_S$ is motivated by two additional properties when $\gH=L^2(\R^d)$ and $B=\gamma_{\rm ref}=(S')^{-1}(-\hbar^2\Delta)$: (i) it plays the role of a kinetic energy, in the sense that the sum of $\cH_S$ and the potential energy is conserved along the Hartree flow and (ii) it is sufficiently coercive so that $\cH_S(\gamma,\gamma_{\rm ref})$ controls the density $\rho_\gamma - \rho_{\gamma_{\rm ref}}$ appearing in the mean-field potential. As one can guess, these properties are crucial for the analysis of the well-posedness of the equation.

As we said, the relative entropy $\cH_S$ defined in~\eqref{eq:relative-entropy} plays the role of a kinetic energy and thus determines the energy space in which we will solve the Hartree equation:

\begin{definition}[Semi-classical free energy space]
 Let $\hbar>0$. Let $S$ be an admissible entropy functional. Define $\gamma_{\rm ref}:=(S')^{-1}(-\hbar^2\Delta)$. The \emph{semi-classical free energy space} is 
 $$\cK_S^\hbar:=\Big\{0\le\gamma=\gamma^*\le 1,\ \cH_S(\gamma,\gamma_{\rm ref})<+\ii\Big\}.$$
 \end{definition}

 For any $\gamma\in\cK_S^\hbar$, we define its \emph{semi-classical free energy} by
 \begin{multline}\label{eq:free-energy}
    \cF_S^\hbar(\gamma,\gamma_{\rm ref})\\=\cH_S(\gamma,\gamma_{\rm ref})+\frac{\hbar^d}{2}\int_{\R^d}\int_{\R^d}\rho_{\gamma-\gamma_{\rm ref}}(x)w(x-y)\rho_{\gamma-\gamma_{\rm ref}}(y)\,dx\,dy.
 \end{multline}
As we will see, the semi-classical free energy is conserved along the Hartree flow. For the potential energy (the second term in \eqref{eq:free-energy}) to be well-defined, one needs some assumptions on $w$ and some integrability of $\rho_{\gamma-\gamma_{\rm ref}}$ when $\gamma\in\cK_S^\hbar$. This last condition is ensured by a Lieb-Thirring inequality at positive density, that was first introduced and proved in \cite{FraLewLieSei-12}:

\begin{theorem}[Lieb-Thirring inequality for the relative entropy {\cite[Thm. 7 \& Lem. 9]{LewSab-13a}}]\label{thm:LT-S}
 Let $d\ge1$ and $\hbar>0$. Let $S$ be an admissible entropy functional and $\gamma_{\rm ref}=(S')^{-1}(-\hbar^2\Delta)$. Then, there exists $K_{{\rm LT,S}}>0$ independent of $\hbar$ such that for any $\gamma\in\cK_S^\hbar$, we have 
 \begin{equation}\label{eq:LT-S}
    \cH_S(\gamma,\gamma_{\rm ref})\ge K_{{\rm LT,S}}\begin{cases}
	  \dps \hbar^{-d}\int_{\R^d}\delta\Big(\hbar^d\rho_\gamma(x),\hbar^d\rho_{\gamma_{\rm ref}}(x)\Big)\,dx & {\rm if}\ d\ge2,\\
	  \\
	  \dps \hbar^{-d}\int_\R|\hbar^d\rho_{\gamma-\gamma_{\rm ref}}(x)|^2\,dx & {\rm if}\ d=1,
       \end{cases}
 \end{equation}
 where we used the notation
\begin{equation}
\delta(\rho,\rho_0):=\rho^{1+2/d}-\rho_0^{1+2/d}-\frac{d+2}{d}\rho_0^{2/d}(\rho-\rho_0),\qquad \rho,\rho_0\ge0.
\label{eq:def_delta}
\end{equation}
 \end{theorem}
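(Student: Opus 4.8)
The plan is to reduce the statement to the positive-density Lieb--Thirring inequality of \cite{FraLewLieSei-12}, phrased here for a general admissible entropy functional. \emph{Scaling.} First I would conjugate by the dilation $u\mapsto\hbar^{d/2}u(\hbar\,\cdot)$, which turns $-\hbar^2\Delta$ into $-\Delta$ and $\gamma_{\rm ref}$ into $(S')^{-1}(-\Delta)$, leaves $\cH_S$ unchanged (it is defined through a trace), and replaces $\rho_\gamma(x)$ by $\hbar^d\rho_\gamma(\hbar x)$; after the substitution $x=\hbar y$ on the right-hand side one checks that the inequality for general $\hbar>0$ is equivalent to the case $\hbar=1$, where now $\rho_{\gamma_{\rm ref}}$ is the finite constant $\rho_0:=(2\pi)^{-d}\int_{\R^d}(S')^{-1}(|\xi|^2)\,d\xi$ (finite by Definition~\ref{def:entropy}(4)). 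So assume $\hbar=1$. Since $\gamma_{\rm ref}$ is a Fourier multiplier, functional calculus gives $S'(\gamma_{\rm ref})=-\Delta$, hence by \eqref{eq:relative-entropy}
\[
\cH_S(\gamma,\gamma_{\rm ref})=\tr\big[(-\Delta)(\gamma-\gamma_{\rm ref})\big]+\tr\big[S(\gamma_{\rm ref})-S(\gamma)\big],
\]
read as one convergent relative trace: $\cH_S(\gamma,\gamma_{\rm ref})$ is the relative free energy at temperature $1$ of $\gamma$ with respect to the minimiser $\gamma_{\rm ref}$ of $\gamma\mapsto\tr[(-\Delta)\gamma-S(\gamma)]$.

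\emph{Density-functional lower bound.} The heart of the argument is a local Thomas--Fermi-type bound obtained by minimising this relative free energy over all states with a given density profile: I would prove
\[
\cH_S(\gamma,\gamma_{\rm ref})\ \ge\ \int_{\R^d}\Big(f\big(\rho_\gamma(x)\big)-f(\rho_0)-f'(\rho_0)\big(\rho_\gamma(x)-\rho_0\big)\Big)\,dx,
\]
where $f$ is the free-energy density of the homogeneous gas, so that $f(\rho_0)$ and $f'(\rho_0)$ are exactly the bulk values — this is what makes the subtraction of the linear term (which comes from the $S'(\gamma_{\rm ref})(\gamma-\gamma_{\rm ref})$ term in $\cH_S$) legitimate. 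To establish it one splits $\R^d$ into the region where $\rho_\gamma(x)$ is comparable to $\rho_0$, treated by a second-order expansion of the free energy around $\gamma_{\rm ref}$ whose Hessian is a translation-invariant non-negative quadratic form in $\gamma-\gamma_{\rm ref}$ that, restricted to the density, is bounded below by $c\,\|\rho_{\gamma-\gamma_{\rm ref}}\|_{L^2}^2$ via an explicit Fourier computation; and the region where $\rho_\gamma(x)\gg\rho_0$, treated by the standard ($0\le\gamma\le1$) Lieb--Thirring inequality $\tr(-\Delta\gamma)\gtrsim\int\rho_\gamma^{1+2/d}$ applied to a localised piece of $\gamma$, the bounded reference contributions being absorbed. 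The delicate part — and the point I expect to be the main obstacle — is the uniform coercivity of the Hessian in the first regime, together with the bookkeeping that glues the two regions while keeping a single $\hbar$-independent constant; this is exactly the content of \cite{FraLewLieSei-12}.

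\emph{Conclusion.} It then remains the elementary but slightly tedious convexity fact that, for $d\ge2$,
\[
f(\rho)-f(\rho_0)-f'(\rho_0)(\rho-\rho_0)\ \ge\ c\,\delta(\rho,\rho_0),\qquad \rho\ge0,
\]
with $\delta$ as in \eqref{eq:def_delta}, which plugged into the previous display gives the claim when $d\ge2$. In $d=1$ I would keep only the weaker $L^2$ bound $\cH_S(\gamma,\gamma_{\rm ref})\gtrsim\int_\R|\rho_{\gamma-\gamma_{\rm ref}}|^2$, which is all the statement asks for and all that is needed later to control the mean-field potential $w\ast\rho_{\gamma-\gamma_{\rm ref}}$. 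The only properties of $S$ used throughout are those of Definition~\ref{def:entropy}: concavity and operator monotonicity of $-S'$ (to make sense of $\cH_S$ and of the Klein-type trace inequalities behind the density-functional bound), the blow-up $S'(0^+)=+\infty$ (which, as in the usual Lieb--Thirring inequality, prevents concentration of the density), and the integrability condition (4) (which guarantees $\rho_0<\infty$). Alternatively, a general admissible $S$ can be reduced to the explicit cases $S_0,S_b,S_f$ through the L\"owner integral representation of the operator-monotone function $-S'$, and then \cite{FraLewLieSei-12} can be invoked directly.
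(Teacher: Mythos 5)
Your proposal matches the paper's route exactly: the paper does not prove the $\hbar=1$ inequality anew but cites \cite[Thm.~7 \& Lem.~9]{LewSab-13a} (which itself rests on \cite{FraLewLieSei-12}) and observes, in the remark following the statement, that the general-$\hbar$ version follows by a simple scaling. Your unitary dilation $u\mapsto\hbar^{d/2}u(\hbar\,\cdot)$ is the right one: it sends $-\hbar^2\Delta\mapsto-\Delta$, $\rho_\gamma(x)\mapsto\hbar^d\rho_\gamma(\hbar x)$, and leaves $\cH_S$ invariant, so after the change of variables $x=\hbar y$ both sides of \eqref{eq:LT-S} reduce to their $\hbar=1$ forms — precisely the scaling the paper alludes to. The extra material you supply on how the $\hbar=1$ inequality is proved (Thomas--Fermi lower bound, convexity comparison with $\delta$) is not part of the paper's proof, which simply invokes the cited references, but it correctly describes the content of those references and does not affect the correctness of the argument.
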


 \begin{remark}
  In \cite{LewSab-13a}, Theorem \ref{thm:LT-S} was proved for $\hbar=1$. A simple scaling argument gives the version we stated. 
 \hfill$\diamond$\end{remark}
 
 \begin{remark}
 By the Lieb-Thirring inequality for the relative entropy \eqref{eq:LT-S}, the free energy \eqref{eq:free-energy} is well-defined in the energy space $\cK_S^\hbar$ if $w\in L^1(\R^d)$ when $d=1,2$ and if $w\in (L^1\cap L^{(d+2)/4})(\R^d)$ when $d\ge3$.
\hfill$\diamond$\end{remark}

It is useful to introduce one last property of the relative entropy. The Lieb-Thirring inequality \eqref{eq:LT-S} states that the relative entropy controls the relative density $\rho_\gamma-\rho_{\gamma_{\rm ref}}$. The next result shows that it also controls the relative operator $\gamma-\gamma_{\rm ref}$. 

\begin{lemma}[Klein inequality {\cite[Lem. 8]{LewSab-13a}}]
\label{lem:klein-LT-h}
 Let $d\ge1$ and $\hbar>0$. Let $S$ be an admissible entropy functional and $\gamma_{\rm ref}=(S')^{-1}(-\hbar^2\Delta)$. Then, there exists $C>0$ independent of $\hbar$ such that for any $\gamma\in\cK_S^\hbar$ we have
 $$\cH_S(\gamma,\gamma_{\rm ref})\ge C\tr(1-\hbar^2\Delta)(\gamma-\gamma_{\rm ref})^2.$$
\end{lemma}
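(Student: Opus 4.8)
The plan is to reduce the operator inequality to a scalar (pointwise) inequality by the spectral theorem, and then to prove that scalar inequality via the Löwner integral representation of the operator monotone function $-S'$.

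\textbf{Reduction to a scalar inequality.} Write the spectral decompositions $\gamma_{\rm ref}=\int\mu\,dP_\mu$ and $\gamma=\int\lambda\,dQ_\lambda$. Since $\gamma_{\rm ref}=(S')^{-1}(-\hbar^2\Delta)$ we have $S'(\gamma_{\rm ref})=-\hbar^2\Delta$, hence the key identity $1-\hbar^2\Delta=1+S'(\gamma_{\rm ref})$; in particular $\spec(\gamma_{\rm ref})=[0,m_0]$ with $m_0:=(S')^{-1}(0)<1$ \emph{independent of $\hbar$} (we assume $S'(1^-)<0$, which covers all physical examples after the adjustment of Remark~\ref{rk:temp-mu}; the borderline case $S'(1^-)=0$ is handled similarly). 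Following the construction of $\cH_S$ in \cite{LewSab-13}, both the relative entropy and the weighted trace can be written against the same positive measure $d\tau(\lambda,\mu):=\tr(dQ_\lambda\,dP_\mu)$: inserting $\int dP_\mu=\int dQ_\lambda=1$ gives
\begin{equation*}
 \cH_S(\gamma,\gamma_{\rm ref})=\iint L_S(\lambda,\mu)\,d\tau(\lambda,\mu),\qquad \tr\big[(1-\hbar^2\Delta)(\gamma-\gamma_{\rm ref})^2\big]=\iint(\lambda-\mu)^2\big(1+S'(\mu)\big)\,d\tau(\lambda,\mu),
\end{equation*}
where $L_S(a,b):=-\big(S(a)-S(b)-S'(b)(a-b)\big)\ge0$. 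Since $d\tau\ge0$ and both sides take values in $[0,+\infty]$, it suffices to prove the scalar estimate
\begin{equation}\label{eq:scalar-klein}
 L_S(a,b)\ \ge\ C\,(a-b)^2\,\big(1+S'(b)\big)\qquad\text{for all }a\in[0,1],\ b\in(0,m_0],
\end{equation}
with $C=C(S)>0$.

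\textbf{The scalar inequality.} Taylor's formula with integral remainder gives $L_S(a,b)=(a-b)^2\int_0^1(1-t)(-S''(b_t))\,dt$ with $b_t:=b+t(a-b)\in[0,1]$, the integrand being $\ge0$ by concavity of $S$. Since $-S'$ is operator monotone and non-constant on $(0,1)$, Löwner's theorem yields $-S'(u)=\alpha+\beta u+\int_{\R\setminus(0,1)}\big(\tfrac1{\lambda-u}-\tfrac{\lambda}{1+\lambda^2}\big)\,d\mu(\lambda)$ with $\beta\ge0$, $\mu\ge0$ on $\R\setminus(0,1)$, $\int(1+\lambda^2)^{-1}d\mu<\infty$ and $(\beta,\mu)\ne(0,0)$; hence $-S''(u)=\beta+\int(u-\lambda)^{-2}d\mu(\lambda)$, and integrating back from $m_0$ (where $S'(m_0)=0$) gives $S'(b)=\beta(m_0-b)+\int_{\R\setminus(0,1)}\tfrac{m_0-b}{(b-\lambda)(m_0-\lambda)}\,d\mu(\lambda)$. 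Feeding the representation of $-S''$ into the remainder and using the elementary identity $\int_0^1(1-t)(b_t-\lambda)^{-2}dt=(a-b)^{-2}\,\psi\!\big(\tfrac{a-\lambda}{b-\lambda}\big)$ with $\psi(x):=x-1-\log x\ge0$, inequality \eqref{eq:scalar-klein} becomes
\begin{equation*}
 \frac\beta2+\int_{\R\setminus(0,1)}\frac{\psi\!\big((a-\lambda)/(b-\lambda)\big)}{(a-b)^2}\,d\mu(\lambda)\ \ge\ C\Big(1+\beta(m_0-b)+\int_{\R\setminus(0,1)}\frac{m_0-b}{(b-\lambda)(m_0-\lambda)}\,d\mu(\lambda)\Big).
\end{equation*}
One then uses $\psi(x)\ge(x-1)^2/(2\max(x,1))$, which yields $\psi((a-\lambda)/(b-\lambda))/(a-b)^2\ge 1/\big(2|b-\lambda|\max(|a-\lambda|,|b-\lambda|)\big)$, together with the observation that for $\lambda\notin(0,1)$, $a\in[0,1]$, $b\in(0,m_0]$ the three quantities $|a-\lambda|,|b-\lambda|,|m_0-\lambda|$ are mutually comparable with ratios controlled by $m_0$ alone (e.g.\ $|m_0-\lambda|/|a-\lambda|\ge m_0$ for $\lambda\le0$ and $\ge1-m_0$ for $\lambda\ge1$). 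Allocating fixed fractions of the left-hand side to the three terms on the right then gives the inequality, the leftover constant $\tfrac\beta2+\tfrac12\int(2(1+|\lambda|)^2)^{-1}d\mu>0$ absorbing the constant $1$ (this is where $(\beta,\mu)\ne(0,0)$ enters).

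\textbf{Two regimes and the main obstacle.} When $1+S'(b)$ stays bounded, i.e.\ $b$ bounded away from $0$, \eqref{eq:scalar-klein} is immediate from $L_S(a,b)\ge\tfrac12(a-b)^2\inf_{(0,1)}(-S'')$ and $\inf_{(0,1)}(-S'')>0$ (operator monotonicity forces $-S''$ to be continuous, strictly positive on $(0,1)$, with $+\infty$ limits at the endpoints). The substantial case — and the expected main obstacle — is $b\to0$, where both sides of \eqref{eq:scalar-klein} blow up: here the divergence of $1+S'(b)$ is driven entirely by the part of $\mu$ near $0$, which is precisely the part for which the comparison of integrands is uniform, while the bounded contributions of the rest of $\mu$ and of $\beta$ get dominated once $b$ is small. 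A secondary technical point is to make the reduction rigorous for operators with purely continuous spectrum: $\cH_S$ and the traces above must be interpreted through the spectral-measure representation of \cite{LewSab-13}, with monotone convergence used so that \eqref{eq:scalar-klein} still implies the operator inequality when both sides equal $+\infty$.
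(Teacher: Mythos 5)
This lemma is not proved in the paper at all: it is imported from \cite[Lem.~8]{LewSab-13a} together with the scaling remark that makes the constant $\hbar$-independent, so your proposal can only be compared with the strategy of that reference — which it essentially reproduces: a Klein-type reduction of the operator inequality to a two-variable scalar inequality on $\spec(\gamma)\times\spec(\gamma_{\rm ref})$, followed by a proof of the scalar inequality via the L\"owner representation of the operator monotone function $-S'$. The computational core of your argument is correct: the identity $\int_0^1(1-t)(b_t-\lambda)^{-2}\,dt=(a-b)^{-2}\psi\big(\tfrac{a-\lambda}{b-\lambda}\big)$ with $\psi(x)=x-1-\log x$, the elementary bound $\psi(x)\ge (x-1)^2/(2\max(x,1))$, the comparability of $|a-\lambda|,|b-\lambda|,|m_0-\lambda|$ with constants $m_0$ and $1-m_0$ for $\lambda\notin(0,1)$, and the positivity of the leftover constant $\tfrac\beta2+\tfrac12\int(1+|\lambda|)^{-2}\,d\mu$ coming from $(\beta,\mu)\neq(0,0)$ all check out, and the resulting constant depends only on $S$ (through $\beta,\mu,m_0$), hence is $\hbar$-independent precisely because $\spec(\gamma_{\rm ref})=[0,m_0]$ does not depend on $\hbar$. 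Your own caveat about the infinite-dimensional reduction is the right one: the double spectral integral must be interpreted through the finite-rank-projection definition of $\cH_S$ from \cite{LewSab-13,DeuHaiSei-15}, with Fatou/monotone convergence; this is routine but should be written out.

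The one genuine gap is the borderline case $S'(1^-)=0$, i.e.\ $m_0=1$, which Definition~\ref{def:entropy} allows and which you dismiss as ``handled similarly''. As written it is not: your comparison constant is $\min(m_0,1-m_0)$, which vanishes, and indeed for $\lambda\downarrow 1$ the pointwise bound $|m_0-\lambda|\gtrsim\max(|a-\lambda|,|b-\lambda|)$ fails (take $b=\tfrac12$, $a=0$, $\lambda\to1^+$). The fix needs one additional observation rather than ``the same argument'': finiteness of $S'(1^-)$ forces $\mu(\{1\})=0$ and $\int_{(1,2]}(\lambda-1)^{-1}\,d\mu(\lambda)<\infty$, so the contribution of $\mu\restriction_{(1,2]}$ to $1+S'(b)$ is bounded uniformly in $b\in(0,1)$ and can be absorbed into the strictly positive constant on the left-hand side, while for $\lambda\ge 2$ one has $(\lambda-m_0)/\lambda\ge\tfrac12$ and the term-by-term comparison goes through as before. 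With that amendment (and the rigorous limiting procedure just mentioned) your proof is complete and matches the route of the cited reference. A cosmetic point: the parenthetical claim that $-S''$ has $+\infty$ limits at the endpoints is false in general (e.g.\ $-S'$ affine) and unused; what you actually need, $\inf_{(0,1)}(-S'')>0$, does follow from the L\"owner formula.
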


\begin{remark}\label{rk:regularity-operator}
 In the language of \cite{LewSab-13a}, this means that the relative entropy controls the $\gS^{2,1}$-norm of $\gamma-\gamma_{\rm ref}$, where for any $s\in\R$ and any bounded operator $Q$ we set
 $$\|Q\|_{\gS^{2,s}}:=\left\| (1-\Delta)^{s/4}Q\right\|_{\gS^2},$$
 where $\gS^2$ denotes the Hilbert-Schmidt class.
\hfill$\diamond$\end{remark}

\subsection{Well-posedness of the Hartree equation at positive density}

We now state the global well-posedness result from \cite{LewSab-13a} which is the basis of our work. We use conservation of the free energy \eqref{eq:free-energy} to globalize solutions, hence we ensure that the energy controls the relative entropy by the following defocusing criterion on the interaction potential. 

\begin{definition}[Defocusing interaction potential]
 Let $w\in L^1(\R^d,\R)$. We say that $w$ is \emph{defocusing} if $\hat{w}\ge0$ when $d\ge3$, or if 
 $$\|(\hat{w})_-\|_{L^\ii}<2(2\pi)^{-d/2}K_{{\rm LT,S}},$$
 when $d=1,2$, where $K_{{\rm LT,S}}$ is the (sharp) Lieb-Thirring constant of Theorem~\ref{thm:LT-S}.
\end{definition}

\begin{theorem}[Global well-posedness of the Hartree equation at positive density in the energy space~{\cite[Thm. 9]{LewSab-13a}}]\label{thm:gwp-hartree-h}
 Let $d\in\{1,2,3\}$ and $\hbar>0$. Let $S$ be an admissible entropy functional and 
 $$\gamma_{\rm ref}=(S')^{-1}(-\hbar^2\Delta).$$ 
 Assume 
$$w\in\begin{cases}
(L^1\cap L^\ii)(\R^d) &\text{if $d=1,2$}\\
(W^{1,1}\cap W^{1,\ii})(\R^d) & \text{if $d=3$}
\end{cases}$$
is even and defocusing. Let $\gamma_0\in\cK_S^\hbar$. Then, there exists a unique 
$$\gamma\in\gamma_{\rm ref}+\begin{cases}
C^0_t(\R,\gS^2) & \text{if $d=1,2$}\\
C^0_t(\R,\gS^{2,1}) &\text{if $d=3$}
\end{cases}$$
solution to the time-dependent equation~\eqref{eq:hartree} such that $\gamma(t)\in\cK_S^\hbar$ for all $t\in\R$. Furthermore, the semi-classical free energy is conserved along the flow: we have 
 $$\cF_S^\hbar(\gamma(t),\gamma_{\rm ref})=\cF_S^\hbar(\gamma_0,\gamma_{\rm ref})$$
 for all $t\in\R$.
\end{theorem}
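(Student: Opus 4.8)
The plan is the usual two-step scheme for nonlinear evolution equations --- local well-posedness by a fixed point, then globalization from a coercive conservation law --- carried out in the infinite-rank setting of \cite{LewSab-13a}. First I would pass to the relative operator $Q:=\gamma-\gamma_{\rm ref}$. Since $\gamma_{\rm ref}=(S')^{-1}(-\hbar^2\Delta)$ commutes with $-\hbar^2\Delta$ and $\hbar^dw\ast\rho_{\gamma_{\rm ref}}$ is a constant, \eqref{eq:hartree} becomes
\begin{equation*}
 i\hbar\,\partial_t Q=\big[-\hbar^2\Delta,Q\big]+\hbar^d\big[w\ast\rho_Q,\,\gamma_{\rm ref}+Q\big],\qquad Q|_{t=0}=\gamma_0-\gamma_{\rm ref}.
\end{equation*}
The free part of the flow acts by conjugation $Q\mapsto e^{i\hbar t\Delta}Q\,e^{-i\hbar t\Delta}$, an isometry of $\gS^2$ and of $\gS^{2,1}$ (it commutes with $(1-\Delta)^{1/4}$), so a solution is a fixed point of the Duhamel map $(\Phi Q)(t)=e^{i\hbar t\Delta}(\gamma_0-\gamma_{\rm ref})e^{-i\hbar t\Delta}-\tfrac i\hbar\int_0^t e^{i\hbar(t-s)\Delta}\big(\hbar^d[w\ast\rho_{Q(s)},\gamma_{\rm ref}+Q(s)]\big)e^{-i\hbar(t-s)\Delta}\,ds$. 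To run the contraction in $C^0_t([0,T],\gS^2)$ when $d=1,2$ (resp.\ in $C^0_t([0,T],\gS^{2,1})$ when $d=3$), with $T$ depending only on $\|\gamma_0-\gamma_{\rm ref}\|$ and on $\hbar,w,d,S$, two estimates are needed: (i) a Strichartz-type bound controlling $\rho_{Q}$ in a space-time norm by $\|Q\|_{\gS^2}$ (resp.\ $\|Q\|_{\gS^{2,1}}$), together with its dual, so that the mean-field potential $w\ast\rho_Q$ can be inserted back into a Hilbert--Schmidt norm; and (ii) a commutator estimate for $[w\ast\rho_Q,\gamma_{\rm ref}]$, in which the smoothness of $g=(S')^{-1}$ lets one trade a derivative off $\gamma_{\rm ref}$ onto $w$ --- this is why one works in $\gS^{2,1}$ and assumes $w\in W^{1,1}\cap W^{1,\infty}$ when $d=3$. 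I would also record that $0\le\gamma(t)\le1$ is preserved: writing the solution as $\gamma(t)=\mathcal U(t)\gamma_0\mathcal U(t)^*$ with $\mathcal U(t)$ the unitary propagator of the self-adjoint, $H^2$-domain Hamiltonian $-\hbar^2\Delta+\hbar^dw\ast\rho_{\gamma(t)}$ (a bounded real perturbation of $-\hbar^2\Delta$), conjugation by $\mathcal U(t)$ preserves the spectrum of $\gamma_0$.

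Next I would prove conservation of $\cF_S^\hbar$ along the solution. Differentiating, $\tfrac{d}{dt}\cH_S(\gamma(t),\gamma_{\rm ref})=\tr\big((-\hbar^2\Delta-S'(\gamma))\dot\gamma\big)$; the contribution $\tr(S'(\gamma)\dot\gamma)=\tfrac1{i\hbar}\tr\big(S'(\gamma)[H,\gamma]\big)$ vanishes because $[S'(\gamma),\gamma]=0$, while $\tr\big((-\hbar^2\Delta)\dot\gamma\big)=-\hbar^d\tr\big((w\ast\rho_\gamma)\dot\gamma\big)$ because $\tr(H\dot\gamma)=\tfrac1{i\hbar}\tr(H[H,\gamma])=0$ with $H=-\hbar^2\Delta+\hbar^dw\ast\rho_\gamma$; using that $\rho_{\gamma_{\rm ref}}$ is constant in space and time, the right-hand side equals $-\tfrac{d}{dt}\big(\tfrac{\hbar^d}{2}\iint\rho_{\gamma-\gamma_{\rm ref}}\,w\,\rho_{\gamma-\gamma_{\rm ref}}\big)$, whence $\tfrac{d}{dt}\cF_S^\hbar(\gamma(t),\gamma_{\rm ref})=0$. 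All these traces are renormalized differences of separately infinite quantities, so this formal computation must be justified by approximating $\gamma_0$ by states of better decay/rank and passing to the limit, using the $\gS^{2,1}$-regularity of $\gamma(t)-\gamma_{\rm ref}$ provided by the Klein inequality (Lemma~\ref{lem:klein-LT-h}).

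Finally I would turn conservation into a coercive \emph{a priori} bound using the defocusing hypothesis. For $d\ge3$, $\hat w\ge0$ makes the potential term in \eqref{eq:free-energy} non-negative, so $\cH_S(\gamma(t),\gamma_{\rm ref})\le\cF_S^\hbar(\gamma(t),\gamma_{\rm ref})=\cF_S^\hbar(\gamma_0,\gamma_{\rm ref})$. For $d=1,2$, Plancherel bounds the potential term below by $-\tfrac{\hbar^d}{2}(2\pi)^{d/2}\|(\hat w)_-\|_{L^\infty}\|\rho_{\gamma-\gamma_{\rm ref}}\|_{L^2}^2$, while the Lieb--Thirring inequality \eqref{eq:LT-S} --- noting that $\delta(\rho,\rho_0)=(\rho-\rho_0)^2$ exactly when $d=2$, since then $1+2/d=2$ --- gives $\cH_S(\gamma,\gamma_{\rm ref})\ge K_{{\rm LT,S}}\hbar^d\|\rho_{\gamma-\gamma_{\rm ref}}\|_{L^2}^2$; the defocusing condition $\|(\hat w)_-\|_{L^\infty}<2(2\pi)^{-d/2}K_{{\rm LT,S}}$ then makes the potential term $\ge-\theta\,\cH_S(\gamma,\gamma_{\rm ref})$ with some $\theta\in[0,1)$, hence $\cF_S^\hbar\ge(1-\theta)\cH_S\ge0$ and $\cH_S(\gamma(t),\gamma_{\rm ref})\le(1-\theta)^{-1}\cF_S^\hbar(\gamma_0,\gamma_{\rm ref})$ on the maximal interval. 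The Klein inequality then yields a uniform-in-$t$ bound on $\tr(1-\hbar^2\Delta)(\gamma(t)-\gamma_{\rm ref})^2$, in particular on $\|\gamma(t)-\gamma_{\rm ref}\|_{\gS^{2,1}}$ (and on $\|\gamma(t)-\gamma_{\rm ref}\|_{\gS^2}$); since the local existence time from the first step depends only on this norm, one iterates the local construction on consecutive intervals of a fixed positive length and reaches every $t\in\R$, uniqueness being inherited from the local contraction. The main obstacle is step~(ii): since $\gamma_{\rm ref}$ is neither compact nor trace-class, $[w\ast\rho_Q,\gamma_{\rm ref}]$ cannot be estimated crudely, and one must combine the dispersive smoothing of the free flow with the regularity of $w$ and $g$ --- it is precisely this term that fixes the functional framework $\gS^2/\gS^{2,1}$, the hypotheses on $w$, and the restriction $d\le3$.
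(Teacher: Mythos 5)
This theorem is not proved in the paper you were given: it is imported verbatim from the companion work \cite{LewSab-13a} (Thm.~9 there), and the only argument the present paper supplies is the remark that the $\hbar$-dependent statement follows from the $\hbar=1$ case ``by a simple scaling'' (a dilation $x\mapsto \hbar x$ together with a time rescaling, which turns $-\hbar^2\Delta$ into $-\Delta$ and absorbs the factor $\hbar^d$ into the potential). So there is no in-paper proof to compare against line by line; your sketch is instead a reconstruction of the strategy of the cited paper, and as such it is broadly faithful: Duhamel fixed point for $Q=\gamma-\gamma_{\rm ref}$ in $C^0_t\gS^2$ ($d=1,2$) resp.\ $C^0_t\gS^{2,1}$ ($d=3$), with space-time estimates on $\rho_Q$ and a commutator bound on $[w*\rho_Q,\gamma_{\rm ref}]$ exploiting the smoothness of $(S')^{-1}$ and of $w$; conservation of the relative free energy proved by approximation; and globalization from the defocusing condition via the Lieb--Thirring and Klein inequalities (your coercivity computation, including $\delta(\rho,\rho_0)=(\rho-\rho_0)^2$ for $d=2$, reproduces what the paper itself does in Corollary~\ref{prop:sc-bounds}). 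Two caveats are worth recording. First, the two places where you write ``this is needed'' or ``must be justified by approximation'' are precisely where the whole content of \cite{LewSab-13a} lies: for $Q\in\gS^2$ alone the density $\rho_Q$ is not even defined, so the contraction cannot close in $C^0_t\gS^2$ by itself and the auxiliary space-time (Strichartz-type) density estimates are indispensable, and the rigorous conservation of $\cF^\hbar_S$ for renormalized (infinite) traces is a genuinely delicate limiting argument, not a routine regularization. Second, you omit the one step the present paper actually contributes, namely the scaling reduction from $\hbar>0$ to $\hbar=1$, which is what makes the constants ($K_{\rm LT,S}$, the defocusing threshold, the bounds of Corollary~\ref{prop:sc-bounds}) uniform in $\hbar$ --- a point that matters later for the semi-classical limit. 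With these understood, your outline is an accurate account of how the cited result is obtained, rather than an alternative proof.
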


\begin{remark}
 Again, in \cite{LewSab-13a}, Theorem \ref{thm:gwp-hartree-h} is proved for $\hbar=1$. Our statement follows by a simple scaling.
\hfill$\diamond$\end{remark}

\begin{remark}
The statement of Theorem \ref{thm:gwp-hartree-h} remains valid if the defocusing assumption on $w$ is dropped. We then only get a unique local-in-time solution for which the energy is conserved, but which may blow-up in finite time. The time of existence might depend on $\hbar$. 
\hfill$\diamond$\end{remark}

As a consequence of Theorem \ref{thm:gwp-hartree-h}, we obtain bounds on the solutions that are uniform in $\hbar$ and thus will be the starting point of our analysis when $\hbar\to0$. 

\begin{corollary}[Semi-classical bounds for solutions]\label{prop:sc-bounds}
 Let $\gamma\in \gamma_{\rm ref}+C^0_t(\R,\gS^2)$ be any solution to \eqref{eq:hartree}, as given by Theorem \ref{thm:gwp-hartree-h}. Assume that $\gamma_0$ is such that 
 $$\hbar^d \cH_S(\gamma_0,\gamma_{\rm ref})\leq K$$
 for some $K>0$. Then, there exists $C=C_S>0$ such that for all $t\in\R$ and for all $\hbar>0$,
 \begin{equation}\label{eq:control-entropy-time}
    \hbar^d\cH_S(\gamma(t),\gamma_{\rm ref})    \le C_S(1+\|w\|_{L^1\cap L^{(d+2)/4}})\left(K+K^{\frac{2d}{d+2}}\right).
 \end{equation}
 By Lemma \ref{lem:klein-LT-h}, we have, in particular,
 $$\tr(\gamma(t)-\gamma_{\rm ref})^2\le C\hbar^{-d},\quad \int_{\R^d}\delta\Big(\hbar^d\rho_{\gamma(t)}(x),\hbar^d\rho_{\gamma_{\rm ref}}(x)\Big)\,dx \le C,$$
 where the function $\delta$ is defined in~\eqref{eq:def_delta}.
\end{corollary}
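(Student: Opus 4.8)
The plan is to combine the conservation of the semi-classical free energy from Theorem~\ref{thm:gwp-hartree-h} with the coercivity provided by the Lieb-Thirring inequality of Theorem~\ref{thm:LT-S}. The key point is that the free energy $\cF_S^\hbar$ is conserved but contains a potential term of indefinite sign, so we must bound this potential term by a small multiple of the relative entropy plus a controlled remainder. First, I would multiply \eqref{eq:free-energy} by $\hbar^d$ to get
$$\hbar^d\cF_S^\hbar(\gamma(t),\gamma_{\rm ref})=\hbar^d\cH_S(\gamma(t),\gamma_{\rm ref})+\frac{1}{2}\int\int \big(\hbar^d\rho_{\gamma(t)-\gamma_{\rm ref}}\big)(x)\,w(x-y)\,\big(\hbar^d\rho_{\gamma(t)-\gamma_{\rm ref}}\big)(y)\,dx\,dy,$$
so that all quantities are expressed in terms of the rescaled density $n_t:=\hbar^d\rho_{\gamma(t)-\gamma_{\rm ref}}$, which is precisely the object controlled uniformly in $\hbar$ by \eqref{eq:LT-S}.

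Next I would estimate the potential term. Write $P(t):=\frac12\iint n_t(x)w(x-y)n_t(y)\,dx\,dy$. By Young's inequality and interpolation, $|P(t)|\le \tfrac12\|w\|_{L^1}\|n_t\|_{L^1}\|n_t\|_{L^\ii}$ in a crude form, but more usefully one bounds $|P(t)|\le \tfrac12\|w\|_{L^p}\|n_t\|_{L^q}^2$ for the appropriate Hölder-conjugate exponents, where $q=2$ when $d=1$ (using the $d=1$ case of \eqref{eq:LT-S}, which directly gives $\|n_t\|_{L^2}^2\le C\hbar^d\cH_S(\gamma(t),\gamma_{\rm ref})$) and $q=1+2/d$ when $d\ge2$. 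In the latter case the convexity estimate $\delta(\rho,\rho_0)\ge c\,|\rho-\rho_0|^{1+2/d}$ for $|\rho-\rho_0|$ large, together with $\delta(\rho,\rho_0)\ge c\,\rho_0^{2/d-1}|\rho-\rho_0|^2$ for $|\rho-\rho_0|$ small, yields $\|n_t\|_{L^{1+2/d}}^{1+2/d}\lesssim \hbar^d\cH_S(\gamma(t),\gamma_{\rm ref}) + (\hbar^d\cH_S)^{\theta}$ for a suitable $\theta\le 1$ (the small-density part contributes an $L^2$ bound that must be converted, using that $\rho_{\gamma_{\rm ref}}\sim\hbar^{-d}$ so $\hbar^d\rho_{\gamma_{\rm ref}}$ is a fixed constant $\rho_0>0$ independent of $\hbar$). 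Since $w\in L^1\cap L^{(d+2)/4}$ and $(d+2)/4$ is the Hölder conjugate of $(d+2)/d=1+2/d$ divided appropriately, $|P(t)|\le \tfrac12\|w\|_{L^1\cap L^{(d+2)/4}}\cdot C\big(\hbar^d\cH_S(\gamma(t),\gamma_{\rm ref})\big)^{2d/(d+2)}$ up to combining the two regimes; using Young's inequality $ab\le \epsilon a^{(d+2)/2d}+C_\epsilon b^{(d+2)/(d+2-2d/(d+2)\cdots)}$ one absorbs a small fraction of $\hbar^d\cH_S(\gamma(t),\gamma_{\rm ref})$ into the left-hand side.

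Then I would close the estimate: from $\hbar^d\cH_S(\gamma(t),\gamma_{\rm ref})=\hbar^d\cF_S^\hbar(\gamma(t),\gamma_{\rm ref})-P(t)\le \hbar^d\cF_S^\hbar(\gamma_0,\gamma_{\rm ref})+|P(t)|$, using the conservation of the free energy and the bound on $|P(t)|$, plus the trivial control $\hbar^d\cF_S^\hbar(\gamma_0,\gamma_{\rm ref})\le \hbar^d\cH_S(\gamma_0,\gamma_{\rm ref})+|P(0)|\le K + C\|w\|(K^{2d/(d+2)}+K)$ at the initial time, one obtains an inequality of the form $X(t)\le A + \tfrac12 X(t) + B\,X(t)^{2d/(d+2)}$ with $X(t):=\hbar^d\cH_S(\gamma(t),\gamma_{\rm ref})$; since $2d/(d+2)<1$ for $d\le 3$ (and equals $1$ only at $d=2$, where the defocusing smallness condition on $\|(\hat w)_-\|_{L^\ii}$ ensures the coefficient is $<1/2$ so it can still be absorbed), this yields $X(t)\le C_S(1+\|w\|_{L^1\cap L^{(d+2)/4}})(K+K^{2d/(d+2)})$ uniformly in $t$ and $\hbar$. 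Finally, the displayed consequences follow immediately: the trace bound is Lemma~\ref{lem:klein-LT-h} applied with $\gamma(t)$ divided by $\hbar^d$, and the $\delta$-integral bound is just the $d\ge2$ case of \eqref{eq:LT-S} rewritten with the rescaled densities.

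The main obstacle I anticipate is the borderline case $d=2$, where the Lieb-Thirring exponent $1+2/d=2$ makes the potential term and the relative entropy scale identically, so the estimate on $|P(t)|$ is only linear in $\hbar^d\cH_S$ rather than sublinear. Here one genuinely needs the Fourier-side formulation $P(t)=\tfrac12(2\pi)^{-d/2}\int \hat w(\xi)|\hat n_t(\xi)|^2\,d\xi\ge -\tfrac12(2\pi)^{-d/2}\|(\hat w)_-\|_{L^\ii}\|n_t\|_{L^2}^2$ and the precise defocusing threshold $\|(\hat w)_-\|_{L^\ii}<2(2\pi)^{-d/2}K_{\rm LT,S}$, which is exactly what guarantees that the negative part of $P(t)$ cannot absorb more than a fixed fraction strictly less than one of $K_{\rm LT,S}^{-1}$ times $\hbar^d\cH_S(\gamma(t),\gamma_{\rm ref})$ via \eqref{eq:LT-S}. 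A secondary technical nuisance is bookkeeping the two regimes (large vs.\ small relative density) in the $d\ge2$ Lieb-Thirring bound and correctly tracking which powers of $K$ appear; this is routine but must be done carefully to land exactly on the stated exponent $2d/(d+2)$.
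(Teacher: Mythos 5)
Your overall strategy (conservation of $\cF_S^\hbar$ combined with the Lieb--Thirring coercivity) is the same as the paper's, but the way you close the estimate at time $t$ contains a genuine gap. You bound the potential term $P(t)$ from above by powers of $X(t):=\hbar^d\cH_S(\gamma(t),\gamma_{\rm ref})$ and try to absorb, arriving at $X(t)\le A+\tfrac12X(t)+B\,X(t)^{2d/(d+2)}$ and asserting that $2d/(d+2)<1$ for $d\le 3$. This is false: $2d/(d+2)<1$ if and only if $d<2$, so for $d=3$ the exponent is $6/5>1$ and the self-consistent inequality $X\le A+\tfrac12X+B\,X^{6/5}$ does not force $X$ to be bounded (it is compatible with arbitrarily large $X$), nor can a superlinear term with coefficient $\sim\|w\|_{L^{(d+2)/4}}$ be absorbed by Young. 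For $d=1$ your H\"older bound gives $|P(t)|\le C\|w\|_{L^1}X(t)$, linear with a coefficient that is not assumed small, so literal absorption fails there too; the Fourier negative-part argument you invoke only for $d=2$ is in fact what is needed for $d=1$ as well (and note the defocusing threshold only guarantees the resulting factor is $<1$, not $<1/2$).

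The fix, which is the paper's actual route and for which you already have all the ingredients, is to observe that at time $t$ one only needs a \emph{lower} bound on the potential term: the defocusing hypothesis gives $P(t)\ge0$ when $d=3$ (since $\hat w\ge0$), and $P(t)\ge-\theta\,X(t)$ with $\theta=(2\pi)^{d/2}\|\hat w_-\|_{L^\ii}/(2K_{\rm LT,S})<1$ when $d=1,2$, via the $L^2$ Lieb--Thirring bound \eqref{eq:LT-S}. Hence $X(t)\le(1-\theta)^{-1}\hbar^d\cF_S^\hbar(\gamma(t),\gamma_{\rm ref})=(1-\theta)^{-1}\hbar^d\cF_S^\hbar(\gamma_0,\gamma_{\rm ref})$ by conservation, and an \emph{upper} bound on the potential is required only at $t=0$, where the entropy is already controlled by $K$: the Lieb--Thirring estimate gives $\|\hbar^d\rho_{\gamma_0-\gamma_{\rm ref}}\|_{L^2+L^{\min(2,1+2/d)}}\le C\big(K^{1/2}+K^{d/(d+2)}\big)$, and Young's inequality with $w\in L^1\cap L^{(d+2)/4}$ then yields $|P(0)|\le C\|w\|_{L^1\cap L^{(d+2)/4}}(K+K^{2d/(d+2)})$, which is exactly where the superlinear term $K^{2d/(d+2)}$ (e.g. $K^{6/5}$ for $d=3$) in \eqref{eq:control-entropy-time} comes from. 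As written, your $d=3$ (and $d=1$) absorption step fails; reorganized as above, your ingredients suffice.
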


\begin{proof}
 Since $w$ is defocusing, we have that 
 $$\cH_S(\gamma(t),\gamma_{\rm ref})\le \cF_S^\hbar(\gamma(t),\gamma_{\rm ref})$$
 when $d\ge3$, while for $d=1,2$ we have by the Lieb-Thirring inequality \eqref{eq:LT-S},
 $$\cF_S^\hbar(\gamma(t),\gamma_{\rm ref})\ge\left(1-\frac{(2\pi)^{d/2}\|\hat{w}_-\|_{L^\ii}}{2K_{{\rm LT,S}}}\right)\cH_S(\gamma(t),\gamma_{\rm ref})$$
 since $\cH_S(\gamma(t),\gamma_{\rm ref})\ge K_{{\rm LT,S}}\hbar^d\|\rho_{\gamma(t)-\gamma_{\rm ref}}\|_{L^2}^2$ in this case. To prove the result, it is thus enough to prove that there exists $C>0$ such that for all $t\in\R$ and all $\hbar>0$, we have
 $$\cF_S^\hbar(\gamma(t),\gamma_{\rm ref})\le C\hbar^{-d}.$$
 By conservation of free energy, it is enough to prove it at $t=0$. The free energy can rewritten as 
 \begin{multline*}
    \cF_S^\hbar(\gamma_0,\gamma_{\rm ref})=\cH_S(\gamma_0,\gamma_{\rm ref})\\
    +\frac{\hbar^{-d}}{2}\int_{\R^d}\int_{\R^d}\hbar^d\rho_{\gamma_0-\gamma_{\rm ref}}(x)w(x-y)\hbar^d\rho_{\gamma_0-\gamma_{\rm ref}}(y)\,dx\,dy.
 \end{multline*}
 By the Lieb-Thirring inequality \eqref{eq:LT-S}, we have the bound
 \begin{multline*}
    \|\hbar^d\rho_{\gamma_0-\gamma_{\rm ref}}\|_{L^2+L^{\min(2,1+2/d)}}\\
    \le C \left(\hbar^d\cH_S(\gamma_0,\gamma_{\rm ref})\right)^{1/2}+C\left(\hbar^d\cH_S(\gamma_0,\gamma_{\rm ref})\right)^{d/(d+2)},
 \end{multline*}
 which implies the result together with the Young inequality and $w\in (L^1\cap L^{(d+2)/4})(\R^d)$.
\end{proof}

\subsection{Classical objects} 

Before giving our main result about the semi-classical limit of solutions to \eqref{eq:hartree}, we introduce some classical quantities. The first one is the Wigner transform, introduced by Wigner in \cite{Wigner-32} and studied in~\cite{LioPau-93,MarMau-93}. A review about works based on the Wigner transform can be found in \cite{Tatarski-83}.

\begin{definition}[Semi-classical Wigner transform]
Let $d\ge1$ and $\gamma$ be an operator on $L^2(\R^d)$ with integral kernel $\gamma(\cdot,\cdot)$. Let $\hbar>0$. The semi-classical Wigner transform $W^\hbar_\gamma$ of $\gamma$ is defined by
$$W_\gamma^\hbar(x,v):=\int_{\R^d}\gamma(x+y/2,x-y/2)e^{-iy\cdot v/\hbar}\,dy,$$
for all $(x,v)\in\R^d\times\R^d$.
\end{definition}

\begin{remark}
 For $\gamma_{\rm ref}=g(-i\hbar\nabla)$ with $g$ integrable, we have
 $$W_{\gamma_{\rm ref}}^\hbar(x,v)=g(v)$$
 and thus $W^\hbar_{\gamma_\hbar}\in L^\ii_x L^1_v$. This motivates our choice of normalization for the Wigner transform (this will also be useful for Berezin-Lieb inequalities).
\hfill$\diamond$\end{remark}

\begin{remark}
 The Wigner transform can be defined by duality with the formula
 $$\langle W^\hbar_\gamma,\phi\rangle =(2\pi \hbar)^d \tr \Big(\gamma \opw^\hbar(\phi)\Big),$$
 for any $\phi\in \cS_{x,v}(\R^d\times\R^d)$,
 where the Weyl quantization is defined by
 $$\opw^\hbar(a)(x,y):=\frac{1}{(2\pi \hbar)^d}\int_{\R^d}a\left(\frac{x+y}{2},v\right)e^{i(x-y)\cdot v/\hbar}\,dv.$$ 
 This allows for instance to define $W^\hbar_\gamma$ in $\cS'_{x,v}$ for any bounded operator $\gamma$ (since $\opw^\hbar(\phi)\in\gS^1$ for any $\phi$ in the Schwartz class $\cS_{x,v}$). 
\hfill$\diamond$\end{remark}

\begin{remark}
 Notice that we have the following relation between the densities of a density matrix $\gamma$ and the density of its Wigner transform:
 $$\rho_{W^\hbar_\gamma}=\hbar^d\rho_\gamma.$$
\hfill$\diamond$\end{remark}

\begin{definition}[Classical relative entropy]\label{def:class-rel-ent}
 Let $S:[0,1]\to\R$ a continuous, concave function, differentiable on $(0,1)$. Let $m,m_0:\R^d_x\times\R^d_v\to[0,1]$ two measurable functions. We define the classical relative entropy of $m$ and $m_0$ with respect to the function $S$ as 
 $$
 \boxed{
 \hcl_S(m,m_0):=\int_{\R^d}\int_{\R^d}\cH_S \big(m(x,v),m_0(x,v)\big)\frac{\,dx\,dv}{(2\pi)^d}.
 }
 $$
 For numbers $y\in[0,1]$ and $y_0\in(0,1)$ we simply have
$$ \cH_S(y,y_0):=S'(y_0)(y-y_0)-S(y)+S(y_0)\ge0.$$
 When $y_0\in\{0,1\}$, the definition of $\cH_S(y,y_0)$ is 
 $$\cH_S(y,y_0):=\sup_{0<\theta<1}\frac1\theta\Big(S(\theta y + (1-\theta)y_0)-(1-\theta)S(y_0)-\theta S(y)\Big)\in[0,+\ii].$$   
 \end{definition}

\begin{remark}
 If $S$ is an admissible entropy functional, the classical relative entropy $\hcl_S$ possesses the same functional properties as the quantum one; for instance for any $m$, $m_0$ we have a Klein inequality analogue to Lemma \ref{lem:klein-LT-h}:
 \begin{equation}\label{eq:klein-classical}
  \hcl_S(m,m_0)\ge C\|m-m_0\|_{L^2_{x,v}}^2,
 \end{equation}
 and a Lieb-Thirring inequality (valid only for $m_0(x,v)=(S')^{-1}(|v|^2)$) analogue to Theorem \ref{thm:LT-S}
 \begin{equation}\label{eq:LT-classical}
    \hcl_S(m,m_0)\ge K_{{\rm LT,S}}\begin{cases}
	  \dps \int_{\R^d}\delta\Big(\rho_m(x),\rho_{m_0}(x)\Big)\,dx & {\rm if}\ d\ge2,\\
	  \\
	  \dps \int_\R|\rho_{m-m_0}(x)|^2\,dx & {\rm if}\ d=1.
       \end{cases}
 \end{equation}
 These properties may be proved analogously to the quantum ones (and should require less assumptions on $S$ to hold).
\end{remark}

\subsection{Statement of the main results}

We now state our main results. They require more assumptions on the entropy functional $S$ than Theorem \ref{thm:gwp-hartree-h}. For the sake of clarity, we only state them for the three physical cases $S\in\{S_0,S_b,S_f\}$ as in \eqref{eq:physical-entropies}, and make a remark afterwards concerning the more precise properties on $S$ needed for our proof. As mentioned before in Remark \ref{rk:temp-mu}, we might need to tune the chemical potential in order to obtain an admissible entropy functional; as a consequence we introduce the class
\begin{equation}\label{eq:entropies-berezin}
 \ccS:=\Big\{m\mapsto TS(m)+\mu m\ {\rm admissible},\  T>0,\ \mu\in\R,\ S\in\{S_0,S_b,S_f\}\Big\}.
\end{equation}

\subsubsection{The semi-classical limit}

Our first result is about the convergence of global-in-time solutions of the Hartree equation~\eqref{eq:hartree} to the Vlasov equation and about the uniqueness of these limiting solutions.

\begin{theorem}[Semi-classical limit of the Hartree equation at positive density]\label{thm:limit-sc}
 Let $d\in\{1,2,3\}$, $\hbar>0$, $S\in\ccS$, and $\gamma_{\rm ref}=(S')^{-1}(-\hbar^2\Delta)$, $m_{\rm ref}(x,v):=(S')^{-1}(|v|^2)$. Let $w\in (W^{2,1}\cap W^{2,\ii})(\R^d)$ be an even, real-valued function. Let $\gamma\in\gamma_{\rm ref}+C^0_t(\R,\gS^2)$ be the unique solution to the Hartree equation \eqref{eq:hartree} associated with an initial condition $\gamma_0$ satisfying 
 $$\liminf_{\hbar\to0} \hbar^d\cH_S(\gamma_0,\gamma_{\rm ref})<+\ii$$
 and $W^\hbar_{\gamma_0}\to W_0$ as $\hbar\to0$ in the sense of distributions on the phase space $\R^d\times\R^d$. Then, there exists $C_S>0$ and 
 $$W\in \Big\{m_{\rm ref}+L^\ii_t(\R,L^2_{x,v}(\R^d\times\R^d))\Big\}\cap C^0_t(\R,\cD'_{x,v}(\R^d\times\R^d))$$
 such that:
 \begin{enumerate}
  \item $0\le W(t)\le 1$ for all $t\in\R$;
  \item $\hcl_S(W(t),m_{\rm ref})\le C_S \liminf_{\hbar\to0}\hbar^d\cH_S(\gamma(t),\gamma_{\rm ref})$ for all $t\in\R$;
  \item $W_{\gamma(t)}^\hbar\wto W(t)$ as $\hbar\to0$ in the sense of distributions on $\R^{2d}$, uniformly on compact sets in $t$;
  \item $\hbar^d\rho_{\gamma}-\hbar^d\rho_{\gamma_{\rm ref}}\to\rho_W-\rho_{m_{\rm ref}}$ as $\hbar\to0$, weakly-$*$ in $L^\ii_t (L^2+L^{\min(1+2/d,2)})$;
  \item $W$ is the unique solution to the nonlinear Vlasov equation
  $$\begin{cases}
        \partial_t W+2v\cdot\nabla_x W-\nabla_x(w*\rho_W)\cdot\nabla_v W=0,\\
        W_{|t=0}=W_0
    \end{cases}
  $$
  in $\Big\{m_{\rm ref}+L^\ii_t(\R,L^2_{x,v}(\R^d\times\R^d))\Big\}\cap C^0_t(\R,\cD'_{x,v}(\R^d\times\R^d))$ such that $\hcl_S(W(t),m_{\rm ref})\in L^\ii_t$.
 \end{enumerate}
\end{theorem}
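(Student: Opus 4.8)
The plan is to follow the Wigner-function strategy of Lions--Paul \cite{LioPau-93}, making all estimates uniform in $\hbar$ by means of the conservation of the free energy. First I would collect the uniform bounds: by Corollary~\ref{prop:sc-bounds} the assumption $\liminf_{\hbar\to0}\hbar^d\cH_S(\gamma_0,\gamma_{\rm ref})<+\ii$ gives, along a suitable sequence $\hbar\to0$, a bound $\hbar^d\cH_S(\gamma(t),\gamma_{\rm ref})\le C$ uniform in $t$, hence by Lemma~\ref{lem:klein-LT-h} a bound $\hbar^d\tr(\gamma(t)-\gamma_{\rm ref})^2\le C$. Since the Wigner transform is, up to the constant $(2\pi\hbar)^d$, an isometry from $\gS^2$ onto $L^2_{x,v}$ and $W^\hbar_{\gamma_{\rm ref}}=g(v)=m_{\rm ref}$, this yields a uniform bound on $W^\hbar_{\gamma(t)}-m_{\rm ref}$ in $L^\ii_t L^2_{x,v}$. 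Writing the Hartree equation \eqref{eq:hartree} in Wigner form and using $\rho_{W^\hbar_\gamma}=\hbar^d\rho_\gamma$, one gets $\partial_t W^\hbar_{\gamma}=-2v\cdot\nabla_x W^\hbar_\gamma+\Theta^\hbar_\gamma$, where, tested against a fixed $\phi\in\cS_{x,v}$, the nonlinear term $\Theta^\hbar_\gamma$ equals $\nabla_x\big(w*\rho_{W^\hbar_\gamma}\big)\cdot\nabla_v W^\hbar_\gamma$ up to a remainder that vanishes as $\hbar\to0$, controlled by $\|w\|_{W^{2,\ii}}$ and the $L^2$ bound above. This makes $t\mapsto W^\hbar_{\gamma(t)}$ uniformly equicontinuous as a curve with values in $\cD'_{x,v}$, and an Arzel\`a--Ascoli/diagonal argument in the weak-$*$ topology of $m_{\rm ref}+L^\ii_t L^2_{x,v}$ extracts a limit $W$ with $W^\hbar_{\gamma(t)}\wto W(t)$ in $\cD'_{x,v}$, uniformly on compact time sets; this is item (3).

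For item (1) I would work with the Husimi transform $\widetilde W^\hbar_{\gamma}$, the convolution of $W^\hbar_\gamma$ with the $\hbar$-scaled coherent-state Gaussian, which satisfies $0\le\widetilde W^\hbar_\gamma\le1$ whenever $0\le\gamma\le1$ and which differs from $W^\hbar_\gamma$ by a quantity tending to $0$ in $\cD'_{x,v}$ as $\hbar\to0$ (given the uniform $L^2$ bound, the Gaussian mollification converges to the identity). Hence $W(t)$ is also the limit of $\widetilde W^\hbar_{\gamma(t)}$ and inherits $0\le W(t)\le1$. For item (2) I would invoke the Berezin--Lieb inequalities proved in Section~\ref{sec:BL}, which bound $\hcl_S\big(\widetilde W^\hbar_{\gamma(t)},\widetilde W^\hbar_{\gamma_{\rm ref}}\big)$ by a constant times $\hbar^d\cH_S(\gamma(t),\gamma_{\rm ref})$, up to replacing $\widetilde W^\hbar_{\gamma_{\rm ref}}$ (which converges to $m_{\rm ref}$) by $m_{\rm ref}$ at the cost of a controlled error; combined with the weak lower semicontinuity of $m\mapsto\hcl_S(m,m_{\rm ref})$ (convexity of $\cH_S(\cdot,m_0)$ and Fatou, as in \cite{LewSab-13}) this gives $\hcl_S(W(t),m_{\rm ref})\le C_S\liminf_{\hbar\to0}\hbar^d\cH_S(\gamma(t),\gamma_{\rm ref})$, once one knows --- by the uniqueness step below --- that the whole family $W^\hbar_{\gamma(t)}$ converges to $W(t)$, so that a subsequence realizing the $\liminf$ may be used.

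The crux is the passage to the limit in the nonlinearity, i.e. item (4) and the Vlasov equation in (5). The difficulty, which I expect to be the main obstacle, is that $W^\hbar_{\gamma(t)}\wto W(t)$ only weakly, so neither $\rho_{W^\hbar_{\gamma(t)}}=\int W^\hbar_{\gamma(t)}\,dv$ nor the product $\nabla_x\big(w*\rho_{W^\hbar_{\gamma(t)}}\big)\cdot\nabla_v W^\hbar_{\gamma(t)}$ converges for free. For the density, I would use the high-velocity estimates of Section~\ref{sec:high-freq}: the uniform bound on $\hcl_S(\widetilde W^\hbar_{\gamma(t)},m_{\rm ref})$ forces enough decay of $\widetilde W^\hbar_{\gamma(t)}$, hence of $W^\hbar_{\gamma(t)}$, in the velocity variable, uniformly in $\hbar$, for $\rho_{W^\hbar_{\gamma}}-\rho_{m_{\rm ref}}$ to pass to the weak-$*$ limit $\rho_W-\rho_{m_{\rm ref}}$ in $L^\ii_t(L^2+L^{\min(1+2/d,2)})$, which is item (4); the uniform bound on $\rho_{W^\hbar_\gamma}-\rho_{m_{\rm ref}}$ in that space comes from the Lieb--Thirring inequality \eqref{eq:LT-S} and Corollary~\ref{prop:sc-bounds}. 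Since $w\in W^{2,1}\cap W^{2,\ii}$ and $d\in\{1,2,3\}$, the mean-field force $\nabla_x\big(w*\rho_{W^\hbar_\gamma}\big)=\nabla w*\big(\rho_{W^\hbar_\gamma}-\rho_{m_{\rm ref}}\big)$ and its $x$-gradient are uniformly bounded in $L^\ii_{t,x}$; combining this with the Wigner equation (which controls the time derivative of the force) an Aubin--Lions argument upgrades the weak convergence of the force to strong convergence in $L^2_{t,x,{\rm loc}}$, its limit being necessarily $\nabla_x(w*\rho_W)$. After an integration by parts in $v$, the product of the strongly converging ($v$-independent) force with the weakly converging $W^\hbar_{\gamma(t)}-m_{\rm ref}$ passes to the limit, the Moyal remainder vanishes, and $W$ is seen to solve the Vlasov equation with $W_{|t=0}=W_0$.

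Finally, for the uniqueness assertion in (5) I would adapt Loeper's optimal-transport argument \cite{Loeper-06}. For any two solutions $W_1,W_2$ in the stated class with the same initial datum and $\hcl_S(W_i(t),m_{\rm ref})\in L^\ii_t$, the classical Lieb--Thirring and Klein inequalities \eqref{eq:LT-classical}--\eqref{eq:klein-classical} bound $\rho_{W_i}-\rho_{m_{\rm ref}}$ in $L^2+L^{1+2/d}$ and $W_i-m_{\rm ref}$ in $L^2_{x,v}$ uniformly in $t$; since $w\in W^{2,1}\cap W^{2,\ii}$ this makes the force $\nabla_x(w*\rho_{W_i})$ globally Lipschitz in $x$ uniformly in $t$, with difference controlled by the $L^2$, and hence the $2$-Wasserstein, distance between $\rho_{W_1}$ and $\rho_{W_2}$. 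A Gr\"onwall estimate on the squared $2$-Wasserstein distance between the infinite-mass measures $W_1(t)$ and $W_2(t)$, propagated along the characteristic flow and localised as in Appendix~\ref{app:opt-transp} to handle the infinite mass, then yields $W_1\equiv W_2$. Since every limit point of the family $\{W^\hbar_{\gamma(t)}\}$ constructed above is such a solution, this simultaneously shows that the full family (not merely a subsequence) converges, which closes the argument and, as noted above, delivers item (2).
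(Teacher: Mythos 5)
Your proposal is structurally faithful to the paper's proof: the uniform bound from Corollary~\ref{prop:sc-bounds} and Lemma~\ref{lem:klein-LT-h} transfers to a uniform $L^\ii_t L^2_{x,v}$ bound on $W^\hbar_\gamma - m_{\rm ref}$ via the isometry property of the Wigner transform; the Husimi transform supplies $0\le W\le 1$ (Corollary~\ref{coro:limit-husimi}); the Berezin--Lieb inequality and weak lower semicontinuity of $\hcl_S$ give item (2); the high-velocity estimates of Section~\ref{sec:high-freq} identify the limiting density; and the optimal-transport argument of Section~\ref{sec:uniqueness} with the infinite-mass tools of Appendix~\ref{app:opt-transp} gives uniqueness and hence convergence of the full family. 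Where you deviate is in passing to the limit in the force, and there your argument has a potential gap.

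The paper does not invoke Aubin--Lions. In Proposition~\ref{prop:sc-limit-wigner} the decoupled Vlasov equation with a separate density $\nu$ (the weak-$*$ limit of $\hbar^d\rho_\gamma - \hbar^d\rho_{\gamma_{\rm ref}}$) is obtained by testing the Wigner equation against $\phi\in C^\ii_0$; because $\nabla w\in L^2\cap L^{1+d/2}$, the translates $g_\hbar(z)=[w(x+\hbar y/2-z)-w(x-\hbar y/2-z)]/\hbar$ converge strongly to $y\cdot\nabla w(x-\cdot)$, so the weak-$*$ convergence of $\nu^\hbar$ gives pointwise, dominated convergence of the finite difference quotient of the potential. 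This weak--strong argument is all that is needed to pass to the limit in the product with the weakly converging Wigner function. Your Aubin--Lions route would additionally require a bound on $\partial_t\rho^\hbar = -\nabla_x\cdot j^\hbar$ in some negative Sobolev space, hence on the current $j^\hbar=\hbar^d\int v\,W^\hbar_\gamma\,dv$; this first-moment-in-$v$ quantity is not obviously controlled by the relative entropy, whose effective velocity weight is $|v|^2$, not $|v|$. You should either supply the missing current estimate or, more simply, adopt the paper's argument which only exploits the regularity of $w$. Note also that after this step the force converges to $\nabla_x(w*\nu)$ and the identification $\nu=\rho_W-\rho_{m_{\rm ref}}$ is a logically separate point handled by Proposition~\ref{prop:ident-density} via the tightness criterion; your text keeps these steps appropriately distinct, but make sure the force limit is not asserted to be $\nabla_x(w*\rho_W)$ before the density is identified.

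One more point to make explicit: Proposition~\ref{prop:BL} controls $\hcl_S(m^\hbar_\gamma, m^\hbar_{\gamma_{\rm ref}})$ with the Husimi-mollified reference $m^\hbar_{\gamma_{\rm ref}} = m_{\rm ref}*_v\hbar^{-d/2}|\hat\chi|^2(\cdot/\sqrt\hbar)$, not $m_{\rm ref}$ itself. The high-velocity estimate (Theorem~\ref{thm:high-freq}) and Lemma~\ref{lem:unif-h} are deliberately formulated against this $\hbar$-dependent mollified reference, which requires $\hat\chi$ to have compact support; this is where the ``controlled error'' you mention when replacing $m^\hbar_{\gamma_{\rm ref}}$ by $m_{\rm ref}$ is quantified, and it should be spelled out rather than absorbed into a side remark.
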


\begin{remark}[About the constant $C_S$]
 The constant $C_S$ in (2) is obtained in Corollary \ref{coro:ident-density} and, as its proof shows, comes from two contributions: (i) a Berezin-Lieb-type inequality and (ii) the decrease of the classical relative entropy under weak limits. In particular, our proof shows that if $S\in\ccS$ is constructed from either $S_0$ or $S_f$, one can take $C_S=1$, while if $S$ is constructed from the bosonic entropy $S_b$, one can take $C_S=4$. Having $C_S=1$ is the best case scenario, which means that we are not ``losing'' anything when going from quantum to classical. In particular, it should imply that the (classical) free energy is conserved along the flow of the Vlasov equation \eqref{eq:vlasov}, from the corresponding conservation of the quantum free energy. If we considered operators with spectrum in $[0,M]$ instead of $[0,1]$, we could still take $C_S=1$ for $S=S_0,S_f$ but we could only take $C_S=(1+M)^2$ for $S=S_b$.
\hfill$\diamond$\end{remark}

\begin{remark}[About the assumptions needed on $S$]
 Our proof of Theorem~\ref{thm:limit-sc} applies to general admissible entropy functionals $S$ having the following properties:
 \begin{enumerate}
  \item $S$ has the Berezin-Lieb property with constant $C^{BL}_S$ (see Definition \ref{def:BL} below). 
  \item $\hcl_S$ decays under weak limits, up to a multiplicative constant. This is for instance the case if $(x,y)\in[0,1]^2\mapsto\cH_S(x,y)$ is convex (see Corollary \ref{coro:ident-density} and Remark~\ref{rk:weak-limit-bosons}).
 \end{enumerate}
\hfill$\diamond$\end{remark}

The proof of Theorem \ref{thm:limit-sc} is divided into several steps:
\begin{enumerate}
 \item From the quantum bounds uniform in $\hbar$ of Proposition \ref{prop:sc-bounds}, we obtain uniform bounds on the Wigner transform $W^\hbar_{\gamma(t)}$ and on its density $\hbar^d\rho_{\gamma(t)}$. Hence, they admit weak limits, up to a subsequence. These weak limits are shown to satisfy a Vlasov equation (Proposition \ref{prop:sc-limit-wigner}), where the phase-space distribution and its density are decoupled. This is done in Section \ref{sec:der-vla}.
 \item We then show that the density appearing in the limiting Vlasov equation is indeed the density of the phase-space distribution (in other words, that the weak limit of the density is the density of the weak limit). This is done in Section \ref{sec:ident-dens}, based on a tightness property which amounts to a large velocity bound uniform in $\hbar$.
 \item This tightness result is obtained via two ingredients, that are the main inputs of our work: (i) Berezin-Lieb inequalities, introduced in Section \ref{sec:BL}, which control the classical relative entropy in terms of the quantum entropy and (ii) high velocity estimates, proved in Section \ref{sec:high-freq}, in terms of the classical relative entropy, which are thus uniform in $\hbar$.
 \item We next prove uniqueness of solutions to the above nonlinear Vlasov equation in Section \ref{sec:uniqueness}. This allows us to prove that the semi-classical limit holds not only for a subsequence, but as $\hbar\to0$ (see Remark \ref{rk:uniqueness}).
\end{enumerate}

Theorem \ref{thm:limit-sc} shows that the classical Vlasov equation can be recovered from the quantum Hartree equation in some limit. That quantum mechanics retains some features of classical mechanics has been realized from the early days of the theory, as can be seen from Ehrenfest's theorem, from the work of Schr\"odinger on coherent states for the harmonic oscillator \cite{Schro-26}, or more informally from Bohr's correspondence principle. The 'convergence' of quantum mechanics towards classical mechanics as $\hbar\to0$ was also understood early, as in the work of Wigner \cite{Wigner-32}. Since then, many physical and mathematical works have been devoted to this phenomenon in its various manifestations, so we only mention the works related to our question at hand. Among the pioneering mathematical works in this topic are the works of Maslov \cite{Maslov-72} extending the WKB method and the development of microlocal analysis around H\"ormander, which turned out to be a very fruitful formalism to study semi-classical limits, as exploited and developed for instance by Helffer and Robert \cite{HelRob-81}. Parallel to these works are the articles of Hepp \cite{Hepp-74} where the classical limit is performed also in the context of field theory using the coherent states approach,
and Narnhofer and Sewell \cite{NarSew-81}, where they recover the classical equations of motions as semi-classical limits of the full (linear) Schr\"odinger equation, using the BBGKY approach.
Then, Lions and Paul \cite{LioPau-93} and Markowich and Mauser \cite{MarMau-93} simultaneously studied the semi-classical limit of nonlinear equations of the Hartree-type \eqref{eq:hartree} towards the Vlasov equation. Both these works are concerned with the 'zero-density' case, meaning that they are considering solutions with finite number of particles or finite energy. Our work can be seen as an extension of the results of Lions and Paul \cite{LioPau-93} to the positive density case. Contrary to \cite{LioPau-93,MarMau-93}, we will not be able to treat singular interaction potentials $w$ (like the Coulomb one $w(x)=|x|^{-1}$ in $\R^3$). Finally, let us mention that after \cite{LioPau-93,MarMau-93}, several works have been devoted to strengthening these results, in particular by proving a stronger notion of convergence: see the works of Athanassoulis and Paul \cite{AthPau-11}, Benedikter, Porta, Saffirio and Schlein \cite{BenPorSafSch-16}, and Golse, Mouhot, and Paul \cite{GolMouPau-16}. Of course, it would be very interesting to extend these stronger convergence results to the positive density case.

The Hartree and Vlasov equations \eqref{eq:hartree} and \eqref{eq:vlasov} are effective (mean-field) equations. A related and important question is the discussion of the physical regimes in which the use of such equations is justified, that is, the derivation of these equations from the fundamental laws of quantum or classical mechanics. To our knowledge, such a derivation in our context (infinite volume, positive density) is still an open question. In the 'zero density' case, however, many results are known since the pioneering works of Hepp \cite{Hepp-74} in the bosonic case, Braun-Hepp \cite{BraHep-77} and Dobrushin~\cite{Dobrushin-79} in the classical case and Narnhofer-Sewel~\cite{NarSew-81} in the fermionic case. Important works dealing with fermions include~\cite{Spohn-81,BarGolGotMau-03,ElgErdSchYau-04,AkiMarSpa-08,FroKno-11,BenPorSch-14,BacBrePetPicTza-15,PetPic-16,BenPorSafSch-16,GolMouPau-16,GolPau-17,DieRadSch-18}. The relevant physical regime is called the \emph{mean-field} regime and corresponds to particles that interact often and weakly.

\subsubsection{Well-posedness of the Vlasov equation at positive density in the energy space}

Our second main result concerns the well-posedness of the nonlinear Vlasov equation at positive density. 

\begin{theorem}[Well-posedness of the Vlasov equation at positive density]\label{thm:wp-vlasov}
Let $d\in\{1,2,3\}$ and $S\in\ccS$. Define $m_{\rm ref}(x,v):=(S')^{-1}(|v|^2)$. Let $w\in (W^{2,1}\cap W^{2,\ii})(\R^d)$ be an even, real-valued function. Let $W_0\in m_{\rm ref}+ L^2_{x,v}$ be such that $0\le W_0\le 1$ and such that $\hcl_S(W_0,m_{\rm ref})<+\ii$. Then, there exists a unique 
$$W\in (m_{\rm ref}+L^\ii_t(\R,L^2_{x,v}))\cap C^0_t(\R,\cD'_{x,v})$$ 
such that $0\le W(t)\le 1$ and $\hcl_S(W(t),m_{\rm ref})\le C$ for all $t\in\R$, which solves the nonlinear Vlasov equation
$$\begin{cases}
        \partial_t W+2v\cdot\nabla_x W-\nabla_x(w*\rho_W)\cdot\nabla_v W=0,\\
        W_{|t=0}=W_0.
    \end{cases}
$$
\end{theorem}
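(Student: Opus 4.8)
The plan is to deduce Theorem~\ref{thm:wp-vlasov} essentially as a corollary of Theorem~\ref{thm:limit-sc}, by constructing a well-chosen sequence of quantum initial data approximating the classical datum $W_0$. Concretely, given $W_0 \in m_{\rm ref} + L^2_{x,v}$ with $0\le W_0\le1$ and $\hcl_S(W_0,m_{\rm ref})<+\ii$, I want to build, for each $\hbar>0$, a density matrix $\gamma_0^\hbar \in \cK_S^\hbar$ such that (i) $0\le\gamma_0^\hbar\le1$, (ii) $W^\hbar_{\gamma_0^\hbar}\to W_0$ in $\cD'_{x,v}$ as $\hbar\to0$, and (iii) $\limsup_{\hbar\to0}\hbar^d\cH_S(\gamma_0^\hbar,\gamma_{\rm ref})<+\ii$ — ideally with the sharp bound $\limsup \hbar^d\cH_S(\gamma_0^\hbar,\gamma_{\rm ref})\le \hcl_S(W_0,m_{\rm ref})$ or at least $\le C_S^{-1}\hcl_S(W_0,m_{\rm ref})$ so that the constants match up. The natural candidate is the \emph{anti-Wick} (Toeplitz/Husimi-type) quantization: set $\gamma_0^\hbar := \oph^\hbar(W_0)$, the operator whose Husimi function is a regularization of $W_0$, or equivalently integrate $W_0$ against coherent-state projectors. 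This automatically gives $0\le\gamma_0^\hbar\le1$ when $0\le W_0\le1$ (coherent states form a positive resolution of the identity), and the Wigner transform of $\oph^\hbar(W_0)$ converges to $W_0$ in distribution as $\hbar\to0$ since the coherent-state width shrinks.

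Once such a sequence $\gamma_0^\hbar$ is in hand, I would run Theorem~\ref{thm:gwp-hartree-h} to get the global Hartree solutions $\gamma^\hbar(t)$, then apply Theorem~\ref{thm:limit-sc} directly: its hypotheses are exactly $\liminf_{\hbar\to0}\hbar^d\cH_S(\gamma_0^\hbar,\gamma_{\rm ref})<+\ii$ and $W^\hbar_{\gamma_0^\hbar}\to W_0$ in $\cD'$. The conclusion yields a solution $W\in (m_{\rm ref}+L^\ii_t L^2_{x,v})\cap C^0_t\cD'_{x,v}$ with $0\le W(t)\le1$ and $\hcl_S(W(t),m_{\rm ref})\le C_S\liminf_\hbar \hbar^d\cH_S(\gamma^\hbar(t),\gamma_{\rm ref}) < +\ii$ uniformly in $t$ (using Corollary~\ref{prop:sc-bounds} to bound the relative entropy along the flow uniformly in $\hbar$ and $t$), solving precisely the Vlasov equation with $W_{|t=0}=W_0$. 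That handles existence. Uniqueness in the stated class is already part of Theorem~\ref{thm:limit-sc}(5) — it is the content of Section~\ref{sec:uniqueness} (the Loeper-type optimal-transport argument adapted to infinite-mass measures) — and applies verbatim to any solution in $(m_{\rm ref}+L^\ii_t L^2_{x,v})\cap C^0_t\cD'_{x,v}$ with $\hcl_S(W(t),m_{\rm ref})\in L^\ii_t$, so nothing extra is needed there.

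The main obstacle is the construction of the approximating sequence with the right entropy bound — step (iii) above. Verifying $0\le\gamma_0^\hbar\le1$ and the weak convergence $W^\hbar_{\gamma_0^\hbar}\to W_0$ is standard coherent-state calculus, but controlling $\hbar^d\cH_S(\oph^\hbar(W_0),\gamma_{\rm ref})$ requires a \emph{reverse} Berezin-Lieb inequality: one needs that the quantum relative entropy of the anti-Wick quantization is bounded above by the classical relative entropy of $W_0$ (possibly with a constant and a small correction vanishing as $\hbar\to0$). This is the mirror image of the inequalities proved in Section~\ref{sec:BL}, and is exactly the kind of statement that is delicate for \emph{relative} entropy (as flagged in the introduction). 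One clean route: note $\gamma_{\rm ref}=(S')^{-1}(-\hbar^2\Delta)$ is itself close to $\oph^\hbar(m_{\rm ref})$ up to lower-order-in-$\hbar$ corrections, write $\cH_S(\oph^\hbar(W_0),\gamma_{\rm ref})$ using the variational/convexity characterization of $\cH_S$, and use operator-Jensen together with the complete positivity of the anti-Wick map $W_0\mapsto\oph^\hbar(W_0)$ to push the entropy inside the phase-space integral, obtaining $\hbar^d\cH_S(\oph^\hbar(W_0),\gamma_{\rm ref})\lesssim \hcl_S(W_0,m_{\rm ref}) + o(1)$. One must be careful that $W_0$ need not be bounded away from $0$ and $1$, so a preliminary truncation $W_0^{(\epsilon)}:=(1-\epsilon)W_0+\epsilon\, m_{\rm ref}\wedge(1-\epsilon)$, or a mollification in phase space, will likely be needed, with the truncation removed after passing to the limit; the monotone/dominated convergence properties of $\hcl_S$ make this routine. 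The remaining arguments — invoking Theorems~\ref{thm:gwp-hartree-h} and~\ref{thm:limit-sc}, and the uniform-in-time bound from Corollary~\ref{prop:sc-bounds} — are then immediate.
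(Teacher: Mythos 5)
Your overall skeleton (build quantum data $\gamma_0^\hbar$ whose Wigner transforms converge to $W_0$ with $\hbar^d\cH_S(\gamma_0^\hbar,\gamma_{\rm ref})$ bounded, run Theorem~\ref{thm:gwp-hartree-h} and Theorem~\ref{thm:limit-sc}, and get uniqueness from the optimal-transport argument of Section~\ref{sec:uniqueness}) is the right one, but the central step of your construction has a genuine gap. You propose $\gamma_0^\hbar:=\oph^\hbar(W_0)$ together with a ``reverse Berezin--Lieb'' inequality to get $\hbar^d\cH_S(\oph^\hbar(W_0),\gamma_{\rm ref})\lesssim\hcl_S(W_0,m_{\rm ref})+o(1)$. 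The difficulty, which the paper explicitly flags in the remark following Lemma~\ref{lem:quantiz}, is that the reverse Berezin--Lieb inequality one can actually prove (for $S_0$, $S_f$, and $S_b$ up to a constant) controls $\cH_S\big(\oph^\hbar(W_0),\oph^\hbar(m_{\rm ref})\big)$, i.e.\ the entropy relative to the \emph{quantized} reference $\oph^\hbar(m_{\rm ref})=(f(|\cdot|^2)*\hbar^{-d/2}|\hat\chi(\cdot/\sqrt\hbar)|^2)(-i\hbar\nabla)$, which is \emph{not} the reference state $\gamma_{\rm ref}=(S')^{-1}(-\hbar^2\Delta)$ entering the definition of $\cK_S^\hbar$ and the conservation law. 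The relative entropy obeys no triangle-type inequality that lets you swap the second argument for a nearby state, the free energy space and Lieb--Thirring bounds are tied to $\gamma_{\rm ref}$ being of the exact form $(S')^{-1}(-\hbar^2\Delta)$, and it is not clear that $\oph^\hbar(m_{\rm ref})$ is of that form for any admissible entropy. So your ``one clean route'' (operator Jensen plus ``$\gamma_{\rm ref}$ is close to $\oph^\hbar(m_{\rm ref})$ up to lower order'') is exactly the point where the argument breaks, and it is the reason the authors do not use the anti-Wick quantization at all.

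The paper instead proceeds in two stages, neither of which appears in your proposal. First, for \emph{smooth, compactly supported} perturbations it quantizes \emph{inside} the distribution function: writing $m_{\rm ref}+b_0=(S')^{-1}(|v|^2+a_0)$ with $a_0\in C^\ii_0$ (Lemma~\ref{lem:redr}), it sets $\gamma_\hbar=(S')^{-1}\big(-\hbar^2\Delta+\opw^\hbar(a_0)\big)$, which is automatically in $[0,1]$, and proves via semiclassical functional calculus and an integral representation of $\cH_S$ (Appendix~\ref{app:int-rep}) that $\hbar^d\cH_S(\gamma_\hbar,\gamma_{\rm ref})\to\hcl_S(m_{\rm ref}+b_0,m_{\rm ref})$ (Lemma~\ref{lem:quantiz}); this yields Proposition~\ref{prop:gwp-smooth} by your intended route. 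Second, a general energy-space datum $W_0$ is approximated by such smooth compactly supported perturbations in the relative-entropy sense (Lemma~\ref{lem:approx-smooth}), and the passage to the limit is done at the \emph{classical} level: the Vlasov solutions $W_n$ have uniformly bounded relative entropy, one extracts weak limits and passes to the limit in the nonlinear term using the high-velocity estimates of Section~\ref{sec:high-freq}, identifying the initial datum at the end. Your remark that the truncation/mollification step is ``routine'' glosses over this second stage, but the truly missing ingredient is the first: without a quantization whose relative entropy with respect to the correct reference state $(S')^{-1}(-\hbar^2\Delta)$ is controlled, the existence part of your argument does not go through.
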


\begin{example}
 When $S(m)=-m\log m+m$, the reference profile is the Maxwellian $(S')^{-1}(|v|^2)=e^{-|v|^2}$. The previous theorem thus gives the existence and uniqueness of solutions to the Vlasov equation for initial data $W_0$ which satisfy $0\le W_0\le1$ and 
 $$-\int_{\R^d}\int_{\R^d}\Big[- W_0(x,v)\log\frac{W_0(x,v)}{e^{-|v|^2}}+W_0(x,v)-e^{-|v|^2}\Big]\,dx\,dv<+\ii.$$
 Writing $W_0=e^{-|v|^2}+q$, the last assumption is verified if for instance we have
 $$\int(1+|v|^2)|q(x,v)|\,dx\,dv<+\ii.$$
\end{example}

As the assumptions of Theorem \ref{thm:wp-vlasov} suggest, we prove this result using the quantum result Theorem \ref{thm:limit-sc}. This is done in Section \ref{sec:wp-vlasov}. In particular, there are severe restrictions on the entropy functional $S$ coming from the quantum world, while the classical relative entropy requires much less assumptions. We thus expect Theorem \ref{thm:wp-vlasov} to hold in a much more general setting. 

To our knowledge, Theorem \ref{thm:wp-vlasov} is the first result about the existence and uniqueness of non-perturbative solutions to the nonlinear Vlasov equation around a non-trivial homogeneous state $m_{\rm ref}$ in $\R^d$, in the energy space. Perturbative solutions with high regularity have been obtained in \cite{BedMouMas-18} in the context of Landau damping, leaving the case of global-in-time non-perturbative solutions open (see Remark 2.2 in \cite{BedMouMas-18}). Of course, our result has two drawbacks: (i) we consider only nice interaction potentials $w$ and (ii) our assumptions on the reference state $m_{\rm ref}$ coming from the entropies are quite stringent. Relaxing these assumptions on $w$ and $m_{\rm ref}$ will be the goal of future work. Let us finally notice that Theorem \ref{thm:wp-vlasov} does not imply Landau damping (that is, weak convergence of $W(t)$ to $m_{\rm ref}$ as $t\to+\ii$, in other words the asymptotic stability of $m_{\rm ref}$). However, it implies the orbital (or rather Lyapounov, since the orbit is reduced to a single point) stability of $m_{\rm ref}$ in the sense of the relative entropy: given $\epsilon>0$, there exists $\eta>0$ such that for all initial $W_0$ with $\hcl_S(W_0,m_{\rm ref})<\eta$, then $\hcl_S(W(t),m_{\rm ref})<\epsilon$ for all $t\in\R$.

\section{Semi-classical limit}\label{sec:sc-limit}

In this section, we start the proof of Theorem \ref{thm:limit-sc} by computing (in Proposition \ref{prop:sc-limit-wigner} below) the equation satisfied by weak limits as $\hbar\to0$ of the Wigner transform of solutions to the Hartree equation in the free energy space. We obtain the Vlasov equation \eqref{eq:vlasov-decoupled} where the phase-space distribution and the density are decoupled. We next show that the limiting phase-space distributions are between $0$ and $1$ (Corollary \ref{coro:limit-husimi}) and give a criterion to identify the weak limit of the density with the density of the weak limit (Proposition \ref{prop:ident-density}). In the next two sections we will give the tools to prove this criterion. The uniqueness of solutions to the Vlasov equation will be proved in Section~\ref{sec:uniqueness}.

\subsection{Deriving Vlasov}\label{sec:der-vla}

We start by computing the equation satisfied by the Wigner tranform of a solution to the Hartree equation, for fixed $\hbar>0$. To give sense to this equation, we first recall a classical fact about the Wigner transform.

\begin{lemma}\label{prop:wigner-L2}
 Let $\hbar>0$ and $Q\in\gS^2$. Then, $W^\hbar_Q\in L^2_{x,v}$ and we have 
 $$\|W_Q^\hbar\|_{L^2_{x,v}}^2=(2\pi\hbar)^d\tr|Q|^2.$$
\end{lemma}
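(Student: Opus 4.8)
The plan is to reduce the identity to the Plancherel theorem, after two elementary reductions: first, that for $Q\in\gS^2$ the integral kernel $(x,y)\mapsto Q(x,y)$ lies in $L^2(\R^d\times\R^d)$ with $\|Q(\cdot,\cdot)\|_{L^2_{x,y}}^2=\tr|Q|^2$, and second, that in the variable $y$ the Wigner transform $W^\hbar_Q(x,\cdot)$ is just a rescaled Fourier transform of the ``rotated'' kernel. To make the manipulations below rigorous at once, I would first establish the identity for finite-rank operators $Q$ with Schwartz-class kernels, where every integral converges absolutely and Fubini applies without comment; the general case then follows by density, since the computation exhibits $Q\mapsto(2\pi\hbar)^{-d/2}W^\hbar_Q$ as an isometry from (a dense subset of) $\gS^2$ into $L^2_{x,v}$, which extends to all of $\gS^2$ by continuity.

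The core computation goes as follows. Introduce the linear map $T(x,y)=(x+y/2,\,x-y/2)$ on $\R^d\times\R^d$; a direct computation gives $|\det T|=1$, so $T$ is measure-preserving, and hence $R_Q(x,y):=Q(x+y/2,x-y/2)=Q(T(x,y))$ again belongs to $L^2(\R^d\times\R^d)$ with $\|R_Q\|_{L^2_{x,y}}^2=\|Q(\cdot,\cdot)\|_{L^2_{x,y}}^2=\tr|Q|^2$. By Fubini, for a.e.\ $x$ the section $R_Q(x,\cdot)$ lies in $L^2_y$, so writing $\cF f(v)=\int_{\R^d}f(y)e^{-iy\cdot v}\d y$ for the Fourier transform we have $W^\hbar_Q(x,v)=\big(\cF R_Q(x,\cdot)\big)(v/\hbar)$. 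Applying Plancherel in the $v$ variable for each such $x$, namely $\int_{\R^d}|\cF f(v)|^2\d v=(2\pi)^d\int_{\R^d}|f(y)|^2\d y$, together with the change of variables $v\mapsto v/\hbar$ which produces the factor $\hbar^d$, we obtain
$$\int_{\R^d}|W^\hbar_Q(x,v)|^2\d v=(2\pi\hbar)^d\int_{\R^d}|R_Q(x,y)|^2\d y.$$
Integrating this over $x\in\R^d$ and using the isometry property of $T$ from the previous step yields $\|W^\hbar_Q\|_{L^2_{x,v}}^2=(2\pi\hbar)^d\|R_Q\|_{L^2_{x,y}}^2=(2\pi\hbar)^d\tr|Q|^2$, which in particular shows $W^\hbar_Q\in L^2_{x,v}$.

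The only delicate point is the pointwise meaning of the defining integral for a general Hilbert–Schmidt $Q$ and the use of Fubini, and this is exactly what the density argument above handles; there is no genuine analytic obstacle, the content of the lemma being Plancherel's theorem combined with a volume-preserving change of variables.
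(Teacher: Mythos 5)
Your proof is correct and follows essentially the same route as the paper, which also deduces the identity from Plancherel's theorem applied to the observation that $W^\hbar_Q(x,\cdot)$ is a (rescaled) Fourier transform of the rotated kernel $Q(x+\cdot/2,x-\cdot/2)$, whose $L^2_{x,y}$-norm equals the Hilbert--Schmidt norm of $Q$. The extra care you take with the measure-preserving change of variables and the density argument is fine but not a different method.
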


\begin{proof}
 This follows from Plancherel's theorem and the fact that $W_Q^\hbar=(2\pi)^{d/2}\cF_{y\to v}[Q(x+\cdot/2,x-\cdot/2)](\cdot/\hbar)$.
\end{proof}

\begin{lemma}[Wigner equation]
Let $\hbar>0$ and $\gamma\in\gamma_{\rm ref}+C^0_t(\R,\gS^2)$ any solution to \eqref{eq:hartree} given by Theorem \ref{thm:gwp-hartree-h}. Then, its Wigner transform $W_\gamma^\hbar\in m_{\rm ref}+C^0_t(\R,L^2_{x,v})$ satisfies the equation
\begin{equation}\label{eq:wigner}
\partial_tW^\hbar_{\gamma(t)}+2v\cdot\nabla_x W^\hbar_{\gamma(t)}+K_\hbar(t,x,\cdot)*_v W^\hbar_{\gamma(t)}=0 
\end{equation}
in $\cD'(\R_t\times\R_x^d\times\R_v^d)$, where
$$K_\hbar(t,x,v)=\frac{i}{(2\pi)^d}\int_{\R^d}\frac{V(t,x+\hbar y/2)-V(t,x-\hbar y/2)}{\hbar}e^{-iv\cdot y}\,dy,$$
with $V(t):=w*\hbar^d\rho_{\gamma(t)}$.
\end{lemma}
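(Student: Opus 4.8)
The plan is to verify \eqref{eq:wigner} in $\cD'(\R_t\times\R^d_x\times\R^d_v)$ by pairing it with product test functions $\chi(t)\psi(x,v)$, $\chi\in C^\infty_c(\R)$ and $\psi\in C^\infty_c(\R^d\times\R^d)$, converting everything into traces against the $\hbar$-smoothing operator $\opw^\hbar(\psi)$ through the duality formula $\langle W^\hbar_A,\psi\rangle=(2\pi\hbar)^d\tr(A\,\opw^\hbar(\psi))$. The first step is to collect the facts that make every term meaningful. Writing $Q(t):=\gamma(t)-\gamma_{\rm ref}\in C^0_t(\gS^2)$, Lemma~\ref{prop:wigner-L2} and the identity $W^\hbar_{\gamma_{\rm ref}}=m_{\rm ref}$ give $W^\hbar_\gamma=m_{\rm ref}+W^\hbar_Q$ with $W^\hbar_Q\in C^0_t(L^2_{x,v})$; in particular this proves the regularity $W^\hbar_\gamma\in m_{\rm ref}+C^0_t(L^2_{x,v})$ asserted in the statement. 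Since $\rho_{\gamma_{\rm ref}}$ is constant, $V(t)=\hbar^dw*\rho_{\gamma(t)}=c_0+V_Q(t)$ with $c_0=(2\pi)^{-d}\int w\int g$ and $V_Q(t)=w*\rho_{W^\hbar_{Q(t)}}$; by the Lieb--Thirring inequality \eqref{eq:LT-S} (through Corollary~\ref{prop:sc-bounds}), Young's inequality and $w\in W^{2,1}\cap W^{2,\infty}$, the functions $V_Q,\nabla_xV_Q,\nabla^2_xV_Q$ are bounded uniformly in $t$ and continuous in $t$. Hence, for each $(t,x)$, the map $y\mapsto\hbar^{-1}(V(t,x+\hbar y/2)-V(t,x-\hbar y/2))$ is a bounded continuous function of $y$, so $K_\hbar(t,x,\cdot)$ is a well-defined tempered distribution, odd in $v$, unchanged if $V$ is replaced by $V_Q=V-c_0$, and the convolution term $K_\hbar(t,x,\cdot)*_vW^\hbar_{\gamma(t)}$ makes sense in $\cD'$.

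The second step is to record two exact identities of the Weyl calculus; recall that $\opw^\hbar$ inverts $W^\hbar$, i.e. $W^\hbar_{\opw^\hbar(\psi)}=\psi$. (i) Because the Weyl symbol $|v|^2$ of $-\hbar^2\Delta$ is a polynomial of degree two, the Moyal expansion terminates and $[-\hbar^2\Delta,\opw^\hbar(\psi)]=\tfrac{\hbar}{i}\opw^\hbar(2v\cdot\nabla_x\psi)$ exactly; equivalently, passing to integral kernels in the variables $s=(x+y)/2$, $z=x-y$, using $\Delta_x-\Delta_y=2\nabla_s\cdot\nabla_z$ and one integration by parts in $z$, one gets, for any operator $A$ with a Schwartz kernel, $W^\hbar_{[-\hbar^2\Delta,A]}=-2i\hbar\,v\cdot\nabla_xW^\hbar_A$. (ii) For a bounded Lipschitz function $U$, writing the kernel of $[U,\opw^\hbar(\psi)]$ as $(U(x)-U(y))\opw^\hbar(\psi)(x,y)$ and using that the semiclassical Fourier transform of a product of functions of $z$ is a convolution in $v$, one gets $[U,\opw^\hbar(\psi)]=\opw^\hbar(G^\hbar_U*_v\psi)$ and, more generally, $W^\hbar_{[U,A]}=G^\hbar_U*_vW^\hbar_A$, where $G^\hbar_U(x,v):=(2\pi)^{-d}\int(U(x+\hbar y/2)-U(x-\hbar y/2))e^{-iv\cdot y}\,dy$; comparing with the definition of $K_\hbar$ yields $G^\hbar_{V(t)}=\tfrac{\hbar}{i}K_\hbar(t)$.

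For the main computation, fix $\psi$ and set $f(t):=\langle W^\hbar_{\gamma(t)},\psi\rangle=(2\pi\hbar)^d\tr(Q(t)\opw^\hbar(\psi))+\langle m_{\rm ref},\psi\rangle$. The last term is constant in $t$, and by the regularity of the solution established in \cite{LewSab-13a} (which, $\opw^\hbar(\psi)\in\gS^1$ being $\hbar$-smoothing, makes $\tfrac{d}{dt}\tr(Q(t)\opw^\hbar(\psi))=\tr(\partial_tQ(t)\opw^\hbar(\psi))$ legitimate), using $i\hbar\partial_tQ=[-\hbar^2\Delta,Q]+[V,\gamma]$, $[-\hbar^2\Delta,\gamma]=[-\hbar^2\Delta,Q]$, identity (i) for the first term (applied after approximating $Q(t)$ in $\gS^2$ by finite-rank operators with Schwartz kernel and passing to the limit via the $\gS^2\to L^2_{x,v}$ continuity of Lemma~\ref{prop:wigner-L2}), and cyclicity of the trace together with identity (ii) for the second term (here $V$ is bounded, so no domain issue arises), one obtains, after the $(2\pi\hbar)^{\pm d}$ factors cancel,
\[
f'(t)=\frac{1}{i\hbar}\Big(\langle -2i\hbar\,v\cdot\nabla_xW^\hbar_{\gamma(t)},\psi\rangle-\langle W^\hbar_{\gamma(t)},G^\hbar_{V(t)}*_v\psi\rangle\Big).
\]
Using $G^\hbar_{V(t)}=\tfrac{\hbar}{i}K_\hbar(t)$ (so that $\tfrac1{i\hbar}G^\hbar_{V(t)}=-K_\hbar(t)$) and the oddness of $K_\hbar$ in $v$ (so that $\langle W^\hbar_{\gamma(t)},K_\hbar(t)*_v\psi\rangle=-\langle K_\hbar(t,x,\cdot)*_vW^\hbar_{\gamma(t)},\psi\rangle$), this simplifies to $f'(t)=-\langle 2v\cdot\nabla_xW^\hbar_{\gamma(t)}+K_\hbar(t,x,\cdot)*_vW^\hbar_{\gamma(t)},\psi\rangle$; multiplying by $\chi(t)$ and integrating over $t$ is exactly the distributional formulation of \eqref{eq:wigner}.

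The computations themselves are routine changes of variables and Fourier transforms; the only genuine difficulty is making them rigorous in the positive-density regime, where $\gamma$ is neither trace-class nor Hilbert--Schmidt and $-\hbar^2\Delta$ is unbounded. This is precisely what the two structural facts handle: the $\gS^2$-continuity of the Wigner transform (Lemma~\ref{prop:wigner-L2}), which lets one peel off the stationary part $\gamma_{\rm ref}$ and approximate $Q$, and the exactness of the Weyl calculus on the quadratic symbol $|v|^2$, which turns the otherwise ill-behaved commutator $[-\hbar^2\Delta,\opw^\hbar(\psi)]$ into a bona fide smoothing operator.
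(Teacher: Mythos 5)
Your derivation is correct: the two exact Weyl-calculus identities (the terminating Moyal expansion for the quadratic symbol $|v|^2$, giving $W^\hbar_{[-\hbar^2\Delta,A]}=-2i\hbar\,v\cdot\nabla_xW^\hbar_A$, and the convolution identity $W^\hbar_{[U,A]}=G^\hbar_U*_vW^\hbar_A$ with $G^\hbar_{V}=\tfrac{\hbar}{i}K_\hbar$), the sign bookkeeping, and the use of oddness of $K_\hbar$ to pass the convolution onto the test function all check out, and your dual formulation of the convolution term is exactly the one the paper fixes in Remark~\ref{rk:wigner}. The paper itself does not prove this lemma: it simply cites \cite[Prop.~II.1]{LioPau-93}. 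So your proposal differs in that it reconstructs the Lions--Paul computation and makes explicit the adaptations that the positive-density setting requires and that the citation leaves implicit: splitting $\gamma=\gamma_{\rm ref}+Q$ with $Q\in C^0_t(\gS^2)$, using the $\gS^2\to L^2_{x,v}$ continuity of Lemma~\ref{prop:wigner-L2}, writing $V=c_0+V_Q$ so that $[V,\gamma]=[V_Q,Q]+[V_Q,\gamma_{\rm ref}]$ lies in Schatten classes where the trace manipulations are legitimate, and noting $K_\hbar$ only sees $V_Q$. This is worthwhile content, since the reference is written for the finite-mass case. Two minor points: you invoke $w\in W^{2,1}\cap W^{2,\ii}$, whereas the lemma only needs $\nabla V\in L^\ii_x$ (e.g. $\nabla w\in L^2\cap L^{1+d/2}$) as in Remark~\ref{rk:wigner}; and the step $\tfrac{d}{dt}\tr(Q(t)\opw^\hbar(\psi))=\tr(\partial_tQ(t)\opw^\hbar(\psi))$ is cleanest if run through the Duhamel (mild) formulation in which the solution of Theorem~\ref{thm:gwp-hartree-h} is actually constructed, rather than assuming a pointwise time derivative of $Q$; you flag this, and it is only a routine rearrangement of the same identities.
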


\begin{remark}\label{rk:wigner}
 The last term of the Wigner equation is understood in the sense of distributions as follows: let $\phi\in C^\ii_c(\R_t\times\R_x^d\times\R_v^d)$. Then, we have
 $$\langle K_\hbar(t,x,\cdot)*_v W^\hbar_{\gamma(t)},\phi\rangle:=\int_\R\langle W_{\gamma(t)}^\hbar,\psi^\hbar[\phi(t)]\rangle_{x,v}\,dt,$$
 where 
 \begin{multline*}
    \psi^\hbar[\phi](x,v):=\\
    \frac{i}{(2\pi)^{d/2}}\int_{\R^d}\cF_{v\to y}\phi(x,y)e^{i y\cdot v}\frac{V(t,x+\hbar y/2)-V(t,x-\hbar y/2)}{\hbar}\,dy.
 \end{multline*}
 Now since $W^\hbar_\gamma\in C^0_t(L^\ii_x L^1_v+L^2_{x,v})$, we can bound
 \begin{multline*}
    |\langle K_\hbar(t,x,\cdot)*_v W^\hbar_{\gamma(t)},\phi\rangle|\\
    \le\|W^\hbar_\gamma\|_{L^\ii_t(L^\ii_x L^1_v+L^2_{x,v})}\|\cF_{v\to y}\psi^\hbar[\phi(t)]\|_{L^1_t(L^1_{x,y}\cap L^2_{x,y})}.
 \end{multline*}
 Using that 
 $$\cF_{v\to y}\psi^\hbar[\phi(t)](x,y)=i\cF_{v\to y}\phi(t,x,y)\frac{V(t,x+\hbar y/2)-V(t,x-\hbar y/2)}{\hbar},$$
 we deduce that 
 $$\|\cF_{v\to y}\psi^\hbar[\phi(t)]\|_{L^1_t(L^1_{x,y}\cap L^2_{x,y})}\le\|\nabla V\|_{L^\ii_x} \||y|(\cF_{v\to y}\phi)(t,x,y)\|_{L^1_t(L^1_{x,y}\cap L^2_{x,y})},$$
 showing that the Wigner equation is well-defined in the sense of distributions (if $\nabla V\in L^\ii_x$, which is the case if $\nabla w\in (L^2\cap L^{1+d/2})(\R^d)$). 
\hfill$\diamond$\end{remark}

\begin{proof}
  This is done in \cite[Prop. II.1]{LioPau-93}.
\end{proof}

\begin{proposition}[Semi-classical limit of the Wigner equation]\label{prop:sc-limit-wigner}
 Let $\gamma\in\gamma_{\rm ref}+C^0_t(\R,\gS^2)$ be any solution to \eqref{eq:hartree} given by Theorem \ref{thm:gwp-hartree-h}, such that $\hbar^d\cH_S(\gamma_0,\gamma_{\rm ref})$ is bounded as $\hbar\to0$. Assume that $\nabla w\in (L^2\cap L^{1+d/2})(\R^d)$. Then, for any sequence $\hbar_n\to0$, there exists a subsequence (that we still denote by $(\hbar_n)$) and  there exist
 $$W\in \left( m_{\rm ref}+L^\ii_t(\R,L^2_{x,v}) \right) \cap C^0_t(\R,\cD'_{x,v})$$
 and
 $$\nu \in L^\ii_t(\R,(L^2+L^{\min(1+2/d,2)})(\R^d)),$$
 such that 
 $$W_{\gamma}^{\hbar_n}\to W$$
 as $n\to\ii$, weakly-$*$ in $m_{\rm ref}+L^\ii_t(\R,L^2_{x,v})$, in $\cD'_{x,v}$ uniformly on compact sets of $\R$, and such that 
 $$\hbar_n^d\rho_{\gamma}-\hbar_n^d\rho_{\gamma_{\rm ref}}\to \nu$$
 as $n\to\ii$, weakly-$*$ in $L^\ii_t(\R,(L^2+L^{\min(1+2/d,2)})(\R^d))$. Furthermore, $(W,\nu)$ satisfy the transport equation
 \begin{equation}\label{eq:vlasov-decoupled}
    \partial_t W+2v\cdot\nabla_x W-\nabla_x (w*\nu(t))\cdot\nabla_v W=0,
 \end{equation}
 in $\cD'(\R_t\times\R^d_x\times\R^d_v)$.
\end{proposition}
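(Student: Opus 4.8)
The plan is to combine the uniform-in-$\hbar$ a priori bounds from Corollary \ref{prop:sc-bounds} with a standard compactness argument on the Wigner equation \eqref{eq:wigner}, the only nontrivial issue being the passage to the limit in the nonlinear potential term. First I would record, using Lemma \ref{prop:wigner-L2} and the Klein inequality (Lemma \ref{lem:klein-LT-h}), that
$$\|W^{\hbar}_{\gamma(t)}-m_{\rm ref}\|_{L^\ii_t L^2_{x,v}}^2=(2\pi\hbar)^d\|\gamma(t)-\gamma_{\rm ref}\|_{\gS^2}^2\le(2\pi\hbar)^d\,C\,\hbar^{-d}\cH_S(\gamma_0,\gamma_{\rm ref})^{-1}\cdot\hbar^d\cH_S(\gamma_0,\gamma_{\rm ref}),$$
more precisely that $(2\pi\hbar)^d\|\gamma(t)-\gamma_{\rm ref}\|_{\gS^2}^2\le C\,(2\pi)^d\,\hbar^d\cH_S(\gamma(t),\gamma_{\rm ref})\le C'$ uniformly in $\hbar$ and $t$, since $\hbar^d\cH_S(\gamma(t),\gamma_{\rm ref})$ is bounded by Corollary \ref{prop:sc-bounds}. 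Likewise the Lieb-Thirring inequality \eqref{eq:LT-S} gives $\|\hbar^d\rho_{\gamma(t)}-\hbar^d\rho_{\gamma_{\rm ref}}\|_{L^\ii_t(L^2+L^{\min(1+2/d,2)})}\le C$ (this is exactly the bound stated in Corollary \ref{prop:sc-bounds}). By Banach-Alaoglu, along a subsequence $\hbar_n\to0$, $W^{\hbar_n}_{\gamma}-m_{\rm ref}\wto W-m_{\rm ref}$ weakly-$*$ in $L^\ii_t L^2_{x,v}$ and $\hbar_n^d\rho_\gamma-\hbar_n^d\rho_{\gamma_{\rm ref}}\wto\nu$ weakly-$*$ in $L^\ii_t(L^2+L^{\min(1+2/d,2)})$; I would use separability of the relevant predual and a diagonal argument to make this clean.

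Next I would pass to the limit in the Wigner equation \eqref{eq:wigner} tested against a fixed $\phi\in C^\ii_c(\R_t\times\R^{2d})$. The transport part $\partial_tW^{\hbar_n}_\gamma+2v\cdot\nabla_x W^{\hbar_n}_\gamma$ is linear and passes directly to the weak limit, giving $\partial_t W+2v\cdot\nabla_x W$ in $\cD'$. For the potential term, Remark \ref{rk:wigner} shows that $\langle K_{\hbar_n}(t,x,\cdot)*_v W^{\hbar_n}_{\gamma(t)},\phi\rangle=\int_\R\langle W^{\hbar_n}_{\gamma(t)},\psi^{\hbar_n}[\phi(t)]\rangle_{x,v}\,dt$, where $\cF_{v\to y}\psi^{\hbar_n}[\phi(t)](x,y)=i\,\cF_{v\to y}\phi(t,x,y)\cdot\hbar_n^{-1}\big(V_n(t,x+\hbar_n y/2)-V_n(t,x-\hbar_n y/2)\big)$ with $V_n(t)=w*(\hbar_n^d\rho_{\gamma(t)})$. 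The key point is that, since $w\in W^{2,1}\cap W^{2,\ii}$ hence $\nabla w,\nabla^2 w\in L^2\cap L^{1+d/2}$, the difference quotient $\hbar_n^{-1}(V_n(t,x+\hbar_n y/2)-V_n(t,x-\hbar_n y/2))$ equals $\nabla V_n(t,x)\cdot y$ plus an error controlled by $\hbar_n|y|^2\|\nabla^2 V_n(t)\|_{L^\ii_x}$, and $\|\nabla^2 V_n(t)\|_{L^\ii_x}=\|\nabla^2 w*(\hbar_n^d\rho_{\gamma(t)})\|_{L^\ii_x}\le\|\nabla^2 w\|_{L^2\cap L^{1+d/2}}\big(\|\nabla^2 w\|_{\ldots}\text{-type bound}\big)$ is uniformly bounded using the uniform density bound; thus the error term vanishes as $\hbar_n\to0$ because of the explicit factor $\hbar_n$. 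After undoing the Fourier transform, the main term reads $-\int_\R\langle W^{\hbar_n}_{\gamma(t)},\nabla_x V_n(t)\cdot\nabla_v\phi(t)\rangle\,dt$, i.e. $-\int_\R\langle W^{\hbar_n}_{\gamma(t)},\nabla_x(w*(\hbar_n^d\rho_{\gamma(t)}))\cdot\nabla_v\phi(t)\rangle\,dt$.

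The remaining difficulty — and I expect this to be the main obstacle — is that this last expression is a product of two sequences that converge only weakly-$*$, namely $W^{\hbar_n}_\gamma$ and $\nabla_x V_n=\nabla_x(w*(\hbar_n^d\rho_\gamma))$, so one cannot pass to the limit naively. The resolution is that $w$ is smooth enough to make $\nabla_x V_n$ converge \emph{strongly} in a suitable sense: $\nabla_x V_n(t)=\nabla w*(\hbar_n^d\rho_{\gamma(t)}-\hbar_n^d\rho_{\gamma_{\rm ref}})+(\text{constant})$, and since $\hbar_n^d\rho_\gamma-\hbar_n^d\rho_{\gamma_{\rm ref}}\wto\nu$ weakly-$*$ in $L^\ii_t(L^2+L^{\min(1+2/d,2)})$ and $\nabla w\in L^1\cap L^{(\text{dual exponents})}$, convolution with $\nabla w$ is a compact-type operation: $\nabla_x V_n\to\nabla_x(w*\nu)$ in, say, $C^0$ on compact sets in $(t,x)$ (or at least strongly enough in $L^\ii_t L^q_{x,\mathrm{loc}}$ to pair against the weakly convergent $W^{\hbar_n}_\gamma\in L^\ii_t L^2_{x,v}$ integrated against the compactly supported $\nabla_v\phi$). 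Combined with the weak-$*$ convergence $W^{\hbar_n}_\gamma\wto W$, the product converges to $-\int_\R\langle W(t),\nabla_x(w*\nu(t))\cdot\nabla_v\phi(t)\rangle\,dt$, which is precisely the weak formulation of $-\nabla_x(w*\nu(t))\cdot\nabla_v W$. This yields \eqref{eq:vlasov-decoupled} in $\cD'(\R_t\times\R^{2d})$. Finally, the fact that $W\in C^0_t(\cD'_{x,v})$ and that the convergence in $\cD'_{x,v}$ is uniform on compact time intervals follows from equation \eqref{eq:vlasov-decoupled} itself: $\partial_t W$ is bounded in $L^\ii_t$ with values in a negative-order Sobolev space (since $W\in L^\ii_tL^2_{x,v}$ and $\nabla_x(w*\nu)\in L^\ii_{t,x}$), so $t\mapsto\langle W(t),\phi\rangle$ is Lipschitz for each fixed $\phi$, giving equicontinuity in $t$, and an Arzelà–Ascoli argument in $\cD'_{x,v}$ upgrades the weak-$*$ $L^\ii_t$ convergence to uniform-on-compacts convergence in $\cD'_{x,v}$.
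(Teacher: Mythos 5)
Your overall strategy is the same as the paper's: extract weak-$*$ limits from the uniform bounds of Corollary \ref{prop:sc-bounds} and Lemma \ref{prop:wigner-L2}, pass to the limit in the linear transport part, and handle the potential term by pairing the weakly convergent Wigner functions against a \emph{strongly} convergent coefficient built from $\nu^{\hbar_n}=\hbar_n^d\rho_{\gamma}-\hbar_n^d\rho_{\gamma_{\rm ref}}$. The one genuinely different step is your treatment of the quantum difference quotient $\hbar_n^{-1}\big(V_n(x+\hbar_n y/2)-V_n(x-\hbar_n y/2)\big)$: you Taylor-expand and control the remainder by $\hbar_n|y|^2\|\nabla^2 V_n\|_{L^\ii}$, which requires $D^2w$ in suitable $L^p$ spaces. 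That is legitimate under the hypotheses of Theorem \ref{thm:limit-sc} ($w\in W^{2,1}\cap W^{2,\ii}$), but the proposition you are asked to prove assumes only $\nabla w\in(L^2\cap L^{1+d/2})(\R^d)$, so your argument proves a weaker statement. The paper avoids second derivatives entirely by writing, for each fixed $(x,y)$, the difference quotient as $\langle\nu^{\hbar_n}(t),g_{\hbar_n}\rangle$ with $g_{\hbar}(z)=\hbar^{-1}\big(w(x+\hbar y/2-z)-w(x-\hbar y/2-z)\big)$, and combining the \emph{strong} convergence $g_{\hbar_n}\to y\cdot\nabla w(x-\cdot)$ in $L^2\cap L^{1+d/2}$ (fundamental theorem of calculus plus continuity of translations) with the weak convergence of $\nu^{\hbar_n}(t)$; this is what buys the weaker hypothesis on $w$.

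Two smaller points to tighten. First, you cannot expect $\nabla_xV_n\to\nabla_x(w*\nu)$ uniformly (or strongly in $L^\ii_t$) in the time variable, since $\nu^{\hbar_n}$ converges only weakly-$*$ in $L^\ii_t$; as in the paper, one extracts a further subsequence so that $\nu^{\hbar_n}(t)\rightharpoonup\nu(t)$ for a.e.\ $t$, gets the convergence of the coefficient for a.e.\ $t$ (locally uniformly in $x$, by equicontinuity coming from continuity of translations of $\nabla w$ in $L^2\cap L^{1+d/2}$), and concludes by dominated convergence in $t$ using that $\phi$ is compactly supported in time and $\|\nabla V_n\|_{L^\ii}$ is uniformly bounded. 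Second, the uniform-on-compact-time-intervals convergence in $\cD'_{x,v}$ and the membership $W\in C^0_t\cD'_{x,v}$ require equicontinuity of the \emph{sequence} $t\mapsto\langle W^{\hbar_n}_{\gamma(t)},\phi\rangle$, which follows from the uniform-in-$n$ bound on $\partial_t W^{\hbar_n}_{\gamma}$ furnished by the Wigner equation and Remark \ref{rk:wigner} (using the uniform bound on $\|\nabla V_n\|_{L^\ii}$), not from the limit equation \eqref{eq:vlasov-decoupled}, which only controls the limit $W$. Also, your first displayed inequality is garbled (the factors of $\cH_S(\gamma_0,\gamma_{\rm ref})^{\pm1}$ cancel); the corrected sentence that follows it is the right statement.
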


\begin{proof}
 The existence of the weak limits follows from the boundedness of $W_{\gamma}^\hbar-m_{\rm ref}$ in $L^\ii_t L^2_{x,v}$ and of $\hbar^d\rho_{\gamma}-\hbar^d\rho_{\gamma_{\rm ref}}$ in $L^\ii_t(L^2+L^{\min(1+2/d,2)})$ as $\hbar\to0$, as a consequence of Proposition \ref{prop:sc-bounds} and Lemma~\ref{prop:wigner-L2}. The fact that $W\in C^0_t\cD'_{x,v}$ and that $W^{\hbar_n}_{\gamma}\to W$ in $\cD'_{x,v}$ uniformly on compact sets follows from the boundedness of $\partial_t W^\hbar_\gamma$ in $\cD'_{x,v}$ as $\hbar\to0$ using the Wigner equation and Remark \ref{rk:wigner}. It thus remains to derive the limiting equation, by taking the limit $\hbar_n\to0$ in the Wigner equation \eqref{eq:wigner} in the sense of distributions. The limit of the first two terms follows from the fact that the map $T\in\cD'_{t,x,v}\mapsto \partial_t T+2v\cdot\nabla_x T\in\cD'_{t,x,v}$ is continuous. We only have to deal with the last term. To do so, Let $\phi\in C^\ii_0(\R_t\times\R^d_x\times\R^d_v)$ and write
 $$W_{\gamma(t)}^\hbar=m_{\rm ref}+q^\hbar(t),\ \hbar^d\rho_{\gamma(t)}=\hbar^d\rho_{\gamma_{\rm ref}}+\nu^\hbar(t).$$
 Up to a subsequence, we may assume that $q^{\hbar_n}(t)\to q(t):=W(t)-m_{\rm ref}$ weakly-$*$ in $L^2_{x,v}$ and $\nu^{\hbar_n}(t)\to \nu(t)$ weakly-$*$ in $L^2+L^{\min(1+2/d,2)}$ as $n\to\ii$, for a.e. $t\in\R$. Since 
 $$\langle K_\hbar(t,x,\cdot)*_v W^\hbar_{\gamma(t)},\phi\rangle:=\int_\R\langle W_{\gamma(t)}^\hbar,\psi^\hbar[\phi(t)]\rangle_{x,v}\,dt$$
 and since $\phi$ is compactly supported in $t$, if we show that $\psi^{\hbar_n}[\phi(t)]\to\psi^0(t)$ strongly in $L^1_xL^\ii_v\cap L^2_{x,v}$ as $n\to\ii$, for a.e. $t\in\R$, where 
 $$\psi_0(t,x,v):=\nabla_v\phi(t,x,v)\cdot\nabla_x(w*\nu(t)),$$
 then by dominated convergence we indeed have that 
 $$\langle K_{\hbar_n}(t,x,\cdot)*_v W^{\hbar_n}_{\gamma(t)},\phi\rangle\to\langle-\nabla_x(w*\nu(t))\cdot\nabla_v W,\phi\rangle$$
 as $n\to\ii$. Let us thus show that $\psi^{\hbar_n}[\phi(t)]\to\psi^0(t)$ strongly in $L^1_xL^\ii_v\cap L^2_{x,v}$, by showing that 
 $$\|\cF_{v\to y}\psi^{\hbar_n}[\phi(t)]-\cF_{v\to y}\psi^0(t)\|_{L^1_{x,y}\cap L^2_{x,y}}\to 0$$
 as $n\to\ii$, for a.e. $t\in\R$. Notice that for all $t,x,v$ we have
 \begin{multline*}
    \cF_{v\to y}\psi^{\hbar_n}[\phi(t)](x,y)-\cF_{v\to y}\psi^0(t,x,v)\\
    =i\cF_{v\to y}\phi(t,x,y)\left(\frac{V(x+\hbar_ny/2)-V(x-\hbar_ny/2)}{\hbar_n}-y\cdot\nabla(w*\nu(t))(x)\right),
 \end{multline*}
 By dominated convergence, it is enough to show that for a.e. $x,y\in\R^d$, we have
 $$\frac{V(x+\hbar_ny/2)-V(x-\hbar_ny/2)}{\hbar_n}-y\cdot\nabla(w*\nu(t))(x)\to0$$
 as $n\to\ii$. This term can rewritten as
 $$\frac{V(x+\hbar_ny/2)-V(x-\hbar_ny/2)}{\hbar_n}-y\cdot\nabla(w*\nu(t))(x)=\langle\nu^{\hbar_n}(t),g_{\hbar_n}\rangle-\langle\nu(t),g_0\rangle,$$
 where 
 $$g_\hbar(z):=\frac{w(x+\hbar y/2-z)-w(x-\hbar y/2-z)}{\hbar},\ g_0(z):=y\cdot\nabla w(x-z).$$
 We split it as 
 $$\langle\nu^{\hbar_n}(t),g_{\hbar_n}\rangle-\langle\nu_t,g_0\rangle=\langle\nu^{\hbar_n}(t),g_{\hbar_n}-g_0\rangle-\langle\nu^{\hbar_n}(t)-\nu(t),g_0\rangle.$$
 Since $\nabla w\in L^2\cap L^{1+d/2}$, it is clear that $g_h\to g_0$ strongly as $\hbar\to0$ in $L^2\cap L^{1+d/2}$, and since $\nu^{\hbar_n}(t)\to \nu(t)$ weakly in $L^2+L^{\min(1+2/d),2}$, we conclude that 
 $$\langle\nu^{\hbar_n}(t),g_{\hbar_n}-g_0\rangle\to0,\ \langle\nu^{\hbar_n}(t)-\nu(t),g_0\rangle\to0$$
 as $n\to\ii$, finishing the proof.
\end{proof}

\begin{remark}\label{rk:uniqueness}
 Under the additional assumption that $W^\hbar_{\gamma(0)}$ converges as $\hbar\to0$ in $\cD'_{x,v}$, we will show in Corollary \ref{coro:ident-density} and Theorem \ref{thm:uniqueness} that the pair $(W,\nu)$ in Proposition \ref{prop:sc-limit-wigner} is the \emph{same} for all sequences $\hbar_n\to0$. In this case, the conclusions of Proposition \ref{prop:sc-limit-wigner} hold as $\hbar\to0$, and not only up to subsequence (thus recovering the statement of Theorem \ref{thm:limit-sc}).
\hfill$\diamond$\end{remark}

\subsection{A tightness condition allowing to identify the limiting density}\label{sec:ident-dens}

We now introduce the well-known notion of Husimi (or coherent states) transform, and use it to prove that the phase-space distribution takes values between $0$ and $1$ and to give a criterion on the identification of the density $\nu(t)$ in Proposition \ref{prop:sc-limit-wigner}. We first recall the definition and properties of Husimi transforms. 

\begin{definition}[Husimi transform]\label{def:husimi}
 Let $\chi\in L^2(\R^d,\R)$ with $\int_{\R^d}\chi^2=1$. For any $\hbar>0$ and for any $(x,v)\in\R^d\times\R^d$, define the coherent state 
 $$\chi_{x,v}^\hbar(z):=\chi^\hbar(z-x)e^{iv\cdot z/\hbar},\ \chi^\hbar(z):=\hbar^{-d/4}\chi(\hbar^{-1/2}z),\ z\in\R^d.$$
 Let $\gamma$ a bounded operator on $L^2(\R^d)$. Its Husimi transform $m_\gamma^\hbar$ is defined by 
 $$m^\hbar_\gamma(x,v):=\langle \chi^\hbar_{x,v},\gamma \chi^\hbar_{x,v}\rangle,\ (x,v)\in\R^d\times\R^d.$$
\end{definition}

\begin{remark}
 When $\gamma_{\rm ref}=g(-i\hbar\nabla)$ for some $0\le g\le 1$, we have
 $$m_{\gamma_{\rm ref}}^\hbar(x,v)=g*_v \hbar^{-d/2}|\hat{\chi}|^2(\cdot/\sqrt{\hbar}),\ (x,v)\in\R^d\times\R^d.$$
\hfill$\diamond$\end{remark}

\begin{lemma}[Relation between Wigner and Husimi]\label{lem:husimi-wigner}
 Assume that $\chi\in\cS_z(\R^d)$. Then, for any $\hbar>0$, for any bounded operator $\gamma$ on $L^2(\R^d)$, and for any $\phi\in C^\ii_0(\R^d\times\R^d)$, we have
 $$\langle m_\gamma^\hbar,\phi\rangle=\left\langle W^\hbar_\gamma,\phi * \hbar^{-d}W^1_{|\chi\rangle\langle\chi|}(\cdot/\sqrt{\hbar},\cdot/\sqrt{\hbar})\right\rangle.$$
\end{lemma}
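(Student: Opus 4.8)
The plan is to compute both sides against a test function $\phi \in C^\infty_0(\R^d\times\R^d)$ and reduce the identity to an elementary convolution formula for Weyl symbols. First I would recall that, for a rank-one operator $|\chi^\hbar_{x,v}\rangle\langle\chi^\hbar_{x,v}|$, the matching between the Husimi value $m^\hbar_\gamma(x,v) = \langle\chi^\hbar_{x,v},\gamma\chi^\hbar_{x,v}\rangle = \tr\big(\gamma\,|\chi^\hbar_{x,v}\rangle\langle\chi^\hbar_{x,v}|\big)$ and the Weyl calculus is
$$
\langle m^\hbar_\gamma,\phi\rangle = \int_{\R^{2d}}\phi(x,v)\,\tr\big(\gamma\,|\chi^\hbar_{x,v}\rangle\langle\chi^\hbar_{x,v}|\big)\d{x}\d{v} = (2\pi\hbar)^d\,\tr\!\Big(\gamma\,\opw^\hbar(\Phi^\hbar)\Big),
$$
where $\Phi^\hbar$ is the Weyl symbol obtained by integrating $\phi(x,v)$ against the Weyl symbols of the coherent-state projectors. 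The key computational input is that $\opw^\hbar\big(|\chi^\hbar_{x,v}\rangle\langle\chi^\hbar_{x,v}|\big)$, as a function on phase space, is a translate (in both position and momentum) of the Weyl symbol of $|\chi^\hbar\rangle\langle\chi^\hbar|$, and this last symbol is exactly $W^\hbar_{|\chi^\hbar\rangle\langle\chi^\hbar|}$ up to the normalization conventions fixed earlier in the excerpt. Concretely, one has $W^\hbar_{|\chi^\hbar_{x_0,v_0}\rangle\langle\chi^\hbar_{x_0,v_0}|}(x,v) = W^\hbar_{|\chi^\hbar\rangle\langle\chi^\hbar|}(x-x_0,v-v_0)$, and a rescaling identity $W^\hbar_{|\chi^\hbar\rangle\langle\chi^\hbar|}(x,v) = W^1_{|\chi\rangle\langle\chi|}(x/\sqrt\hbar,v/\sqrt\hbar)$, which is what produces the dilated kernel $\hbar^{-d}W^1_{|\chi\rangle\langle\chi|}(\cdot/\sqrt\hbar,\cdot/\sqrt\hbar)$ in the statement (the $\hbar^{-d}$ being the Jacobian that keeps it a probability density in each variable).

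Next I would assemble these pieces. Using the duality formula for the Wigner transform recalled in the excerpt, $\langle W^\hbar_\gamma,\psi\rangle = (2\pi\hbar)^d\tr\big(\gamma\,\opw^\hbar(\psi)\big)$, together with the Fubini-type identity
$$
\opw^\hbar(\Phi^\hbar) = \frac{1}{(2\pi\hbar)^d}\int_{\R^{2d}}\phi(x_0,v_0)\,|\chi^\hbar_{x_0,v_0}\rangle\langle\chi^\hbar_{x_0,v_0}|\d{x_0}\d{v_0},
$$
one gets
$$
\langle m^\hbar_\gamma,\phi\rangle = \int_{\R^{2d}}\phi(x_0,v_0)\,\tr\big(\gamma\,|\chi^\hbar_{x_0,v_0}\rangle\langle\chi^\hbar_{x_0,v_0}|\big)\d{x_0}\d{v_0} = \big\langle W^\hbar_\gamma,\ \phi * \Theta^\hbar\big\rangle,
$$
where $\Theta^\hbar(z,w) := \hbar^{-d}W^1_{|\chi\rangle\langle\chi|}(z/\sqrt\hbar,w/\sqrt\hbar)$, after using that the Weyl symbol of the coherent projector is the claimed translate-and-dilate of $W^1_{|\chi\rangle\langle\chi|}$ and swapping the two phase-space integrations. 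Because $\chi\in\cS$, $W^1_{|\chi\rangle\langle\chi|}\in\cS_{x,v}$, so $\phi*\Theta^\hbar\in\cS$ and the pairing with $W^\hbar_\gamma$ (which lies in $\cS'$ for bounded $\gamma$) makes sense; this handles the justification of all exchanges of integrals via Fubini and the density of Schwartz functions.

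The main obstacle is purely bookkeeping: getting every $(2\pi)$, $\hbar$, and $\sqrt\hbar$ factor to land correctly, since the paper uses a nonstandard normalization of the Wigner transform (note the absence of a $(2\pi\hbar)^{-d}$ prefactor in its $W^\hbar_\gamma$) chosen precisely so that $W^\hbar_{g(-i\hbar\nabla)} = g(v)$. So the careful step is to verify the two structural identities — (i) $\opw^\hbar\big(|\chi^\hbar_{x_0,v_0}\rangle\langle\chi^\hbar_{x_0,v_0}|\big)(x,v)$ is a phase-space translate by $(x_0,v_0)$ of $\opw^\hbar\big(|\chi^\hbar\rangle\langle\chi^\hbar|\big)$, which follows from the covariance of Weyl quantization under the Heisenberg translations $\chi\mapsto\chi^\hbar_{x_0,v_0}$; and (ii) the dilation identity relating the $\hbar$-scaled coherent state $\chi^\hbar(z)=\hbar^{-d/4}\chi(z/\sqrt\hbar)$ to $\chi$ at $\hbar=1$, which is a direct change of variables in the defining integral $W^\hbar_{|\chi^\hbar\rangle\langle\chi^\hbar|}(x,v)=\int\chi^\hbar(x+y/2)\overline{\chi^\hbar(x-y/2)}e^{-iy\cdot v/\hbar}\d{y}$ (substitute $y=\sqrt\hbar\,y'$, $x=\sqrt\hbar\,x'$). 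Once these are in hand, the identity drops out; this is also essentially the computation in \cite[Section III]{LioPau-93}, to which one can refer for the routine parts.
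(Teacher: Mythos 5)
The paper states Lemma \ref{lem:husimi-wigner} without proof (it is treated as the standard Husimi--Wigner relation, essentially from Lions--Paul), so there is no argument in the paper to compare against; your proposal supplies exactly the expected proof and its structure is correct. The two structural inputs you isolate are both true one-line changes of variables: with the paper's normalization the Weyl symbol of an operator is precisely its Wigner transform, the symbol of $|\chi^\hbar_{x_0,v_0}\rangle\langle\chi^\hbar_{x_0,v_0}|$ is the translate by $(x_0,v_0)$ of that of $|\chi^\hbar\rangle\langle\chi^\hbar|$ (the phase of the coherent state drops out of the projector), and $W^\hbar_{|\chi^\hbar\rangle\langle\chi^\hbar|}(x,v)=W^1_{|\chi\rangle\langle\chi|}(x/\sqrt\hbar,v/\sqrt\hbar)$. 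Combining these with the duality formula $\langle W^\hbar_\gamma,\psi\rangle=(2\pi\hbar)^d\tr\big(\gamma\,\opw^\hbar(\psi)\big)$ gives the identity. One cosmetic repair: since $\gamma$ is only bounded, the interchange $\int\phi\,\tr\big(\gamma|\chi^\hbar_{x_0,v_0}\rangle\langle\chi^\hbar_{x_0,v_0}|\big)\,dx_0\,dv_0=\tr\big(\gamma\int\phi\,|\chi^\hbar_{x_0,v_0}\rangle\langle\chi^\hbar_{x_0,v_0}|\,dx_0\,dv_0\big)$ is best justified by observing that the operator integral converges as a Bochner integral in $\gS^1$ (each projector has trace norm one and $\phi\in L^1$), rather than by scalar Fubini.

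The one place where you must be careful is precisely the constant bookkeeping you announced. With the stated conventions (ordinary Lebesgue convolution on $\R^{2d}$), your own chain of identities yields the kernel $(2\pi\hbar)^{-d}W^1_{|\chi\rangle\langle\chi|}(\cdot/\sqrt\hbar,\cdot/\sqrt\hbar)$, whose total mass is $1$ because $\int_{\R^{2d}} W^1_{|\chi\rangle\langle\chi|}\,dx\,dv=(2\pi)^d\int\chi^2=(2\pi)^d$. Your parenthetical claim that the $\hbar^{-d}$ Jacobian alone makes $\hbar^{-d}W^1_{|\chi\rangle\langle\chi|}(\cdot/\sqrt\hbar,\cdot/\sqrt\hbar)$ a probability density is therefore off by $(2\pi)^d$; a quick sanity check with $\gamma_{\rm ref}=g(-i\hbar\nabla)$, for which $m^\hbar_{\gamma_{\rm ref}}=g*_v\hbar^{-d/2}|\hat\chi|^2(\cdot/\sqrt\hbar)$, confirms that the $(2\pi)^{-d}$ must be there if the convolution is with respect to $dx\,dv$. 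The statement as printed (and the claim $\int W^1_{|\chi\rangle\langle\chi|}\,dx\,dv=\int\chi^2=1$ used in the proof of Corollary \ref{coro:limit-husimi}) is consistent only if the phase-space convolution/integration is understood with respect to the normalized measure $dx\,dv/(2\pi)^d$, in line with the paper's conventions for $\rho_m$ and $\hcl_S$. This shared normalization issue is harmless for the sequel, where only the approximate-identity property of the kernel is used, but a complete write-up of your proof should either carry the $(2\pi)^{-d}$ explicitly or state the measure convention.
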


\begin{remark}
 The right side of the previous equation makes sense since $\chi\in\cS_z$ implies $W^1_{|\chi\rangle\langle\chi|}\in\cS_{x,v}$ and thus $\phi * \hbar^{-d}W^1_{|\chi\rangle\langle\chi|}(\cdot/\sqrt{\hbar},\cdot/\sqrt{\hbar})\in\cS_{x,v}$ for all $\hbar>0$ and all $\phi\in\cS_{x,v}$.
\hfill$\diamond$\end{remark}

\begin{corollary}\label{coro:limit-husimi}
 Let $\hbar_n\to0$ be the sequence given by Proposition \ref{prop:sc-limit-wigner}. Then, we have $m^{\hbar_n}_{\gamma(t)}\to W(t)$ in $\cD'_{x,v}$ for all $t\in\R$ and in particular, $0\le W(t)\le 1$ for all $t\in\R$. 
\end{corollary}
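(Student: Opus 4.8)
The plan is to pass to the limit $n\to\ii$ in the Husimi--Wigner identity of Lemma~\ref{lem:husimi-wigner}. Fix a test function $\phi\in C^\ii_0(\R^d\times\R^d)$ and put $\phi_n:=\phi*\big(\hbar_n^{-d}W^1_{|\chi\rangle\langle\chi|}(\cdot/\sqrt{\hbar_n},\cdot/\sqrt{\hbar_n})\big)$, so that $\langle m^{\hbar_n}_{\gamma(t)},\phi\rangle=\langle W^{\hbar_n}_{\gamma(t)},\phi_n\rangle$ for all $n$ and all $t$. Since $\chi\in\cS_z$ gives $W^1_{|\chi\rangle\langle\chi|}\in\cS_{x,v}$, the kernel here is a fixed Schwartz function on $\R^d\times\R^d$, rescaled so as to concentrate at the origin as $n\to\ii$; with the normalization of Definition~\ref{def:husimi} (which makes Lemma~\ref{lem:husimi-wigner} consistent with the convergence $m^\hbar_{\gamma_{\rm ref}}\to m_{\rm ref}$ noted just below it), the associated mollification tends to the identity. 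The first step would therefore be to check that $\phi_n\to\phi$ in $\cS_{x,v}$ — a routine estimate once one writes the convolution as an average of translates of $\phi$ at scale $\sqrt{\hbar_n}$ and uses the compact support of $\phi$ together with the Schwartz decay of $W^1_{|\chi\rangle\langle\chi|}$, splitting the integral into a near and a far region.

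Next I would decompose $W^{\hbar_n}_{\gamma(t)}=m_{\rm ref}+q^{\hbar_n}(t)$ with $q^{\hbar_n}(t):=W^{\hbar_n}_{\gamma(t)-\gamma_{\rm ref}}$. Since $m_{\rm ref}(x,v)=g(v)$ with $g\in L^1(\R^d)$ is a tempered distribution on $\R^d\times\R^d$ and $\phi_n\to\phi$ in $\cS_{x,v}$, one has $\langle m_{\rm ref},\phi_n\rangle\to\langle m_{\rm ref},\phi\rangle$. For the remainder, Proposition~\ref{prop:sc-bounds} and Lemma~\ref{prop:wigner-L2} give $\|q^{\hbar_n}(t)\|_{L^2_{x,v}}^2=(2\pi\hbar_n)^d\tr(\gamma(t)-\gamma_{\rm ref})^2\le C$ uniformly in $n$ and $t$; combined with $W^{\hbar_n}_{\gamma(t)}\to W(t)$ in $\cD'_{x,v}$ for each fixed $t$ (Proposition~\ref{prop:sc-limit-wigner}), a Banach--Alaoglu plus uniqueness-of-limit argument then gives $q^{\hbar_n}(t)\wto W(t)-m_{\rm ref}$ weakly-$*$ in $L^2_{x,v}$ for every $t\in\R$ (in particular $W(t)-m_{\rm ref}\in L^2_{x,v}$ for all $t$). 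Splitting $\langle q^{\hbar_n}(t),\phi_n\rangle=\langle q^{\hbar_n}(t),\phi\rangle+\langle q^{\hbar_n}(t),\phi_n-\phi\rangle$, the first term tends to $\langle W(t)-m_{\rm ref},\phi\rangle$ by weak-$*$ convergence and the second is bounded by $\|q^{\hbar_n}(t)\|_{L^2_{x,v}}\|\phi_n-\phi\|_{L^2_{x,v}}\to0$. Adding both contributions yields $\langle m^{\hbar_n}_{\gamma(t)},\phi\rangle\to\langle W(t),\phi\rangle$, i.e.\ $m^{\hbar_n}_{\gamma(t)}\to W(t)$ in $\cD'_{x,v}$, for every $t$.

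For the bounds $0\le W(t)\le1$ I would use the basic feature that distinguishes the Husimi transform from the Wigner transform: it is \emph{pointwise} between $0$ and $1$. Indeed $\|\chi^\hbar_{x,v}\|_{L^2}=\|\chi\|_{L^2}=1$, so $0\le\gamma(t)=\gamma(t)^*\le1$ forces $0\le m^\hbar_{\gamma(t)}(x,v)=\langle\chi^\hbar_{x,v},\gamma(t)\chi^\hbar_{x,v}\rangle\le1$ for all $(x,v)$ and all $\hbar>0$. Testing against an arbitrary $0\le\phi\in C^\ii_0(\R^d\times\R^d)$ and letting $n\to\ii$ gives $0\le\langle W(t),\phi\rangle\le\int\phi$; since $W(t)=m_{\rm ref}+(W(t)-m_{\rm ref})$ is locally integrable, this forces $0\le W(t)\le1$ a.e.

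I expect the only genuinely delicate point to be obtaining the convergence $\phi_n\to\phi$ in a topology strong enough for what follows. Convergence in $L^2_{x,v}$ (indeed in every $L^p$) is elementary and already handles the Hilbert--Schmidt remainder $q^{\hbar_n}(t)$; what forces one to upgrade to $\cS_{x,v}$-convergence is the reference state $m_{\rm ref}$, which does not decay in the space variable $x$ and must be paired against $\phi_n$ as a tempered distribution. Everything else reduces to the uniform Hilbert--Schmidt bound on $\gamma(t)-\gamma_{\rm ref}$ of Proposition~\ref{prop:sc-bounds} and the elementary pointwise bounds on the Husimi transform.
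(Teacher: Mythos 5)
Your argument is correct and follows essentially the same route as the paper's proof: split $\langle m^{\hbar_n}_{\gamma(t)},\phi\rangle$ via Lemma \ref{lem:husimi-wigner} into the $m_{\rm ref}$ part and the remainder $W^{\hbar_n}_{\gamma(t)}-m_{\rm ref}$, use that the rescaled $W^1_{|\chi\rangle\langle\chi|}$ is an approximation of the identity together with the uniform $L^2_{x,v}$ bound from Corollary \ref{prop:sc-bounds} and Lemma \ref{prop:wigner-L2}, and then deduce $0\le W(t)\le1$ from the pointwise bounds $0\le m^{\hbar_n}_{\gamma(t)}\le1$. The only (harmless) difference is that you upgrade the mollified test functions to convergence in $\cS_{x,v}$, whereas the paper only needs convergence in $L^1_{x,v}\cap L^2_{x,v}$, since $m_{\rm ref}\in L^\ii_{x,v}$ can be paired against the $L^1$ limit.
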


\begin{proof}
 We know that $W^{\hbar_n}_{\gamma(t)}-m_{\rm ref}\to W(t)-m_{\rm ref}$ as $n\to\ii$, weakly in $L^2_{x,v}$ for a.e. $t\in\R$. Let $\phi\in C^\ii_0(\R^d\times\R^d)$. Now take $\chi\in\cS_z(\R^d)$. By Lemma \ref{lem:husimi-wigner}, we have for all $n$
 \begin{multline*}
    \langle m_{\gamma(t)}^{\hbar_n},\phi\rangle=\left\langle m_{\rm ref},\phi * \hbar_n^{-d}W^1_{|\chi\rangle\langle\chi|}(\cdot/\sqrt{\hbar_n},\cdot/\sqrt{\hbar_n})\right\rangle\\
    +\left\langle W^{\hbar_n}_{\gamma(t)}-m_{\rm ref},\phi * \hbar_n^{-d}W^1_{|\chi\rangle\langle\chi|}(\cdot/\sqrt{\hbar_n},\cdot/\sqrt{\hbar_n})\right\rangle.
 \end{multline*}
 Since $\chi\in\cS_z$, we have $W^1_{|\chi\rangle\langle\chi|}\in L^1_{x,v}$ and thus $(\hbar^{-d}W^1_{|\chi\rangle\langle\chi|}(\cdot/\sqrt{\hbar},\cdot/\sqrt{\hbar}))_{\hbar>0}$ is an approximation of the identity (since $\int W^1_{|\chi\rangle\langle\chi|}\,dx\,dv=\int\chi^2=1$). As a consequence, 
 $$\phi * \hbar_n^{-d}W^1_{|\chi\rangle\langle\chi|}(\cdot/\sqrt{\hbar_n},\cdot/\sqrt{\hbar_n})\to\phi$$
 as $n\to\ii$ in $L^2_{x,v}\cap L^1_{x,v}$, from which we infer that 
 $$\langle m_{\gamma(t)}^{\hbar_n},\phi\rangle\to \langle W(t),\phi\rangle.$$
 Since $0\le m_{\gamma(t)}^{\hbar_n}\le 1$, we deduce that $0\le W\le 1$ as well. 
\end{proof}

\begin{lemma}\label{lem:density-husimi}
 Let $\gamma$ be a non-negative, bounded, locally trace-class operator on $L^2(\R^d)$ such that $\rho_\gamma\in L^1_{\rm loc}(\R^d)$ is also in $\cS'(\R^d)$. Let $\hbar>0$ and $\chi\in\cS_z(\R^d)$. Then $\int_{\R^d}m^\hbar_\gamma(\cdot,v)\,dv$ is a non-negative Radon measure on $\R^d$ for all $\hbar>0$, and we have
 $$\rho_{m^\hbar_\gamma}=(2\pi)^{-d}\int_{\R^d}m^\hbar_\gamma(\cdot,v)\,dv=\hbar^d\rho_\gamma * \hbar^{-d/2}\chi^2(-\cdot/\sqrt{\hbar}).$$
\end{lemma}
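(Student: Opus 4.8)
The plan is to unravel the definition of $m^\hbar_\gamma$ and integrate in $v$ using the resolution of identity built from coherent states. First I would recall that for fixed $x$, the family $(\chi^\hbar_{x,v})_{v\in\R^d}$ is a (rescaled) Fourier-type overcomplete system, and more precisely that for any trace-class (or locally trace-class) operator $\gamma\ge 0$ one has the identity $(2\pi\hbar)^{-d}\int_{\R^d}\proj{\chi^\hbar_{x,v}}\,dx\,dv=\1$, hence heuristically $(2\pi)^{-d}\int_{\R^d}m^\hbar_\gamma(x,v)\,dv$ should reproduce $\hbar^d\rho_\gamma$ convolved with the profile $\hbar^{-d/2}\chi^2(-\cdot/\sqrt\hbar)$. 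To make this rigorous under the stated hypotheses (that $\gamma$ is only \emph{locally} trace-class, with $\rho_\gamma\in L^1_{\rm loc}\cap\cS'$), I would test against a nonnegative $\phi\in C^\ii_c(\R^d_x)$ and compute
\begin{equation*}
(2\pi)^{-d}\int_{\R^d}\int_{\R^d}\phi(x)\,m^\hbar_\gamma(x,v)\,dx\,dv
=(2\pi)^{-d}\int_{\R^d}\int_{\R^d}\phi(x)\,\pscal{\chi^\hbar_{x,v},\gamma\,\chi^\hbar_{x,v}}\,dx\,dv.
\end{equation*}

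Next I would interchange the $v$-integral with the (local) trace, which is legitimate since $\gamma$ restricted to the relevant compact is trace-class and $\phi$ has compact support, and evaluate $(2\pi)^{-d}\int_{\R^d}\phi(x)\proj{\chi^\hbar_{x,v}}\,dx\,dv$ as a multiplication operator. Indeed, writing out the integral kernel,
\begin{equation*}
\frac{1}{(2\pi)^d}\int_{\R^d}\int_{\R^d}\phi(x)\,\chi^\hbar(z-x)\,\chi^\hbar(z'-x)\,e^{iv\cdot(z-z')/\hbar}\,dx\,dv
=\hbar^d\int_{\R^d}\phi(x)\,\chi^\hbar(z-x)^2\,dx\;\delta(z-z'),
\end{equation*}
because the $v$-integral produces $(2\pi\hbar)^d\delta(z-z')$. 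Thus this operator is multiplication by $\hbar^d(\phi*\chi^\hbar(-\cdot)^2)(z)=\hbar^d(\phi*\hbar^{-d/2}\chi^2(-\cdot/\sqrt\hbar))(z)$, and pairing with $\gamma$ against its density $\rho_\gamma$ gives
\begin{equation*}
(2\pi)^{-d}\int\int\phi(x)m^\hbar_\gamma(x,v)\,dx\,dv
=\hbar^d\int_{\R^d}\rho_\gamma(z)\,\big(\phi*\hbar^{-d/2}\chi^2(-\cdot/\sqrt\hbar)\big)(z)\,dz
=\int_{\R^d}\phi\,\big(\hbar^d\rho_\gamma*\hbar^{-d/2}\chi^2(-\cdot/\sqrt\hbar)\big),
\end{equation*}
using that $\chi^2$ is even-free but the convolution with $\chi^2(-\cdot)$ is exactly the adjoint pairing; this is the claimed formula in the sense of distributions. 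Nonnegativity of the measure $\int m^\hbar_\gamma(\cdot,v)\,dv$ is immediate from $\gamma\ge0$ (each $m^\hbar_\gamma(x,v)=\pscal{\chi^\hbar_{x,v},\gamma\chi^\hbar_{x,v}}\ge0$), and together with the identification above it is locally finite, hence a Radon measure.

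The only genuinely delicate point is the exchange of the $v$-integration with the trace, which requires care because $\gamma$ is not globally trace-class: the fix is to insert a smooth compactly supported spatial cutoff $\theta$ with $\theta\equiv1$ on $\supp\phi$ so that $\chi^\hbar_{x,v}$ is (up to a rapidly decaying tail controlled by $\chi\in\cS_z$) localized where $\theta\gamma\theta$ is trace-class; one then controls the tail using $\rho_\gamma\in\cS'$ together with the Schwartz decay of $\chi$. The rest — Fubini on $\R^d_x\times\R^d_v$, the Gaussian/oscillatory $v$-integral yielding a Dirac mass, and recognizing the rescaled convolution kernel — is routine.
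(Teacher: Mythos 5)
Your argument is correct in substance, but it is organized differently from the paper's proof, and the comparison is instructive. The paper works pointwise in $x$: it writes $m^\hbar_\gamma(x,v)=\tr\big(\sqrt{\gamma}\,|\chi^\hbar_{x,v}\rangle\langle\chi^\hbar_{x,v}|\,\sqrt{\gamma}\big)=\sum_j|\langle\chi^\hbar_{x,v},\sqrt{\gamma}\phi_j\rangle|^2$ for an orthonormal basis $(\phi_j)$, integrates in $v$ term by term using Plancherel, $\int_{\R^d}|\langle\chi^\hbar_{x,v},\psi\rangle|^2\,dv=(2\pi\hbar)^d\int|\psi(z)|^2(\chi^\hbar)^2(z-x)\,dz$, and then resums over $j$ to get $(2\pi\hbar)^d\,\tr\big(\sqrt{\gamma}\,(\chi^\hbar)^2(\cdot-x)\,\sqrt{\gamma}\big)=(2\pi\hbar)^d\big(\rho_\gamma*(\chi^\hbar)^2(-\cdot)\big)(x)$. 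Because every term is non-negative, Tonelli handles all interchanges at once, no spatial cutoff or delta-function manipulation is needed, and one obtains the identity pointwise in $x$ rather than only after testing against $\phi$. Your route is the dual version of the same Plancherel computation: you test against $\phi\in C^\ii_c$, evaluate $(2\pi)^{-d}\int\!\!\int\phi(x)|\chi^\hbar_{x,v}\rangle\langle\chi^\hbar_{x,v}|\,dx\,dv$ as multiplication by $\hbar^d\,\phi*(\chi^\hbar)^2$, and pair with $\rho_\gamma$. This works, but it forces you to address exactly the point you flag: the exchange of the $v$-integral with the (only local) trace, and moreover the identity $\tr(\gamma M_\psi)=\int\rho_\gamma\psi$ for $\psi=\phi*(\chi^\hbar)^2$, which is \emph{not} compactly supported, so that step itself needs an approximation (e.g.\ monotone convergence from compactly supported cutoffs, using $\phi\ge0$ and $\rho_\gamma\in\cS'$ for finiteness), in addition to replacing the heuristic $\int e^{iv\cdot(z-z')/\hbar}\,dv=(2\pi\hbar)^d\delta(z-z')$ by the rigorous Plancherel statement above. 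So your plan is viable, but the paper's arrangement with $\sqrt{\gamma}$ on both sides makes the delicate interchange disappear by positivity and yields the slightly stronger pointwise statement for free.
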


\begin{proof}
 For all $(x,v)\in\R^d\times\R^d$ and all $\hbar>0$, we have 
 $$m_\gamma^\hbar(x,v)=\tr\sqrt{\gamma}|\chi_{x,v}^\hbar\rangle\langle\chi_{x,v}^\hbar|\sqrt{\gamma}=\sum_j\|\chi_{x,v}^\hbar\sqrt{\gamma}\phi_j\|_{L^2}^2,$$
 where $(\phi_j)$ is any orthonormal basis of $L^2(\R^d)$. Next, we can use that for any $x\in\R^d$ and any $\phi\in L^2(\R^d)$ we have 
 $$\int_{\R^d}\|\chi_{x,v}^\hbar\phi\|_{L^2}^2\,dv = (2\pi\hbar)^d \int_{\R^d} |\phi(z)|^2(\chi^\hbar)^2(z-x)\,dx$$
 to infer that for any $x\in\R^d$,
 $$\int_{\R^d}m_\gamma^\hbar(x,v)\,dv = (2\pi\hbar)^d\tr \sqrt{\gamma} (\chi^\hbar)^2(\cdot-x)\sqrt{\gamma} = (2\pi\hbar)^d(\rho_\gamma * (\chi^\hbar)^2(-\cdot))(x).$$ 
\end{proof}

\begin{corollary}\label{coro:limit-density-husimi}
 Let $\hbar_n\to0$ the sequence given by Proposition \ref{prop:sc-limit-wigner}. Then, we have 
 $$\rho_{m^{\hbar_n}_{\gamma(t)}}\to\rho_{m_{\rm ref}}+\nu(t)$$
 as $n\to\ii$ in $\cD'_x$, for a.e. $t\in\R$.
\end{corollary}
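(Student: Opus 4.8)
The plan is to pass to the limit in the explicit convolution formula for the density of the Husimi transform. By Lemma~\ref{lem:density-husimi}, applied both to $\gamma(t)$ and to $\gamma_{\rm ref}$, we have for every $\hbar>0$
$$\rho_{m^\hbar_{\gamma(t)}}=\hbar^d\rho_{\gamma(t)}*\xi_\hbar,\qquad \rho_{m^\hbar_{\gamma_{\rm ref}}}=\hbar^d\rho_{\gamma_{\rm ref}}*\xi_\hbar,\qquad \xi_\hbar:=\hbar^{-d/2}\chi^2(-\cdot/\sqrt{\hbar}),$$
where, since $\chi\in\cS_z(\R^d)$ and $\int_{\R^d}\xi_\hbar=\int_{\R^d}\chi^2=1$, the family $(\xi_\hbar)_{\hbar>0}$ is an approximate identity of Schwartz profile. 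The key observation is that $\hbar^d\rho_{\gamma_{\rm ref}}=(2\pi)^{-d}\int_{\R^d}(S')^{-1}(|v|^2)\,dv=\rho_{m_{\rm ref}}$ is a constant independent of $\hbar$ — this follows from~\eqref{eq:density_g_inabla} with $\gamma_{\rm ref}=g(-i\hbar\nabla)$, $g(v)=(S')^{-1}(|v|^2)$ — so convolving it with $\xi_\hbar$ leaves it unchanged: $\rho_{m^\hbar_{\gamma_{\rm ref}}}=\hbar^d\rho_{\gamma_{\rm ref}}=\rho_{m_{\rm ref}}$. Subtracting the two formulas then gives the clean identity
$$\rho_{m^\hbar_{\gamma(t)}}-\rho_{m_{\rm ref}}=\nu^\hbar(t)*\xi_\hbar,\qquad \nu^\hbar(t):=\hbar^d\rho_{\gamma(t)}-\hbar^d\rho_{\gamma_{\rm ref}},$$
whose right-hand side lies in $(L^2+L^{\min(1+2/d,2)})(\R^d)$ by Young's inequality and Proposition~\ref{prop:sc-bounds}.

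It then remains to take $\hbar=\hbar_n\to0$ in this identity. I would use that, along the subsequence chosen in the proof of Proposition~\ref{prop:sc-limit-wigner}, one has $\nu^{\hbar_n}(t)\to\nu(t)$ weakly-$*$ in $(L^2+L^{\min(1+2/d,2)})(\R^d)$ for a.e.\ $t\in\R$. Fixing such a $t$ and a test function $\phi\in C^\ii_c(\R^d)$, Fubini's theorem gives
$$\big\langle\nu^{\hbar_n}(t)*\xi_{\hbar_n},\phi\big\rangle=\big\langle\nu^{\hbar_n}(t),\eta_{\hbar_n}*\phi\big\rangle,\qquad \eta_\hbar(z):=\xi_\hbar(-z)=\hbar^{-d/2}\chi^2(z/\sqrt{\hbar}).$$
Since $(\eta_{\hbar_n})_n$ is again an approximate identity of Schwartz profile, $\eta_{\hbar_n}*\phi\to\phi$ strongly in $L^p(\R^d)$ for every $p\in[1,\ii]$, hence in particular strongly in $L^2$ when $d\le2$ and in $L^2\cap L^{5/2}$ when $d=3$, i.e.\ in a predual of $L^2+L^{\min(1+2/d,2)}$. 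Combining this strong convergence with the (automatically norm-bounded) weak-$*$ convergence of $\nu^{\hbar_n}(t)$ through the elementary splitting
$$\big\langle\nu^{\hbar_n}(t),\eta_{\hbar_n}*\phi\big\rangle-\big\langle\nu(t),\phi\big\rangle=\big\langle\nu^{\hbar_n}(t),\eta_{\hbar_n}*\phi-\phi\big\rangle+\big\langle\nu^{\hbar_n}(t)-\nu(t),\phi\big\rangle,$$
both terms of which tend to $0$, one obtains $\langle\nu^{\hbar_n}(t)*\xi_{\hbar_n},\phi\rangle\to\langle\nu(t),\phi\rangle$, that is $\rho_{m^{\hbar_n}_{\gamma(t)}}\to\rho_{m_{\rm ref}}+\nu(t)$ in $\cD'_x$ for a.e.\ $t\in\R$.

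I do not expect a genuine obstacle here: the statement is essentially a bookkeeping exercise once Lemma~\ref{lem:density-husimi} and Proposition~\ref{prop:sc-limit-wigner} are in hand. The two points worth a moment of care are (i) the remark that $\hbar^d\rho_{\gamma_{\rm ref}}$ coincides with the constant $\rho_{m_{\rm ref}}$ for every $\hbar$ — this is precisely what makes the reference-state contribution cancel under the convolution with the approximate identity — and (ii) the correct identification of $L^2+L^{\min(1+2/d,2)}$ as the dual of $L^2\cap L^{(\min(1+2/d,2))'}$, together with the verification that the smoothed test functions $\eta_{\hbar_n}*\phi$ (which are Schwartz, being convolutions of a Schwartz function with a $C^\ii_c$ function) converge to $\phi$ strongly in that predual.
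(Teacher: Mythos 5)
Your proof is correct and follows essentially the same route as the paper: both rest on Lemma \ref{lem:density-husimi}, the weak-$*$ convergence of $\hbar_n^d\rho_{\gamma(t)}-\rho_{m_{\rm ref}}$ from Proposition \ref{prop:sc-limit-wigner}, and moving the approximate identity onto the test function by duality. The only (cosmetic) difference is that you cancel the reference contribution exactly by noting that the constant $\hbar^d\rho_{\gamma_{\rm ref}}=\rho_{m_{\rm ref}}$ is invariant under convolution with the unit-mass kernel, whereas the paper keeps that term and absorbs it in the same approximation-of-identity limit.
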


\begin{proof}
 We know that 
 $\hbar_n^d\rho_{\gamma(t)}-\rho_{m_{\rm ref}}\to\nu(t)$ as $n\to\ii$ weakly-$*$ in $L^2+L^{\min(1+2/d,2)}$, for a.e. $t\in\R$. Let $\phi\in C^\ii_0(\R^d)$ and $\chi\in\cS_z(\R^d)$. Since $\gamma(t)\in\cK_S^\hbar$, we deduce that $\gamma(t)$ is a non-negative, bounded, operator with $\rho_{\gamma(t)}\in\cS'_x$ by the Lieb-Thirring inequality. By Lemma \ref{lem:density-husimi}, we deduce that for all $n$ and for a.e. $t\in\R$,
 \begin{multline*}
    \langle\rho_{m^{\hbar_n}_{\gamma(t)}},\phi\rangle\\
    =\langle( \hbar_n^d\rho_{\gamma(t)}-\rho_{m_{\rm ref}}), \phi * \hbar_n^{-d/2}\chi^2(-\cdot/\sqrt{\hbar_n})\rangle
    +\langle\rho_{m_{\rm ref}}, \phi * \hbar_n^{-d/2}\chi^2(-\cdot/\sqrt{\hbar_n})\rangle.
 \end{multline*}
 Since $\int\chi^2=1$, $(\hbar^{-d/2}\chi^2(-\cdot/\sqrt{\hbar}))_{\hbar>0}$ is an approximation of the identity and thus
 $$\phi * \hbar_n^{-d/2}\chi^2(\cdot/\sqrt{\hbar_n})\to\phi$$
 as $n\to\ii$ in $L^1\cap L^{1+d/2}$. This gives the result.
\end{proof}

The following provides a tightness condition in velocity which allows to pass to the limit for the density. 

\begin{proposition}[Identifying the density]\label{prop:ident-density}
 Let $(m_n)$ a sequence of non-negative Radon measures on $\R^d_x\times\R^d_v$ such that $\rho_{m_n}$ is a Radon measure on $\R^d_x$ for all $n$. Assume that $m_n\to m_\ii$ as $n\to\ii$ in $\cD'_{x,v}$ and that $\rho_{m_n}\to\rho_\ii$ in $\cD'_x$ as $n\to\ii$. Assume furthermore that for all non-negative $\phi\in C^\ii_0(\R^d)$ we have 
 $$\lim_{R\to\ii}\limsup_{n\to\ii}\int_{\R^d\times\R^d}\1(|v|\ge R)\phi(x)dm_n(x,v)=0.$$
 Then, we have 
 $$\rho_{m_\ii}=\rho_\ii.$$
\end{proposition}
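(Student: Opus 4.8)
The plan is to test the distributional identity $\rho_{m_\infty} = \rho_\infty$ against an arbitrary nonnegative $\phi \in C^\infty_0(\R^d_x)$; since both sides are Radon measures on $\R^d_x$, it suffices to show $\langle \rho_{m_\infty}, \phi\rangle = \langle \rho_\infty, \phi\rangle$ for such $\phi$ (sign-definite test functions suffice by linearity/density). The quantity $\langle \rho_{m_\infty},\phi\rangle$ is, by definition of the density of a phase-space measure, ``$\int \phi(x)\, dm_\infty(x,v)$'' — which is exactly what we would get by testing $m_\infty$ against $\phi(x)\cdot 1$, a function that is \emph{not} compactly supported in $v$, hence not directly a legitimate test function for the distributional convergence $m_n \to m_\infty$. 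The whole point of the tightness hypothesis is precisely to license this replacement of $1$ by a velocity cutoff uniformly in $n$.

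Concretely, I would introduce a smooth cutoff $\theta_R(v) = \theta(v/R)$ with $\theta \in C^\infty_0(\R^d)$, $0\le\theta\le1$, $\theta\equiv1$ on $\{|v|\le1\}$, so that $\phi(x)\theta_R(v) \in C^\infty_0(\R^d_x\times\R^d_v)$ is a legitimate test function. Then split
\begin{equation*}
\langle \rho_{m_n},\phi\rangle = \int \phi(x)\theta_R(v)\,dm_n(x,v) + \int \phi(x)\big(1-\theta_R(v)\big)\,dm_n(x,v).
\end{equation*}
For the first term, the distributional convergence $m_n\to m_\infty$ gives, for each fixed $R$, that $\int \phi\,\theta_R\,dm_n \to \int \phi\,\theta_R\,dm_\infty$ as $n\to\infty$. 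For the second (remainder) term, since $0\le 1-\theta_R(v)\le \1(|v|\ge 1 )$ and more precisely $1-\theta_R(v)\le \1(|v|\ge R')$ for a suitable $R'$ comparable to $R$ (one may just work with $\1(|v|\ge R)$ directly after noting $1-\theta_R\le\1(|v|\ge R)$ if $\theta$ is chosen $\equiv1$ on the unit ball and one replaces $R$ by $R$ in the argument of $\theta$; alternatively bound by $\1(|v|\ge R/2)$), the tightness hypothesis applied to the nonnegative test function $\phi$ yields
\begin{equation*}
\lim_{R\to\infty}\limsup_{n\to\infty}\int \phi(x)\big(1-\theta_R(v)\big)\,dm_n(x,v)=0.
\end{equation*}
Taking $\limsup_{n\to\infty}$ in the splitting and then $R\to\infty$, and using that the left-hand side $\langle\rho_{m_n},\phi\rangle$ converges to $\langle\rho_\infty,\phi\rangle$ by the assumed convergence $\rho_{m_n}\to\rho_\infty$ in $\cD'_x$, we get $|\langle\rho_\infty,\phi\rangle - \langle \rho_{m_\infty}\,\theta_R,\phi\rangle|$ tending to $0$; finally letting $R\to\infty$ in $\langle\rho_{m_\infty},\phi\,\theta_R\rangle$, which converges to $\langle\rho_{m_\infty},\phi\rangle$ by monotone convergence for the nonnegative measure $m_\infty$ (with $\phi\ge0$), we conclude $\langle\rho_\infty,\phi\rangle = \langle\rho_{m_\infty},\phi\rangle$.

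The only genuinely delicate point is bookkeeping the order of limits: one must take $n\to\infty$ first (for fixed $R$) and $R\to\infty$ afterwards, and one must be careful that $\rho_{m_\infty}$ is a priori only known to be a distribution, so that ``$\langle\rho_{m_\infty},\phi\rangle$'' should be understood as the increasing limit $\lim_{R\to\infty}\langle m_\infty, \phi\,\theta_R\rangle$, which exists in $[0,\infty]$ because $m_\infty\ge0$ and $\phi\ge0$; the argument above shows this limit is finite and equals $\langle\rho_\infty,\phi\rangle$, so $\rho_{m_\infty}$ is in fact the Radon measure $\rho_\infty$. Everything else is a routine $\epsilon$-chase, so I do not expect any real obstacle beyond this.
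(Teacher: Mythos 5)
Your proposal is correct and follows essentially the same route as the paper's proof: split off the velocity region $\{|v|\ge R\}$ with a smooth cutoff, pass to the limit $n\to\ii$ at fixed $R$ using the distributional convergences and the tightness hypothesis, then let $R\to\ii$ by monotone convergence. The only cosmetic difference from the paper is the bookkeeping of the normalization factor $(2\pi)^{-d}$ in the definition of $\rho_m$, which does not affect the argument.
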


\begin{proof}
 Let $\phi\in C^\ii_0(\R^d)$, $\phi\ge0$. Let $\chi:\R^d\to[0,1]$ be a smooth, radially decreasing function such that $\chi(x)\equiv1$ for $|x|\le 1$ and $\chi(x)\equiv0$ for $|x|\ge2$. Let $R>0$. Then, for all $n$ we have 
 \begin{align*}
    &(2\pi)^d\int_{\R^d}\phi(x)d\rho_{m_n}(x) \\
    &= \int_{\R^d\times\R^d}\phi(x)dm_n(x,v) \\
    &= \int_{\R^d\times\R^d}\chi(v/R)\phi(x)dm_n(x,v)+\int_{\R^d\times\R^d}(1-\chi(v/R))\phi(x)dm_n(x,v).
 \end{align*}
 Taking the limsup as $n\to\ii$ with $R$ fixed, we find that 
 $$(2\pi)^d\int_{\R^d}\phi(x)d\rho_\ii(x) = \int_{\R^d\times\R^d}\chi(v/R)\phi(x)dm_\ii(x,v)+o_{R\to\ii}(1).$$
 Taking now the limit $R\to\ii$, we find by monotone convergence that
 $$(2\pi)^d\int_{\R^d}\phi(x)d\rho_\ii(x)=\int_{\R^d\times\R^d}\phi(x)dm_\ii(x,v),$$
 which is the desired result.
\end{proof}

Proposition \ref{prop:ident-density} shows that in order to identify the density $\nu(t)$ in Proposition \ref{prop:sc-limit-wigner}, we have to control the high velocity part of the Husimi transform $m^\hbar_{\gamma(t)}$, uniformly in $\hbar$. We obtain such control in two steps: (i) by controlling the classical relative entropy of the Husimi transform by the quantum relative entropy in the next section, and (ii) by showing that the classical relative entropy controls high velocity, in Section \ref{sec:high-freq}.

\section{Berezin-Lieb inequalities}\label{sec:BL}

The uniform control in $\hbar$ of the relative entropy given by Proposition \ref{prop:sc-bounds} will allow us to control the classical relative entropy of the Husimi transforms (as done in Proposition \ref{prop:BL}). The main tool to control classical by quantum objects are \emph{Berezin-Lieb inequalities}~\cite{Simon-81,Berezin-72,Lieb-73}.  

\begin{definition}[Berezin-Lieb property]\label{def:BL}
Let $I\subset\R$ be an interval and $S$ an admissible entropy functional on $I$ (meaning that it satisfies Definition \ref{def:entropy} with $[0,1]$ replaced by $I$). We say that $S$ satisfies the \emph{Berezin-Lieb property on $I$ (with constant $C_S^{BL}>0$)}  if for any resolution of the identity of the form 
$$\int_{\R^N}|n_x\rangle\langle n_x|\,d\zeta(x)=\text{id}_\gH$$
on any Hilbert space $\gH$ where $d\zeta$ is a non-negative Borel measure on $\R^N$ and $n:x\in\R^N\to n_x\in\{u\in\gH,\ \|u\|=1\}$ is continuous, we have 
\begin{equation}
\int_{\R^N}\cH_S\big(\langle n_x,An_x\rangle,\langle n_x Bn_x\rangle\big)\,d\zeta(x)\le C_S^{BL}\;\cH_S(A,B), 
 \label{eq:def_Berezin-Lieb}
\end{equation}
for any bounded self-adjoint operators $A,B$ on $\gH$ with spectrum in $I$.
\end{definition}

We recall that $\cH_S(y,y_0)$ was defined for numbers $y$, $y_0$ in Definition \ref{def:class-rel-ent}. We call~\eqref{eq:def_Berezin-Lieb} a \emph{Berezin-Lieb inequality}.

\begin{remark}
 Since the Berezin-Lieb inequality is an equality if $\dim\gH=1$, we always have $C_S^{BL}\ge1$ so $C_S^{BL}=1$ is the best case. Only in the bosonic case $S=S_b$ we will have $C_S^{BL}>1$ (it may not be sharp, however).
\hfill$\diamond$\end{remark}

In the context of coherent states, the Berezin-Lieb property will be used to obtain the following bound on the control of the classical entropy of the Husimi transform by the quantum entropy.

\begin{proposition}\label{prop:BL}
 Let $S$ an admissible entropy functional having the Berezin-Lieb property on $[0,1]$ with constant $C^{BL}_S$, and define $\gamma_{\rm ref}=(S')^{-1}(-\hbar^2\Delta)$ with $\hbar>0$. Let $\gamma\in\cK_S^\hbar$. Defining the Husimi transform $m^\hbar_\gamma$ as in Definition \ref{def:husimi}, we then have the Berezin-Lieb inequality
 $$\hcl_S(m^\hbar_\gamma,m^\hbar_{\gamma_{\rm ref}})\le C_S^{BL}\hbar^d\cH_S(\gamma,\gamma_{\rm ref}).$$
\end{proposition}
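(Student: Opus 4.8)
The plan is to exhibit the Husimi transform as the lower symbol of $\gamma$ against the overcomplete family of coherent states, and then to apply the Berezin--Lieb property of $S$ to that family. Throughout, $\chi\in L^2(\R^d)$ with $\int_{\R^d}\chi^2=1$ is the window fixed in Definition~\ref{def:husimi}. The first step is to record the standard coherent-state resolution of the identity: for every $\hbar>0$,
$$\frac{1}{(2\pi\hbar)^d}\int_{\R^d}\int_{\R^d}\proj{\chi^\hbar_{x,v}}\,dx\,dv=\mathrm{id}_{L^2(\R^d)},$$
which, in its weak form $\frac{1}{(2\pi\hbar)^d}\int\int|\pscal{u,\chi^\hbar_{x,v}}|^2\,dx\,dv=\|u\|_{L^2}^2$, is an immediate consequence of Plancherel's theorem, along the lines of the computation in the proof of Lemma~\ref{lem:density-husimi}, and passes to the operator identity by polarisation. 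I would then check the two structural requirements of Definition~\ref{def:BL}: each coherent state is a unit vector, since the substitution $z\mapsto\sqrt{\hbar}\,z$ gives $\|\chi^\hbar_{x,v}\|_{L^2}^2=\|\chi^\hbar\|_{L^2}^2=\int_{\R^d}\chi^2=1$; and $(x,v)\mapsto\chi^\hbar_{x,v}\in L^2(\R^d)$ is norm-continuous, since translations and modulations act strongly continuously on $L^2(\R^d)$.

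Next I would invoke the Berezin--Lieb property of $S$ on the Hilbert space $\gH=L^2(\R^d)$, with $N=2d$ (the variable of Definition~\ref{def:BL} being the pair $(x,v)\in\R^{2d}$), non-negative Borel measure $d\zeta(x,v)=(2\pi\hbar)^{-d}\,dx\,dv$, continuous family $(x,v)\mapsto\chi^\hbar_{x,v}$, and operators $A=\gamma$, $B=\gamma_{\rm ref}$. Both $\gamma$ and $\gamma_{\rm ref}$ are bounded, self-adjoint, with spectrum in $[0,1]$: for $\gamma$ this is part of the definition of $\cK_S^\hbar$; for $\gamma_{\rm ref}=(S')^{-1}(-\hbar^2\Delta)$ it follows because $S'$ is strictly decreasing on $(0,1)$ with $\lim_{m\to0^+}S'(m)=+\ii$ and $\lim_{m\to1^-}S'(m)\le0$, so that $(S')^{-1}$ maps $\spec(-\hbar^2\Delta)=[0,+\ii)$ into $[0,1]$. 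Since, by definition of the Husimi transform, $\pscal{\chi^\hbar_{x,v},\gamma\,\chi^\hbar_{x,v}}=m^\hbar_\gamma(x,v)$ and $\pscal{\chi^\hbar_{x,v},\gamma_{\rm ref}\,\chi^\hbar_{x,v}}=m^\hbar_{\gamma_{\rm ref}}(x,v)$, inequality~\eqref{eq:def_Berezin-Lieb} for this data reads
$$\frac{1}{(2\pi\hbar)^d}\int_{\R^d}\int_{\R^d}\cH_S\big(m^\hbar_\gamma(x,v),m^\hbar_{\gamma_{\rm ref}}(x,v)\big)\,dx\,dv\le C^{BL}_S\,\cH_S(\gamma,\gamma_{\rm ref}),$$
the pointwise values $m^\hbar_\gamma(x,v),m^\hbar_{\gamma_{\rm ref}}(x,v)$ lying in $[0,1]$ so that $\cH_S$ of numbers is unambiguously defined by the conventions of Definition~\ref{def:class-rel-ent}.

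It then only remains to match normalisations: by Definition~\ref{def:class-rel-ent} the left-hand side above equals $\hbar^{-d}\,\hcl_S(m^\hbar_\gamma,m^\hbar_{\gamma_{\rm ref}})$, so multiplying through by $\hbar^d$ gives exactly $\hcl_S(m^\hbar_\gamma,m^\hbar_{\gamma_{\rm ref}})\le C^{BL}_S\,\hbar^d\,\cH_S(\gamma,\gamma_{\rm ref})$. I do not expect a serious obstacle here: all of the analytic content — and in particular the value of the constant $C^{BL}_S$, which exceeds $1$ only in the bosonic case — sits inside the Berezin--Lieb property, established separately in this section; the present proposition is simply its specialisation to coherent states, and the only genuine checks are the overcompleteness relation displayed above and the continuity of $(x,v)\mapsto\chi^\hbar_{x,v}$.
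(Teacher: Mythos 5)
Your proposal is correct and follows exactly the paper's argument: the paper's proof consists precisely of invoking the coherent-state resolution of the identity $\frac{1}{(2\pi\hbar)^d}\int\int\proj{\chi^\hbar_{x,v}}\,dx\,dv=\mathrm{id}$ together with Definition~\ref{def:BL}, and your additional verifications (unit norm and continuity of the coherent-state family, spectra of $\gamma$ and $\gamma_{\rm ref}$ in $[0,1]$, and the $(2\pi)^d$ versus $(2\pi\hbar)^d$ bookkeeping yielding the factor $\hbar^d$) are exactly the routine checks the paper leaves implicit.
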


\begin{proof}
 Use the resolution of the identity
 $$\int_{\R^d}\int_{\R^d}|\chi^\hbar_{x,v}\rangle\langle\chi^\hbar_{x,v}|\,\frac{dx\,dv}{(2\pi\hbar)^d}=1$$
 together with the definition of the Berezin-Lieb inequality.
\end{proof}

The rest of this section is devoted to the proof that the three physical entropies in the class $\ccS$ defined in \eqref{eq:entropies-berezin} have the Berezin-Lieb property on $[0,1]$, so that we may apply Proposition \ref{prop:BL}. 

First, the Berezin-Lieb property is stable under the following operations.

\begin{lemma}
 \begin{enumerate}
  \item If $S$ has the Berezin-Lieb property on $I$, then for any $t\in\R$, $S(t+\cdot)$ has the Berezin-Lieb property on $I-t$.
  \item If $S$ has the Berezin-Lieb property on $[0,1]$, then for any $0\le a,b\le 1$, the function $S_{a,b}(x)=S(ax+(1-x)b)$ has the Berezin-Lieb property on $[0,1]$.
  \item If $S_1$ and $S_2$ have the Berezin-Lieb property on $I$ and if $\mu_1,\mu_2\ge0$, then $\mu_1S_1+\mu_2S_2$ has the Berezin-Lieb property on $I$.
  \item If there exist $\alpha,\beta>0$ so that $S_2-S_1/\alpha$ and $\beta\,S_1-S_2$ are both concave on $I$ and if $S_1$ has the Berezin-Lieb property with constant $C_1$, then so does $S_2$ with constant $C_2=C_1\alpha\beta$. 
  \end{enumerate} 
 \label{lem:BL}
\end{lemma}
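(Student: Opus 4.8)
The plan is to verify each of the four closure properties directly from Definition~\ref{def:BL}, checking that a given resolution of the identity, together with the hypotheses on the operators $A,B$, produces the claimed Berezin-Lieb inequality for the new entropy functional.

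For (1), if $S$ has the Berezin-Lieb property on $I$ and $t\in\R$, fix a resolution of the identity $\int_{\R^N}|n_x\rangle\langle n_x|\,d\zeta(x)=\mathrm{id}_\gH$ and operators $A,B$ with spectrum in $I-t$. Apply the property of $S$ to $A+t$ and $B+t$, whose spectra lie in $I$. The key observation is that the relative entropy is insensitive to translating the argument of the entropy by a constant while simultaneously translating the operators: one checks from the formula $\cH_S(y,y_0)=S'(y_0)(y-y_0)-S(y)+S(y_0)$ that $\cH_{S(t+\cdot)}(y,y_0)=\cH_S(y+t,y_0+t)$ for numbers, and the same identity holds at the operator level for $\cH_S(A,B)$ by \eqref{eq:relative-entropy} since $S(t+\cdot)'=S'(t+\cdot)$ and $(A+t)-(B+t)=A-B$. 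Hence both sides of \eqref{eq:def_Berezin-Lieb} for $S(t+\cdot)$ with $(A,B)$ coincide with those for $S$ with $(A+t,B+t)$, and the constant is unchanged.

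For (2), given $0\le a,b\le1$ and a resolution of the identity as above with $A,B$ having spectrum in $[0,1]$, set $A'=aA+(1-A)b$ and $B'=aB+(1-B)b$; note these have spectrum in $[0,1]$ since $t\mapsto at+(1-t)b$ maps $[0,1]$ into itself. The map $A\mapsto A'$ is affine, so $A'-B'=(a-b)(A-B)$, and it commutes with taking expectations: $\langle n_x,A'n_x\rangle = a\langle n_x,An_x\rangle + (1-\langle n_x,An_x\rangle)b = S_{a,b}$-argument evaluated at $\langle n_x,An_x\rangle$, where I am abusing notation — precisely, $\langle n_x,A'n_x\rangle$ equals the value $at+(1-t)b$ at $t=\langle n_x,An_x\rangle$. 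From \eqref{eq:relative-entropy} and the chain rule $S_{a,b}'(x)=(a-b)S'(ax+(1-x)b)$ one gets $\cH_{S_{a,b}}(A,B)=\cH_S(A',B')$ at the operator level, and similarly $\cH_{S_{a,b}}(\langle n_x,An_x\rangle,\langle n_x,Bn_x\rangle)=\cH_S(\langle n_x,A'n_x\rangle,\langle n_x,B'n_x\rangle)$ for the numbers. Applying the Berezin-Lieb property of $S$ to $A',B'$ then gives the claim with the same constant $C_S^{BL}$.

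For (3), linearity of $\cH_S$ in $S$ (visible from both the operator formula \eqref{eq:relative-entropy} and the scalar formula) gives $\cH_{\mu_1S_1+\mu_2S_2}=\mu_1\cH_{S_1}+\mu_2\cH_{S_2}$, and since $\mu_1,\mu_2\ge0$ we simply add the two Berezin-Lieb inequalities; the resulting constant can be taken to be $\max(C_1,C_2)$, or more crudely $C_1$ if $C_1=C_2$. For (4), the hypothesis that $S_2-S_1/\alpha$ is concave means $\cH_{S_2-S_1/\alpha}(A,B)\ge0$, i.e. $\cH_{S_2}(A,B)\ge\frac1\alpha\cH_{S_1}(A,B)$ at the operator level, and similarly $\cH_{S_1}(y,y_0)\ge\frac1\beta\cH_{S_2}(y,y_0)$ for numbers from concavity of $\beta S_1-S_2$; chaining these with the Berezin-Lieb inequality for $S_1$,
\begin{equation*}
\int_{\R^N}\cH_{S_2}\big(\langle n_x,An_x\rangle,\langle n_x,Bn_x\rangle\big)\,d\zeta(x) \le \beta\int_{\R^N}\cH_{S_1}(\cdots)\,d\zeta(x) \le \beta C_1\,\cH_{S_1}(A,B) \le \alpha\beta C_1\,\cH_{S_2}(A,B),
\end{equation*}
which is the claim with $C_2=\alpha\beta C_1$. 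The only genuinely delicate point is (4): one must make sure the concavity of $\beta S_1-S_2$ yields a pointwise (scalar) inequality between relative entropies in the right direction, and that concavity of $S_2-S_1/\alpha$ yields the operator inequality $\cH_{S_2}(A,B)\ge\alpha^{-1}\cH_{S_1}(A,B)$ — this uses that $\cH_S(A,B)\ge0$ whenever $S$ is concave, which is exactly the content of the Klein-type positivity recalled after Definition~\ref{def:entropy} and in \cite{LewSab-13,DeuHaiSei-15}. Everything else is bookkeeping with the affine substitutions and the explicit formula for $\cH_S$.
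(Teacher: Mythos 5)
Your proof is correct and follows essentially the same route as the paper: item (4) is exactly the paper's argument, namely that concavity of $S_2-S_1/\alpha$ and $\beta S_1-S_2$ gives $\cH_{S_1}/\alpha\le\cH_{S_2}\le\beta\,\cH_{S_1}$, used at the scalar level on one side and the operator level on the other, while (1)--(3) are the routine affine substitutions and linearity in $S$ that the paper leaves implicit. No gaps to report.
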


\begin{proof}
For \textit{(4)} note that the concavity of the two functions implies $\cH_{S_2}\geq \cH_{S_1}/\alpha$ and $\cH_{S_2}\leq \beta \cH_{S_1}$. 
\end{proof}

For $S=S_0$ the Berezin-Lieb property follows from the monotonicity of the Umegaki relative entropy under completely monotone trace-preserving maps~\cite{LieRus-73b,OhyPet-93}. This was proved in~\cite[Lemma 7.2]{LewNamRou-15} under the additional assumption that $\tr(A)=\tr(B)=1$. The following result shows that the trace normalization is indeed not necessary.

\begin{proposition}\label{prop:BL-umegaki}
 The function $S_0(x)=-x\log(x)+x$ has the Berezin-Lieb property on $[0,+\ii)$ with constant $1$. 
\end{proposition}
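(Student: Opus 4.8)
The plan is to deduce the Berezin-Lieb property for $S_0(x)=-x\log x+x$ from the monotonicity of the Umegaki relative entropy
$$
\mathrm{Ent}(A\|B):=\tr\big(A\log A-A\log B-A+B\big)
$$
under completely positive trace-preserving (CPTP) maps, reducing to the trace-normalized case already treated in \cite[Lemma 7.2]{LewNamRou-15}. First I would observe that, for operators $0\le A,B$, one has $\cH_{S_0}(A,B)=\mathrm{Ent}(A\|B)$, and likewise for the scalar version appearing in the integrand on the left side of \eqref{eq:def_Berezin-Lieb}; so the claim to prove is
$$
\int_{\R^N}\mathrm{Ent}\big(\langle n_x,An_x\rangle\,\big\|\,\langle n_x,Bn_x\rangle\big)\,d\zeta(x)\le \mathrm{Ent}(A\|B)
$$
for every resolution of identity $\int |n_x\rangle\langle n_x|\,d\zeta(x)=\mathrm{id}_\gH$. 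The natural object is the CPTP map $\Phi:\gH\to L^1(\R^N,d\zeta)$ (diagonal measures) given by $\Phi(X):=\big(x\mapsto\langle n_x,Xn_x\rangle\big)$, or more precisely its compression onto the commutative von Neumann algebra $L^\infty(d\zeta)$; the resolution-of-identity condition is exactly what makes $\Phi$ trace-preserving. Monotonicity of $\mathrm{Ent}$ under $\Phi$ then gives the inequality directly — \emph{provided} $A,B$ are trace-class so that the right side is finite and the infinite-dimensional monotonicity theorem applies.

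The remaining issue, and the one the statement is really about, is that here $A,B$ need not be trace-class (indeed in our application $B=\gamma_{\rm ref}$ has infinite trace, and $\tr(A)=\tr(B)=1$ certainly fails). To handle this I would use an approximation/localization argument: pick an increasing sequence of finite-rank projections $P_k\uparrow \mathrm{id}_\gH$, set $A_k:=P_kAP_k$, $B_k:=P_kBP_k+\epsilon_k(\mathrm{id}-P_k)$ — or better, work on the finite-dimensional corner — so that $A_k,B_k$ are trace-class with spectrum in $[0,C]$ for $C:=\max(\|A\|,\|B\|)$, apply the normalized result \cite[Lemma 7.2]{LewNamRou-15} after rescaling by the trace (using Lemma~\ref{lem:BL}(1)--(3), i.e. that the Berezin-Lieb property is insensitive to affine changes, together with the scaling $\cH_{S_0}(\lambda A,\lambda B)=\lambda\,\cH_{S_0}(A,B)+(\text{affine in traces})$ — I would check the exact bookkeeping), and then pass to the limit. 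The limit on the left side is controlled by Fatou/monotone convergence since $y\mapsto\cH_{S_0}(y,y_0)\ge0$ and the integrands converge pointwise ($\langle n_x,A_kn_x\rangle\to\langle n_x,An_x\rangle$ etc.); on the right side one uses lower semicontinuity of the relative entropy $\mathrm{Ent}(A_k\|B_k)\to\mathrm{Ent}(A\|B)$ (allowing the value $+\infty$, in which case there is nothing to prove) together with the definition of $\cH_{S_0}$ for non-trace-class operators from \cite{LewSab-13,DeuHaiSei-15}, which is precisely built so as to be compatible with such limits.

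The main obstacle is thus the passage from $\tr A=\tr B=1$ to general bounded $A,B\ge0$: one must make sure that the finite-dimensional truncations can be renormalized to unit trace \emph{without} the renormalization constants for $A$ and $B$ differing in a way that breaks the inequality, and that the limiting procedure interacts correctly with the regularized definition of $\cH_{S_0}$ for operators of infinite trace. A clean way to organize this is: (i) prove the inequality first for finite-rank $A,B$ with \emph{equal} trace (immediate from \cite[Lemma 7.2]{LewNamRou-15} by homogeneity of $\mathrm{Ent}$ under common rescaling), (ii) remove the equal-trace assumption using that $\mathrm{Ent}(A\|B)-\mathrm{Ent}(A\|\lambda B)=\tr(A)\log\lambda+(1-\lambda)\tr(B)$ is reproduced exactly on the classical side by the resolution of identity (so the correction terms cancel), and (iii) take the limit as above. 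Everything else — identifying $\cH_{S_0}$ with the Umegaki entropy, checking $\Phi$ is CPTP, invoking data-processing — is standard.
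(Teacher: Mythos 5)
Your proposal is correct and follows essentially the same strategy as the paper: reduce to the Umegaki relative entropy, invoke its monotonicity under CPTP maps, dispose of the unit-trace normalization by an exact algebraic identity (your two-step homogeneity-plus-$\log\lambda$ correction is equivalent to the paper's identity $\cH_{S_0}(A,B)=\tr(A)\,\cH_{S_0}(A/\tr A,B/\tr B)+\cH_{S_0}(\tr A,\tr B)$), and extend to infinite dimensions via finite-rank projections, Fatou, and lower semicontinuity. The one organizational difference is that the paper inserts an extra discretization step — replacing $\int|n_x\rangle\langle n_x|\,d\zeta$ by Riemann sums — so that monotonicity is only ever invoked for finite-dimensional CPTP maps of the explicit block form \eqref{eq:CPTP}, whereas you apply it directly to the conditional-expectation-type map onto $L^\infty(d\zeta)$, which rests on a more general (infinite-dimensional, commutative-target) form of the monotonicity theorem; both routes are sound, but the paper's is more self-contained.
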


\begin{proof}
\emph{Step 1: Discrete Berezin-Lieb property.}
Assume that $\gH$ is finite-dimensional, and that $(X_k)_{k=1}^K$ is a family of linear maps on $\gH$, satisfying 
$$\sum_{k=1}^K \alpha_k\,X_k^* X_k=\text{id}_{\gH}$$
for some $\alpha_k>0$.
The linear map 
\begin{equation}
\Phi:A\mapsto \begin{pmatrix}
 \alpha_1 X_1AX_1^*&&0\\
 &\ddots&\\
0 &&\alpha_K X_KAX_K^*
 \end{pmatrix} 
 \label{eq:CPTP}
\end{equation}
from $\cL(\gH)$ to $\cL(\gH^K)$  is completely monotone and trace-preserving~\cite{OhyPet-93}. It is a classical fact that $\cH_{S_0}$ is monotone under such maps, that is,  $\cH_{S_0}(A,B)\geq \cH_{S_0}(\Phi(A),\Phi(B))$ for all $A=A^*,B=B^*\geq0$ satisfying the additional condition that $\tr(A)=\tr(B)=1$~\cite{OhyPet-93,Carlen-10}. To see that the trace normalization is not necessary, we remark that for general $A=A^*,B=B^*\geq0$
$$\cH_{S_0}(A,B)=\tr(A)\;\cH_{S_0}\left(\frac{A}{\tr(A)},\frac{B}{\tr(B)}\right)+\cH_{S_0}\Big(\tr(A)\,,\,\tr(B)\Big).$$
Since $\tr(\Phi(A))=\tr(A)$ and $\tr(\Phi(B))=\tr(B)$ it is then clear that 
$$\cH_{S_0}(A,B)\geq \cH_{S_0}(\Phi(A),\Phi(B))$$ 
for all non-negative hermitian matrices $A,B$, without any trace condition. Using the homogeneity of $\cH_{S_0}$, this gives for our $\Phi$ in~\eqref{eq:CPTP}
\begin{align*}
\cH_{S_0}(A,B)&\geq \sum_{k=1}^K\cH_{S_0}(\alpha_kX_kAX_k^*,\alpha_kX_kBX_k^*)\\
&=\sum_{k=1}^K\alpha_k\;\cH_{S_0}(X_kAX_k^*,X_kBX_k^*). 
\end{align*}
If $\sum_{k=1}^K \alpha_k\,X_k^* X_k\leq\text{id}_{\gH}$ then we may introduce $Y=Y^*$ so that $Y^2=\text{id}_{\gH}-\sum_{k=1}^K \alpha_k\,X_k^* X_k$ and, using that $\cH_{S_0}(YAY,YBY)\geq0$, we obtain 
\begin{equation}
 \sum_{k=1}^K\alpha_k\;\cH_{S_0}(X_kAX_k^*,X_kBX_k^*)\leq \cH_{S_0}(A,B)
 \label{eq:BL-discrete}
\end{equation}
which is a kind of discrete version of~\eqref{eq:def_Berezin-Lieb}. We can even allow $K=+\ii$ after passing to the limit in the inequality at finite $K$. 

 \medskip
 
\noindent \emph{Step 2: Berezin-Lieb property in finite-dimensions.}
 We next show the Berezin-Lieb property when $\gH$ is finite-dimensional, deducing it from the discrete version \eqref{eq:BL-discrete} (approximating integrals by sums). To do so, let $n\in\N$ and for any $k\in(1/n)\Z^N$, let 
 $$Q_k^{(n)}:=k+(1/n)[0,1)^N.$$
 We define the operator
 $$J_n:=\sum_{k\in(1/n)\Z^N}\zeta(Q^{(n)}_k)|n_k\rangle\langle n_k|.$$
 Note that $\langle v,J_n w\rangle\to\langle v,w\rangle$ as $n\to\ii$ for any $x,y\in\gH$ by continuity of $x\mapsto n_x$. Since $\gH$ is finite-dimensional, we deduce that $\|J_n\|\to1$ as $n\to\ii$.  In particular, we have
 $$\sum_{k\in(1/n)\Z^N}\frac{\zeta(Q^{(n)}_k)}{\|J_n\|}|n_k\rangle\langle n_k|\le 1$$
 and we may apply \eqref{eq:BL-discrete} with
 $$X_k:=\langle n_k|,\quad \alpha_k:=\frac{\zeta(Q^{(n)}_k)}{\|J_n\|}.$$
We obtain 
\begin{align*}
\cH_S(A,B)&\ge\sum_{k\in(1/n)\Z^N}\frac{\zeta(Q^{(n)}_k)}{\|J_n\|}\cH_S(\langle n_k,An_k\rangle,\langle n_k,Bn_k\rangle)\\
&=\frac{1}{\|J_n\|}\int_{\R^N}\cH_S\big(a_n(x),b_n(x)\big)\,d\zeta(x) 
\end{align*}
where 
$$a_n(x)= \sum_{k\in(1/n)\Z^N}\1(x\in Q_k^{(n)})\langle n_k, An_k\rangle$$
and
$$b_n(x)= \sum_{k\in(1/n)\Z^N}\1(x\in Q_k^{(n)})\langle n_k, Bn_k\rangle $$
are the piecewise constant approximations to $a(x)=\langle n_x, An_x\rangle$ and $b(x)=\langle n_x, Bn_x\rangle$.
 Since the map $x\mapsto n_x$ is continuous, the functions $a_n$ and $b_n$ converge pointwise to $a$ and $b$. By lower-semi-continuity of the relative entropy, we obtain that
 $$\liminf_{n\to\ii}\cH_S(a_n(x),b_n(x))\ge \cH_S(a(x),b(x))$$
 for all $x$, and we deduce the result after passing to the limit $n\to\ii$ by Fatou's lemma.
 
 \medskip
 
\noindent \emph{Step 3: Extension to infinite-dimensional spaces}. Finally, we extend the Berezin-Lieb property to infinite-dimensional $\gH$. To do so, let $(P_k)$ a sequence of finite-rank orthogonal projections on $\gH$, such that $P_k\to1$ strongly in $\gH$ as $k\to\ii$. Then, by definition of $\cH_S$ \cite{LewSab-13}, we have
 $$\cH_S(A,B)=\lim_{k\to\ii}\cH_S(P_kAP_k,P_kBP_k).$$
 Defining the finite dimensional space $\gH_k=P_k\gH$, we have
 $$\int_{\R^N}|n_x^{(k)}\rangle\langle n_x^{(k)}|\,d\zeta_k(x)=\text{id}_{\gH_k},$$
 where 
 $$n_x^{(k)}:=\frac{P_k n_x}{\|P_k n_x\|},\quad d\zeta_k(x)=\|P_k n_x\|^2\,d\zeta(x).$$
 Using Step 2 on $\gH_k$, we deduce that for all $k$,
 \begin{multline*}
    \cH_S(P_k A P_k,P_k B P_k)\\
    \ge\int_{\R^N}\cH_S\left(\frac{\langle n_x,P_k A P_k n_x\rangle}{\|P_k n_x\|^2},\frac{\langle n_x,P_k A P_k n_x\rangle}{\|P_k n_x\|^2}\right)\,\|P_kn_x\|^2\,d\zeta(x).
 \end{multline*}
 Using that for all $x$,
 $$\|P_kn_x\|\to1,\quad \langle n_x,P_kAP_kn_x\rangle\to\langle n_x,An_x\rangle$$
 as $k\to\ii$, we deduce the Berezin-Lieb inequality by Fatou's lemma.
 \end{proof} 

\begin{corollary}[Bosonic and fermionic cases]
\ 
\begin{enumerate}
 \item The function $S_f(x)=-x\log(x)-(1-x)\log(1-x)$ has the Berezin-Lieb property on $[0,1]$ with constant $1$.
 \item For any $M>0$, the function $S_b(x)=-x\log(x)+(1+x)\log(1+x)$ has the Berezin-Lieb property on $[0,M]$ with constant $1+M$.
\end{enumerate}
\label{coro:BL-boson-fermion}
\end{corollary}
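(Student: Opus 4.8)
The plan is to obtain both statements as consequences of the Umegaki case $S_0$ (Proposition~\ref{prop:BL-umegaki}) together with the stability operations of Lemma~\ref{lem:BL}. Two elementary facts will be used repeatedly: the relative entropy $\cH_S$ (both its operator and its numerical version) is \emph{linear in $S$} and \emph{vanishes when $S$ is affine}, so adding an affine function to $S$ leaves both sides of~\eqref{eq:def_Berezin-Lieb} unchanged; and the Berezin--Lieb property on an interval $I$ passes, with the same constant, to every subinterval $J\subset I$, since the inequality for operators with spectrum in $J$ is a particular case. In particular $S_0$ has the Berezin--Lieb property with constant $1$ on $[0,1]$ and on $[0,M]$.

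\emph{Fermionic case.} I would begin from the identity
\[
S_f(x)+1 = S_0(x)+S_0(1-x),
\]
which is immediate upon expanding the logarithms. By Lemma~\ref{lem:BL}\,(2) with $(a,b)=(0,1)$, the map $x\mapsto S_0(1-x)$ has the Berezin--Lieb property on $[0,1]$ with constant $1$; by Lemma~\ref{lem:BL}\,(3) the sum $x\mapsto S_0(x)+S_0(1-x)$ has it with constant $\max(1,1)=1$; and since this sum differs from $S_f$ only by the additive constant $1$, the Berezin--Lieb property holds for $S_f$ on $[0,1]$ with constant $1$.

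\emph{Bosonic case.} The additive route is not available: the natural decomposition $S_b(x)=S_0(x)-S_0(1+x)+1$ involves the \emph{convex} function $-S_0(1+x)$, to which Lemma~\ref{lem:BL}\,(3) cannot be applied. I would therefore use the comparison statement Lemma~\ref{lem:BL}\,(4) with $S_1=S_0$ (so $C_1=1$), $S_2=S_b$, on the interval $I=[0,M]$. Since $S_0''(x)=-1/x$ and $S_b''(x)=-1/(x(1+x))$ on $(0,M]$, one computes
\[
\left(S_b-\tfrac1\alpha S_0\right)''(x)=\frac1x\left(\frac1\alpha-\frac1{1+x}\right),\qquad \left(\beta S_0-S_b\right)''(x)=\frac1x\left(\frac1{1+x}-\beta\right),
\]
so that $S_b-\tfrac1\alpha S_0$ is concave on $[0,M]$ exactly when $\alpha\ge 1+M$ and $\beta S_0-S_b$ is concave there exactly when $\beta\ge 1$ (both functions are continuous up to the endpoints, so concavity on the open interval is enough). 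Taking $\alpha=1+M$ and $\beta=1$, Lemma~\ref{lem:BL}\,(4) gives the Berezin--Lieb property for $S_b$ on $[0,M]$ with constant $C_1\alpha\beta=1+M$.

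I do not expect a genuine obstacle in this argument. The only point demanding care is the sharpness of the bosonic constant: $\alpha=1+M$ is precisely what the second-derivative comparison forces at the right endpoint $x=M$, while $\beta=1$ is what it forces as $x\to0^+$ (and $\beta$ cannot be taken below $1$ for any $M$, consistently with $C_S^{BL}\ge1$), which is why this method yields exactly the constant $1+M$ appearing in the statement and not a better one.
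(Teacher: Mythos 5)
Your proof is correct and follows essentially the same route as the paper: the fermionic case is handled via the affine decomposition together with Lemma~\ref{lem:BL}\,(2)--(3) (the paper invokes (1)--(3), the same toolkit), and the bosonic case via the second-derivative comparison of Lemma~\ref{lem:BL}\,(4) with $\alpha=1+M$, $\beta=1$, which is exactly the paper's argument (the paper compares against $-x\log x$ rather than $S_0(x)=-x\log x+x$, but these differ by an affine term and yield identical $\cH_{S_1}$ and second derivatives).
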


\begin{proof}
The fermionic case $S=S_f$ is an immediate consequence of \textit{(1)--(3)} in Lemma \ref{lem:BL} and of Proposition \ref{prop:BL-umegaki}. The bosonic case $S=S_b$ is not as easy since the term $(1+x)\log(1+x)$ comes with the wrong sign. Here we instead use the last point \textit{(4)} of Lemma~\ref{lem:BL}, noticing that $-x\log(x)-S_b(x)=-(1+x)\log(1+x)$ is concave on $\R_+$ and that $S_b+(1+M)^{-1} x\log(x)$ is concave on $[0,M]$. 
\end{proof}

\begin{lemma}
 The function $S(x)=-x^2$ has the Berezin-Lieb property on $\R$ with constant $1$.
\end{lemma}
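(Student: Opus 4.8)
The plan is to make the relative entropy $\cH_S$ completely explicit for $S(x)=-x^2$ and then recognize the Berezin-Lieb inequality \eqref{eq:def_Berezin-Lieb} as a one-line consequence of the Cauchy-Schwarz inequality and of the resolution of the identity.

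First I would record the explicit form of $\cH_S$. Since $S'(x)=-2x$, for real numbers $y,y_0$ one has
$$\cH_S(y,y_0)=S'(y_0)(y-y_0)-S(y)+S(y_0)=(y-y_0)^2,$$
while for bounded self-adjoint operators $A,B$ on a Hilbert space $\gH$,
$$\cH_S(A,B)=-\tr\big(S(A)-S(B)-S'(B)(A-B)\big)=\tr\big((A-B)^2\big)=\|A-B\|_{\gS^2}^2\in[0,+\ii],$$
the last quantity being $+\ii$ precisely when $A-B\notin\gS^2$.

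Next, given a resolution of the identity $\int_{\R^N}|n_x\rangle\langle n_x|\,d\zeta(x)=\text{id}_\gH$ and bounded self-adjoint operators $A,B$, I would set $Q:=A-B$. If $Q\notin\gS^2$ the right-hand side of \eqref{eq:def_Berezin-Lieb} equals $+\ii$ and there is nothing to prove, so assume $Q\in\gS^2$. Because $\|n_x\|=1$, Cauchy-Schwarz gives the pointwise bound
$$\cH_S\big(\langle n_x,An_x\rangle,\langle n_x,Bn_x\rangle\big)=\langle n_x,Qn_x\rangle^2\le\|Qn_x\|^2=\langle n_x,Q^2n_x\rangle.$$
Expanding $\langle n_x,Q^2n_x\rangle=\sum_j|\langle n_x,Q\phi_j\rangle|^2$ in an orthonormal basis $(\phi_j)$ of $\gH$, integrating against $d\zeta$, and interchanging the (non-negative) sum and integral by Tonelli's theorem, the resolution of the identity applied to each vector $Q\phi_j$ gives $\int_{\R^N}|\langle n_x,Q\phi_j\rangle|^2\,d\zeta(x)=\|Q\phi_j\|^2$, whence
$$\int_{\R^N}\cH_S\big(\langle n_x,An_x\rangle,\langle n_x,Bn_x\rangle\big)\,d\zeta(x)\le\sum_j\|Q\phi_j\|^2=\tr(Q^2)=\cH_S(A,B),$$
which is exactly \eqref{eq:def_Berezin-Lieb} with $C_S^{BL}=1$.

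There is essentially no genuine obstacle here: contrary to the entropic cases $S_0,S_f,S_b$ treated above, whose proofs rest on the monotonicity of the Umegaki relative entropy, the quadratic case collapses to Cauchy-Schwarz. The only two points needing a word of care are the interpretation of $\cH_S(A,B)$ when $A-B$ is not Hilbert-Schmidt (handled by the trivial case) and the exchange of the sum over $j$ with the integral over $x$, which is legitimate since every term is non-negative. One may note that this last computation is precisely the statement that the abstract lower-symbol (Berezin) map $Q\mapsto\big(x\mapsto\langle n_x,Qn_x\rangle\big)$ is norm-non-increasing from $\gS^2$ into $L^2(\R^N,d\zeta)$.
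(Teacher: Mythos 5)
Your proof is correct and follows essentially the same route as the paper: compute $\cH_S(y,y_0)=(y-y_0)^2$, apply Cauchy-Schwarz (equivalently Jensen) to get $\langle n_x,(A-B)n_x\rangle^2\le\langle n_x,(A-B)^2n_x\rangle$, then integrate and use the resolution of the identity to recover $\tr(A-B)^2$. You merely spell out a couple of small points the paper leaves implicit (the trivial case $A-B\notin\gS^2$ and the Tonelli justification for $\int\langle n_x,Q^2n_x\rangle\,d\zeta=\tr Q^2$).
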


\begin{proof}
 For this $S$, we have $\cH_S(x,y)=(x-y)^2$. As a consequence, 
 \begin{align*}
  \int_{\R^N}\cH_S\big(\langle n_x,An_x\rangle,\langle n_x Bn_x\rangle\big)\,d\zeta(x) &= \int_{\R^N}\langle n_x,(A-B)n_x\rangle^2\,d\zeta(x)\\
  &\le \int_{\R^N}\langle n_x,(A-B)^2n_x\rangle\,d\zeta(x)\\
  &= \tr(A-B)^2=\cH_S(A,B).
 \end{align*}
\end{proof}

\begin{remark}\label{rk:BL-family}
 We deduce that functions of the type
 $$S(x)=-ax^2+bx+c-\int_0^\ii(t+x)\log(t+x)\,d\nu(t),$$
 for $a\ge0$, $b,c\in\R$ and $d\nu(t)$ a non-negative measure on $[0,+\ii)$ (with appropriate integrability conditions) satisfy a Berezin-Lieb inequality on $\R_+$. It turns out that these are exactly the functions $S$ which are concave and such that $S''$ is numerically non-decreasing and operator concave \cite{PitVir-15}.
\hfill$\diamond$\end{remark}

\begin{remark}\label{rmk:BL}
 A natural open question, which is outside of the scope of this paper, is to determine the class of entropies satisfying the Berezin-Lieb property, on the interval $[0,1]$. 
 
 Note that it is very important that the definition involves a resolution of the identity with only rank-one projections $|n_x\rangle\langle n_x|$. One could consider more general (sub-)resolutions of the identity of the type 
 $$\int_{\R^N}X(x)^* X(x)\,d\zeta(x)\le 1,$$
 and ask the stronger property
 \begin{equation}
\cH_S(A,B)\ge\int_{\R^N}\cH_S\big(X(x)AX(x)^*,X(x) B X(x)^*\big)\,d\zeta(x).
\label{eq:stronger-BL}
 \end{equation}
This is indeed satisfied by the Umegaki entropy $S_0(x)=-x\log(x)+x$, as is seen from the proof of Proposition~\ref{prop:BL-umegaki}. But we claim that this is the only entropy for which~\eqref{eq:stronger-BL} holds in $\R_+$, up to a multiplicative factor and an additive affine function (which of course does not matter in $\cH_S$). So requiring~\eqref{eq:stronger-BL} is not a good idea.

To prove the claim, take any admissible function $S$ satisfying~\eqref{eq:stronger-BL}. Using the trivial ``resolution of the identity'' $t\times(1/t)=1$ for all $t>0$, we deduce that 
 $\cH_S(tx,ty)\le t\cH_S(x,y)$.
Replacing $x$ and $y$ by $x/t$ and $y/t$ and then $t$ by $1/t$, we deduce that the reverse inequality holds, hence we obtain that $\cH_S$ is homogeneous:
 \begin{equation}
 \cH_S(tx,ty)= t\cH_S(x,y),\qquad \forall x,y,t>0.
  \label{eq:S_homogeneous}
 \end{equation}
The stronger property~\eqref{eq:stronger-BL} turns out to imply monotonicity in the sense of~\cite{LewSab-13}, hence that $-S'$ is operator monotone on $(0,\ii)$ by~\cite[Thm.~1]{LewSab-13}. In particular, $S$ is $C^\ii$ on $(0,\ii)$. Differentiating~\eqref{eq:S_homogeneous} with respect to $y$, we find $tS''(ty)=S''(y)$, hence $S''(y)=-C/y$. This proves, as we claimed, that $S(x)=Cx(1-\log(x))+ax+b$.
 \hfill$\diamond$\end{remark}

\section{High-velocity estimates}\label{sec:high-freq}

In this section, we prove that the classical relative entropy controls high velocities, similarly to the ``zero temperature'' case where the entropy degenerates to $\int|v|^2 f$. This control on high velocities is then used to identify the density in the Vlasov equation, as shown by Proposition \ref{prop:ident-density}. The main result of this section is the following.

\begin{theorem}[High velocity control by the relative entropy]\label{thm:high-freq}
  Let $S$ be an admissible entropy functional and define $m_{\rm ref}(x,v):=(S')^{-1}(|v|^2)$. Let $\hat{\chi}\in C^\ii_c(\R^d)$ be such that $\int|\hat{\chi}|^2=1$, and let $R>0$ such that $\supp\hat{\chi}\subset B(0,R)$. For any $\hbar>0$, define
  $$\tilde{m_{\rm ref}}(x,v):=m_{\rm ref}*_v \hbar^{-d/2}|\hat{\chi}|^2(\cdot/\sqrt{\hbar}),$$
  Let $m:\R^d\times\R^d\to[0,1]$ be any function, and let $\phi\in (L^1\cap L^\ii)(\R^d)$. Then, there exists $C>0$ such that for all $A>32R^2$ and for all $\hbar\in(0,1)$ we have 
  \begin{multline*}
    \int_{\R^d\times\R^d}|\phi(x)|\1(|v|^2\ge 4A)m(x,v)\,dx\,dv\\
    \le C\|\phi\|_{L^\ii}\frac{\hcl_S(m,\tilde{m_{\rm ref}})}{A}+C\|\phi\|_{L^1}\int_{|v|^2\ge A/32}m_{\rm ref}\,dv.
  \end{multline*}
 \end{theorem}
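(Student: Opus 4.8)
The plan is to split the high-velocity region $\{|v|^2 \ge 4A\}$ according to whether $m$ is close to the smoothed reference profile $\widetilde{m_{\rm ref}}$ or not, and to control each part separately using the pointwise relative entropy $\cH_S(m(x,v),\widetilde{m_{\rm ref}}(x,v))$ together with explicit bounds on $\widetilde{m_{\rm ref}}$. First I would record the key pointwise facts about $\widetilde{m_{\rm ref}}$: since $\widehat\chi$ is supported in $B(0,R)$, the convolution $\widetilde{m_{\rm ref}}(x,v)=(m_{\rm ref}*_v \hbar^{-d/2}|\widehat\chi|^2(\cdot/\sqrt\hbar))(v)$ only averages $m_{\rm ref}$ over velocities $v'$ with $|v-v'|\le R\sqrt\hbar \le R$ (using $\hbar<1$). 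Since $m_{\rm ref}=(S')^{-1}(|v|^2)$ is radially decreasing in $|v|$, this gives the two-sided control: for $|v|^2\ge A$ (so that $A>32R^2$ guarantees $|v|$ large enough), one has $\widetilde{m_{\rm ref}}(x,v)\le m_{\rm ref}(v')$ for some $v'$ with $|v'|^2\ge A/32$ say (the constant chosen to absorb the $\pm R$ shift via $(|v|-R)^2 \ge |v|^2/32$ when $|v|^2\ge 32 R^2$), and similarly $\widetilde{m_{\rm ref}}$ is not too small compared to nearby values of $m_{\rm ref}$.

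Next I would use that $S'$ is decreasing (concavity) with $\lim_{m\to0}S'(m)=+\infty$, so $(S')^{-1}$ is decreasing and $m_{\rm ref}(x,v)=(S')^{-1}(|v|^2)\to 0$ as $|v|\to\infty$; in particular on the region $|v|^2\ge 4A$ with $A$ large, $\widetilde{m_{\rm ref}}(x,v)$ is small. The decomposition is: on the set where $m(x,v)\le 2\widetilde{m_{\rm ref}}(x,v)$ we bound $m$ directly by $2\widetilde{m_{\rm ref}}$ and integrate, producing the term $C\|\phi\|_{L^1}\int_{|v|^2\ge A/32}m_{\rm ref}\,dv$ after the shift estimate above and a change of variables. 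On the complementary set where $m(x,v)> 2\widetilde{m_{\rm ref}}(x,v)$, I claim a lower bound of the form $\cH_S(m,\widetilde{m_{\rm ref}})\ge c\,|v|^2\, m$ (or at least $\ge c\,|v|^2\,(m-\widetilde{m_{\rm ref}})$ which suffices): indeed $\cH_S(y,y_0)=S'(y_0)(y-y_0)-S(y)+S(y_0)$, and when $y>2y_0$ with $y_0$ small, concavity of $S$ gives $-S(y)+S(y_0)\ge -S'(y_0)(y-y_0)+\frac12 S'(y_0) y$ type estimates — more carefully, since $S'$ is decreasing, $S(y_0)-S(y)=\int_y^{y_0}S'(t)\,dt \ge S'(y_0)(y_0-y)$ is the wrong direction, so instead I use $\cH_S(y,y_0)\ge \cH_S(y,y/2)+[\text{something}]$, or directly: for $y\ge 2y_0$, $\cH_S(y,y_0)\ge -S(y)+S(y/2)+S'(y_0)(y/2-y_0)\ge$ a multiple of $y\,S'(y/2)\ge$ a multiple of $y\,S'(y_0/\!\cdot)$. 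The cleanest route is: $\cH_S(y,y_0) = \int_{y_0}^{y}(S'(y_0)-S'(t))\,dt \ge \int_{2y_0}^{y}(S'(y_0)-S'(t))\,dt$, and since $\widetilde{m_{\rm ref}}(x,v)\le (S')^{-1}(c|v|^2)$ for $|v|^2\ge A$ gives $S'(\widetilde{m_{\rm ref}}(x,v))\ge c|v|^2$, one obtains $\cH_S(m,\widetilde{m_{\rm ref}})\gtrsim |v|^2(m-2\widetilde{m_{\rm ref}})$ on this set. Multiplying by $|\phi|$, integrating, and using $|v|^2\ge 4A$ to extract the $1/A$ factor, gives the term $C\|\phi\|_{L^\infty}\hcl_S(m,\widetilde{m_{\rm ref}})/A$ after using $\widetilde{m_{\rm ref}}$ is small to absorb the $2\widetilde{m_{\rm ref}}$ into the first term.

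The main obstacle I expect is the second point: getting a clean lower bound $\cH_S(y,y_0)\gtrsim S'(y_0)\,y$ (equivalently $\gtrsim |v|^2 y$) valid uniformly over $y\in[0,1]$ on the region $y\ge 2y_0$, using only the structural properties of admissible entropy functionals (Definition \ref{def:entropy}) — in particular handling the behavior of $S'$ near $1$ where $S'$ may be negative, and the fact that $S'(y_0)\to+\infty$ only as $y_0\to 0$. Since $y_0=\widetilde{m_{\rm ref}}(x,v)$ is genuinely small in the high-velocity region (because $A$ is large), this should be fine, but the estimate must be stated with care: I would fix a threshold $y_*\in(0,1)$ with $S'(y_*)>0$, note that $A>32R^2$ large forces $\widetilde{m_{\rm ref}}(x,v)<y_*$ on $\{|v|^2\ge A\}$, and then work only in the regime $0<y_0<y_*$ where $S'(y_0)=S'(\,(S')^{-1}(\text{argument})\,)$ is comparable to $|v|^2$ by the shift estimate. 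A secondary technical point is the pointwise comparison of $\widetilde{m_{\rm ref}}$ with shifted values of $m_{\rm ref}$ under the convolution, which is elementary from radial monotonicity but needs the inequality $(|v|-R\sqrt\hbar)^2\ge |v|^2/32$ for $|v|^2\ge 32R^2$ (valid since $\hbar<1$), and I would track the constants so that both the $A/32$ in the statement and the requirement $A>32R^2$ come out naturally. The remaining steps — Fubini, change of variables $v\mapsto v/\sqrt{\text{const}}$, and collecting the two pieces — are routine.
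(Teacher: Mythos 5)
Your overall plan is the right one and matches the paper's in spirit: work pointwise in $v$ for fixed $x$, split into a region where $m$ is controlled by the reference profile and a region where the relative entropy takes over, then integrate in $x$ against $|\phi|$. Your first ingredient (the two-sided pointwise control of $\widetilde{m_{\rm ref}}$ by shifted values of $m_{\rm ref}$, via radial monotonicity and $\supp\hat\chi\subset B(0,R)$, with $\hbar<1$ absorbing the $\sqrt\hbar$) is correct and is exactly what the paper uses in its Lemma~\ref{lem:unif-h}.

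The gap is in the second ingredient, and it is a genuine one. The proposed lower bound $\cH_S(y,y_0)\gtrsim S'(y_0)\,(y-2y_0)$ (or $\gtrsim S'(y_0)\,(y-y_0)$, or $\gtrsim S'(y_0)\,y$) on $\{y\ge 2y_0\}$ is \emph{false} for admissible entropy functionals with slowly varying $S'$ near $0$, in particular for the Boltzmann entropy $S_0(m)=-m\log m+m$. There one computes $\cH_{S_0}(y,y_0)=y\log(y/y_0)-y+y_0$; taking $y_0=e^{-B}$ (so $S_0'(y_0)=B$) and $y=3y_0$ gives $\cH_{S_0}(3y_0,y_0)=(3\log 3-2)\,y_0$, which does not grow with $B$, whereas the claimed bound would require $\gtrsim B\,y_0$. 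The underlying reason is that on your truncated interval $[2y_0,y]$ the integrand $S'(y_0)-S'(t)$ near the lower end is only $S'(y_0)-S'(2y_0)=\log 2$, not of order $S'(y_0)\sim|v|^2$. In other words, $2y_0$ is not the correct threshold: the correct one is
$$\tau(y_0):=(S')^{-1}\!\left(\tfrac{S'(y_0)}{2}\right),$$
from which $\cH_S(y,y_0)=\int_{y_0}^y(S'(y_0)-S'(t))\,dt\ge \tfrac{S'(y_0)}{2}\,(y-\tau(y_0))_+$. For Boltzmann $\tau(y_0)=y_0^{1/2}$, exponentially larger than $2y_0$ when $y_0$ is small, which is why your split loses a full factor of $|v|^2$.

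With the corrected threshold your argument would, in fact, reproduce the content of the paper's Proposition~\ref{prop:control-freq}, whose last term is exactly $\int\1(S'(u_0)\ge A)\,(S')^{-1}(S'(u_0)/2)\,dv$ (the paper organizes this via the layer-cake decomposition $u-u_0=\int_0^1 a_\mu\,d\lambda$, $\cH_S(u,u_0)=\int_0^1(S'(u_0)-\mu)a_\mu\,d\lambda$, with splits at $\mu\lessgtr A/2$ and $S'(u_0)\lessgtr 2\mu$). But then a new, non-obvious step is needed: showing that $\int\1(|v|^2\ge 4A)\,\tau(\widetilde{m_{\rm ref}})\,dv\lesssim\int_{|v|^2\ge A/32}m_{\rm ref}\,dv$, i.e. that $(S')^{-1}(S'(\widetilde{m_{\rm ref}})/2)$ — which may be exponentially larger than $\widetilde{m_{\rm ref}}$ — is still dominated after integration by a dilate of $m_{\rm ref}$. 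That is precisely the content of the paper's Lemma~\ref{lem:unif-h}, obtained by comparing $S'(\widetilde{m_{\rm ref}}(v))$ to $(|v|\pm R)^2$ and splitting according to whether $u_0(v)\le (S')^{-1}(2|v|^2)$ or not. Your sketch does not contain this step, and it cannot be replaced by the rough bound $\tau(\widetilde{m_{\rm ref}})\le 2\widetilde{m_{\rm ref}}$.

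Bottom line: same strategy as the paper, but the key pointwise lower bound on $\cH_S$ is stated with the wrong cutoff, which makes the estimate fail already for the Boltzmann reference state; fixing it requires the cutoff $(S')^{-1}(S'(y_0)/2)$ and then a separate (Lemma~\ref{lem:unif-h}-type) argument to absorb $\int(S')^{-1}(S'(\widetilde{m_{\rm ref}})/2)$ into $\int m_{\rm ref}$.
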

 
 This result indeed shows that a uniform control over $\hcl_S(m,\tilde{m_{\rm ref}})$ implies a control for high $|v|$ which is uniform in $\hbar$. Combined with Proposition \ref{prop:BL}, it implies the following result, part of the proof of Theorem \ref{thm:limit-sc}. 

 \begin{corollary}\label{coro:ident-density}
  Assume that $S\in\ccS$. Then, there exists $C_S>0$ such that the pair $(W,\nu)$ built in Proposition \ref{prop:sc-limit-wigner} satisfies
  \begin{equation}\label{eq:control-entropy}
    \hcl_S(W(t),m_{\rm ref})\le C_S\liminf_{\hbar\to0}\hbar^d\cH_S(\gamma(t),\gamma_{\rm ref})
  \end{equation}
  and 
  $$\nu(t)=\rho_{W(t)}-\rho_{m_{\rm ref}}$$
  for all $t\in\R$.
 \end{corollary}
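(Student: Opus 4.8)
The plan is to obtain both conclusions by letting $n\to\ii$, along the subsequence $(\hbar_n)$ of Proposition~\ref{prop:sc-limit-wigner}, in inequalities produced by combining the Berezin--Lieb estimate of Proposition~\ref{prop:BL} with the high-velocity control of Theorem~\ref{thm:high-freq}; both are uniform in $\hbar$ because of the free-energy conservation behind Proposition~\ref{prop:sc-bounds}. Throughout I fix once and for all $\chi\in\cS_z(\R^d)$ with $\hat\chi\in C^\ii_c(\R^d)$, $\supp\hat\chi\subset B(0,R)$, normalized so that $\int\chi^2=\int|\hat\chi|^2=1$. With this $\chi$ the Husimi identities of Corollaries~\ref{coro:limit-husimi} and \ref{coro:limit-density-husimi} and of Lemma~\ref{lem:density-husimi} are available, and $m^{\hbar}_{\gamma_{\rm ref}}(x,v)=m_{\rm ref}*_v\hbar^{-d/2}|\hat\chi|^2(\cdot/\sqrt\hbar)=\tilde{m_{\rm ref}}(x,v)$ is exactly the mollified reference profile occurring in Theorem~\ref{thm:high-freq}.

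\medskip
\noindent\emph{The entropy bound.} To prove \eqref{eq:control-entropy} I would pass to the weak limit in Proposition~\ref{prop:BL}. Since $0\le m^{\hbar_n}_{\gamma(t)}\le 1$ and $m^{\hbar_n}_{\gamma(t)}\to W(t)$ in $\cD'_{x,v}$ for all $t$ by Corollary~\ref{coro:limit-husimi}, the Husimi transforms converge to $W(t)$ weakly-$*$ in $L^\ii(\R^{2d})$, while $m^{\hbar_n}_{\gamma_{\rm ref}}\to m_{\rm ref}$ strongly in $L^1_{\rm loc}$ (approximate identity applied to $m_{\rm ref}\in L^1_v\cap L^\ii_v$). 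When $S\in\ccS$ is built from $S_0$ or $S_f$, the map $(a,b)\mapsto\cH_S(a,b)$ is jointly convex, so $m\mapsto\hcl_S(m,m_{\rm ref})$ is sequentially weakly lower semicontinuous (lower semicontinuity of convex integral functionals on compact sets, then monotone convergence using $\cH_S\ge0$); together with Proposition~\ref{prop:BL} this yields
$$\hcl_S(W(t),m_{\rm ref})\le\liminf_{n\to\ii}\hcl_S\big(m^{\hbar_n}_{\gamma(t)},m^{\hbar_n}_{\gamma_{\rm ref}}\big)\le C_S^{BL}\liminf_{n\to\ii}\hbar_n^d\,\cH_S(\gamma(t),\gamma_{\rm ref})$$
with $C_S^{BL}=1$. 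When $S$ is built from $S_b$, $\cH_{S_b}$ is not jointly convex, but, as in the proof of Corollary~\ref{coro:BL-boson-fermion}, $\cH_{S_b}\le\cH_{S_0}\le(1+M)\cH_{S_b}$ on $[0,M]$; with $M=1$ this gives
$$\hcl_{S_b}(W(t),m_{\rm ref})\le\hcl_{S_0}(W(t),m_{\rm ref})\le\liminf_{n\to\ii}\hcl_{S_0}\big(m^{\hbar_n}_{\gamma(t)},m^{\hbar_n}_{\gamma_{\rm ref}}\big)\le 2\liminf_{n\to\ii}\hcl_{S_b}\big(m^{\hbar_n}_{\gamma(t)},m^{\hbar_n}_{\gamma_{\rm ref}}\big),$$
so one may take $C_S=2\,C_{S_b}^{BL}=4$. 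Replacing $S$ by $TS+\mu\,\mathrm{id}$ only multiplies every relative entropy by $T$ and affects none of the convexity/comparison statements nor any Berezin--Lieb constant, so a single $C_S$ works throughout $\ccS$. This is \eqref{eq:control-entropy}, the $\liminf$ being a priori along $(\hbar_n)$ and becoming the full-net $\liminf_{\hbar\to0}$ once the uniqueness of Section~\ref{sec:uniqueness} makes $W$ independent of the subsequence, cf.\ Remark~\ref{rk:uniqueness}.

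\medskip
\noindent\emph{Identification of the density.} For the second assertion I would apply Proposition~\ref{prop:ident-density} to $m_n:=m^{\hbar_n}_{\gamma(t)}$, which are non-negative Radon measures whose densities $\rho_{m_n}$ are non-negative Radon measures by Lemma~\ref{lem:density-husimi} (the assumptions on $\gamma(t)$ hold since $\gamma(t)\in\cK_S^\hbar$, as already used in Corollary~\ref{coro:limit-density-husimi}), and which satisfy $m_n\to W(t)$ in $\cD'_{x,v}$ (Corollary~\ref{coro:limit-husimi}) and $\rho_{m_n}\to\rho_{m_{\rm ref}}+\nu(t)$ in $\cD'_x$ (Corollary~\ref{coro:limit-density-husimi}) for a.e.\ $t$. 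The only missing hypothesis is tightness, which I would get from Theorem~\ref{thm:high-freq}: for $\phi\in C^\ii_0(\R^d)$ with $\phi\ge0$, applying it with $m=m_n$ and $\tilde{m_{\rm ref}}=m^{\hbar_n}_{\gamma_{\rm ref}}$ gives, for all $A>32R^2$ and all $n$ with $\hbar_n<1$,
$$\int_{\R^{2d}}\phi(x)\,\1(|v|^2\ge4A)\,m_n(x,v)\,dx\,dv\le C\|\phi\|_{L^\ii}\,\frac{\hcl_S(m_n,m^{\hbar_n}_{\gamma_{\rm ref}})}{A}+C\|\phi\|_{L^1}\int_{|v|^2\ge A/32}m_{\rm ref}\,dv.$$
By Proposition~\ref{prop:BL}, $\hcl_S(m_n,m^{\hbar_n}_{\gamma_{\rm ref}})\le C_S^{BL}\hbar_n^d\cH_S(\gamma(t),\gamma_{\rm ref})$, which is bounded uniformly in $n$ and $t$ by Proposition~\ref{prop:sc-bounds}; taking $\limsup_{n\to\ii}$ and then $A\to\ii$ (with $m_{\rm ref}=(S')^{-1}(|v|^2)\in L^1_v$) sends both terms to $0$, which is precisely $\lim_{R\to\ii}\limsup_{n\to\ii}\int\1(|v|\ge R)\phi(x)\,dm_n=0$. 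Proposition~\ref{prop:ident-density} then gives $\rho_{W(t)}=\rho_{m_{\rm ref}}+\nu(t)$, i.e.\ $\nu(t)=\rho_{W(t)}-\rho_{m_{\rm ref}}$, for a.e.\ $t$ and hence for all $t$ since $W\in C^0_t(\R,\cD'_{x,v})$.

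\medskip
\noindent\emph{Main obstacle.} The hard part will not be the assembly itself but two subtler points: (i) the weak limit in the bosonic relative entropy, where joint convexity of $\cH_{S_b}$ fails and one is forced to detour through $\cH_{S_0}$ at the cost of the factor $1+M$; and (ii) the velocity tightness, whose whole content is that, after Proposition~\ref{prop:BL}, the numerator $\hcl_S(m_n,m^{\hbar_n}_{\gamma_{\rm ref}})$ in Theorem~\ref{thm:high-freq} stays bounded as $n\to\ii$ — i.e.\ that Berezin--Lieb and the high-velocity estimate dovetail exactly, the required uniform bound being furnished by the free-energy conservation law behind Proposition~\ref{prop:sc-bounds}.
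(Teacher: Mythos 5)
Your proposal is correct and follows essentially the same route as the paper: the entropy bound comes from Proposition \ref{prop:BL} combined with lower semicontinuity of $\hcl_S$ under weak limits (your detour through $\cH_{S_0}$ for the bosonic case is exactly the content of Remark \ref{rk:weak-limit-bosons}, giving the same constant $C_S=4$), and the density identification combines Proposition \ref{prop:ident-density}, Theorem \ref{thm:high-freq}, Proposition \ref{prop:BL} and the uniform bound \eqref{eq:control-entropy-time}, just as in the paper. Your explicit handling of the choice of $\chi$ and of the subsequence-versus-full-$\liminf$ issue only spells out details the paper leaves implicit.
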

 
 \begin{proof}
  Any $S\in\ccS$ has the Berezin-Lieb property (on $[0,1]$) by Proposition \ref{prop:BL-umegaki} and Corollary \ref{coro:BL-boson-fermion}. By Proposition \ref{prop:BL}, we deduce that for any $\hbar>0$ and any $t\in\R$ we have 
  \begin{equation}\label{eq:control-entropy-2}
    \hcl_S(m^\hbar_{\gamma(t)},m^\hbar_{\gamma_{\rm ref}})\le C^{BL}_S\hbar^d\cH_S(\gamma(t),\gamma_{\rm ref}).
  \end{equation}
  To go from \eqref{eq:control-entropy-2} to \eqref{eq:control-entropy}, recall that  $m_{\gamma(t)}^{\hbar_n}\to W(t)$ as $n\to\ii$ in $\cD'_{x,v}$ by 
  Corollary \ref{coro:limit-density-husimi}. Hence, \eqref{eq:control-entropy} follows if we know that the classical relative entropy decreases under weak limits, a result that we recall in Lemma \ref{lem:relative-entropy-weak-limit} below. This may introduce another multiplicative constant, as explained in Remark \ref{rk:weak-limit-bosons}. The identification of the density $\nu(t)$ then follows from Proposition \ref{prop:ident-density}, Proposition \ref{prop:BL}, Theorem \ref{thm:high-freq} and the uniform relative entropy bounds \eqref{eq:control-entropy-time}.
 \end{proof}
 
 \begin{lemma}[Decrease of the classical relative entropy under weak limits]
\label{lem:relative-entropy-weak-limit}
Let $S:[0,1]\to\R$ be a continuous, concave function, differentiable on $(0,1)$. Assume that the function $\cH_S:\R\times\R\to[0,+\ii]$ is convex. Let $N\ge1$ and $(a_n)$, $(b_n)$ two sequences of measurable functions from $\R^N$ to $\R$. Assume that for all $n$, $0\le a_n,b_n\le 1$ and that $\hcl_S(a_n,b_n)<+\ii$. Assume also that $a_n\to a$ and $b_n\to b$ in $\cD'(\R^N)$. Then, we have 
$$\liminf_{n\to+\ii}\hcl_S(a_n,b_n)\ge\hcl_S(a,b).$$
\end{lemma}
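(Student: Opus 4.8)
The plan is to upgrade the distributional convergence to a weak $L^2_{\rm loc}$ convergence using the uniform bounds $0\le a_n,b_n\le1$, then to invoke the classical weak lower semicontinuity of convex integral functionals on each ball, and finally to exhaust $\R^N$ by balls. \emph{Step 1.} Since $0\le a_n,b_n\le1$, both sequences are bounded in $L^\ii(\R^N)$, hence in $L^2(B_R)$ for every $R>0$ (writing $B_R=B(0,R)$). By reflexivity of $L^2(B_R)$ every subsequence has a further subsequence converging weakly in $L^2(B_R)$; such a weak limit is also a limit in $\cD'(B_R)$, hence must equal $a|_{B_R}$ (resp. $b|_{B_R}$). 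Therefore the full sequences satisfy $a_n\wto a$ and $b_n\wto b$ in $L^2(B_R)$ for every $R>0$; in particular $\int_E a_n\to\int_E a$ and $\int_E b_n\to\int_E b$ for every measurable $E$ of finite measure.

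\emph{Step 2 (local lower semicontinuity).} I would show that for every $R>0$,
$$\liminf_{n\to\ii}\int_{B_R}\cH_S\big(a_n(x),b_n(x)\big)\,dx\ \ge\ \int_{B_R}\cH_S\big(a(x),b(x)\big)\,dx.$$
Here $\cH_S$ is convex by hypothesis, nonnegative, and lower semicontinuous (for $S\in\{S_0,S_f\}$ this is the classical joint lower semicontinuity of the un-normalised relative entropy on $[0,1]^2$; in general one replaces $\cH_S$ by its lower semicontinuous convex envelope, which coincides with it on the range of $(a,b)$). A convex lower semicontinuous function on $\R^2$ is the supremum of a countable family of affine functions $\ell_j(y,z)=\alpha_j y+\beta_j z+\gamma_j$. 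Given $\epsilon>0$, the measurable sets $F_j=\{x\in B_R:\ell_j(a(x),b(x))>\min(\cH_S(a(x),b(x)),1/\epsilon)-\epsilon\}$ cover $B_R$, so one selects finitely many of them and disjointifies to get a measurable partition $B_R=E_0\cup\bigcup_{j\in J_0}E_j$ with $|E_0|<\epsilon^2$ and $\ell_j(a,b)>\min(\cH_S(a,b),1/\epsilon)-\epsilon$ on $E_j$. Using $\cH_S(a_n,b_n)\ge\ell_j(a_n,b_n)$ everywhere together with $\cH_S\ge0$ on $E_0$, one obtains $\int_{B_R}\cH_S(a_n,b_n)\ge\sum_{j\in J_0}\int_{E_j}\ell_j(a_n,b_n)$; letting $n\to\ii$ via Step 1, then $\epsilon\to0$ via monotone convergence, yields the claim. (Alternatively one may simply quote the standard weak lower semicontinuity theorem for convex normal integrands.)

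\emph{Step 3 (exhaustion).} Since $\cH_S\ge0$, the quantity $\int_{B_R}\cH_S(a_n,b_n)\,dx$ is bounded above by $\int_{\R^N}\cH_S(a_n,b_n)\,dx$, which is a fixed multiple of $\hcl_S(a_n,b_n)$. Combined with Step 2 this gives $\int_{B_R}\cH_S(a,b)\le c\,\liminf_n\hcl_S(a_n,b_n)$ for every $R>0$, and letting $R\to\ii$ on the left by monotone convergence yields $\hcl_S(a,b)\le\liminf_n\hcl_S(a_n,b_n)$, as desired.

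The main obstacle is Step 2: one must make the weak lower semicontinuity on a ball rigorous — either through the measurable-selection and truncation argument with affine minorants sketched above, or by appealing to the theory of convex integral functionals — and in doing so be slightly careful about the lower semicontinuity of $\cH_S$, since the stated hypothesis (mere convexity of $\cH_S$ on $\R^2$) is formally weaker than what the affine-minorant representation needs; this is a genuine but minor point, easily settled for the physical entropies. Steps 1 and 3 are routine.
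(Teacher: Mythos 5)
Your argument is correct, but it is not the route the paper takes. You prove weak lower semicontinuity in the classical De Giorgi/Ioffe style: upgrade $\cD'$-convergence to weak $L^2(B_R)$ convergence via the uniform bound $0\le a_n,b_n\le1$, minorize the convex integrand $\cH_S$ by countably many affine functions, pass to the limit in the affine pieces on a finite measurable partition, then remove the truncations and exhaust $\R^N$ by balls. The paper instead mollifies: by convexity of $\cH_S$ (Jensen), $\hcl_S(a_n\ast\phi_\epsilon,b_n\ast\phi_\epsilon)\le\hcl_S(a_n,b_n)$; distributional convergence makes the mollified functions converge pointwise, so Fatou plus the pointwise lower semicontinuity of $\cH_S$ gives $\liminf_n\hcl_S(a_n,b_n)\ge\hcl_S(a\ast\phi_\epsilon,b\ast\phi_\epsilon)$, and a second application of Fatou as $\epsilon\to0$ (using $a\ast\phi_\epsilon\to a$, $b\ast\phi_\epsilon\to b$ a.e. along a subsequence) concludes. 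The paper's proof is shorter and needs only Fatou and Jensen; yours is heavier but standard, localizes the inequality on every ball, and could be replaced by a citation to the weak lower semicontinuity theorem for convex normal integrands. Concerning the one point you flag as delicate: the lower semicontinuity of $\cH_S$ on $[0,1]^2$ is not an extra assumption to worry about, since $\cH_S(x,y)=\sup_{0<\theta<1}\theta^{-1}\bigl(S(\theta x+(1-\theta)y)-\theta S(x)-(1-\theta)S(y)\bigr)$ is a supremum of functions that are continuous on $[0,1]^2$ ($S$ being continuous on $[0,1]$); this is precisely the observation that opens the paper's proof, and it also legitimizes your affine-minorant representation without any recourse to convex envelopes. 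Two routine points you should still record: the weak limits satisfy $0\le a,b\le1$ a.e., so $\hcl_S(a,b)$ is well defined, and $\cH_S$ is Borel (being lower semicontinuous), so $x\mapsto\cH_S(a(x),b(x))$ is measurable.
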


\begin{proof}
 Since $\cH_S$ is non-negative and lower semi-continuous, as the representation
 $$\cH_S(x,y)=\sup_{0<\theta<1}\frac1\theta\Big(S(\theta x + (1-\theta)y)-\theta S(x)-(1-\theta)S(y)\Big)$$
 shows, the statement is clear for poinwise limits by Fatou's lemma. The convexity of $\cH_S$ will allow us to go from weak convergence to pointwise convergence. Indeed, let $\phi\in C^\ii_0(\R^N)$ with $\phi\ge0$ and $\int\phi=1$. Let $\epsilon>0$ and define $\phi_\epsilon:=\epsilon^{-N}\phi(\cdot/\epsilon)$. Then, since $\cH_S$ is convex, we deduce that for all $\epsilon>0$,
 $$\hcl_S(a_n,b_n)\ge \hcl_S(a_n*\phi_\epsilon,b_n*\phi_\epsilon).$$
 Now since $a_n\to a$, $b_n\to b$ in $\cD'(\R^N)$ and since $\phi_\epsilon\in\cD(\R^N)$ for any $\epsilon>0$, we deduce that $a_n*\phi_\epsilon\to a*\phi_\epsilon$ and that $b_n*\phi_\epsilon\to b*\phi_\epsilon$ as $n\to\ii$, pointwise on $\R^N$. Applying Fatou's lemma, we deduce that for all $\epsilon>0$,
 $$\liminf_{n\to+\ii}\hcl_S(a_n,b_n)\ge\hcl_S(a*\phi_\epsilon,b*\phi_\epsilon).$$
 Now since $a_n\to a$ and $b_n\to b$ in $\cD'(\R^N)$ with the uniform bounds $0\le a_n,b_n\le 1$ for all $n$, we deduce that $0\le a,b\le 1$ as well. Consequently, we have $a*\phi_\epsilon\to a$ and $b*\phi_\epsilon\to b$ as $\epsilon\to0$, for instance in $L^1_{{\rm loc}}(\R^N)$ and thus almost everywhere up to a subsequence. Using again Fatou's lemma, we deduce the result.
\end{proof}

\begin{remark}\label{rk:weak-limit-bosons}
Notice that $\cH_{S_0}$ and $\cH_{S_f}$ are convex but $\cH_{S_b}$ is not. As shown in the proof of Corollary \ref{coro:BL-boson-fermion} we have $\cH_{S_0}/2\leq \cH_{S_b}\leq \cH_{S_0}$ on $[0,1]$
and therefore the bosonic relative entropy satisfies
 $$\frac{\hcl_{S_b}(a,b)}2\le \liminf_{n\to\ii}\hcl_{S_b}(a_n,b_n).$$
\hfill$\diamond$\end{remark}

 The rest of this section is devoted to the proof of Theorem \ref{thm:high-freq}. The first step is inspired by the proof of Theorem 8 in \cite{LewSab-13a}. 

\begin{proposition}\label{prop:control-freq}
 Let $S$ be an admissible entropy functional. Let $u,u_0:\R^d\to[0,1]$ be measurable functions with $u_0\in L^1(\R^d)$. Then, for any $A>0$ we have
 \begin{multline}\label{eq:control-high-freq}
  \left|\int_{\R^d}\1(S'(u_0(v))\ge A)(u(v)-u_0(v))\,dv\right|\le C\frac{\int_{\R^d}\cH_S(u(v),u_0(v))\,dv}{A}\\
  +C\int_{\R^d}\1(S'(u_0(v))\ge A))(S')^{-1}\left(\frac{S'(u_0(v))}{2}\right)\,dv.
 \end{multline}
\end{proposition}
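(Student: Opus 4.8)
The plan is to reduce the whole estimate to a pointwise inequality on the set $\{v:S'(u_0(v))\ge A\}$ and then integrate. Write $y=u(v)$, $y_0=u_0(v)$, $a=S'(y_0)$, and recall that $S'$ is strictly decreasing on $(0,1)$ with range containing $(0,+\ii)$, so that $(S')^{-1}$ makes sense on $(0,+\ii)$ and takes values in $(0,1)$. First I would dispose of the degenerate set $\{u_0=0\}$: there one formally has $S'(u_0)=+\ii$, and $\cH_S(y,0)=+\ii$ whenever $y>0$ (by the $\sup$-formula of Definition \ref{def:class-rel-ent} together with $\lim_{m\to0}S'(m)=+\ii$), so if the right-hand side of \eqref{eq:control-high-freq} is finite then $u=u_0=0$ a.e. on that set and it contributes nothing; hence I may assume $y_0\in(0,1)$ and use the identity $\cH_S(y,y_0)=S'(y_0)(y-y_0)-S(y)+S(y_0)$.

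The core of the argument is the following claim: if $a\ge A$ and $y_1:=(S')^{-1}(a/2)$ (so that $y_0<y_1<1$, $S'$ being decreasing), then
$$|y-y_0|\le\frac2a\,\cH_S(y,y_0)+y_1 .$$
To prove it I would split into two cases. If $y\le y_1$ the bound is immediate, since then $|y-y_0|\le\max(y_0,\,y_1-y_0)\le y_1$ and $\cH_S\ge0$; this is the regime producing the harmless remainder term $(S')^{-1}(S'(u_0)/2)$. If $y>y_1$, I would estimate $S(y)-S(y_0)$ by cutting the integral of $S'$ at $y_1$: using that $S'$ is decreasing, $S'(t)\le S'(y_0)=a$ on $[y_0,y_1]$ and $S'(t)\le S'(y_1)=a/2$ on $[y_1,y]$, whence $S(y)-S(y_0)\le a(y_1-y_0)+\tfrac a2(y-y_1)$. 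Plugging this into the identity for $\cH_S$ gives $\cH_S(y,y_0)\ge\tfrac a2(y-y_1)$, so $y-y_1\le\tfrac2a\cH_S(y,y_0)$ and $|y-y_0|=(y-y_1)+(y_1-y_0)\le\tfrac2a\cH_S(y,y_0)+y_1$. Since $a\ge A$ this yields, on $\{S'(u_0)\ge A\}$, the pointwise bound $|u-u_0|\le\tfrac2A\cH_S(u,u_0)+(S')^{-1}(S'(u_0)/2)$.

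It then remains to integrate over $\{v:S'(u_0(v))\ge A\}$: using $\cH_S\ge0$ to replace the entropy integral by the one over all of $\R^d$, and bounding $\big|\int\1(S'(u_0)\ge A)(u-u_0)\big|\le\int\1(S'(u_0)\ge A)\,|u-u_0|$, one obtains \eqref{eq:control-high-freq} with $C=2$ (if the right-hand side is $+\ii$ there is nothing to prove; otherwise the pointwise bound shows $\1(S'(u_0)\ge A)(u-u_0)\in L^1(\R^d)$, so the left-hand integral is well-defined).

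I do not expect a genuine obstacle here; the whole point is the choice of the cut-off $y_1=(S')^{-1}(a/2)$ in the target variable, which separates the ``small'' contribution — absorbed into the remainder term — from the ``large'' one, where the two-slope bound $S'\le a$ on $[y_0,y_1]$ then $S'\le a/2$ on $[y_1,y]$ converts the factor $a\ge A$ into the advertised $A^{-1}$ gain. The only mild subtleties are the strict monotonicity of $S'$ (hence of $(S')^{-1}$), which follows from $-S'$ being operator monotone and non-constant, and the handling of the set $\{u_0\in\{0,1\}\}$, which is either disjoint from the indicator set (when $u_0=1$, since $S'(1)\le0<A$) or contributes nothing to \eqref{eq:control-high-freq} (when $u_0=0$).
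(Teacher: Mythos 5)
Your proof is correct, but it takes a different route from the paper. The paper proves the estimate via a layer-cake representation: it writes $u-u_0=\int_0^1\big(\1(u\ge\lambda)-\1(u_0\ge\lambda)\big)\,d\lambda$ and $\cH_S(u,u_0)=\int_0^1(S'(u_0)-\mu)a_\mu\,d\lambda$ with $\mu=S'(\lambda)$, then splits the level variable into $\mu\le A/2$ and $\mu\ge A/2$ (with a further split according to $S'(u_0)\gtrless 2\mu$), the last piece being bounded crudely by the measure of $\{A\le S'(u_0)\le 2\mu\}$, whose $\lambda$-integral produces exactly the remainder $\int\1(S'(u_0)\ge A)(S')^{-1}(S'(u_0)/2)$. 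You instead establish a pointwise two-slope concavity bound: with $a=S'(y_0)\ge A$ and the cutoff $y_1=(S')^{-1}(a/2)$, the tangent-line inequalities $S(y_1)-S(y_0)\le a(y_1-y_0)$ and $S(y)-S(y_1)\le \tfrac a2(y-y_1)$ give $\cH_S(y,y_0)\ge\tfrac a2(y-y_1)$ for $y>y_1$, hence $|y-y_0|\le\tfrac2a\cH_S(y,y_0)+y_1$, and integration over $\{S'(u_0)\ge A\}$ yields \eqref{eq:control-high-freq}. The mechanism (a gain of order $A^{-1}$ whenever $S'(u_0)$ exceeds twice the relevant level) is the same, but your execution is more elementary, avoids the level-set decomposition, gives the explicit constant $C=2$, and in fact proves the slightly stronger bound with $|u-u_0|$ inside the integral; your treatment of the degenerate sets $\{u_0=0\}$ (where $\cH_S(u,0)=+\ii$ unless $u=0$) and $\{u_0=1\}$ (excluded since $S'(1^-)\le 0<A$) is also sound. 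What the paper's layer-cake formulation buys is that it parallels the operator-valued argument of \cite[Thm.~8]{LewSab-13a}, from which this proposition is adapted, whereas your pointwise argument is genuinely scalar.
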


\begin{proof}
 We write 
 $$u-u_0=\int_0^1(\1(u\ge\lambda)-\1(u_0\ge\lambda))\,d\lambda=\int_0^1 a_\mu\,d\lambda,$$
 where $a_\mu=\1(u\ge\lambda)-\1(u_0\ge\lambda)=\1(u\ge\lambda)-\1(S'(u_0)\le \mu)$ with $\mu=S'(\lambda)$. There is a similar representation for the relative entropy
 $$\cH_S(u,u_0)=\int_0^1(S'(u_0)-\mu)a_\mu\,d\lambda=\int_0^1|S'(u_0)-\mu|(a_\mu^+-a_\mu^-)\,d\lambda,$$
 where 
 $$a_\mu^+:=\1(S'(u_0)>\mu)a_\mu\ge0,\qquad a_\mu^-=\1(S'(u_0)\le\mu)a_\mu\le0.$$
 We now distinguish two cases. When $\mu\le A/2$ and $S'(u_0(v))\ge A$ we have
 $$S'(u_0(v))-\mu\ge A-\mu\ge \frac A2.$$
 Hence,
 $$\int_{\R^d}\1(S'(u_0(v))\ge A)a_\mu(v)\,dv\le\frac{2}{A}\int_{\R^d}(S'(u_0)-\mu)a_\mu\,dv,$$
 the left side being non-negative in this case since $S'(u_0)\ge A$ implies $u_0<\lambda$. When $\mu\ge A/2$, we further split
 $$a_\mu=a_\mu^{\ge2\mu}+a_\mu^{<2\mu},$$
 where $$a_\mu^{\ge2\mu}=a_\mu\1(S'(u_0(v))\ge2\mu).$$
 From the same estimate that we just did,
 \begin{multline*}
    \int_{\R^d}\1(S'(u_0(v))\ge A)a_\mu^{\ge 2\mu}(v)\,dv\\
    \le\frac{1}{\mu}\int_{\R^d}(S'(u_0)-\mu)a_\mu^{\ge2\mu}\,dv\le\frac{2}{A}\int_{\R^d}(S'(u_0)-\mu)a_\mu\,dv.
 \end{multline*}
 Finally, we estimate roughly the last term (which is the only one which does not have a sign a priori) using that $0\le u,u_0\le1$:
 $$\int_{\R^d}\1(S'(u_0(v))\ge A)|a_\mu^{<2\mu}|\,dv\le 2|\{ A\le S'(u_0(v))\le 2\mu\}|.$$
 Integrating over $\lambda$, we get the result.
 \end{proof}
 
 \begin{lemma}\label{lem:unif-h}
  Let $S$ be an admissible entropy functional and define $m_{\rm ref}(x,v)=(S')^{-1}(|v|^2)$. Let $\hat{\chi}\in C^\ii_0(\R^d)$ such that $\int|\hat{\chi}|^2=1$, and let $R>0$ such that $\supp\hat{\chi}\subset B(0,R)$. For any $\hbar>0$, define
  $$u_0:=m_{\rm ref}*\hbar^{-d/2}|\hat{\chi}|^2(\cdot/\sqrt{\hbar}).$$
  Then, for all $A>8R^2$ and for all $\hbar\in(0,1)$ we have 
  $$
    \int_{\R^d}\1(S'(u_0(v))\ge A))(S')^{-1}\left(\frac{S'(u_0(v))}{2}\right)\,dv\\
    \le2^{d+1}\int_{|v|^2\ge A/8}m_{\rm ref}\,dv.
  $$
 \end{lemma}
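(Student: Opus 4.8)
The plan is to linearize both sides by the layer-cake (Cavalieri) identity for the decreasing bijection $(S')^{-1}$, using only one elementary ingredient: the two-sided bound on $S'(u_0)$ coming from the fact that $u_0$ is an average of translates of $m_{\rm ref}$ over a ball of radius $t:=R\sqrt\hbar$. Indeed, $\eta_\hbar(w):=\hbar^{-d/2}|\hat\chi|^2(w/\sqrt\hbar)$ is a probability density supported in $\bar B(0,t)$, so $u_0(v)=\int m_{\rm ref}(v-w)\,\eta_\hbar(w)\,dw$ is a convex combination of the numbers $(S')^{-1}(|v-w|^2)$ with $(|v|-t)_+\le|v-w|\le|v|+t$; since $(S')^{-1}$ is decreasing, $(S')^{-1}\big((|v|+t)^2\big)\le u_0(v)\le(S')^{-1}\big((|v|-t)_+^2\big)$, hence
$$(|v|-t)_+^2\ \le\ S'(u_0(v))\ \le\ (|v|+t)^2\qquad\text{for all }v\in\R^d.$$
Moreover $t^2=R^2\hbar<R^2<A/8$, so $t<\sqrt{A/8}=\tfrac{\sqrt A}{2\sqrt2}$, and in particular $\sqrt A-t>t$ (using $A>8R^2>4t^2$).

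For the left-hand side I would write $(S')^{-1}(y)=\int_0^1\1(S'(s)>y)\,ds$ — valid because $S'$ and $(S')^{-1}$ are decreasing mutual inverses and $(S')^{-1}$ is valued in $(0,1)$ — apply it with $y=S'(u_0(v))/2$, integrate over $D:=\{S'(u_0)\ge A\}$, and use Tonelli to get
$$\text{LHS}=\int_0^1\big|\{v:\ A\le S'(u_0(v))<2S'(s)\}\big|\,ds,$$
where the set is empty unless $2S'(s)>A$, i.e. $s<(S')^{-1}(A/2)$. On that set the bound above gives $|v|\ge\sqrt A-t>t$ and $(|v|-t)^2\le S'(u_0(v))<2S'(s)$, so $|v|<\sqrt{2S'(s)}+t$; since $\sqrt{2S'(s)}>\sqrt A>2\sqrt2\,t$, the set sits inside a ball of radius $\sqrt{2S'(s)}+t<(1+\tfrac1{2\sqrt2})\sqrt{2S'(s)}$, hence has measure at most $\omega_d(\sqrt2+\tfrac12)^d S'(s)^{d/2}$ with $\omega_d:=|B(0,1)|$. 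Therefore $\text{LHS}\le \omega_d(\sqrt2+\tfrac12)^d\int_0^{(S')^{-1}(A/2)}S'(s)^{d/2}\,ds$.

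For the right-hand side the same layer-cake identity applied to $m_{\rm ref}(v)=(S')^{-1}(|v|^2)$ gives
$$\int_{|v|^2\ge A/8}m_{\rm ref}\,dv=\int_0^1\big|\{v:\ A/8\le|v|^2<S'(s)\}\big|\,ds=\omega_d\int_0^{(S')^{-1}(A/8)}\big(S'(s)^{d/2}-(A/8)^{d/2}\big)\,ds.$$
The integrand is $\ge0$ on all of $[0,(S')^{-1}(A/8)]$ and $(S')^{-1}(A/2)<(S')^{-1}(A/8)$, so I may restrict to $[0,(S')^{-1}(A/2)]$; there $S'(s)\ge A/2$ forces $(A/8)^{d/2}=2^{-d}(A/2)^{d/2}\le2^{-d}S'(s)^{d/2}$, whence $\int_{|v|^2\ge A/8}m_{\rm ref}\,dv\ge\omega_d(1-2^{-d})\int_0^{(S')^{-1}(A/2)}S'(s)^{d/2}\,ds$. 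Dividing the two bounds,
$$\text{LHS}\ \le\ \frac{(\sqrt2+\tfrac12)^d}{1-2^{-d}}\int_{|v|^2\ge A/8}m_{\rm ref}\,dv,$$
and since $\sqrt2+\tfrac12<2$ one has $\dfrac{(\sqrt2+\frac12)^d}{1-2^{-d}}<\dfrac{2^{2d}}{2^d-1}\le2^{d+1}$ for every $d\ge1$, which is exactly the asserted inequality. (The integral $\int_0^{(S')^{-1}(A/2)}S'(s)^{d/2}\,ds$ is finite because it is dominated by $\int_0^1(S')_+^{d/2}\,ds=\omega_d^{-1}\int_{\R^d}m_{\rm ref}<\ii$, the density of $\gamma_{\rm ref}$ being finite by admissibility of $S$.)

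I do not expect any real obstacle here: the one point to handle carefully is that the two-sided bound and the layer-cake manipulations should only be invoked where $u_0(v)\in(0,1)$ so that $S'(u_0(v))$ is defined, but this is automatic on $D$ and on the relevant level sets, since there $|v|>t$ forces $0<u_0(v)<1$. The rest is elementary bookkeeping, and the slack in the final numerical inequality ($(\sqrt2+\tfrac12)^d<2^d\le2^{d+1}-2$) is comfortable for all $d$, a fortiori for $d\in\{1,2,3\}$.
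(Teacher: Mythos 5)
Your proof is correct, and after the common first step it follows a genuinely different route from the paper's. Both arguments start from the same sandwich bound $(|v|-R\sqrt\hbar)_+^2\le S'(u_0(v))\le(|v|+R\sqrt\hbar)^2$, obtained by viewing $u_0$ as an average of translates of $m_{\rm ref}$ over a ball of radius $R\sqrt\hbar$ and using the monotonicity of $S'$. The paper then argues pointwise: it splits $\R^d$ into the region where $u_0(v)\le (S')^{-1}(2|v|^2)$, on which the integrand is dominated by $m_{\rm ref}(v)$ and the constraint $S'(u_0)\ge A$ forces $|v|^2\ge A/8$, and the complementary region, on which $S'(u_0)\ge |v|^2/4$ so that the integrand is controlled by a dilated copy of $m_{\rm ref}$, a change of variables producing the factor $2^d$; the total constant is $1+2^d\le 2^{d+1}$. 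You instead linearize both sides by the layer-cake identity $(S')^{-1}(y)=\int_0^1\1(S'(s)>y)\,ds$ (legitimate here since $S'$ is strictly decreasing by analyticity of non-constant operator monotone functions, and since $y\ge A/2>0\ge S'(1^-)$ in all the places you invoke it) and compare, level by level in $s$, the volume of $\{A\le S'(u_0)<2S'(s)\}$ (a ball of radius $\sqrt{2S'(s)}+R\sqrt\hbar$) with the volume of the annulus $\{A/8\le|v|^2<S'(s)\}$; all your numerical checks ($t<\sqrt A/(2\sqrt2)$, the factor $(\sqrt2+\tfrac12)^d$, the lower bound $(1-2^{-d})S'(s)^{d/2}$, and $(\sqrt2+\tfrac12)^d/(1-2^{-d})\le 2^{d+1}$ for $d\ge1$) are correct. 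What each approach buys: the paper's is shorter and purely pointwise, with no need to know anything global about $S$; yours makes transparent that the inequality is a comparison of level-set volumes and yields comparable (for $d\ge2$ slightly better) constants. One cosmetic remark: the finiteness of $\int_0^{(S')^{-1}(A/2)}S'(s)^{d/2}\,ds$, which you justify via the finite density of $m_{\rm ref}$, is not really needed — if that integral were infinite, your lower bound would make the right-hand side infinite and the inequality trivial — and your two-sided bound on $S'(u_0)$ is in fact valid for all $v$ (including $|v|\le R\sqrt\hbar$, where it reduces to $S'(u_0(v))\ge0$, which follows from $u_0(v)\le (S')^{-1}(0)$), so the caveat at the end is not needed either.
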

 
 \begin{proof}
  We first bound $u_0$ above and below, independently of $\hbar$. We have 
  $$u_0(v)=\int_{\R^d}(S')^{-1}(|v-\sqrt{\hbar}\eta|^2)|\hat{\chi}(\eta)|^2\,d\eta.$$
  Since for all $\eta\in\supp\hat{\chi}$ we have
  $$|v|-R\le|v-\sqrt{\hbar}\eta|\le|v|+R,$$
  and since $S'$ is decreasing, we deduce the bounds
  $$(S')^{-1}((|v|+R)^2)\le u_0(v)$$
  for all $v$ and 
  $$u_0(v)\le (S')^{-1}((|v|-R)^2)$$
  for all $|v|\ge R$. In the region where $u_0(v)\le (S')^{-1}(2|v|^2)$, we have
  $$(S')^{-1}\left(\frac{S'(u_0(v))}{2}\right)\le m_{\rm ref}$$
  and $S'(u_0(v))\ge A$ implies
  $$ (S')^{-1}((|v|+R)^2)\le u_0(v)\le (S')^{-1}(A)$$
  and thus $|v|+R\ge\sqrt{A}$ implying that $|v|^2\ge A/8$ if $A>8R^2$. As a consequence,
  \begin{multline*}
    \int_{u_0(v)\le (S')^{-1}(2|v|^2)}\1(S'(u_0(v))\ge A))(S')^{-1}\left(\frac{S'(u_0(v))}{2}\right)\,dv\\
    \le\int_{|v|^2\ge A/8}m_{\rm ref}\,dv.
  \end{multline*}
  In the region where $u_0(v)>(S')^{-1}(2|v|^2)$, $S'(u_0(v))\ge A$ implies that $|v|^2\ge A/2$, and thus $A>8R^2$ implies $R\le|v|/2$, so that 
  $$S'(u_0(v))\ge(|v|-R)^2\ge|v|^2/4.$$
  As a consequence,
  \begin{multline*}
    \int_{u_0(v)>(S')^{-1}(2|v|^2)}\1(S'(u_0(v))\ge A))(S')^{-1}\left(\frac{S'(u_0(v))}{2}\right)\,dv\\
    \le\int_{|v|^2\ge A/2}(S')^{-1}(|v|^2/4)\,dv=2^d\int_{|v|^2\ge A/8}m_{\rm ref}\,dv,
  \end{multline*}
  which implies the result.
 \end{proof}
 
 \begin{corollary}\label{coro:high-freq}
  Let $S$ be an admissible entropy functional and define $m_{\rm ref}(x,v)=(S')^{-1}(|v|^2)$. Let $\hat{\chi}\in C^\ii_0(\R^d)$ such that $\int|\hat{\chi}|^2=1$, and let $R>0$ such that $\supp\hat{\chi}\subset B(0,R)$. For any $\hbar>0$, define
  $$u_0:=m_{\rm ref}*\hbar^{-d/2}|\hat{\chi}|^2(\cdot/\sqrt{\hbar}).$$
  Then, there exists $C>0$ such that for all $u:\R^d\to[0,1]$, for all $A>32R^2$ and for all $\hbar\in(0,1)$ we have 
  $$
  \int_{|v|^2\ge4A}u(v)\,dv\le C\frac{\int_{\R^d}\cH_S(u(v),u_0(v))\,dv}{A}+C\int_{|v|^2\ge A/32}m_{\rm ref}\,dv.
  $$
 \end{corollary}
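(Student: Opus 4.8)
The plan is to chain Proposition~\ref{prop:control-freq} with Lemma~\ref{lem:unif-h}; beyond that, two small ingredients are needed, namely a comparison of the velocity shell $\{|v|^2\ge 4A\}$ with a sublevel set of $u_0$ of the form $\{S'(u_0)\ge A\}$, and a separate estimate of the ``reference part'' $\int\1(S'(u_0)\ge A)\,u_0$, which does not occur in Proposition~\ref{prop:control-freq}. Note first that $u_0$ is $m_{\rm ref}(x,\cdot)$ convolved with a probability density, hence $u_0\in L^1(\R^d)$ by the integrability built into Definition~\ref{def:entropy}, so that Proposition~\ref{prop:control-freq} does apply to the pair $(u,u_0)$.

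First I would record the elementary pointwise bounds on $u_0$ that already occur at the start of the proof of Lemma~\ref{lem:unif-h}: starting from $u_0(v)=\int_{\R^d}(S')^{-1}(|v-\sqrt{\hbar}\,\eta|^2)|\hat\chi(\eta)|^2\,d\eta$, using that $|\sqrt{\hbar}\,\eta|<R$ on $\supp\hat\chi$ for $\hbar\in(0,1)$, the monotonicity of $(S')^{-1}$, and $\int|\hat\chi|^2=1$, one gets $(S')^{-1}((|v|+R)^2)\le u_0(v)$ for all $v$ and $u_0(v)\le(S')^{-1}((|v|-R)^2)$ for $|v|>R$; applying the non-increasing function $S'$ then yields $S'(u_0(v))\le(|v|+R)^2$ for all $v$ and $S'(u_0(v))\ge(|v|-R)^2$ for $|v|>R$.

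Next I would carry out the geometric comparison, using $A>32R^2$. If $|v|^2\ge 4A$ then $|v|\ge\sqrt{128}\,R$, so $R<|v|/8$, whence $S'(u_0(v))\ge(|v|-R)^2>\tfrac{49}{64}|v|^2>A$; thus $\1(|v|^2\ge 4A)\le\1(S'(u_0(v))\ge A)$. Conversely, if $S'(u_0(v))\ge A$ then $(|v|+R)^2\ge A$, and since $A>32R^2$ gives $R<\sqrt A/4$ we obtain $|v|>\tfrac34\sqrt A$, hence $|v|^2>A/2>16R^2$, so $|v|>4R$ and consequently $u_0(v)\le(S')^{-1}((|v|-R)^2)\le(S')^{-1}(|v|^2/4)$. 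Then I would assemble: since $u\ge0$, the first comparison gives $\int_{|v|^2\ge 4A}u\le\int_{\R^d}\1(S'(u_0(v))\ge A)\,u(v)\,dv$, and I split the right-hand side as $\int\1(S'(u_0)\ge A)(u-u_0)+\int\1(S'(u_0)\ge A)\,u_0$. The first summand is estimated by Proposition~\ref{prop:control-freq} with threshold $A$, followed by Lemma~\ref{lem:unif-h} (legitimate since $A>32R^2>8R^2$), which together bound it by $CA^{-1}\int_{\R^d}\cH_S(u(v),u_0(v))\,dv+C\int_{|v|^2\ge A/8}m_{\rm ref}\,dv$. For the second summand, the geometric comparison shows that on $\{S'(u_0)\ge A\}$ one has both $|v|^2>A/2$ and $u_0(v)\le(S')^{-1}(|v|^2/4)$, so the substitution $w=v/2$ bounds it by $2^d\int_{|w|^2>A/8}m_{\rm ref}\,dw$. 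Since $\{|v|^2\ge A/8\}\subset\{|v|^2\ge A/32\}$, adding the two contributions and absorbing the $\hbar$-independent numerical constants into a single $C$ gives the asserted inequality.

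I do not expect a genuine obstacle here: all the analytic content is already in Proposition~\ref{prop:control-freq} and Lemma~\ref{lem:unif-h}, and the only delicate point is the bookkeeping of the comparison constants, so that uniformly in $\hbar\in(0,1)$ the shell $\{|v|^2\ge 4A\}$ embeds into $\{S'(u_0)\ge A\}$ and the latter embeds into the complement of a ball of radius comparable to $\sqrt A$ on which $u_0$ is dominated by a fixed dilate of $m_{\rm ref}$. The hypothesis $A>32R^2$ is precisely what makes all of these inclusions go through.
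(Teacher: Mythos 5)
Your proof is correct and follows essentially the same route as the paper: embed $\{|v|^2\ge 4A\}$ into $\{S'(u_0)\ge A\}$ via the sandwich bounds on $S'(u_0)$, split $u=(u-u_0)+u_0$, estimate the first piece by Proposition~\ref{prop:control-freq} combined with Lemma~\ref{lem:unif-h}, and bound the $u_0$-piece by $(S')^{-1}(|v|^2/4)$ on the region $|v|\gtrsim\sqrt A$ followed by the dilation $v\mapsto v/2$. Your additional remarks (integrability of $u_0$ so that Proposition~\ref{prop:control-freq} applies, and the slightly different intermediate constants) are harmless bookkeeping variations of the paper's argument.
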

 
 \begin{proof}
  We have seen in the previous proof that for all $|v|\ge R$,
  $$(|v|+R)^2\ge S'(u_0(v))\ge(|v|-R)^2.$$
  Using furthermore that $A>8R^2$, we deduce 
  $$|v|^2\ge4A \Longrightarrow S'(u_0(v))\ge A \Longrightarrow |v|^2\ge A/8,$$
  since in this case we indeed have $|v|\ge R$. We then write
  \begin{align*}
   \int_{|v|^2\ge4A}u(v)\,dv & \le \int_{S'(u_0(v))\ge A}u(v)\,dv \\
   &\le \int_{S'(u_0(v))\ge A}(u(v)-u_0(v))\,dv+\int_{S'(u_0(v))\ge A}u_0(v)\,dv\\
   &\le \int_{S'(u_0(v))\ge A}(u(v)-u_0(v))\,dv+\int_{|v|^2\ge A/8}u_0(v)\,dv.
  \end{align*}
 Using again the  bound $u_0(v)\le(S')^{-1}((|v|-R)^2)$ and the fact that $A>32R^2$ and $|v|^2\ge A/8$ imply $R<|v|/2$, we deduce that 
 \begin{align*}
    \int_{\R^d}\1(|v|^2\ge A/8)u_0(v)\,dv &\le\int_{\R^d}\1(|v|^2\ge A/8)(S')^{-1}(|v|^2/4)\,dv\\
    &=2^d\int_{\R^d}\1(|v|^2\ge A/32)(S')^{-1}(|v|^2)\,dv
 \end{align*}
  The result then follows from combining Proposition \ref{prop:control-freq} and Lemma \ref{lem:unif-h}.
 \end{proof}

\begin{proof}[Proof of Theorem \ref{thm:high-freq}]
 Apply Corollary \ref{coro:high-freq} to $u(v)=m(x,v)$ for $x$ fixed, multiply by $\phi(x)$, and integrate over $x$.
\end{proof}

\section{Uniqueness of solutions to the Vlasov equation}\label{sec:uniqueness}

In the previous sections, we have proved that for any $\gamma_0\in\cK_S^\hbar$ such that $\hbar^d\cH_S(\gamma_0,\gamma_{\rm ref})$ is bounded as $\hbar\to0$, the Wigner transform $W_{\gamma(t)}^\hbar$ of the solution to the semi-classical Hartree equation \eqref{eq:hartree} converges in a weak sense as $\hbar\to0$ (up to a subsequence) to $W(t)\in C^0_t\cD'_{x,v}$ which satisfies
$$0\le W(t)\le 1,\ W-m_{\rm ref}\in L^\ii_t(\R,L^2_{x,v}),\ \hcl_S(W(t),m_{\rm ref})\in L^\ii_t(\R),$$
$$\rho_W-\rho_{m_{\rm ref}}\in L^\ii_t(\R,L^2_x+L^{\min(1+2/d,2)}_x),$$
and is a solution in the distributional sense to the nonlinear Vlasov equation 
\begin{equation}\label{eq:vlasov2}
\begin{cases}
\partial_t W+2v\cdot\nabla_xW-\nabla_x \left(w*\rho_W\right)\cdot\nabla_v W=0,\\
W_{|t=0}=W(0).
\end{cases}
\end{equation}
Our next goal is to show that such solutions are unique. To prove uniqueness, we follow the method of Loeper \cite{Loeper-06} based on optimal transportation. Since we deal here with measures of infinite mass, we need to extend the tools from optimal transportation to this case. First, we go from the Eulerian formulation \eqref{eq:vlasov} to the Lagrangian formulation in terms of solutions to the Newton equations of motions. To do so, we need the following equivalence between Eulerian and Lagrangian solutions:

\begin{lemma}[Equivalence between Eulerian and Lagrangian solutions]\label{lem:ode}
 Let $N\ge1$ and define the space
$$A(\R^N)=\{a\in C^1(\R^N,\R^N),\ D_x a\in L^\ii_x(\R^N)\},$$
endowed with norm
$$\|a\|_A:=\|D_x a\|_{L^\ii}+\|(1+|x|)^{-1} a \|_{L^\ii}.$$
Let $F:\R_t\times\R_x^N\to\R^N$ a locally integrable function such that 
$$\sup_{t\in\R}\|F(t)\|_A<+\ii,\qquad \Div_x F(t)=0\ \text{in}\ \cD'_{t,x}.$$
Then, there exists a unique continuous function $\Phi:\R_t\times\R_s\times\R^N_x\to\R^N_x$ such that for any $x\in\R^N$ and for any $s\in\R$, 
$$\Phi^t_s(x)=x+\int_s^t F(s',\Phi_s^{s'}(x))\,ds',\ \forall t\in\R.$$
Furthermore, for any $s,t\in\R$, $x\in\R^N\mapsto\Phi_s^t(x)$ is a $C^1$-diffeomorphism, bi-Lipschitz, which preserves the Lebesgue measure. 

Finally, for any $u_0\in L^\ii(\R^N)$,  $u_t:=(\Phi_0^t)_\sharp u_0$ is the unique distributional solution in $C^0_t(L^\ii_x,*)$ to the continuity equation
$$\partial_t u + F\cdot\nabla_x u=0$$
such that $u(0)=u_0$.
\end{lemma}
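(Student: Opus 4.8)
\emph{Plan of proof.} This is a Cauchy--Lipschitz statement for the non-autonomous ODE $\dot y=F(t,y)$ followed by the classical method of characteristics; the only non-standard feature is that the transported datum $u_0$ lies merely in $L^\ii$ (infinite mass), which forces the uniqueness part to be argued by duality against \emph{compactly supported} test functions. The hypotheses on $F$ are exactly tailored to this: $\sup_t\|D_xF(t)\|_{L^\ii}<\ii$ is a global Lipschitz bound in $x$, uniform in $t$; $\sup_t\|(1+|x|)^{-1}F(t)\|_{L^\ii}<\ii$ is a linear growth bound ruling out finite-time blow-up; local integrability in $t$ places us in the Carath\'eodory framework; and $\Div_xF=0$ is what makes the flow measure preserving. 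I would organise the argument in three steps: the flow, existence, uniqueness.

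\textbf{Step 1 (the flow).} For each $(s,x)$, Carath\'eodory's theorem produces a locally absolutely continuous solution $t\mapsto\Phi^t_s(x)$ of $\partial_t\Phi^t_s(x)=F(t,\Phi^t_s(x))$, $\Phi^s_s(x)=x$; the linear growth bound and Gr\"onwall's lemma give $|\Phi^t_s(x)|\le(1+|x|)e^{C|t-s|}$, so the solution is global, and further Gr\"onwall estimates (on $|\Phi^t_s(x)-\Phi^t_s(x')|$ and on the $(s,t)$-dependence) yield joint continuity of $\Phi$ and uniqueness. By uniqueness $\Phi^t_s\circ\Phi^s_{s'}=\Phi^t_{s'}$, so $\Phi^t_s$ is a bijection of $\R^N$ with inverse $\Phi^s_t$. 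Since $F(t,\cdot)\in C^1$ with $D_xF(t,\cdot)\in L^\ii$, differentiating the ODE in $x$ shows $J(t):=D_x\Phi^t_s(x)$ solves the linear matrix equation $\dot J=(D_xF)(t,\Phi^t_s(x))J$, $J(s)=\mathrm{id}$, hence $\|J(t)\|,\|J(t)^{-1}\|\le e^{|t-s|\sup_\tau\|D_xF(\tau)\|_{L^\ii}}$: this gives the $C^1$-diffeomorphism and bi-Lipschitz claims. Finally, $F(t,\cdot)\in C^1$ forces its distributional divergence to be the continuous function $x\mapsto(\text{trace of }D_xF(t,x))$, so $\Div_xF=0$ holds pointwise for almost every $t$, and Liouville's identity $\tfrac{d}{dt}\det J(t)=(\Div_xF)(t,\Phi^t_s(x))\det J(t)=0$ forces $\det J(t)\equiv1$; thus $\Phi^t_s$ preserves Lebesgue measure.

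\textbf{Step 2 (existence and uniqueness for the continuity equation).} Set $u_t:=(\Phi^t_0)_\sharp u_0$; by Step 1 this is the bounded function $u_0\circ\Phi^0_t$, so $\|u_t\|_{L^\ii}=\|u_0\|_{L^\ii}$. For $\phi\in L^1$ the measure-preserving change of variables gives $\int u_t\phi=\int u_0\,(\phi\circ\Phi^t_0)$, and $\phi\circ\Phi^t_0\to\phi\circ\Phi^{t_0}_0$ in $L^1$ as $t\to t_0$ (first for $\phi\in C_c$, by continuity of the flow and dominated convergence, then by density), so $u\in C^0_t(L^\ii_x,*)$. Testing the equation against $\psi\in C^\ii_c(\R_t\times\R^N_x)$, the same change of variables rewrites $\int_{\R^N}u_t\,\psi(t,\cdot)\,dx$ as $\int_{\R^N}u_0(y)\,\psi(t,\Phi^t_0(y))\,dy$, whose $t$-derivative is $\int_{\R^N}u_t\,(\partial_t\psi+F\cdot\nabla_x\psi)(t,\cdot)\,dx$ (differentiation under the integral is legitimate since, for $t$ in a compact set, the integrand is supported in a fixed compact set on which $F$ is bounded); integrating in $t$ yields the distributional formulation. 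For uniqueness, let $u\in C^0_t(L^\ii_x,*)$ solve the equation with $u(0)=0$ (the case $t\le0$ being symmetric). Fix $T>0$ and $\phi\in C^\ii_c(\R^N)$, and put $\psi(t,x):=\phi(\Phi^T_t(x))$. By the flow law $\psi(t,\Phi^t_s(x))=\phi(\Phi^T_s(x))$ is independent of $t$, so $\psi$ solves $\partial_t\psi+F\cdot\nabla_x\psi=0$ (for almost every $t$, by the chain rule); moreover $\psi(t,\cdot)$ is $C^1$, globally Lipschitz, and supported in $\Phi^t_T(\supp\phi)$, which stays in a fixed compact set for $t\in[0,T]$ by the Gr\"onwall bound. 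A routine mollification argument shows $\psi\theta$ is an admissible test function for every $\theta\in C^1_c(\R)$, and since $\partial_t(\psi\theta)+F\cdot\nabla_x(\psi\theta)=\psi\theta'$ the weak formulation collapses to $\int_\R\int_{\R^N}u\,\psi\,\theta'\,dx\,dt=0$. Hence $t\mapsto\int_{\R^N}u(t,x)\psi(t,x)\,dx$ is constant; it vanishes at $t=0$ (since $u(0)=0$) and equals $\int u(T)\phi$ at $t=T$ (since $\psi(T,\cdot)=\phi$), so $\int u(T)\phi=0$ for all $\phi\in C^\ii_c$, i.e. $u(T)=0$; as $T$ is arbitrary, $u\equiv0$.

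\textbf{Main obstacle.} The substantive step is the uniqueness argument: building the backward characteristic test function $\psi(t,x)=\phi(\Phi^T_t(x))$ and keeping it legitimate. This is where all the hypotheses are used together --- the linear growth of $F$ keeps $\supp\psi(t,\cdot)$ bounded on $[0,T]$ (without it, duality against compactly supported test functions breaks down, which is precisely the difficulty of the positive-density/infinite-mass setting), the Lipschitz bound makes $\psi$ genuinely differentiable, and $\Div_xF=0$ is what lets one write the transport equation in the self-adjoint continuity form so that the pairing with $\psi\theta$ is clean. The remaining points are purely technical: $F$ is only $L^1$ in $t$, so ``$C^1$ in $t$'' must be read as ``locally absolutely continuous in $t$'' and the chain rule and integration by parts in $t$ are applied in the a.e./Carath\'eodory sense; and $\psi$ is only $C^1$ in $x$, which is handled by mollification (in $x$, and if desired in $t$ after regularising $F$) since everything takes place in a fixed compact set. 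Neither affects the structure of the argument.
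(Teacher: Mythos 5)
Your proposal is correct and follows essentially the same route as the paper: construct the flow by Picard/Carath\'eodory iteration with Gr\"onwall estimates, get measure preservation from Liouville's identity and $\Div_x F=0$, obtain existence by pushing $u_0$ forward along the characteristics, and prove uniqueness by pairing $u(t)$ with the backward-transported test function $\phi\circ\Phi^T_t$ and showing this pairing is constant in time. The only differences are implementational (you transport the test function to time $T$ and justify its admissibility by mollification, while the paper transports to time $0$ and argues via difference quotients, Lebesgue differentiation and an extension of the weak formulation to $C^1$ test functions with $(1+|x|)\nabla\phi\in L^1$), and the technical points you defer (a.e.\ chain rule for the Lipschitz-in-$t$ test function, the commutator in the mollification) are indeed routine in this uniformly Lipschitz setting.
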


\begin{remark}
 We recall that $u\in C^0_t(L^\ii_x,*)$ means that for all $t\in\R$, $u(t)\in L^\ii_x(\R^N)$ and that for any $\phi\in L^1_x(\R^N)$, the map $t\mapsto\langle u(t),\phi\rangle$ is continuous.
\hfill$\diamond$\end{remark}

Lemma \ref{lem:ode} is standard and its proof is given in Appendix \ref{app:ode}. Applying it to 
$$F(t,x,v)=\left(2v,-\nabla_x \left(w*\rho_W\right)\right),$$
which belongs to $L^\ii_t(\R,A_x(\R^d\times\R^d))$ if 
$$D^2 w\in (L^2\cap L^{\max(2,1+d/2)})(\R^d),$$
and using that $W\in C^0_t\cD'_{x,v}$, $0\le W\le 1$ implies that $W\in C^0_t(L^\ii_{x,v},*)$, we deduce that 
$$W_t=(\Phi_0^t)_\sharp W_0,$$
where $\Phi_0^t$ is the (Newton) flow associated to $F_t$. We prove that such solutions are unique.

\begin{theorem}[Uniqueness of weak solutions to the positive density Vlasov equation]\label{thm:uniqueness}
 Assume that $d\ge1$ and let $w\in L^1(\R^d)$ such that $D^2 w\in (L^1\cap L^{\max(2,1+d/2)})(\R^d)$ and $\nabla w\in (L^1\cap L^{\max(d+2,4)/3})(\R^d)$. Let $\rho_0\ge0$. Then, there exists at most one non-negative solution $W\in C^0_t(L^\ii_{x,v},*)$ to \eqref{eq:vlasov2} such that $\rho_W\in\rho_0+L^\ii_t(L^2_x+L^{\min(1+2/d,2)}_x)$.
\end{theorem}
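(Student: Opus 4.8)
The proof will follow G. Loeper's stability method for Vlasov--Poisson~\cite{Loeper-06}, adapted to a background of infinite mass. The plan is: pass to the Lagrangian picture, quantify the discrepancy between two solutions by an optimal-transport functional built from their characteristic flows, and close a Grönwall estimate.

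\emph{Step 1 (Lagrangian reduction).} Given a solution $W$ as in the statement, set $F(t,x,v)=(2v,-\nabla_x(w\ast\rho_W)(t,x))$. Writing $\rho_W=\rho_0+q$ with $q\in L^\infty_t(L^2+L^{\min(1+2/d,2)})$, the hypotheses $D^2w\in L^1\cap L^{\max(2,1+d/2)}$ and $\nabla w\in L^1\cap L^{\max(d+2,4)/3}$ give, by Young and Hölder, that $D^2(w\ast\rho_W)\in L^\infty_{t,x}$ (so $\nabla_x(w\ast\rho_W)$ is Lipschitz in $x$, uniformly in $t$, hence of at most linear growth); therefore $F\in L^\infty_t A(\R^{2d})$ and $\Div_{x,v}F=0$. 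By Lemma~\ref{lem:ode} the measure-preserving Newton flow $\Phi^t_0$ of $F$ exists and $W_t=(\Phi^t_0)_\sharp W_0$; in particular $0\le W_t\le1$ and $\rho_{W_t}$ is the $x$-marginal of the phase-space measure $\mu_t:=W_t\,dx\,dv$. Since the $v$-component of $F$ is uniformly bounded, $\Phi^t_0$ displaces points by an amount that is bounded in the $v$-variable and at most linear in $|(x,v)|$, uniformly on compact time intervals — a fact used to keep all integrals finite.

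\emph{Step 2 (a transport functional).} Let $W^1,W^2$ be two solutions with $W^1(0)=W^2(0)=W_0$, flows $\Phi^1,\Phi^2$, and phase-space measures $\mu^1_t,\mu^2_t$. Following Loeper, we monitor a quadratic transport-type functional $Q(t)$ measuring the cost of transporting $\mu^1_t$ onto $\mu^2_t$ along the natural coupling $(\Phi^1_t,\Phi^2_t)_\sharp W_0$ — morally $Q(t)=\iint|\Phi^1_t(z)-\Phi^2_t(z)|^2\,W_0(z)\,dz$. Because $W_0$ has infinite mass this expression must be renormalized; the appropriate notion of distance, together with its basic properties — it vanishes iff $\mu^1_t=\mu^2_t$, it decreases under a common push-forward, it dominates the corresponding cost between the $x$-marginals, and it controls $\nabla_x(w\ast(\rho_{W^1_t}-\rho_{W^2_t}))$ via a Loeper-type inequality — is developed in Appendix~\ref{app:opt-transp}. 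In all cases $Q(0)=0$, and it suffices to prove a Grönwall inequality $Q'(t)\le C(t)\,Q(t)$ with $C\in L^1_{\rm loc}$.

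\emph{Step 3 (differentiating and closing).} Differentiating $Q$ and using $\dot\Phi^i_t(z)=F^i(t,\Phi^i_t(z))$ brings up $F^1(t,z_1)-F^2(t,z_2)$ at $(z_1,z_2)=(\Phi^1_t(z),\Phi^2_t(z))$, which we split as
\[
F^1(t,z_1)-F^2(t,z_2)=\big(F^1(t,z_1)-F^1(t,z_2)\big)+\big(F^1(t,z_2)-F^2(t,z_2)\big).
\]
In the first bracket the $x$-component is $2(v_1-v_2)$ and the $v$-component is $\nabla_x(w\ast\rho_{W^1})(x_2)-\nabla_x(w\ast\rho_{W^1})(x_1)$; using $\|D^2(w\ast\rho_{W^1})\|_{L^\infty_{t,x}}<\infty$ from Step~1, Cauchy--Schwarz (in the coupling) bounds its contribution to $Q'(t)$ by $C(t)\,Q(t)$. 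The second bracket has only a $v$-component, equal to $-\nabla_x\big(w\ast(\rho_{W^1_t}-\rho_{W^2_t})\big)(x_2)$; Cauchy--Schwarz bounds its contribution by $Q(t)^{1/2}$ times $\big\|\nabla_x(w\ast(\rho_{W^1_t}-\rho_{W^2_t}))\big\|_{L^2(\rho_{W^2_t}\,dx)}$, the last factor being finite thanks to the exponent hypotheses on $w$ paired against $\rho_{W^2_t}\in\rho_0+L^\infty_t(L^2+L^{\min(1+2/d,2)})$. The crucial input — the infinite-mass Loeper-type inequality of Appendix~\ref{app:opt-transp} — then yields
\[
\big\|\nabla_x(w\ast(\rho_{W^1_t}-\rho_{W^2_t}))\big\|_{L^2(\rho_{W^2_t}\,dx)}\le C(t)\,Q(t)^{1/2}.
\]
Combining, $Q'(t)\le C(t)\,Q(t)$, so $Q\equiv0$ by Grönwall; hence $\mu^1_t=\mu^2_t$ for all $t$, i.e. $W^1=W^2$.

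\emph{The main obstacle} is exactly the optimal-transport toolbox in the infinite-mass regime (Appendix~\ref{app:opt-transp}): giving a meaningful, finite definition of the transport functional $Q$ when the background density $W_0$ (hence $\mu^i_t$) has infinite mass, proving its contraction under a common measure-preserving flow, and above all establishing the Loeper-type inequality controlling $\nabla_x(w\ast(\rho_{W^1_t}-\rho_{W^2_t}))$ by $Q(t)^{1/2}$ when $\rho_{W^i_t}$ are only $L^p$-perturbations of the (possibly constant) infinite-mass background $\rho_0$. The smoothing assumptions on $w$ enter precisely there; everything else is a routine Grönwall argument once Step~1 provides the uniform-in-time bounds on the mean-field force.
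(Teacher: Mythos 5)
Your high-level plan is the paper's: pass to the Lagrangian picture via the Newton flow, monitor a quadratic transport functional along the natural coupling of the two flows, and close a Gr\"onwall estimate using a Loeper-type inequality bounding the force difference by the transport cost. There are, however, a few points where what you write is either imprecise or would not close as stated.

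First, a small but genuine misconception: you say that $Q(t)$ ``must be renormalized'' because $W_0$ has infinite mass. No renormalization is needed. The integrand $|\Phi^1_t(z)-\Phi^2_t(z)|^2 W_0(z)$ vanishes identically at $t=0$ and is controlled for $t>0$ by Gr\"onwalling the Newton equations, which gives $Q(t)\le C_t\int_0^t\int|F_1(s,y)-F_2(s,y)|^2\rho_1(s,y)\,dy\,ds$; the right side is finite because $\nabla w\in L^1\cap L^{\max(d+2,4)/3}$ paired against $\rho_1-\rho_2\in L^2+L^{\min(1+2/d,2)}$ gives $F_1-F_2\in L^2\cap L^{\max(d+2,4)}$. (The paper actually uses only $|X_1-X_2|^2$ in $Q$ rather than the full phase-space distance; that is the quantity that is directly compared with the transport cost between the $x$-marginals, but the choice is cosmetic.)

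Second, and more importantly, your ``crucial input'' is misattributed and its difficulty is not addressed. Appendix~\ref{app:opt-transp} only develops the \emph{existence and Brenier--McCann structure} of optimal plans when the marginals have infinite mass (Lemma~\ref{lem:minimization} and Proposition~\ref{prop:brenier-2}). The inequality you need — controlling $\|\nabla_x(w*(\rho_1-\rho_2))\|$ by $\sqrt{Q(t)}$ — is \emph{not} in the appendix; it is Proposition~\ref{prop:control-Q}, and proving it in the positive-density setting is the real work. The obstruction is that Loeper's original argument requires the McCann interpolants $\rho_\theta$ to be uniformly in $L^\infty$, which holds in the Vlasov--Poisson case with bounded data but fails here: $\rho_1,\rho_2$ are only $L^2+L^{\min(1+2/d,2)}$ perturbations of the constant $\rho_0$. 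The paper gets around this by applying McCann's displacement convexity not to $\|\rho_\theta\|_{L^\infty}$ but to $\int A(\rho_\theta)$ for a carefully chosen convex $A$ modeled on the Lieb--Thirring functional $\delta$ of~\eqref{eq:def_delta}; this gives $\sup_\theta\|\rho_\theta\|_{L^{\min(1+2/d,2)}+L^\infty}<\infty$, which suffices to run the interpolation estimate in a dual norm $\|\cdot\|_X$ adapted to $L^2\cap L^{\max(d+2,4)}$. Your weighted-$L^2(\rho_2\,dx)$ form of the inequality hides this step entirely, and as written it cannot be deduced from the generic ``Loeper-type'' statement you invoke, because $\rho_2$ need not be bounded.

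So: the skeleton of your argument matches the paper, but the two places where the infinite-mass/positive-density structure actually bites — (i) the correct observation that $Q$ needs no renormalization, and (ii) the substitute for the $L^\infty$ bound on the McCann interpolants — are respectively misstated and unaddressed.
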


\begin{remark}
 Non-negativity is only assumed here to give a meaning to the density $\rho_W$.
\hfill$\diamond$\end{remark}

\begin{remark}
 Our uniqueness result is stated on the whole time interval $\R$, but it is also valid on any time interval containing the initial time.
\hfill$\diamond$\end{remark}

Theorem \ref{thm:uniqueness} is proved in several steps. As we said, it requires tools from optimal transportation with infinite mass. Since we have not found such tools in the literature, we present them in Appendix \ref{app:opt-transp}. They might have an independent interest. Before going into the proof of uniqueness, we provide a simple estimate on the difference of two solutions to the Newton equations.

Consider $W_1,W_2\in C^0_t(L^\ii_{x,v},*)$ two non-negative solutions to \eqref{eq:vlasov}. We will denote by $\rho_1$ and $\rho_2$ their respective densities that are assumed to belong to $\rho_0+L^\ii_t(L^2_x+L^{\min(1+2/d,2)}_x)$. We also define $F_1=-\nabla_x(w*\rho_1)$ and $F_2=-\nabla_x(w*\rho_2)$ their respective force fields, which satisfy $D_x F_i\in L^\ii(\R^d)$ due to our assumptions on $w$ and $\rho_i$. We will use the notation
$$\Phi_i(t,x,v)=(X_i(t,x,v),V_i(t,x,v))$$
for the Newton flow associated to $F_i$; that is the unique solution to
$$\begin{cases}
   (\dot{X}_i(t,x,v),\dot{V}_i(t,x,v))=(2V_i(t,x,v),F_i(t,X(t,x,v))),\\
   (X,V)_{|t=0}=(x,v)
  \end{cases}$$
for any $(x,v)\in\R^d\times\R^d$ given by Lemma \ref{lem:ode}. This result also gives that 
$$W_i(t)=\Phi_i(t)_\sharp W_0.$$
Then, we have the following simple estimate.

\begin{lemma}
 For any $t\in\R$ and any $(x,v)\in\R^d\times\R^d$ we have the estimate
 \begin{multline}\label{eq:est-newton}
    |X_1(t,x,v)-X_2(t,x,v)|+|V_1(t,x,v)-V_2(t,x,v)|\le\\
    e^{\int_0^t(2+\|\nabla F_2(s)\|_{L^\ii})\,ds}\int_0^t|F_1(s,X_1(s,x,v))-F_2(s,X_1(s,x,v))|\,ds.
 \end{multline}
\end{lemma}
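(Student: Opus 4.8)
The plan is to run a Grönwall argument on the difference of the two Newton trajectories, working with the integral (Carathéodory) form of the equations of motion since $F_i$ is only assumed locally integrable in time. Fix $(x,v)\in\R^d\times\R^d$ and abbreviate $X_i(s):=X_i(s,x,v)$, $V_i(s):=V_i(s,x,v)$; set $\delta X(s):=X_1(s)-X_2(s)$, $\delta V(s):=V_1(s)-V_2(s)$, and $N(s):=|\delta X(s)|+|\delta V(s)|$, so that $N(0)=0$. By Lemma~\ref{lem:ode}, for every $t$,
\begin{align*}
\delta X(t)&=\int_0^t 2\,\delta V(s)\,ds,\\
\delta V(t)&=\int_0^t\big(F_1(s,X_1(s))-F_2(s,X_2(s))\big)\,ds.
\end{align*}

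First I would insert $F_2(s,X_1(s))$ into the second integrand and split
$$F_1(s,X_1(s))-F_2(s,X_2(s))=\big(F_1(s,X_1(s))-F_2(s,X_1(s))\big)+\big(F_2(s,X_1(s))-F_2(s,X_2(s))\big),$$
estimating the last bracket by $\|\nabla F_2(s)\|_{L^\ii}\,|\delta X(s)|$, which is licit since $D_xF_2(s)\in L^\ii(\R^d)$ uniformly in $s$ under the stated assumptions on $w$ and $\rho_2$. Adding the two identities and using $2|\delta V(s)|\le 2N(s)$ together with $\|\nabla F_2(s)\|_{L^\ii}|\delta X(s)|\le\|\nabla F_2(s)\|_{L^\ii}N(s)$, one obtains, for $t\ge0$,
$$N(t)\le\int_0^t\big(2+\|\nabla F_2(s)\|_{L^\ii}\big)N(s)\,ds+g(t),\qquad g(t):=\int_0^t\big|F_1(s,X_1(s))-F_2(s,X_1(s))\big|\,ds.$$

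Since $g$ is nondecreasing on $[0,+\ii)$ and $s\mapsto 2+\|\nabla F_2(s)\|_{L^\ii}$ is nonnegative and locally integrable, the integral form of Grönwall's lemma yields
$$N(t)\le g(t)\,\exp\!\left(\int_0^t\big(2+\|\nabla F_2(s)\|_{L^\ii}\big)\,ds\right),$$
which is precisely~\eqref{eq:est-newton}; for $t<0$ the same computation applies on the interval $[t,0]$, replacing $\int_0^t$ by $\int_t^0$ and taking absolute values throughout. There is no genuine obstacle here: the only point deserving care is that the maps $t\mapsto(X_i(t,x,v),V_i(t,x,v))$ are merely absolutely continuous (because $F$ is only $L^1$ in time, not continuous), which is why the argument is carried out on the integral equations and via the integral form of Grönwall rather than by differentiating $N$.
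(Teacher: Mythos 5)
Your proof is correct and follows essentially the same route as the paper: write the Newton equations in integral form, insert $F_2(s,X_1(s))$ to split the force difference, bound the Lipschitz part by $\|\nabla F_2(s)\|_{L^\ii}$, and conclude by Gr\"onwall. The extra care you take with the integral form of Gr\"onwall (since $F$ is only locally integrable in time) and with negative times is consistent with the paper's argument, which records the same estimate and then invokes Gr\"onwall's lemma directly.
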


\begin{proof}
 By Newton's equations we have
 \begin{multline*}
    |X_1(t,x,v)-X_2(t,x,v)|+|V_1(t,x,v)-V_2(t,x,v)|\\
    \le \int_0^t(2|V_1(s,x,v)-V_2(s,x,v)|+|F_1(s,X_1(s,x,v))-F_2(s,X_2(s,x,v))|)\,ds
 \end{multline*}
 which we estimate by splitting the term with the fields into
 \begin{multline*}
  |X_1(t,x,v)-X_2(t,x,v)|+|V_1(t,x,v)-V_2(t,x,v)|\\
    \le \int_0^t(2|V_1(s,x,v)-V_2(s,x,v)|+|F_2(s,X_1(s,x,v))-F_2(s,X_2(s,x,v))|)\,ds\\
    + \int_0^t|F_1(s,X_1(s,x,v))-F_2(s,X_1(s,x,v))|\,ds
 \end{multline*}
 Using the Lipschitz character of $F_2$, we deduce
 \begin{multline*}
  \int_0^t(2|V_1(s,x,v)-V_2(s,x,v)|+|F_2(s,X_1(s,x,v))-F_2(s,X_2(s,x,v))|)\,ds\\
  \le \int_0^t(2+\|\nabla F_2(s)\|_{L^\ii})(|X_1-X_2|(s,x,v)+|V_1-V_2|(s,x,v))\,ds.
 \end{multline*}
 The result then follows by Gronwall's lemma. 
\end{proof}

Loeper's proof of uniqueness relies on introducing the quantity
\begin{equation}\label{eq:def-Q}
Q(t):=\int_{\R^d}\int_{\R^d}W_0(x,v)|X_1(t,x,v)-X_2(t,x,v)|^2\,dx\,dv. 
\end{equation}
From \eqref{eq:est-newton} we deduce that for any $t\in\R$,
$$Q(t)\le C_t\int_0^t \int_{\R^d}|F_1(s,y)-F_2(s,y)|^2\rho_1(s,y)\,dy\,ds,$$
with 
$$C_t:=t\exp\left(2\int_0^t(2+\|\nabla F_2(s)\|_{L^\ii})\,ds\right).$$
Using our assumption on $\rho_1$, we deduce that 
\begin{equation}\label{eq:est-Q-above}
    Q(t)\le C_t\|\rho_1\|_{L^\ii_t(L^{\min(1+2/d,2)}+L^\ii)}\int_0^t\|F_1(s)-F_2(s)\|_{L^2\cap L^{\max(d+2,4)}}^2\,ds.
\end{equation}
In particular, $Q(t)$ is finite for all $t\in\R$ since
\begin{multline*}
    \|F_1(s)-F_2(s)\|_{L^2\cap L^{\max(d+2,4)}}\\
    \le\|\nabla w\|_{L^1\cap L^{\max(d+2,4)/3}}\|\rho_1(s)-\rho_2(s)\|_{L^2+L^{\min(1+2/d,2)}}
\end{multline*}
for all $s$. The next part of the proof consists in estimating $Q(t)$ from below by some norm on $\rho_1(t)-\rho_2(t)$ that itself controls $\|F_1(t)-F_2(t)\|_{L^2\cap L^{\max(d+2,4)}}$, in order to perform a Gronwall-type argument. While Loeper shows that $Q(t)$ controls the $H^{-1}$-norm of $\rho_1(t)-\rho_2(t)$, we have a slightly weaker control.

\begin{proposition}\label{prop:control-Q}
 Assume $Q(t)$ is defined by \eqref{eq:def-Q}. For any $\rho:\R^d\to\R$, define the norm
 $$\|\rho\|_X:=\sup\left\{\int_{\R^d}\rho(x)\phi(x)\,dx,\ \phi\in L^1(\R^d),\  \|\nabla\phi\|_{L^2\cap L^{\max(d+2,4)}}\le 1\right\}.$$
 Then, for any $T>0$, there exists $C_T>0$ such that for any $t\in[-T,T]$ one has
 $$\|\rho_1(t)-\rho_2(t)\|_X^2\le C_T Q(t).$$
\end{proposition}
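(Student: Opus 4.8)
The plan is to follow Loeper's duality argument, adapted to the infinite-mass setting. Fix $t\in[-T,T]$ and let $\phi\in L^1(\R^d)$ with $\|\nabla\phi\|_{L^2\cap L^{\max(d+2,4)}}\le 1$. We want to bound $\int_{\R^d}(\rho_1(t)-\rho_2(t))\phi\,dx$. The starting observation is that, since $W_i(t)=\Phi_i(t)_\sharp W_0$, we can push forward: $\int_{\R^d}\rho_i(t,y)\phi(y)\,dy=(2\pi)^{-d}\iint W_0(x,v)\,\phi(X_i(t,x,v))\,dx\,dv$. Subtracting the two expressions,
\begin{equation*}
\int_{\R^d}(\rho_1(t)-\rho_2(t))\phi\,dy=\frac{1}{(2\pi)^d}\iint W_0(x,v)\big(\phi(X_1(t,x,v))-\phi(X_2(t,x,v))\big)\,dx\,dv.
\end{equation*}

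Next I would use the fundamental theorem of calculus along the segment joining $X_2(t,x,v)$ to $X_1(t,x,v)$, writing $\phi(X_1)-\phi(X_2)=\int_0^1\nabla\phi\big(X_2+\tau(X_1-X_2)\big)\cdot(X_1-X_2)\,d\tau$, and then applying Cauchy--Schwarz in $(x,v)$ against the measure $W_0\,dx\,dv$. This gives
\begin{equation*}
\left|\int_{\R^d}(\rho_1(t)-\rho_2(t))\phi\,dy\right|\le \frac{Q(t)^{1/2}}{(2\pi)^d}\left(\int_0^1\!\!\iint W_0(x,v)\,\big|\nabla\phi\big(X_2+\tau(X_1-X_2)\big)\big|^2\,dx\,dv\,d\tau\right)^{1/2}.
\end{equation*}
The remaining task is to bound the bracketed integral by $C_T$ times $\|\nabla\phi\|_{L^2\cap L^{\max(d+2,4)}}^2$, uniformly in $\tau\in[0,1]$ and $t\in[-T,T]$. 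For this one changes variables, for each fixed $v$ and $\tau$, via the map $x\mapsto Y_\tau(x):=X_2(t,x,v)+\tau(X_1(t,x,v)-X_2(t,x,v))$, a convex combination of two measure-preserving diffeomorphisms. The key point is that its Jacobian is bounded below: $\det D_x Y_\tau\ge c_T>0$ uniformly, because each $X_i(t,\cdot,v)$ is bi-Lipschitz with constants controlled by $\exp(\int_0^t(2+\|\nabla F_i\|_{L^\infty}))$ (from Lemma \ref{lem:ode} and the Lipschitz estimates on $F_i$), so the $\tau$-interpolant is also a bi-Lipschitz homeomorphism with a lower Jacobian bound. Hence $\iint W_0(x,v)|\nabla\phi(Y_\tau(x))|^2\,dx\,dv\le \|W_0\|_{L^\infty}\,c_T^{-1}\int_{\R^d}\Big(\int_{\R^d}|\nabla\phi(y)|^2\,dy\Big)dv$; but this double integral diverges in $v$, which is exactly the infinite-mass difficulty, so one must instead argue more carefully: write $W_0=m_{\rm ref}+q_0$ with $q_0\in L^2$, and separately, for the $m_{\rm ref}$ part, use that the flow in $v$ contributes a genuinely finite velocity integral $\int_{\R^d}(S')^{-1}(\cdot)\,dv<\infty$ together with the transport of the velocity variable, while the $L^2$ part is handled by Cauchy--Schwarz against $\|\nabla\phi\|_{L^2}$. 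Combining the $L^2$ and $L^{\max(d+2,4)}$ contributions by interpolation then yields the claimed bound $\|\rho_1(t)-\rho_2(t)\|_X^2\le C_T Q(t)$.

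The main obstacle I anticipate is precisely handling the infinite mass of $W_0$: the naive change-of-variables bound produces a divergent velocity integral, so one genuinely needs the optimal-transport-with-infinite-mass machinery announced for Appendix \ref{app:opt-transp}, or equivalently a careful splitting $W_0=m_{\rm ref}+(W_0-m_{\rm ref})$ where the homogeneous part is treated using the finiteness of $\int(S')^{-1}(|v|^2)\,dv$ and the fact that the $v$-component of the Newton flow stays comparable to $v$ on bounded time intervals (because $\dot V_i=F_i$ is bounded in $L^\infty$, so $|V_i(t,x,v)-v|\le t\|\nabla w\|_\infty\|\rho_i\|_{\cdots}$), while the $L^2$ remainder is absorbed directly. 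Getting the velocity-variable bookkeeping right so that the final constant depends only on $T$, $\|W_0\|_{L^\infty}$, $\|\rho_i\|_{L^\infty_t(\cdots)}$ and $\|\nabla w\|$, and not on any unbounded quantity, is the delicate part; everything else is the standard Loeper scheme.
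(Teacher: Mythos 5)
There is a genuine gap, and it sits exactly at the step your whole strategy rests on: the claim that the interpolant $Y_\tau=X_2+\tau(X_1-X_2)$ (at fixed $v$) satisfies $\det D_xY_\tau\ge c_T>0$. This is unjustified and false in general. First, the maps $x\mapsto X_i(t,x,v)$ at fixed $v$ are \emph{not} measure-preserving and need not even be injective: Liouville's theorem applies to the full phase-space flow $(x,v)\mapsto(X_i,V_i)$, while the spatial component alone can focus (caustics), so already $\det D_xX_i$ can degenerate for $t$ of order one. Second, even for two genuinely measure-preserving bi-Lipschitz diffeomorphisms, a convex combination need not be injective nor have a Jacobian bounded below (take the identity and a rotation by $\pi$: the midpoint is the zero map). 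So the change of variables $x\mapsto Y_\tau(x)$ that you use to bound $\iint W_0\,|\nabla\phi(Y_\tau)|^2\,dx\,dv$ collapses, and this is not a small-time issue that survives on an arbitrary interval $[-T,T]$. This obstruction is precisely why Loeper's argument, which the paper follows, does \emph{not} interpolate linearly between the two Lagrangian flows: instead it first converts $Q(t)$ into an optimal transport cost between $\rho_1(t)$ and $\rho_2(t)$ (Proposition \ref{prop:brenier}, via $\Pi(t)=(X_1,X_2)_\sharp W_0$), obtains a Brenier map $\nabla\psi$ with $\psi$ convex, and then uses McCann's \emph{displacement} interpolation $\rho_\theta=((\theta-1)\nabla\psi+(2-\theta)\,\mathrm{id})_\sharp\rho_1$. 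Convexity of $\psi$ is what replaces your (false) Jacobian bound: displacement convexity of $\rho\mapsto\int A(\rho)$ gives the uniform bound $\sup_{\theta\in[1,2]}\|\rho_\theta\|_{L^{\min(1+2/d,2)}+L^\ii}<+\ii$, which is exactly what is needed to control $\int\rho_\theta|\nabla\phi|^2$ by $\|\nabla\phi\|_{L^2\cap L^{\max(d+2,4)}}^2$ after the Cauchy--Schwarz step.

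A secondary problem is that your treatment of the infinite mass is only announced, not carried out: after Cauchy--Schwarz the quantity to control is $\iint W_0|\nabla\phi(Y_\tau)|^2$, and splitting $W_0=m_{\rm ref}+q_0$ with $q_0\in L^2_{x,v}$ does not by itself make the velocity integral finite for the $m_{\rm ref}$ part (you would still need quantitative control of the spatial map for each $v$, i.e.\ the same missing Jacobian input), nor does Cauchy--Schwarz of $q_0$ against $|\nabla\phi(Y_\tau)|^2$ avoid it, since you would then need $\nabla\phi\circ Y_\tau$ in $L^4_{x,v}$, again via a change of variables. In the paper this difficulty is absorbed upstream: the infinite-mass issue is dealt with once and for all in Appendix \ref{app:opt-transp} (existence of an optimal plan and of the Monge map for measures of infinite mass), after which the estimate in Proposition \ref{prop:control-Q} only involves the spatial densities $\rho_\theta$, which lie in $L^{\min(1+2/d,2)}+L^\ii$ thanks to the assumption $\rho_i-\rho_0\in L^2+L^{\min(1+2/d,2)}$ and displacement convexity. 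So the overall scheme (duality, fundamental theorem of calculus along an interpolation, Cauchy--Schwarz against $Q(t)$) is the right skeleton, but the interpolation must be the Brenier--McCann one, not the linear interpolation of the Newton flows.
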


Before proving Proposition \ref{prop:control-Q}, we show how it implies Theorem \ref{thm:uniqueness}.

\begin{proof}[Proof of Theorem \ref{thm:uniqueness}]
 Let $\rho\in X$ and $1\le j\le d$. We start by estimating $\partial_jw * \rho$ in $L^2\cap L^{\max(d+2,4)}$ by $\|\rho\|_X$. To do so, let $V\in  (L^1\cap L^2)(\R^d)$. By definition of the $X$-norm, we have
 \begin{align*}
    \int_{\R^d}(\partial_j w * \rho)(x) V(x)\,dx &= \int_{\R^d}\rho(y)(\partial_j w*V)(y)\,dy\\
    &\le \|\rho\|_X\|\nabla\partial_j w * V\|_{L^2\cap L^{\max(d+2,4)}}\\
    &\le C\|\rho\|_X\|D^2w\|_{L^1\cap L^{\max(\frac{2(d+2)}{d+4},\frac 43)}}\|V\|_{L^2},
 \end{align*}
 hence 
 $$\|\partial_j w * \rho\|_{L^2}\le C\|\rho\|_X\|D^2w\|_{L^1\cap L^{\max(\frac{2(d+2)}{d+4},\frac 43)}}.$$
 By the same method, we have
 $$\|\partial_jw*\rho\|_{L^{\max(d+2,4)}}\le C\|\rho\|_X\|D^2w\|_{L^{\max(2,1+d/2}\cap L^{\max(\frac{2(d+2)}{d+4},\frac 43)}}.$$
 We thus have proved that 
 $$\|\nabla w * \rho\|_{L^2\cap L^{\max(d+2,4)}}\le C\|D^2 w\|_{L^1\cap L^{\max(2,1+d/2)}}\|\rho\|_X.$$
 Inserting this estimate into \eqref{eq:est-Q-above}, we deduce the bound
 $$Q(t)\le C_T\int_0^t\|\rho_1(s)-\rho_2(s)\|_X^2\,ds.$$
 By the lower bound on $Q(t)$ given by Proposition \ref{prop:control-Q} and Gronwall's estimate, we deduce that $\rho_1\equiv\rho_2$ which then implies that $F_1\equiv F_2$ and $\Phi_1\equiv\Phi_2$, showing that $W_1\equiv W_2$.
\end{proof}

To prove Proposition \ref{prop:control-Q}, we need the following result from optimal transportation.

\begin{proposition}\label{prop:brenier}
 For any $t\in\R$, there exists $\psi(t):\R^d\to\R\cup\{+\ii\}$ a convex lower semi-continuous function, differentiable almost everywhere on the support on $\rho_1(t)$, such that $\rho_2(t)=(\nabla\psi(t))_\sharp\rho_1(t)$, and such that
 $$Q(t)\ge\int_{\R^d}|x-\nabla\psi(t,x)|^2\rho_1(t,x)\,dx.$$
\end{proposition}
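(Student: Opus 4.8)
The plan is to recognise $Q(t)$ as the quadratic transport cost of an explicit coupling between $\rho_1(t)$ and $\rho_2(t)$, and then to invoke the infinite-mass version of Brenier's theorem established in Appendix~\ref{app:opt-transp}.

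First I would record the relevant pushforward identities. Writing $\Phi_i(t)=\big(X_i(t,\cdot,\cdot),V_i(t,\cdot,\cdot)\big)$ and using $W_i(t)=\Phi_i(t)_\sharp W_0$, projection onto the position variable gives
$$\rho_i(t)=\big(X_i(t,\cdot,\cdot)\big)_\sharp\,\mu_0,\qquad \mu_0:=(2\pi)^{-d}\,W_0\,dx\,dv,$$
as non-negative Radon measures on $\R^d_x$; both are locally finite since $\rho_i(t)-\rho_0\in L^2+L^{\min(1+2/d,2)}$ by hypothesis and $\rho_0\in L^1_{\rm loc}$, and $\rho_1(t)$ is in particular absolutely continuous with respect to Lebesgue measure, with locally integrable density the function $\rho_1(t,\cdot)$. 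Consequently
$$\Pi_t:=\big(X_1(t,\cdot,\cdot),X_2(t,\cdot,\cdot)\big)_\sharp\,\mu_0$$
is a transport plan between $\rho_1(t)$ and $\rho_2(t)$, and by the definition~\eqref{eq:def-Q} of $Q$,
$$\int_{\R^d\times\R^d}|x-y|^2\,d\Pi_t(x,y)=(2\pi)^{-d}Q(t),$$
which is finite for every $t$ by~\eqref{eq:est-Q-above}.

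Then I would apply the infinite-mass Brenier theorem of Appendix~\ref{app:opt-transp} to the pair $\big(\rho_1(t),\rho_2(t)\big)$. Its hypotheses hold here: $\rho_1(t)$ is absolutely continuous with respect to Lebesgue measure, the regular parts of $\rho_1(t)$ and $\rho_2(t)$ differ by an element of $L^2+L^{\min(1+2/d,2)}$, and there exists at least one coupling of finite quadratic cost, namely $\Pi_t$. The theorem then produces a convex, lower semi-continuous function $\psi(t):\R^d\to\R\cup\{+\ii\}$, differentiable $\rho_1(t)$-almost everywhere --- hence Lebesgue-a.e.\ on $\{\rho_1(t)>0\}$, and in particular a.e.\ on $\supp\rho_1(t)$ --- such that $(\nabla\psi(t))_\sharp\rho_1(t)=\rho_2(t)$ and such that the induced plan $(\mathrm{id},\nabla\psi(t))_\sharp\rho_1(t)$ is optimal. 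Comparing its cost with that of the admissible plan $\Pi_t$ then yields
$$\int_{\R^d}|x-\nabla\psi(t,x)|^2\,\rho_1(t,x)\,dx\le\int_{\R^d\times\R^d}|x-y|^2\,d\Pi_t(x,y)=(2\pi)^{-d}Q(t)\le Q(t),$$
which is the asserted inequality.

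The only genuine difficulty is that $\rho_1(t)$ and $\rho_2(t)$ are measures of infinite mass, so the classical Brenier/optimal-transport machinery does not apply verbatim; this is precisely what Appendix~\ref{app:opt-transp} is designed to handle, and once the coupling $\Pi_t$ and the finiteness of its cost have been exhibited the proposition follows immediately from the statement proved there. A minor secondary point is to match the measure-theoretic conclusion ``differentiable $\rho_1(t)$-a.e.'' with the wording ``a.e.\ on $\supp\rho_1(t)$'', which is immediate from the absolute continuity of $\rho_1(t)$.
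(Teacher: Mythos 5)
Your proof is correct and follows essentially the same path as the paper's: form the explicit coupling $(X_1(t),X_2(t))_\sharp W_0$, recognize $Q(t)$ (up to normalization) as its quadratic cost, and invoke the infinite-mass existence result (Lemma~\ref{lem:minimization}) together with the Monge-form result (Proposition~\ref{prop:brenier-2}) from Appendix~\ref{app:opt-transp}. You are in fact slightly more careful than the paper about the $(2\pi)^{-d}$ normalization in $\rho_m$ versus the coupling's marginals, which the paper silently absorbs.
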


When $\int\rho_1(t,x)\,dx=\int\rho_2(t,x)\,dx<\ii$, Proposition \ref{prop:brenier} is a standard result of optimal transportation due to Brenier and McCann. Since here these integrals are infinite, we need to extend this optimal transportation result and this is done in Appendix \ref{app:opt-transp}, where Proposition \ref{prop:brenier} is proved. We now explain how it implies Proposition \ref{prop:control-Q}, following the same strategy as Loeper.

\begin{proof}[Proof of Proposition \ref{prop:control-Q}]
In the following, we drop the dependence in $t$ from our notation. Once the existence of $\psi$ given by Proposition \ref{prop:brenier} is known, we introduce the interpolation between $\rho_1$ and $\rho_2$ invented by McCann \cite{McCann-97}:
$$\rho_\theta:=((\theta-1)\nabla\psi+(2-\theta)\text{id})_\sharp\rho_1,\quad \theta\in[1,2].$$
This path has nice properties \cite[Thm. 2.2]{McCann-97}, in particular that for any function $A:\R_+\to\R_+$, satisfying that $A(0)=0$ and that $\lambda>0\mapsto\lambda^d A(\lambda^{-d})$ is convex, it holds that the map
$$\theta\in[1,2]\mapsto\int_{\R^d}A(\rho_\theta(x))\,dx$$
is convex. We apply this result to the map
$$A:\rho\ge0\mapsto\1(\rho\ge\rho_0)\times
\begin{cases}
 \rho^{1+2/d}-\rho_0^{1+2/d}-\frac{d+2}{d}\rho_0^{2/d}(\rho-\rho_0) &\text{if}\ d\ge3, \\
 (\rho-\rho_0)^2 &\text{if}\ d=1,2,
\end{cases}
$$
which can be shown to satisfy the desired properties. In particular, we have for all $\theta\in[1,2]$,
$$\int_{\R^d}A(\rho_\theta(x))\,dx\le\max\left(\int_{\R^d}A(\rho_1(x))\,dx,\int_{\R^d}A(\rho_2(x))\,dx\right).$$
Since $A(\rho)\sim\rho^{\min(1+2/d,2)}$ as $\rho\to+\ii$, we have that 
$$A(\rho)\ge\frac12\1(\rho\ge R)\rho^{\min(1+2/d,2)}$$
for some large $R>0$. We can then estimate
\begin{align*}
    \|\1(\rho_\theta\ge R)\rho_\theta\|_{L^{\min(1+2/d,2)}}^{\min(1+2/d,2)}&\le 2\int_{\R^d}A(\rho_\theta(x))\,dx\\
    &\le \max\left(\int_{\R^d}A(\rho_1(x))\,dx,\int_{\R^d}A(\rho_2(x))\,dx\right).
\end{align*}
Since on the other hand we have
$$A(\rho)\le C(\1(|\rho-\rho_0|\le R)(\rho-\rho_0)^2+\1(|\rho-\rho_0|\ge R)(\rho-\rho_0)^{\min(1+2/d,2)},$$
we deduce that 
\begin{multline*}
    \int_{\R^d}A(\rho_i(x))\,dx\\
    \le C(\|\1(|\rho_i-\rho_0|\le R)(\rho_i-\rho_0)\|_{L^2}^2+\|\1(|\rho_i-\rho_0|\ge R)(\rho_i-\rho_0)\|_{L^{\min(1+2/d,2)}}^{\min(1+2/d,2)}).
\end{multline*}
Now $\rho_i-\rho_0\in L^2+L^{\min(1+2/d,2)}$, hence $\1(|\rho_i-\rho_0|\le R)(\rho_i-\rho_0)\in L^2$ and $\1(|\rho_i-\rho_0|\ge R)(\rho_i-\rho_0)\in L^{\min(1+2/d,2)}$. 
This shows that
$$\sup_{\theta\in[1,2]}\|\rho_\theta\|_{L^{\min(1+2/d,2)}+L^\ii}<+\ii.$$
Now let $\phi\in L^1(\R^d)$. We write 
\begin{align*}
  &\int_{\R^d}(\rho_2(x)-\rho_1(x))\phi(x)\,dx\\
  &=\int_1^2\frac{d}{d\theta}\int_{\R^d}\rho_\theta(x)\phi(x)\,dx\,d\theta\\
  &=\int_1^2\frac{d}{d\theta}\int_{\R^d}\rho_1(x)\phi((\theta-1)\nabla\psi(x)+(2-\theta)x)\,dx\\
  &=\int_1^2\int_{\R^d}\rho_1(x)(\nabla\psi(x)-x)\cdot\nabla\phi((\theta-1)\nabla\psi(x)+(2-\theta)x)\,dx\\
  &\le\sup_{\theta\in[1,2]}\left(\int_{\R^d}\rho_1(x)|x-\nabla\psi(x)|^2\,dx\right)^{1/2}\left(\int_{\R^d}\rho_\theta(x)|\nabla\phi(x)|^2\,dx\right)^{1/2},
\end{align*}
which shows that for all $\phi$, 
\begin{multline*}
    \left|\int_{\R^d}(\rho_2(t,x)-\rho_1(t,x))\phi(x)\,dx\right|^2\\
    \le Q(t)\sup_{\theta\in[1,2]}\|\rho_\theta\|_{L^{\min(1+2/d,2)}+L^\ii}\|\nabla\phi\|_{L^2\cap L^{\max(d+2,4)}}^2.
\end{multline*}
By the definition of the space $X$, this finishes the proof.
\end{proof}

At this step we also have completed the proof of Theorem~\ref{thm:limit-sc}.

\section{Accessible initial data}\label{sec:wp-vlasov}

In this section, we build families of initial data $(\gamma_{0,\hbar})_\hbar$ for the Hartree equation \eqref{eq:hartree} so that $\hbar^d\cH(\gamma_{0,\hbar},\gamma_{\rm ref})$ remains bounded and such that their Wigner transform converges as $\hbar\to0$ towards a given phase-space distribution. This will have two important consequences: (i) this shows that the assumptions of Theorem \ref{thm:limit-sc} are not empty and (ii) using these particular initial states, we can show the well-posedness of the Vlasov equation in the energy space (Theorem \ref{thm:wp-vlasov}). We start by constructing appropriate initial data for Theorem \ref{thm:limit-sc}.

\begin{lemma}\label{lem:quantiz}
 Let $S$ be an admissible entropy functional and define $f:=(S')^{-1}$, $S'(1^-):=\lim_{x\to 1}S'(x)<0$. Assume that $f$ and $y\mapsto yf(y)$ are continuous on $]S'(1^-),+\ii[$ and vanish at $+\ii$, together with their first and second derivatives. Let $a\in C^\ii_0(\R^d\times\R^d)$ such that $|v|^2+a(x,v)>S'(1^-)$ for all $(x,v)\in\R^d\times\R^d$. Define $m_{\rm ref}(x,v)=f(|v|^2)$, $m(x,v)=f(|v|^2+a(x,v))$. Let the operators
 $$\gamma_{\rm ref}=f(-\hbar^2\Delta),\quad \gamma_\hbar=f\Big(-\hbar^2\Delta+\opw^\hbar(a)\Big),$$
 which is well-defined for $\hbar>0$ small enough by G{\aa}rding's inequality. Then, we have $W^\hbar_{\gamma_\hbar}\to m$ as $\hbar\to0$ in $\cD'_{x,v}$ and 
 $$\lim_{\hbar\to0}\hbar^d\cH_S(\gamma_\hbar,\gamma_{\rm ref})=\hcl_S(m,m_{\rm ref}).$$
\end{lemma}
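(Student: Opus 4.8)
The plan is to prove both assertions through a single semiclassical functional-calculus computation. The key reduction is that $-\hbar^2\Delta=\opw^\hbar(|v|^2)$, so by linearity $P_\hbar:=-\hbar^2\Delta+\opw^\hbar(a)=\opw^\hbar(p)$ with the $\hbar$-independent symbol $p(x,v):=|v|^2+a(x,v)$, while $\gamma_{\rm ref}=(S')^{-1}(-\hbar^2\Delta)=g(-i\hbar\nabla)$ with $g(v)=f(|v|^2)=m_{\rm ref}(x,v)$, hence $W^\hbar_{\gamma_{\rm ref}}=m_{\rm ref}$. Since $S$ is admissible, $-S'$ is operator monotone, so $f=(S')^{-1}$ is $C^\infty$ on $(S'(1^-),+\infty)$; together with the assumed vanishing at $+\infty$ of $f$, of $y\mapsto yf(y)$, and of their first two derivatives (and after the harmless normalization $S(0)=0$), both $f$ and $j(y):=yf(y)-S(f(y))$ — which satisfies $j'=f$ and is bounded on the relevant range, with $j(p)-j(|v|^2)$ supported in $\supp a$ — lie in the symbol classes used below. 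By Gårding's inequality $\inf\spec P_\hbar>S'(1^-)$ for $\hbar$ small, so $\gamma_\hbar=f(P_\hbar)$ and $j(P_\hbar)$ are well defined, with $0\le\gamma_\hbar\le 1$.

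First I would expand $\gamma_\hbar-\gamma_{\rm ref}$ and $j(P_\hbar)-j(-\hbar^2\Delta)$ by the Helffer–Sjöstrand formula, using an almost-analytic extension $\widetilde f$ of $f$ and the resolvent identity $(P_\hbar-z)^{-1}-(-\hbar^2\Delta-z)^{-1}=-(P_\hbar-z)^{-1}\opw^\hbar(a)(-\hbar^2\Delta-z)^{-1}$. Standard semiclassical pseudodifferential calculus (in the weighted class attached to $\langle v\rangle^2$), together with $(p-z)^{-1}(|v|^2-z)^{-1}=\tfrac{1}{p-|v|^2}\big[(|v|^2-z)^{-1}-(p-z)^{-1}\big]$ and $p-|v|^2=a$, then gives
$$\gamma_\hbar-\gamma_{\rm ref}=\opw^\hbar\!\big(f(p)-f(|v|^2)\big)+\hbar\,\opw^\hbar(r_\hbar),\qquad j(P_\hbar)-j(-\hbar^2\Delta)=\opw^\hbar\!\big(j(p)-j(|v|^2)\big)+\hbar\,\opw^\hbar(\widetilde r_\hbar),$$
where the principal symbols $f(p)-f(|v|^2)=m-m_{\rm ref}$ and $j(p)-j(|v|^2)$ are compactly supported, and $r_\hbar,\widetilde r_\hbar$ are, uniformly in $\hbar$, bounded in a fixed symbol norm on a fixed compact set. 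The technical heart is to carry this out with the remainders controlled in the trace ideal $\gS^1$, using $\|\opw^\hbar(c)\|_{\gS^1}\lesssim\hbar^{-d}\|c\|$ for $c$ supported in a fixed compact set; the decay hypotheses on $f$ are exactly what makes the $\bar\partial\widetilde f$-integral and the resulting trace-class estimates converge.

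The Wigner convergence then follows at once: the Wigner transform of $\opw^\hbar(c)$ equals $c$ (from $\langle W^\hbar_Q,\phi\rangle=(2\pi\hbar)^d\tr(Q\,\opw^\hbar(\phi))$ and $(2\pi\hbar)^d\tr(\opw^\hbar(c)\opw^\hbar(\phi))=\int c\phi$), so for $\phi\in\cS_{x,v}$ the first expansion gives $\langle W^\hbar_{\gamma_\hbar},\phi\rangle=\langle m_{\rm ref},\phi\rangle+\int(m-m_{\rm ref})\phi+\hbar\!\int r_\hbar\phi\to\langle m,\phi\rangle$, i.e. $W^\hbar_{\gamma_\hbar}\to m$ in $\cD'_{x,v}$. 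For the entropy, since $S'\circ f=\mathrm{id}$ one has $S'(\gamma_{\rm ref})=-\hbar^2\Delta$ and $S'(\gamma_\hbar)=P_\hbar$, and the closed-form representation of $\cH_S$ for such states from \cite{LewSab-13} (which legitimizes regrouping the a priori infinite traces) yields
$$\cH_S(\gamma_\hbar,\gamma_{\rm ref})=\tr\big[j(P_\hbar)-j(-\hbar^2\Delta)\big]-\tr\big[\opw^\hbar(a)f(P_\hbar)\big].$$
Feeding in the expansions of the previous paragraph and using $\hbar^d\tr\opw^\hbar(c)\to(2\pi)^{-d}\!\int c$ and $\hbar^d\tr(\opw^\hbar(a)\opw^\hbar(c))\to(2\pi)^{-d}\!\int ac$ (the $\hbar\opw^\hbar(\cdot)$-remainders contributing $O(\hbar)$), one gets $\hbar^d\cH_S(\gamma_\hbar,\gamma_{\rm ref})\to(2\pi)^{-d}\iint[j(p)-j(|v|^2)-a\,f(p)]\,dx\,dv$, and the pointwise identity $\cH_S(f(p),f(|v|^2))=j(p)-j(|v|^2)-(p-|v|^2)f(p)$ together with $p-|v|^2=a$, $m=f(p)$, $m_{\rm ref}=f(|v|^2)$ identifies this limit with $\hcl_S(m,m_{\rm ref})$.

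The main obstacle is the interplay of unboundedness and renormalization: $P_\hbar$ is unbounded, $p=|v|^2+a$ is not an order-zero symbol, and $j(P_\hbar)$ and $j(-\hbar^2\Delta)$ individually have infinite trace, so one must run the semiclassical calculus in weighted symbol classes, keep every remainder in $\gS^1$ with an explicit power of $\hbar$, and carefully match the abstract definition of $\cH_S$ (as a monotone limit over finite-rank truncations) with the closed-form expression above. All the decay assumptions on $f$ and $y\mapsto yf(y)$ are there precisely to ensure the finiteness of these renormalized quantities and the negligibility of the remainders.
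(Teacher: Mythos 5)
Your algebraic skeleton is sound and in fact coincides with the paper's: the identity $\cH_S(\gamma_\hbar,\gamma_{\rm ref})=\tr\big[j(P_\hbar)-j(-\hbar^2\Delta)\big]-\tr\big[\opw^\hbar(a)f(P_\hbar)\big]$ with $j'=f$ is exactly the intermediate formula derived in the paper's Appendix~\ref{app:int-rep} (there written with $g=-j$), and the pointwise identity $\cH_S(f(p),f(|v|^2))=j(p)-j(|v|^2)-a\,f(p)$ correctly identifies the limit. The Wigner part is also fine in substance: pairing with $\opw^\hbar(\phi)$ and using $\|\opw^\hbar(\phi)\|_{\gS^1}\le C_\phi\hbar^{-d}$ is how the paper proceeds (Appendix~\ref{sec:sc}). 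Be aware, though, that the equality of this closed-form trace expression with the abstractly defined $\cH_S$ is not a quotable fact from \cite{LewSab-13}; it is precisely what Appendix~\ref{app:int-rep} proves via Peller's estimate and \cite{DeuHaiSei-15} (and it also requires first modifying $f$ below $\min(|v|^2+a)$, as in the paper's choice of the cut-off function $g$, so that one has a globally defined nice function of the operators). You flag this, but it is a substantive part of the proof, not a citation.

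The genuine gap is your treatment of $\tr\big[j(P_\hbar)-j(-\hbar^2\Delta)\big]$. You propose a direct two-term Helffer--Sj\"ostrand expansion of the operator difference with remainder $\hbar\,\opw^\hbar(\widetilde r_\hbar)$ controlled in $\gS^1$ by $\hbar^{-d}$, asserting that ``the decay hypotheses on $f$'' make the $\bar\partial$-integral and trace estimates converge. But the lemma only assumes that $f$, $yf(y)$ and their first and second derivatives vanish at $+\ii$; higher derivatives of $f$ (hence $j^{(4)}$ and beyond) need not decay, and the almost-analytic/trace-norm bookkeeping behind such an expansion costs one extra inverse power of $|\im z|$ per resolvent factor in the composition remainder, which the available regularity does not beat. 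Nor can you repair this by a sup-norm approximation of $j$, since neither $j(P_\hbar)$ nor $j(-\hbar^2\Delta)$ is separately trace class, so $\|j-j_\epsilon\|_{L^\ii}$ controls nothing for this term (also, your remainder symbols are not compactly supported, contrary to what you state). The paper sidesteps exactly this difficulty by differentiating in the coupling constant: the integral representation $\cH_S(\gamma_\hbar,\gamma_{\rm ref})=\int_0^1\tr\big(\{f(\opw^\hbar(|v|^2+ta))-f(\opw^\hbar(|v|^2+a))\}\opw^\hbar(a)\big)\,dt$ keeps the fixed trace-class weight $\opw^\hbar(a)$ in every term, so replacing the (cut-off) $f$ by $g_\epsilon\in C^\ii_0$ with $\|g_\epsilon-f\|_{L^\ii}\le\epsilon$ costs only $\cO(\epsilon\hbar^{-d})$, after which the standard Dimassi--Sj\"ostrand calculus for compactly supported functions gives the limit. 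If you rewrite your first trace by Duhamel, $\tr[j(P_\hbar)-j(-\hbar^2\Delta)]=\int_0^1\tr[f(P_t)\opw^\hbar(a)]\,dt$ with $P_t=-\hbar^2\Delta+t\opw^\hbar(a)$, you land precisely on this representation; as written, your expansion step is not justified under the stated hypotheses.
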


\begin{remark}
 This choice of 'quantization' has several advantages. First, we need that $0\le\gamma_\hbar\le 1$ which is the reason why we put the quantization as an argument of the function $f$. Another way to obtain this condition would be to use the 'Husimi quantization',
 $$\oph^\hbar(m)=\int_{\R^d}\int_{\R^d}m(x,v)|\chi^\hbar_{x,v}\rangle\langle\chi^\hbar_{x,v}|\frac{dx\,dv}{(2\pi\hbar)^d},$$
 but it is then not obvious how to show that this state has a finite relative entropy with respect to $\gamma_{\rm ref}$. Using a reverse Berezin-Lieb inequality
 $$\cH_S(\oph^\hbar(m),\oph^\hbar(m_{\rm ref}))\le\hbar^{-d}\hcl_S(m,m_{\rm ref}),$$
 which holds for $S_0$ and $S_f$ (and thus for $S_b$ again up to a factor), $\oph^\hbar(m)$ has a finite relative entropy with respect to $\oph^\hbar(m_{\rm ref})$ which is the wrong reference state:
 $$\oph^\hbar(m_{\rm ref})=(f(|\cdot|^2)*\hbar^{-d/2}|\hat{\chi}(\cdot/\sqrt{\hbar})|^2)(-i\hbar\nabla)\neq f(-\hbar^2\Delta).$$
 It is not clear that this new reference state is of the type $(S')^{-1}(-\hbar^2\Delta)$ for some admissible entropy functional $S$ (that would anyway depend on $\hbar$), and this is why we do not use this approach. In addition, it is easier to compute the Wigner transform using symbolic calculus with the Weyl quantization. 
\hfill$\diamond$\end{remark}

\begin{proof}
  Let $\eta:=\min_{(x,v)}(|v|^2+a(x,v))$. Let $g:\R\to[0,1]$ a $C^2$ function which coincides with $(S')^{-1}$ on $[\eta',+\ii)$ for some $S'(1^-)<\eta'<\min(0,\eta)$ and such that $g\equiv0$ on $(-\ii,\eta'-1]$, in order to have that $\gamma_\hbar=g(\opw^\hbar(|v|^2+a))$ and $\gamma_{\rm ref}=g(\opw^\hbar(|v|^2))$ for $\hbar>0$ small enough (using that $-\hbar^2\Delta+\opw^\hbar(a)\ge \eta -c\hbar$ by G{\aa}rding's inequality \cite[Thm. 4.32]{Zworski-book}). The fact that $W^\hbar_{\gamma_\hbar}\to m$ as $\hbar\to0$ is a standard fact from semi-classical analysis, that we show in Proposition \ref{coro:wigner} in Appendix \ref{sec:sc}. Since $m$ differs from $m_{\rm ref}$ only on a compact set, it is clear that $\hcl_S(m,m_{\rm ref})<\ii$. To show that the quantum relative entropy converges towards the classical relative entropy, we will use the integral representation for the relative entropy
  $$\cH_S(\gamma_\hbar,\gamma_{\rm ref})=\int_0^1\tr(g(\opw^\hbar(|v|^2+ta))-g(\opw^\hbar(|v|^2+a))\opw^\hbar(a)\,dt,$$
  that we prove in Appendix \ref{app:int-rep}. Now let $\epsilon>0$ and take $g_\epsilon\in C^\ii_0(\R^d)$ such that $\|g_\epsilon-g\|_{L^\ii}\le\epsilon$. Using that $\|\opw^\hbar(a)\|_{\gS^1}\le C\hbar^{-d}$, we infer that for all $\epsilon>0$, 
  \begin{multline*}
    (2\pi\hbar)^d\cH_S(\gamma_\hbar,\gamma_{\rm ref})=\\
    \int_0^1(2\pi\hbar)^d\tr(g_\epsilon(\opw^\hbar(|v|^2+ta))-g_\epsilon(\opw^\hbar(|v|^2+a))\opw^\hbar(a)\,dt+\cO(\epsilon),
  \end{multline*}
  where the error $\cO(\epsilon)$ is uniform as $\hbar\to0$. In the proof of Proposition \ref{coro:wigner}, we prove that for any $g\in C^\ii_0(\R)$ and for any $b,c\in C^\ii_0(\R^d\times\R^d)$ we have 
 \begin{multline*}
    (2\pi\hbar)^d\tr(g(|v|^2+\opw^\hbar(b))\opw^\hbar(c))\\
    =\int_{\R^d}\int_{\R^d}g(|v|^2+b(x,v))c(x,v)\,dx\,dv+\cO(\hbar).
 \end{multline*}
 As a consequence, we have 
 \begin{multline*}
    (2\pi\hbar)^d\tr(g_\epsilon(\opw^\hbar(|v|^2+ta))-g_\epsilon(\opw^\hbar(|v|^2+a))\opw^\hbar(a)\\
    =\int_{\R^d}\int_{\R^d}(g_\epsilon(|v|^2+ta(x,v))-g_\epsilon(|v|^2+a(x,v)))a(x,v)\,dx\,dv+\cO(\hbar),
 \end{multline*}
 where the $\cO(\hbar)$ may depend badly on $\epsilon$ but is uniform in $t\in[0,1]$. Taking $\hbar\to0$ with $\epsilon$ fixed, we deduce that 
 \begin{multline*}
    \lim_{\hbar\to0}(2\pi\hbar)^d\cH_S(\gamma_\hbar,\gamma_{\rm ref})=\\
    \int_0^1\int_{\R^d}\int_{\R^d}(g_\epsilon(|v|^2+ta(x,v))-g_\epsilon(|v|^2+a(x,v)))a(x,v)\,dx\,dv+\cO(\epsilon).
 \end{multline*}
 for all $\epsilon>0$. Now the limit $\epsilon\to0$ may be taken by dominated convergence since $a$ is compactly supported.
\end{proof}

\begin{remark}\label{rk:quantiz-singular}
 In the case $S'(1^-)=0$, the previous construction is not well-defined since $(S')^{-1}$ is only defined on $[0,+\ii)$ and hence the operator $(S')^{-1}(\opw^\hbar(|v|^2+a))$ may not make sense since $\opw^\hbar(|v|^2+a)$ may not be non-negative. However, if we assume that $S$ can be extended from $[0,1]$ to $[0,1+\delta]$ for some $\delta>0$ as an admissible entropy functional, the previous result holds for any $a\in C^\ii_0(\R^d\times\R^d)$ such that $|v|^2+a(x,v)\ge0$ for all $(x,v)$. This assumption on $S$ is satisfied for any $S\in\ccS$ built on $S_0$ and $S_b$, but of course not on $S_f$. However, if $S$ is built on $S_f$, we have $S'(1^-)=-\ii$ and hence $S'(1^-)=0$ never happens in this case.
\end{remark}

The next lemma shows that, at the classical level, putting the perturbation inside the function $f$ is the same as putting it outside.

\begin{lemma}\label{lem:redr}
 Let $S$ and admissible entropy functional such that $S'(1^-)<0$. Let $b\in C^\ii_0(\R^d\times\R^d)$ such that $0<(S')^{-1}(|v|^2)+b(x,v)<1$. Then, there exists $a\in C^\ii_0(\R^d\times\R^d)$ with $|v|^2+a(x,v)>S'(1^-)$ such that $(S')^{-1}(|v|^2)+b(x,v)=(S')^{-1}(|v|^2+a(x,v))$.
\end{lemma}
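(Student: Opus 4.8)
The plan is to write down the (necessarily unique) candidate for $a$ and then verify by direct inspection that it has all the required properties. Set $f:=(S')^{-1}$. Since $S$ is admissible (Definition~\ref{def:entropy}), $-S'$ is operator monotone and non-constant on $(0,1)$, hence strictly increasing and $C^\ii$ there; thus $S'$ is strictly decreasing and $C^\ii$ on $(0,1)$, and together with $\lim_{m\to0^+}S'(m)=+\ii$ and $\lim_{m\to1^-}S'(m)=S'(1^-)$ this exhibits $S':(0,1)\to(S'(1^-),+\ii)$ as a $C^\ii$ bijection. Consequently $f:(S'(1^-),+\ii)\to(0,1)$ is a $C^\ii$ strictly decreasing bijection with $f\circ S'=\mathrm{id}$ on $(0,1)$ and $S'\circ f=\mathrm{id}$ on $(S'(1^-),+\ii)$. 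Because $S'(1^-)<0\le|v|^2$, the value $f(|v|^2)$ is defined for every $v$ and lies in $(0,1)$; the hypothesis $0<f(|v|^2)+b(x,v)<1$ then says precisely that $f(|v|^2)+b(x,v)$ always stays in the open interval $(0,1)$, where $S'$ is smooth.

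I would then simply set
$$a(x,v):=S'\big(f(|v|^2)+b(x,v)\big)-|v|^2.$$
This choice is forced: applying the injective map $S'$ to the desired identity $f(|v|^2)+b=f(|v|^2+a)$ gives $|v|^2+a=S'\big(f(|v|^2)+b\big)$. With this definition $a\in C^\ii(\R^d\times\R^d)$, being a composition of smooth maps on the appropriate open sets. The identity follows from $f\circ S'=\mathrm{id}$ on $(0,1)$, namely $f\big(|v|^2+a(x,v)\big)=f\big(S'(f(|v|^2)+b(x,v))\big)=f(|v|^2)+b(x,v)$. The bound holds because $|v|^2+a(x,v)=S'\big(f(|v|^2)+b(x,v)\big)$ has argument in $(0,1)$, so its value lies in $(S'(1^-),+\ii)$, i.e.\ $|v|^2+a(x,v)>S'(1^-)$. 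Finally, for $(x,v)\notin\supp(b)$ one has $b(x,v)=0$ and hence $a(x,v)=S'\big(f(|v|^2)\big)-|v|^2=0$ by $S'\circ f=\mathrm{id}$; therefore $\supp(a)\subset\supp(b)$ is compact and $a\in C^\ii_0(\R^d\times\R^d)$.

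I do not expect any genuine obstacle in this argument; the only point requiring a little care is the bookkeeping of domains — checking that $f$ is only ever evaluated on $(S'(1^-),+\ii)$ (automatic since $|v|^2\ge0>S'(1^-)$) and $S'$ only ever on $(0,1)$ (which is exactly the standing hypothesis on $b$), so that every composition is genuinely $C^\ii$. This is where the strict inequality $S'(1^-)<0$ enters: it makes $f$ defined on a neighbourhood of $[0,+\ii)$, so no modification of the construction is needed. The borderline case $S'(1^-)=0$, excluded in this lemma, is precisely the one in which one would instead have to extend $S$ past $1$, as discussed in Remark~\ref{rk:quantiz-singular}.
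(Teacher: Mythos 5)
Your proposal is correct and follows exactly the paper's own argument: define $a(x,v)=S'\big((S')^{-1}(|v|^2)+b(x,v)\big)-|v|^2$, note that it is smooth because the argument of $S'$ stays in $(0,1)$, and that it is supported in $\supp b$ since $a=0$ wherever $b=0$. Your additional bookkeeping of domains and the verification of the lower bound $|v|^2+a>S'(1^-)$ are fine and merely spell out what the paper leaves implicit.
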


\begin{proof}
  Since $0<(S')^{-1}(|v|^2)+b(x,v)<1$ for all $(x,v)$ and since $S$ is $C^\ii$ on $(0,1)$, the function 
  $$a:(x,v)\mapsto S'((S')^{-1}(|v|^2)+b(x,v))-|v|^2$$
  is well-defined and $C^\ii$ on $\R^d\times\R^d$. It is furthermore clearly compactly supported with $\supp a=\supp b$.
\end{proof}

\begin{remark}\label{rk:redr-singular}
 If $S'(1^-)=0$, the previous lemma still holds for all $b\in C^\ii_0(\R^d\times\R^d)$ such that $0<(S')^{-1}(|v|^2)+b(x,v)\le1$, provided that $(S')^{-1}$ has first and second derivatives that have limits at $0$ (which is the case if $S$ can be extended from $[0,1]$ to $[0,1+\delta]$ for some $\delta>0$ as an admissible entropy functional). One then obtains $a\in C^\ii_0(\R^d\times\R^d)$ with $|v|^2+a(x,v)\ge0$ for all $(x,v)$.
\end{remark}

The last lemma shows that the class of states $(S')^{-1}(|v|^2+a(x,v))$ includes all reasonable smooth and compactly supported perturbations of $(S')^{-1}(|v|^2)$. These states are actually dense in the sense of the relative entropy, as we next show.

\begin{lemma}\label{lem:approx-smooth}
 Let $S$ an admissible entropy functional such that $\lim_{x\to1}S'(x)=:S'(1^-)<0$. Let $b:\R^d\times\R^d\to\R$ a measurable function. Define $m_{\rm ref}(x,v)=(S')^{-1}(|v|^2)$ and $m=m_{\rm ref}+b$. Assume that $0\le m\le 1$ and $\hcl_S(m,m_{\rm ref})<\ii$. Then, there exists a sequence $(b_n)\subset C^\ii_c(\R^d\times\R^d)$ satisfying $0<m_n:=m_{\rm ref}+b_n<1$ for all $n$, $\hcl_S(m_n,m_{\rm ref})\to\hcl_S(m,m_{\rm ref})$ as $n\to\ii$ and $b_n\to b$ as $n\to\ii$ in $L^2(\R^d\times\R^d)$.
\end{lemma}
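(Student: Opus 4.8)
The plan is to approximate $m$ in three successive stages — truncating the perturbation $b:=m-m_{\rm ref}$ to a compact set, then pushing the values of $m$ strictly inside $(0,1)$ by interpolating with $m_{\rm ref}$, then mollifying — and finally to recombine the stages by a diagonal extraction. Two preliminary observations are used repeatedly. First, by the classical Klein inequality \eqref{eq:klein-classical}, $\hcl_S(m,m_{\rm ref})<\ii$ forces $b\in L^2(\R^d\times\R^d)$, so that the statement $b_n\to b$ in $L^2$ is meaningful. Second, since $S$ is admissible with $S'(1^-)<0$, the inverse $(S')^{-1}$ is smooth and decreasing on $(S'(1^-),+\ii)\supset[0,+\ii)$; hence $m_{\rm ref}(x,v)=(S')^{-1}(|v|^2)$ is smooth, satisfies $0<m_{\rm ref}(x,v)\le c_0:=(S')^{-1}(0)<1$ everywhere, and $m_{\rm ref}(x,v)\to 0$ as $|v|\to\ii$. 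The only structural properties of $\cH_S$ I would need are: for each fixed $y_0\in(0,1)$ the map $y\mapsto\cH_S(y,y_0)=S'(y_0)(y-y_0)-S(y)+S(y_0)$ is convex and continuous on $[0,1]$ (because $S$ is concave and continuous), and $(y,y_0)\mapsto\cH_S(y,y_0)$ is jointly continuous on every square $[\eta,1-\eta]^2\subset(0,1)^2$ (because $S'$ is continuous on $(0,1)$).

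\textbf{Stages 1 and 2.} These use only the convexity of $y\mapsto\cH_S(y,y_0)$ together with dominated convergence. In the first stage I fix smooth cut-offs $\chi_R$ with $0\le\chi_R\le1$, $\chi_R\equiv1$ on $B(0,R)$, $\supp\chi_R\subset B(0,2R)$, and set $m^{(R)}:=m_{\rm ref}+\chi_R b=\chi_R m+(1-\chi_R)m_{\rm ref}\in[0,1]$. Convexity of $\cH_S(\cdot,m_{\rm ref}(x,v))$ gives the pointwise bound $\cH_S(m^{(R)},m_{\rm ref})\le\chi_R\,\cH_S(m,m_{\rm ref})\le\cH_S(m,m_{\rm ref})$, and $m^{(R)}\to m$ pointwise, so dominated convergence yields $\hcl_S(m^{(R)},m_{\rm ref})\to\hcl_S(m,m_{\rm ref})$, while $\chi_R b\to b$ in $L^2$. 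Thus it suffices to approximate each $m^{(R)}$; fixing $R$ and again writing $m$ for $m^{(R)}$ and $B(0,\rho)$ (with $\rho=2R$) for the support of its perturbation, the second stage sets $m_\epsilon:=(1-\epsilon)m+\epsilon m_{\rm ref}=m_{\rm ref}+(1-\epsilon)b$ for $\epsilon\in(0,1)$, whose perturbation is still supported in $B(0,\rho)$. On $B(0,\rho+2)$ one has $\epsilon c_1\le m_\epsilon\le 1-\epsilon(1-c_0)$ with $c_1:=\min_{B(0,\rho+2)}m_{\rm ref}>0$, i.e.\ $m_\epsilon$ takes values in a fixed compact subinterval $[\eta_\epsilon,1-\eta_\epsilon]\subset(0,1)$ there, $\eta_\epsilon:=\min(\epsilon c_1,\epsilon(1-c_0))$; outside $B(0,\rho)$, $m_\epsilon=m_{\rm ref}$. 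Convexity again gives $\cH_S(m_\epsilon,m_{\rm ref})\le(1-\epsilon)\cH_S(m,m_{\rm ref})$, and $m_\epsilon\to m$ pointwise, so $\hcl_S(m_\epsilon,m_{\rm ref})\to\hcl_S(m,m_{\rm ref})$ and $(1-\epsilon)b\to b$ in $L^2$ as $\epsilon\to 0$.

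\textbf{Stage 3 and conclusion.} With a standard mollifier $j_\delta$ ($\supp j_\delta\subset B(0,\delta)$) I would set $b_{\epsilon,\delta}:=j_\delta*(m_\epsilon-m_{\rm ref})\in C^\ii_c$, supported in $B(0,\rho+\delta)$ and converging to $m_\epsilon-m_{\rm ref}$ in $L^2$ as $\delta\to0$, and define $m_{\epsilon,\delta}:=m_{\rm ref}+b_{\epsilon,\delta}=(j_\delta*m_\epsilon)+(m_{\rm ref}-j_\delta*m_{\rm ref})$. Since $m_\epsilon\in[\eta_\epsilon,1-\eta_\epsilon]$ on $B(0,\rho+2)$, also $j_\delta*m_\epsilon\in[\eta_\epsilon,1-\eta_\epsilon]$ on $B(0,\rho+1)$ for $\delta<1$, while $\|m_{\rm ref}-j_\delta*m_{\rm ref}\|_{L^\ii(B(0,\rho+1))}\to 0$ by uniform continuity of the smooth function $m_{\rm ref}$ on $B(0,\rho+2)$. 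Hence for $\delta$ small enough (depending on $\epsilon$), $m_{\epsilon,\delta}\in[\eta_\epsilon/2,\,1-\eta_\epsilon/2]$ on $B(0,\rho+1)$ and $m_{\epsilon,\delta}=m_{\rm ref}\in(0,1)$ outside, so $0<m_{\epsilon,\delta}<1$ everywhere. Moreover $\cH_S(m_{\epsilon,\delta},m_{\rm ref})$ vanishes off $B(0,\rho+1)$ and on that fixed ball is bounded by $\sup_{[\eta_\epsilon/2,1-\eta_\epsilon/2]^2}\cH_S<\ii$ (both $m_{\epsilon,\delta}$ and $m_{\rm ref}$ taking values there) and converges in measure to $\cH_S(m_\epsilon,m_{\rm ref})$ as $\delta\to0$ (from $m_{\epsilon,\delta}\to m_\epsilon$ in $L^2$ and joint continuity of $\cH_S$), so bounded convergence gives $\hcl_S(m_{\epsilon,\delta},m_{\rm ref})\to\hcl_S(m_\epsilon,m_{\rm ref})$. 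Recombining the three stages by a diagonal extraction — first $R\to\ii$, then $\epsilon\to0$, then $\delta\to0$ chosen small enough at each step — produces a sequence $b_n\in C^\ii_c(\R^d\times\R^d)$ with $0<m_{\rm ref}+b_n<1$, $b_n\to b$ in $L^2$ and $\hcl_S(m_{\rm ref}+b_n,m_{\rm ref})\to\hcl_S(m,m_{\rm ref})$. The main obstacle is Stage 3: because the reference profile is not convolved jointly with $m_\epsilon$, Jensen's inequality does not directly bound $\hcl_S(m_{\epsilon,\delta},m_{\rm ref})$, so one must first pinch all the relevant values into a compact subinterval of $(0,1)$ on a fixed compact set (Stage 2) — which is precisely what also makes the strict inequalities $0<m_n<1$ survive the regularization and lets one replace Jensen by a plain bounded-convergence argument on that set.
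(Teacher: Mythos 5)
Your proof is correct, and its overall route is the same as the paper's: reduce to a compactly supported perturbation whose values are pinched strictly inside $(0,1)$, mollify $b$ alone (so that $m_{\rm ref}$ itself is not convolved), check that the strict bounds survive mollification because $m_{\rm ref}-j_\delta * m_{\rm ref}\to 0$ uniformly on the relevant compact set, pass to the limit in the entropy on that fixed compact support by a bounded/dominated convergence argument, and finish with a diagonal extraction. The one substantive difference is the pinching step: the paper truncates sharply, replacing $b$ by $\1_{B(0,R)}\1(1/R\le m\le 1-1/R)\,b$, whereas you interpolate convexly, $m_\epsilon=(1-\epsilon)m+\epsilon m_{\rm ref}$, and use the convexity of $y\mapsto\cH_S(y,y_0)$ (for fixed $y_0=m_{\rm ref}(x,v)\in(0,1)$) to dominate the entropy, with continuity of $\cH_S(\cdot,y_0)$ on all of $[0,1]$ giving the pointwise limits. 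Your variant is in fact slightly more robust: it does not discard the sets $\{m=0\}$ and $\{m=1\}$, on which the paper's truncation forces $b_R=0$, so the paper's claimed convergences $b_R\to b$ in $L^2$ and $\hcl_S(m_{\rm ref}+b_R,m_{\rm ref})\to\hcl_S(m,m_{\rm ref})$ implicitly use that these sets are negligible, while your interpolation handles endpoint values without any such assumption. Everything else — the bounds $0<c_1\le m_{\rm ref}\le (S')^{-1}(0)<1$ on compact sets, the preservation of $0<m_n<1$ under mollification, and the convergence of the entropy on the fixed ball — matches the paper's reasoning.
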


\begin{proof}
 Since $m$ has a finite relative entropy with respect to $m_{\rm ref}$, we deduce that $b\in L^2(\R^d\times\R^d)$ by the (classical) Klein inequality. Now we proceed in two steps: first we localize $b$ and then we smooth it. Let $R>0$. Defining $b_R=\1_{B(0,R)}\1(1/R\le m_{\rm ref}+b\le 1-1/R) b$, we clearly have $0<m_{\rm ref}+b_R<1$ for all $R$ and that $b_R\to b$ in $L^2$ as $R\to\ii$. Furthermore,
 \begin{multline*}
    \hcl(m_{\rm ref}+b_R,m_{\rm ref})=-\int_{B(0,R)}\,dx\,dv\,\1(\frac1R\le m_{\rm ref}+b\le 1-\frac1R)\times\\
    \times\Big(S(m_{\rm ref}(x,v)+b(x,v))-S(m_{\rm ref}(x,v))-|v|^2b(x,v)\Big)
 \end{multline*}
 and thus $\hcl(m_{\rm ref}+b_R,m_{\rm ref})\to\hcl(m_{\rm ref}+b,m_{\rm ref})$ as $R\to\ii$. Replacing $b$ by $b_R$, we may thus assume that $b$ is compactly supported and that $0<m_{\rm ref}+b<1$ (with even $0<\delta_K\le m_{\rm ref}+b\le 1-\delta_K$ on any compact set $K$). Next, we approach $b$ by $b_\epsilon:=b*\eta_\epsilon$, where $\eta_\epsilon:=\epsilon^{-2d}\eta(\cdot/\epsilon)$ with $\epsilon\in(0,1)$ and $\eta\in C^\ii_0(\R^d\times\R^d)$, $\supp\eta\subset B(0,1)$, $\int_{\R^{2d}}\eta=1$, $\eta\ge0$. It also satisfies $0<m_{\rm ref}(x,v)+b_\epsilon(x,v)<1$: indeed, we have $\supp b_\epsilon\subset\supp b + B(0,1)$ hence for all $X\in\supp b_\epsilon$ and for all $Y\in\supp\eta_\epsilon$ we have $X-Y\in K:=\supp b+B(0,2)$ a compact set, hence 
 $$0<\delta_K\eta_\epsilon(Y)\le(m_{\rm ref}(X-Y)+b(X-Y))\eta_\epsilon(Y)\le(1-\delta_K)\eta_\epsilon(Y).$$
 Integrating over $Y$ shows that for all $X\in\supp b_\epsilon$, we have 
 $$0<\delta_K\le m_{\rm ref}*\eta_\epsilon+b_\epsilon\le 1-\delta_K,$$
 and hence
 $$\delta_K+m_{\rm ref}-m_{\rm ref}*\eta_\epsilon\le m_{\rm ref}+b_\epsilon \le 1-\delta_K+m_{\rm ref}-m_{\rm ref}*\eta_\epsilon.$$
 Since $m_{\rm ref}-m_{\rm ref}*\eta_\epsilon\to0$ as $\epsilon\to0$ in $L^\ii$, we deduce that $0<m_{\rm ref}+b_\epsilon<1$ for $\epsilon>0$ small enough, on $\supp b_\epsilon$. Away from $\supp b_\epsilon$, we have also $0<m_{\rm ref}+b_\epsilon=m_{\rm ref}<1$. Furthermore, we clearly have $b_\epsilon\to b$ in $L^2$ as $\epsilon\to0$, as well as almost everywhere (perhaps up to a subsequence). As a consequence,
 \begin{multline*}
    S(m_{\rm ref}(x,v)+b_\epsilon(x,v))-S(m_{\rm ref}(x,v))-|v|^2b_\epsilon(x,v)\\
    \to S(m_{\rm ref}(x,v)+b(x,v))-S(m_{\rm ref}(x,v))-|v|^2b(x,v)
 \end{multline*}
 as $\epsilon\to0$, almost everywhere in $(x,v)$ by continuity of $S$. Now the function 
 $$(x,v)\mapsto S(m_{\rm ref}(x,v)+b_\epsilon(x,v))-S(m_{\rm ref}(x,v))-|v|^2b_\epsilon(x,v)$$
 is supported in $\supp b+B(0,1)$ and is uniformly bounded in $L^\ii_{\text{loc}}(\R^d\times\R^d)$ as $\epsilon\to0$ since $S$ is continuous on $[0,1]$ and thus bounded, and since $b$ is also bounded due to the constraint $m_{\rm ref}+b\in(0,1)$. By dominated convergence, we deduce that $\hcl(m_{\rm ref}+b_\epsilon,m_{\rm ref})\to\hcl(m_{\rm ref}+b,m_{\rm ref})$ as $\epsilon\to0$. This concludes the proof.
\end{proof}

\begin{remark}\label{rk:approx-smooth-singular}
 In the case $S'(1^-)=0$, we obtain the same result (replacing $0<m_{\rm ref}+b_n<1$ by $0<m_{\rm ref}+b_n\le1$) by defining 
 $$b_R=\1_{B(0,R)}\1(|v|\ge1/R)\1(1/R\le m_{\rm ref}+b\le 1-1/R) b$$
 which now satisfies that $0<\delta_K\le m_{\rm ref}+b_R\le 1-\delta_K$ on any compact subset $K$ of $\R^d\times\R^d\setminus(\R^d\times\{0\})$. Since $\supp b_\epsilon\subset\supp b+B(0,\epsilon)\subset\R^d\times\R^d\setminus(\R^d\times\{0\})$ for $\epsilon$ small enough, we deduce again that $0<\delta\le m_{\rm ref}+b_\epsilon\le1-\delta$ on $\supp b_\epsilon$, while away from $\supp b_\epsilon$ we have $0<m_{\rm ref}+b_\epsilon=m_{\rm ref}\le1$.
\end{remark}

We will now ask the question of the well-posedness of the Vlasov equation: given any $b_0$ with $0\le m_{\rm ref}+b_0\le 1$ and $\hcl(m_{\rm ref}+b_0,m_{\rm ref})<+\ii$, does there exist a solution $W$ of the nonlinear Vlasov equation with $\hcl(W(t),m_{\rm ref})\le C$ for all $t$ and such that $W_0=m_{\rm ref}+b_0$? We answer this question by going to the quantum evolution. We begin with the case of smooth perturbations of $m_{\rm ref}$.

\begin{proposition}\label{prop:gwp-smooth}
 Assume that $S\in\ccS$ and define $m_{\rm ref}(x,v)=(S')^{-1}(|v|^2)$. Let $b_0\in C^\ii_0(\R^d\times\R^d)$ such that $0<m_{\rm ref}+b_0\le1$. Then, there exists a unique $b\in L^\ii_t(\R, L^2_{x,v})\cap C^0_t(\R,\cD'_{x,v}(\R^d\times\R^d))$ such that $0\le m_{\rm ref}+b(t)\le 1$ and $\hcl_S(m_{\rm ref}+b(t),m_{\rm ref})\le C(\hcl_S(m_{\rm ref}+b_0,m_{\rm ref}))$ for all $t\in\R$, with $W(t):=m_{\rm ref}+b(t)$ solving the Cauchy problem for the Vlasov equation \eqref{eq:vlasov} with initial condition $W(0)=m_{\rm ref}+b_0$.
\end{proposition}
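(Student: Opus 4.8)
The plan is to obtain $W$ as a semi-classical limit of solutions to the Hartree equation, invoking Theorem~\ref{thm:limit-sc} and Corollary~\ref{prop:sc-bounds}, and to deduce uniqueness from Theorem~\ref{thm:uniqueness}. The only genuine work is to manufacture a family of quantum initial data $(\gamma_{0,\hbar})_{\hbar>0}$ whose Wigner transform converges to $m_{\rm ref}+b_0$ and whose rescaled relative entropy $\hbar^d\cH_S(\gamma_{0,\hbar},\gamma_{\rm ref})$ stays bounded as $\hbar\to0$; everything else is a direct appeal to results already established.

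First I would put the perturbation inside the profile function. Set $f:=(S')^{-1}$. By Lemma~\ref{lem:redr} (using Remark~\ref{rk:redr-singular} when $S'(1^-)=0$, and a routine smooth approximation with strict inequality in the exceptional event that $m_{\rm ref}+b_0$ attains the value $1$), there is $a\in C^\ii_0(\R^d\times\R^d)$ with $|v|^2+a(x,v)>S'(1^-)$ and $(S')^{-1}(|v|^2+a)=m_{\rm ref}+b_0$. I then take as quantum initial data
$$\gamma_{\rm ref}=f(-\hbar^2\Delta),\qquad \gamma_{0,\hbar}:=f\big(-\hbar^2\Delta+\opw^\hbar(a)\big),$$
which is well-defined for small $\hbar$ by G{\aa}rding's inequality. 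For every $S\in\ccS$ the functions $f$ and $y\mapsto yf(y)$ decay exponentially at $+\ii$ together with their first two derivatives, so Lemma~\ref{lem:quantiz} applies and gives $W^\hbar_{\gamma_{0,\hbar}}\to m_{\rm ref}+b_0$ in $\cD'_{x,v}$ as $\hbar\to0$, together with $\hbar^d\cH_S(\gamma_{0,\hbar},\gamma_{\rm ref})\to\hcl_S(m_{\rm ref}+b_0,m_{\rm ref})=:K<+\ii$. In particular $\liminf_{\hbar\to0}\hbar^d\cH_S(\gamma_{0,\hbar},\gamma_{\rm ref})<+\ii$, while $\hbar^d\cH_S(\gamma_{0,\hbar},\gamma_{\rm ref})\le K+1$ and $\gamma_{0,\hbar}\in\cK_S^\hbar$ for all $\hbar$ small enough.

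Next I would feed this family into Theorem~\ref{thm:limit-sc}. Writing $\gamma_\hbar$ for the global Hartree solution with datum $\gamma_{0,\hbar}$, the theorem produces $W\in(m_{\rm ref}+L^\ii_t(\R,L^2_{x,v}))\cap C^0_t(\R,\cD'_{x,v})$ with $0\le W(t)\le1$ for all $t$, solving the Vlasov Cauchy problem with $W(0)=m_{\rm ref}+b_0$, and such that $\hcl_S(W(t),m_{\rm ref})\le C_S\liminf_{\hbar\to0}\hbar^d\cH_S(\gamma_\hbar(t),\gamma_{\rm ref})$ for all $t$. Applying Corollary~\ref{prop:sc-bounds} to the $\gamma_\hbar$ with the initial bound $\hbar^d\cH_S(\gamma_{0,\hbar},\gamma_{\rm ref})\le K+1$, the right-hand side is bounded by $C_S(1+\|w\|_{L^1\cap L^{(d+2)/4}})\big((K+1)+(K+1)^{2d/(d+2)}\big)$, uniformly in $t$ and $\hbar$. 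Hence $\hcl_S(W(t),m_{\rm ref})\le C(\hcl_S(m_{\rm ref}+b_0,m_{\rm ref}))$ for all $t\in\R$, and $b(t):=W(t)-m_{\rm ref}$ lies in $L^\ii_t(\R,L^2_{x,v})\cap C^0_t(\R,\cD'_{x,v})$ with all the announced properties.

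Finally, uniqueness is exactly Theorem~\ref{thm:uniqueness}: its hypotheses on $w$ (namely $D^2w\in L^1\cap L^{\max(2,1+d/2)}$ and $\nabla w\in L^1\cap L^{\max(d+2,4)/3}$) follow from $w\in W^{2,1}\cap W^{2,\ii}$ by interpolation, and our $W$ lies in the relevant class since $0\le W(t)\le1$ with $W\in C^0_t\cD'_{x,v}$ forces $W\in C^0_t(L^\ii_{x,v},*)$, while $\rho_W-\rho_{m_{\rm ref}}\in L^\ii_t(L^2_x+L^{\min(1+2/d,2)}_x)$ by Theorem~\ref{thm:limit-sc}~(4) (equivalently, by the classical Lieb--Thirring inequality~\eqref{eq:LT-classical} applied to the uniform bound on $\hcl_S(W(t),m_{\rm ref})$). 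Thus $b=W-m_{\rm ref}$ is the unique such solution. I expect no real difficulty in this argument: the only points requiring care are bookkeeping ones --- verifying that the limit $W$ falls inside the precise functional class of Theorem~\ref{thm:uniqueness}, and that the relative-entropy bound is uniform in time --- and both are handled by chaining Theorem~\ref{thm:limit-sc} with the conservation-of-free-energy estimate of Corollary~\ref{prop:sc-bounds}.
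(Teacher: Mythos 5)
Your argument is correct and follows essentially the same route as the paper's proof: rewrite $m_{\rm ref}+b_0=(S')^{-1}(|v|^2+a)$ via Lemma~\ref{lem:redr}, quantize with Lemma~\ref{lem:quantiz}, run the Hartree flow and pass to the semi-classical limit via Theorem~\ref{thm:gwp-hartree-h}, Theorem~\ref{thm:limit-sc} and the uniform bound~\eqref{eq:control-entropy-time}, then invoke the uniqueness theorem. Your extra bookkeeping (verifying the hypotheses on $w$ and the functional class for uniqueness) is consistent with what the paper leaves implicit, so no changes are needed.
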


\begin{proof}
 By Lemma \ref{lem:redr} (and Remark \ref{rk:redr-singular} in the case $S'(1^-)=0$), let $a_0\in C^\ii_0(\R^d\times\R^d)$ such that 
 $$m_{\rm ref}+b_0=(S')^{-1}(|v|^2+a_0),$$
 and by Lemma \ref{lem:quantiz} (and Remark \ref{rk:quantiz-singular}) in the case $S'(1^-)=0$) let
 $$\gamma_\hbar=(S')^{-1}(-\hbar^2\Delta+\opw^\hbar(a_0))$$
 such that 
 $$\lim_{\hbar\to0}\hbar^d\cH_S(\gamma_\hbar,\gamma_{\rm ref})=\hcl_S(m_{\rm ref}+b_0,m_{\rm ref}).$$
 Applying Theorem \ref{thm:gwp-hartree-h}, Theorem \ref{thm:limit-sc}, and the estimate \eqref{eq:control-entropy-time}, there exists a solution to the Hartree equation with initial data $\gamma_\hbar$ with $\hbar^d\cH_S(\gamma_\hbar(t),\gamma_{\rm ref})\le C( \hbar^d\cH_S(\gamma_\hbar,\gamma_{\rm ref}))$ for all $t$ and $\hbar\in(0,1)$, which converges as $\hbar\to0$ to the desired solution to the Vlasov equation. We have already seen that $W^\hbar_{\gamma_\hbar}\to m_{\rm ref}+b_0$ as $\hbar\to0$ in the sense of distributions, so that $W(t)$ has the correct initial condition.
\end{proof}

We now deduce Theorem \ref{thm:wp-vlasov} by approaching any energy initial condition by smooth data. 

\begin{proof}[Proof of Theorem \ref{thm:wp-vlasov}]
 By Lemma \ref{lem:approx-smooth} (and Remark \ref{rk:approx-smooth-singular} in the case $S'(1^-)=0$), let $(b_{n,0})\subset C^\ii_0(\R^d\times\R^d)$ a sequence approaching $W_0-m_{\rm ref}$ in the relative entropy sense. Applying Proposition \ref{prop:gwp-smooth} to $b_{n,0}$, we have a sequence $W_n(t)=m_{\rm ref}+b_n(t)$ of solutions to the Vlasov equations such that 
 $$\sup_{n,t}\hcl_S(W_n(t),m_{\rm ref})\le C(\hcl_S(m_{\rm ref}+b_0,m_{\rm ref})).$$
 Since $(b_n)$ is bounded in $L^\ii_t L^2_{x,v}=(L^1_t L^2_{x,v})'$ (by the classical Klein inequality \eqref{eq:klein-classical}), we may extract a sequence such that $b_n\to b$ weakly in $L^\ii_t L^2_{x,v}$ (and similar weak limits for the density $\rho_{b_n}$ using the classical Lieb-Thirring inequality \eqref{eq:LT-classical}). The fact that $b$ is a solution to the Vlasov equation follows from the same arguments as in the proof of Theorem \ref{thm:limit-sc} (the linear terms are easy, and one can take the limit in the nonlinear terms thanks to the high frequency estimates of the relative entropy). We can also show that $b\in C^0_t\cD'_{x,v}$ again by the Ascoli-Arzela theorem and the uniform entropy bounds. In particular, $b_n(0)\to b(0)$ as $n\to\ii$ in $\cD'_{x,v}$, and since $b_n(0)=b_{n,0}\to b_0$ in $L^2_{x,v}$, we deduce that $b(0)=b_0$.
\end{proof}

\appendix

\section{Proof of Lemma \ref{lem:ode}}\label{app:ode}

 \begin{proof}[Proof of Lemma \ref{lem:ode}]

  Fixing $s$ and $x$, the existence of $\Phi^t_s(x)$ follows from a fixed point theorem in $C^0([s-T,s+T])$ for $T=T(F)>0$ small enough (independent of $s$ and $x$), using the fact that $F$ is globally Lipschitz. For fixed $x$ and $s$, such a $\Phi$ is unique (in $C^0([s-T',s+T''])$ for any $T',T''>0$) by a standard Gronwall argument, using again the fact that $F$ is Lipschitz. Using now that for any $s+T/2<t<s+3T/2$, we have 
  \begin{align*}
    \Phi_{s+T/2}^{t}&(\Phi_{s}^{s+T/2}(x)) \\
    &= \Phi_{s}^{s+T/2}(x)+\int_{s+T/2}^{t} F(s',\Phi_{s+T/2}^{s'}(\Phi_{s}^{s+T/2}(x))\,ds' \\
    &= x + \int_s^{s+T/2}F(s',\Phi_s^{s'}(x))\,ds'+\int_{s+T/2}^{t} F(s',\Phi_{s+T/2}^{s'}(\Phi_{s}^{s+T/2}(x))\,ds', 
  \end{align*}
  we deduce that the function 
  $$t\in[s,s+3T/2]\mapsto\begin{cases}
                          \Phi_s^t(x) & \text{if}\ s\le t\le s+T/2,\\
                          \Phi_{s+T/2}^t\circ\Phi_s^{s+T/2} & \text{if}\ s+T/2\le t\le s+3T/2,
                         \end{cases}
  $$
  is a continuous solution to the integral equation at $s$ and $x$ fixed. By uniqueness, this shows that $\Phi$ can be extended to $[s-T,s+3T/2]$ and iterating this process it can be extended to a continuous function on $\R$. The same argument then shows the semi-group property
  $$\Phi_{t_1}^{t_3}=\Phi_{t_2}^{t_3}\circ\Phi_{t_1}^{t_2}.$$
  By a Gronwall argument, we also get the uniform bound
  $$|\Phi_s^t(x)|\le e^{C|t-s|}(|x|+C|t-s|).$$  
  The semi-group property implies that $\Phi_s^t$ is a bijection of $\R^N$ for any $s,t$ (with inverse $\Phi_t^s$). Now Gronwall again implies that 
  $$|\Phi_s^t(x)-\Phi_s^t(y)|\le e^{C|t-s|}|x-y|,$$
  and using the semi-group property we deduce the lower bound
  $$|\Phi_s^t(x)-\Phi_s^t(y)|\ge e^{-C|t-s|}|x-y|.$$
  This shows that $\Phi_s^t$ is bi-Lipschitz for any $s,t$. Using the integral equation, this proves that $\Phi$ is continuous as a function of $(s,t,x)$. To show that $\Phi^t_s$ is $C^1$, let $x\in\R^N$, $1\le j\le N$ and $\alpha\neq0$. Defining
  $$f_\alpha(t):=\frac{\Phi_s^t(x+\alpha e_j)-\Phi_s^t(x)}{\alpha},$$
  and using the continuity of $\Phi_s^t$, we find that
  $$f_\alpha(t)=e_j+\int_s^t D_{\Phi_s^{s'}(x)}F(s',f_\alpha(s'))\,ds'+o_{\alpha\to0}(1),$$
  hence for any $\alpha,\alpha'\neq0$ we have
  $$f_\alpha(t)-f_{\alpha'}(t)=\int_s^t D_{\Phi_s^{s'}(x)}F(f_\alpha(s')-f_{\alpha'}(s'))\,ds'+o_{\alpha\to0}(1).$$
  Using again Gronwall, we deduce that 
  $$|f_\alpha(t)-f_{\alpha'}(t)|\le e^{C|t-s|}o_{\alpha\to0}(1),$$
  showing that $f_\alpha(t)$ admits a limit as $\alpha\to0$. It satisfies the equation
  $$\partial_{x_j}\Phi_s^t(x)=e_j+\int_s^t D_{\Phi_s^{s'}(x)}F(s',\partial_{x_j}\Phi_s^{s'}(x))\,ds'.$$
  For any $h\in\R^N$, writing the equation satisfied by 
  $$\partial_{x_j}\Phi_s^t(x+h)-\partial_{x_j}\Phi_s^t(x)$$
  and using Gronwall again, we find that 
  $$\partial_{x_j}\Phi_s^t(x+h)-\partial_{x_j}\Phi_s^t(x)\to0$$
  as $h\to0$, showing that $\partial_{x_j}\Phi_s^t$ is continuous. Since it holds for any $j$, this shows that $\Phi_s^t$ is $C^1$. By the semi-group property, $\Phi_s^t$ is a $C^1$-diffeomorphism of $\R^N$ for each $s,t$. We have the equation for all $s,t,x$
  $$D_x\Phi_s^t = \text{id}+\int_s^t D_{\Phi_s^{s'}(x)}F(s')\circ D_x\Phi_s^{s'}\,ds'.$$
  Taking the determinant of the right side which is a Lipschitz function of $t$, and taking its distributional derivative, one finds that 
  $$\frac{d}{dt}\det D_x\Phi_s^t=\tr D_{\Phi_s^t(x)} F(s) \det D_x\Phi_s^t=0,$$
  in the sense of distributions over $t$, showing that 
  $$\det D_x\Phi_s^t =\det D_x\Phi_s^s = 1.$$
  By the previous estimates, we have that 
  $$(t,x)\mapsto F(t,\Phi_0^t(x))\in L^\ii_{{\rm loc}}(\R\times\R^N).$$
  More precisely, we even have
  $$|F(t,\Phi_0^t(x))|\le Ce^{C|t|}(1+|x|+C|t|).$$
  Hence, the identity
  $$\frac{d}{dt}\Phi_0^t(x)=F(t,\Phi_0^t(x))$$
  holds in $L^\ii_{{\rm loc}}(\R\times\R^N)\subset\cD'_{t,x}$. As a consequence, for any $\phi\in\cD_x$, we have
  $$\frac{d}{dt}\phi(\Phi_0^t(x))=\nabla\phi(\Phi_0^t(x))\cdot F(t,\Phi_0^t(x))$$
  also in $L^\ii_{{\rm loc}}(\R\times\R^N)$, implying in particular that for any $t,s,x$,
  $$\phi(\Phi_0^t(x))-\phi(\Phi_0^s(x))=\int_s^t \nabla\phi(\Phi_0^{s'}(x))\cdot F(s',\Phi_0^{s'}(x))\,ds'.$$
  From this identity, we deduce the bound
  $$\int_{\R^N}|\phi(\Phi_0^t(x))-\phi(\Phi_0^s(x))|\,dx\le \int_s^t \int_{\R^N}|\nabla\phi(x)||F(s',x)|\,dx\,ds',$$
  showing that 
  $$\phi(\Phi_0^s(x))\to \phi(\Phi_0^t(x))$$
  as $s\to t$ in $L^1_x$. By density of $\cD_x$ in $L^1_x$, this convergence also holds for $\phi\in L^1_x$, since
  $$\int_{\R^N}|\phi(\Phi_0^s(x))|\,dx=\int_{\R^N}|\phi(y)|\,dy$$
  for all $\phi\in L^1_x$. Now let $u_0\in L^\ii(\R^N)$. By the previous considerations, we deduce that $u\in C^0_t(L^\ii,*)$, which means that 
  $$\langle u(t),\phi\rangle\in C^0_t$$
  for any $\phi\in L^1_x$. By the same identity as before, we also infer that for any $\phi\in\cD_{t,x}$, for any $t\in\R$, and any $h\neq0$,
  $$\left\langle\frac{u(t+h)-u(t)}{h},\phi(t)\right\rangle_x=\frac{1}{h}\int_t^{t+h}\int_{\R^N}u(s,x)\nabla\phi(t,x)\cdot F(s,x)\,dx\,ds.$$
  Since
  $$s\mapsto\int_{\R^N}u(s,x)\nabla\phi(t,x)\cdot F(s,x)\,dx\in L^\ii_s,$$
  we deduce by the Lebesgue differentiation theorem that 
  $$\frac{1}{h}\int_t^{t+h}\int_{\R^N}u(s,x)\nabla\phi(t,x)\cdot F(s,x)\,dx\,ds\to\int_{\R^N}u(t,x)\nabla\phi(t,x)\cdot F(t,x)\,dx$$
  as $h\to0$ in $L^1_t(\R)$. Since 
  $$\langle\frac{u(t+h)-u(t)}{h},\phi(t)\rangle_{t,x}\to\langle \partial_t u,\phi\rangle$$
  as $h\to0$, this shows that $u$ satisfies the continuity equation. It remains to show the opposite direction. Thus, let $u\in C^0_t(L^\ii_x,*)$ any solution to the continuity equation such that $u(0)=u_0$. It is enough to show that $u_t\circ\Phi_0^t$ does not depend on $t$. First, notice that $u_t\circ\Phi_0^t\in C^0_t(L^\ii_x,*)$ and thus makes sense as a distribution: indeed, 
  $$\langle u_t\circ\Phi_0^t,\phi\rangle = \langle u_t,\phi\circ\Phi_t^0\rangle,$$
  and we have already shown that $(t,x)\mapsto \phi(\Phi_t^0(x))\in C^0_t L^1_x$ if $\phi\in L^1_x$. Now let $\phi\in\cD_x$, $t\in\R$, and $h\neq0$. We have 
  $$\langle u_t\circ\Phi_0^t,\phi\rangle_x=\langle u_t,\phi_t\rangle,$$
  where $\phi_t:=\phi\circ\Phi_t^0$. As a consequence,
  $$\frac1h(\langle u_{t+h},\phi_{t+h}\rangle-\langle u_t,\phi_t\rangle)=\langle\frac{u(t+h)-u(t)}{h},\phi_t\rangle+\langle u(t+h),\frac{\phi_{t+h}-\phi_t}{h}\rangle.$$
  The identity
  $$\langle\frac{u(t+h)-u(t)}{h},\phi\rangle_x=\frac{1}{h}\int_t^{t+h}\int_{\R^N}u(s,x)\nabla\phi(x)\cdot F(s,x)\,dx\,ds$$
  has been shown for any $\phi\in\cD_{x}$, which is a priori not the case here. However, by density one can extend it to any $\phi\in L^1_x$ such that $(1+|x|)\nabla\phi(x)\in L^1_x$, which is the case for $\phi=\phi_t$ since 
  $$\nabla\phi_t(x)=-F(t,\Phi_t^0(x))\cdot\nabla\phi(\Phi_t^0(x))$$
  and thus 
  $$\int_{\R^N}(1+|x|)|\nabla\phi_t(x)|\,dx\le \int_{\R^N}(1+|\Phi_t^0(x)|)|F(t,x)||\nabla\phi(x)|\,dx$$
  which is finite since $|\Phi_t^0(x)|\le Ce^{C|t|}(|x|+C|t|)$. By Lebesgue's differentiation theorem, we deduce that 
  $$\langle\frac{u(t+h)-u(t)}{h},\phi_t\rangle\to\int_{\R^N}u(t,x)\nabla\phi_t(x)\cdot F(t,x)\,dx$$
  as $h\to0$ in $L^1_{{\rm loc}}(\R_t)$. On the other hand, the continuity equation also holds for $\phi_t$: we also have that
  $$\partial_t\phi_t+F_t\cdot\nabla\phi_t=0$$
  in $\cD'_{t,x}$. Since the right side belongs to $L^\ii_t L^1_x$ as we have already seen, we deduce that
  $$\phi_{t+h}(x)-\phi_t(x)=-\int_t^{t+h} \nabla\phi_s(x)\cdot F(s,x)\,ds,$$
  as an identity in $C^0_tL^1_x$. Again by Lebesgue's differentiation theorem, we deduce that
  $$\frac{\phi_{t+h}(x)-\phi_t(x)}{h}\to-\nabla\phi_t(x)\cdot F(t,x)$$
  as $h\to0$, strongly in $L^1_{{\rm loc},t}L^1_x$. Hence, by continuity of $u$ in $C^0_t(L^\ii_x,*)$, we have
  $$\langle u(t+h),\frac{\phi_{t+h}-\phi_t}{h}\rangle\to-\int_{\R^N}u(t,x)\nabla\phi_t(x)\cdot F(t,x)\,dx,$$
  as $h\to0$ in $L^1_{{\rm loc},t}$, showing that 
  $$\frac{d}{dt}u_t\circ\Phi_0^t=0$$ 
  in $\cD'_{t,x}$. This shows that 
  $$u_t\circ\Phi_0^t=u_0\circ\Phi_0^0=u_0,$$
  which is the desired identity.
\end{proof}

\section{Infinite-mass optimal transportation}
\label{app:opt-transp}

Recall that a \emph{Radon measure} on $\R^N$ is a non-negative Borel measure on $\R^N$ which is regular and locally finite. 
A sequence $(\mu_n)$ of Radon measures on $\R^N$ converges weakly to another Radon measure $\mu$ on $\R^N$ if for all $f\in C_c(\R^N)$ we have
$$\int_{\R^N}f(x)\,d\mu_n(x)\to\int_{\R^N}f(x)\,d\mu(x)$$
as $n\to\ii$. When this holds, we write $\mu_n\rightharpoonup\mu$. We also recall (see, e.g.,~\cite[Thm. 1.41]{EvaGar-15}) that if a sequence $(\mu_n)$ of Radon measures is such that the sequence
 $$\left(\int_{\R^N}f(x)\,d\mu_n(x)\right)_{n\in\N}$$
 is bounded for every fixed $f\in C_c(\R^N)$, then there exists a Radon measure $\mu$ and a subsequence of $(\mu_n)$ such that $\mu_n\rightharpoonup\mu$
 as $n\to\ii$ along this subsequence. For any Radon measure $\mu$ on $\R^d\times\R^d$, its \emph{marginals} $\mu^{(1)}$ and $\mu^{(2)}$ are non-negative Borel measures on $\R^d$ (not necessarily Radon), such that for all Borel sets $A\subset\R^d$ we have $\mu^{(1)}(A)=\mu(A\times\R^d)$ and $\mu^{(2)}(A)=\mu(\R^d\times A)$. With these definitions at hand, we are ready to state our first result about optimal transport.

\begin{lemma}\label{lem:minimization}
 Let $\rho_1,\rho_2\in L^1_{{\rm loc},+}(\R^d)$. Define $\Gamma(\rho_1,\rho_2)$ as the set of all  Radon measures on $\R^d\times\R^d$ having marginals $\rho_1$ and $\rho_2$. Define the optimal transport problem
 $$C(\rho_1,\rho_2)=\inf_{\Pi\in\Gamma(\rho_1,\rho_2)}\int_{\R^d\times\R^d}|x-y|^2\,d\Pi(x,y).$$
 If $C(\rho_1,\rho_2)<+\ii$, then the infimum defining $C(\rho_1,\rho_2)$ is attained.
\end{lemma}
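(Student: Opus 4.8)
The plan is to use the direct method of the calculus of variations, adapted to Radon measures of infinite mass. First I would check that $\Gamma(\rho_1,\rho_2)$ is nonempty whenever $C(\rho_1,\rho_2)<+\infty$: by assumption some $\Pi$ with finite cost exists, so there is nothing to do here; alternatively one could note that if both $\rho_i$ are integrable the product measure $\rho_1\otimes\rho_2$ (suitably normalized) works, but in our setting finiteness of $C$ is exactly the hypothesis that rules out the pathological cases. Next, take a minimizing sequence $(\Pi_n)\subset\Gamma(\rho_1,\rho_2)$, so that $\int|x-y|^2\,d\Pi_n\to C(\rho_1,\rho_2)$. The key compactness input is that for every $f\in C_c(\R^d\times\R^d)$ the sequence $\int f\,d\Pi_n$ is bounded: indeed on $\supp f$ one has $d\Pi_n\le$ (the restriction of a fixed measure), which follows because the marginals are the fixed measures $\rho_1,\rho_2\in L^1_{{\rm loc}}$ — concretely, $\Pi_n(K_1\times K_2)\le\min(\rho_1(K_1),\rho_2(K_2))<\infty$ for compact $K_1,K_2$. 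Hence by the weak-$*$ compactness of Radon measures recalled just before the statement, up to a subsequence $\Pi_n\rightharpoonup\Pi$ for some Radon measure $\Pi$ on $\R^d\times\R^d$.

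The second step is to verify $\Pi\in\Gamma(\rho_1,\rho_2)$, i.e. that the marginals of $\Pi$ are $\rho_1$ and $\rho_2$. This is the first genuinely delicate point, since weak convergence of Radon measures does not automatically pass to marginals when the total mass is infinite (mass can escape to infinity in the transverse variable). I would test against $\phi(x)\psi_R(y)$ where $\phi\in C_c(\R^d)$ and $\psi_R\in C_c(\R^d)$ with $\psi_R\equiv1$ on $B(0,R)$, $0\le\psi_R\le1$. Weak convergence gives $\int\phi(x)\psi_R(y)\,d\Pi = \lim_n\int\phi(x)\psi_R(y)\,d\Pi_n$, and the error between $\int\phi(x)\psi_R(y)\,d\Pi_n$ and $\int\phi(x)\,d\rho_1(x)\cdot(\text{const})$ — more precisely between it and $\int\phi\,d\rho_1$ — is controlled by $\int_{\supp\phi}\int_{|y|\ge R}d\Pi_n(x,y)$. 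Here the uniform finite-cost bound is essential: $\int_{|y|\ge R,\,x\in\supp\phi}d\Pi_n\le \tfrac{C}{R^2}\int|x-y|^2\,d\Pi_n + (\text{boundary term involving }\rho_1(\supp\phi))$ for $R$ large, so the tail is small uniformly in $n$. Letting $R\to\infty$ after $n\to\infty$ and using monotone convergence identifies the first marginal of $\Pi$ as $\rho_1$; symmetrically for $\rho_2$.

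The final step is lower semicontinuity of the cost: $\int|x-y|^2\,d\Pi\le\liminf_n\int|x-y|^2\,d\Pi_n = C(\rho_1,\rho_2)$. Since $|x-y|^2\ge0$ is continuous, for any $\psi_R$ as above we have $\int|x-y|^2\psi_R\,d\Pi=\lim_n\int|x-y|^2\psi_R\,d\Pi_n\le\liminf_n\int|x-y|^2\,d\Pi_n$, and letting $R\to\infty$ with monotone convergence gives the claim. Combined with $\Pi\in\Gamma(\rho_1,\rho_2)$, which forces $\int|x-y|^2\,d\Pi\ge C(\rho_1,\rho_2)$, we conclude $\Pi$ is a minimizer. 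I expect the main obstacle to be Step 2 (persistence of the marginals under the weak limit): this is precisely where the infinite mass creates a difficulty absent from the classical Brenier--McCann setting, and the hypothesis $C(\rho_1,\rho_2)<+\infty$ is what saves the day, by providing the uniform tightness in the off-diagonal variable. Once tightness is in hand, Steps 1 and 3 are routine.
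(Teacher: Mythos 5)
Your proposal is correct and follows essentially the same strategy as the paper's proof: take a minimizing sequence, extract a weakly convergent subsequence using that the marginal constraints bound $\int f\,d\Pi_n$ for compactly supported $f$, prove lower semicontinuity of the cost via monotone convergence on cutoffs $\chi_R\uparrow 1$, and — the only nonroutine step — verify that the marginals persist in the limit by using the uniform cost bound to control the tail $\int_{\supp\phi\times\{|y|\ge R\}}d\Pi_n\le (R-R')^{-2}\int|x-y|^2\,d\Pi_n$ uniformly in $n$ (here $R'$ is chosen with $\supp\phi\subset B(0,R')$; note that no additional ``boundary term'' is needed in this bound, contrary to your parenthetical). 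You correctly identified the tightness of the off-diagonal variable, supplied by finiteness of the cost, as the crux that replaces the usual compactness of transport plans in the finite-mass theory.
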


\begin{remark}
 The assumption $C(\rho_1,\rho_2)<+\ii$ ensures in particular that $\Gamma(\rho_1,\rho_2)$ is not empty, implying that $\int_{\R^d}\rho_1=\int_{\R^d}\rho_2\in[0,+\ii]$. When $\int_{\R^d}\rho_1<+\ii$, Lemma \ref{lem:minimization} is standard and its proof can be found for instance in \cite[Thm. 4.1]{Villani-book}. We did not find this result for $\int_{\R^d}\rho_1=+\ii$ in the literature, so we give a proof below. Contrary to the case $\int_{\R^d}\rho_1<+\ii$ where the set $\Gamma(\rho_1,\rho_2)$ is compact so that any sequence has a converging subsequence, we exploit here the fact that we have a \emph{minimizing} sequence to obtain compactness. 
\hfill$\diamond$\end{remark}

\begin{proof}
 Let $(\Pi_n)$ a minimizing sequence for $C(\rho_1,\rho_2)$. In particular, the sequence 
 $$\int_{\R^d\times\R^d}|x-y|^2d\Pi_n(x,y)$$
 is bounded. For any $\phi\in C_c(\R^d\times\R^d)$, there exists $K\subset\R^d$ a compact set and $C>0$ such that for all $(x,y)\in\R^d\times\R^d$,
 $$|\phi(x,y)|\le C\1_K(x).$$
 As a consequence, we have the bound
 $$\left|\int_{\R^d\times\R^d}\phi(x,y)\,d\Pi_n(x,y)\right|\le C\int_K\rho_1(x)\,dx,$$
 and thus $(\int\phi\,d\Pi_n)_n$ is bounded. By the compactness property of Radon measures recalled above, this implies the existence of a Radon measure $\Pi$ such that $\Pi_n\rightharpoonup\Pi$ up to a subsequence. Now let $\chi\in C_c(\R_+,\R_+)$ be a non-increasing function such that $\chi(0)=1$, and let $R>0$. Define $\chi_R:(x,y)\in\R^d\times\R^d\mapsto \chi(|(x,y)|/R)\in C_c(\R^d\times\R^d)$. Then, $\chi_R(x,y)\to1$ as $R\to+\ii$ non-decreasingly, for all $(x,y)\in\R^d\times\R^d$. By monotone convergence, we deduce that
 \begin{align*}
    \int_{\R^d\times\R^d}|x-y|^2d\Pi(x,y) &= \lim_{R\to\ii}\int_{\R^d\times\R^d}\chi_R(x,y)|x-y|^2d\Pi(x,y),\\
    &= \lim_{R\to\ii}\lim_{n\to\ii}\int_{\R^d\times\R^d}\chi_R(x,y)|x-y|^2d\Pi_n(x,y)\\
    &\le\liminf_{n\to\ii}\int_{\R^d\times\R^d}|x-y|^2d\Pi_n(x,y).
 \end{align*}
 As a consequence, we have proved that 
 $$\int_{\R^d\times\R^d}|x-y|^2d\Pi(x,y)\le\liminf_{n\to\ii}\int_{\R^d\times\R^d}|x-y|^2d\Pi_n(x,y)=C(\rho_1,\rho_2).$$ 
 To show that $\Pi$ is a minimizer for $C(\rho_1,\rho_2)$, it remains to show that it has the right marginals. To do so, let first $f\in C_c(\R^d)$ such that $f\ge0$. Let also $R>0$ and $\chi_R(x)=\chi(|x|/R)$ for all $x\in\R^d$, with $\chi$ the same function as before (assuming furthermore that $\chi\equiv1$ on $[0,1]$). Then, we have for all $n$,
 \begin{align*}
    &\int_{\R^d}f(x)\rho_1(x)\,dx \\
    &= \int_{\R^d\times\R^d}f(x)d\Pi_n(x,y)\\
    &= \int_{\R^d\times\R^d}\chi_R(y)f(x)d\Pi_n(x,y) + \int_{\R^d\times\R^d}(1-\chi_R(y))f(x)d\Pi_n(x,y).
 \end{align*}
 Now for all $x\in\supp f$ we have $|x|\le R'$ for some $R'>0$, while for all $y\in\supp(1-\chi_R)$ we have $|y|\ge R$. In particular, for all $(x,y)\in\supp f\otimes(1-\chi_R)$ we have 
 $$|x-y|\ge R-R'$$
 and thus
 \begin{align*}
    \left|\int_{\R^d\times\R^d}(1-\chi_R(y))f(x)d\Pi_n(x,y)\right|&\le \frac{C}{(R-R')^2}\int_{\R^d\times\R^d}|x-y|^2d\Pi_n(x,y)\\
    &\le\frac{C}{R^2}
 \end{align*}
 for all $R>R'$ and for some $C>0$ independent of $n$. Letting $n\to\ii$ we find that for all $R>R'$,
 $$\int_{\R^d}f(x)\rho_1(x)\,dx=\int_{\R^d}\chi_R(y) f(x)d\Pi(x,y)+\cO_{R\to+\ii}(R^{-2}),$$
 and letting then $R\to\ii$ shows that
 $$\int_{\R^d}f(x)\rho_1(x)\,dx=\int_{\R^d}f(x)\,d\Pi^{(1)}(x),$$
 for all $f\in C_c(\R^d)$, $f\ge0$. This implies in particular that $\Pi^{(1)}$ is finite on compact sets, and using \cite[Thm. 2.18]{Rudin} we deduce that it is also regular. Writing any $f\in C_c(\R^d)$ as a difference of two non-negative compactly supported continuous functions, we also deduce the identity 
 $$\int_{\R^d}f(x)\rho_1(x)\,dx=\int_{\R^d}f(x)\,d\Pi^{(1)}(x),$$
 for all $f\in C_c(\R^d)$. This implies that $\Pi^{(1)}=\rho_1$ by the uniqueness part of Riesz's representation theorem. Of course, the proof for the other marginal is the same.
\end{proof}

The proofs of \cite[Thm. 2.3]{GanMcc-96} and \cite[Prop. 10]{McCann-95} carry on to the infinite mass case to obtain that the minimizers have the Monge form:

\begin{proposition}\label{prop:brenier-2}
 Any minimizer of Lemma \ref{lem:minimization} has the form $\Pi=(\text{id}\times\nabla\psi)_\sharp\rho_1$ for some convex function $\psi$ differentiable $\rho_1$-a.e.
\end{proposition}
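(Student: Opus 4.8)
The plan is to run the classical Brenier/Gangbo--McCann/McCann argument and to check at each step that nothing is spoiled by the possible infiniteness of the masses. There are three ingredients: (i) the support of any minimizer given by Lemma~\ref{lem:minimization} is cyclically monotone; (ii) by Rockafellar's theorem a cyclically monotone set is contained in the graph of the subdifferential of a convex lower semi-continuous function $\psi$; (iii) since $\rho_1$ is absolutely continuous with respect to Lebesgue measure, $\psi$ is differentiable $\rho_1$-almost everywhere, hence $\partial\psi$ is single-valued $\rho_1$-a.e., which forces the plan to be carried by the graph of $\nabla\psi$.

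First I would show that if $\Pi$ is a minimizer of Lemma~\ref{lem:minimization} then $\Lambda:=\supp\Pi$ is cyclically monotone, i.e. $\sum_{i=1}^N|x_i-y_i|^2\le\sum_{i=1}^N|x_i-y_{i+1}|^2$ (cyclic indices) for every finite family $(x_1,y_1),\dots,(x_N,y_N)\in\Lambda$; for the quadratic cost this is equivalent, after expanding the squares, to ordinary cyclical monotonicity of $\Lambda$. Arguing by contradiction, a strict reversed inequality for some family, together with the continuity of the cost, produces small open neighbourhoods $U_i\ni x_i$, $V_i\ni y_i$ and a $\delta>0$ such that $\sum_i|u_i-v_{i+1}|^2\le\sum_i|u_i-v_i|^2-\delta$ whenever $u_i\in U_i$, $v_i\in V_i$. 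Since $(x_i,y_i)\in\supp\Pi$ one has $\Pi(U_i\times V_i)>0$, and one builds a competitor $\tilde\Pi$ by removing a small common amount of mass from the ``diagonal'' blocks $U_i\times V_i$ and redistributing it onto the ``shifted'' blocks $U_i\times V_{i+1}$ in such a way that both marginals are preserved; its cost is strictly smaller than that of $\Pi$, a contradiction. The point to stress is that this perturbation is supported in the compact set $\bigcup_i(\overline{U_i}\times\overline{V_i})\cup\bigcup_i(\overline{U_i}\times\overline{V_{i+1}})$ and transfers only finitely much mass, so $\tilde\Pi$ is again a nonnegative Radon measure in $\Gamma(\rho_1,\rho_2)$ even though its total mass is infinite.

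Next, Rockafellar's theorem (which is purely set-theoretic and insensitive to any mass) yields a convex, lower semi-continuous $\psi:\R^d\to\R\cup\{+\infty\}$ with $\Lambda\subseteq\{(x,y):y\in\partial\psi(x)\}$; in particular $\supp\Pi\subseteq\dom(\partial\psi)\times\R^d$ and $\dom(\partial\psi)\subseteq\dom\psi$, so $\psi$ is finite $\rho_1$-a.e. Because $\rho_1\in L^1_{\rm loc,+}(\R^d)$ is absolutely continuous with respect to Lebesgue measure, it charges neither the boundary of the convex set $\dom\psi$ (Lebesgue-null, being the boundary of a convex set) nor the non-differentiability set of $\psi$ inside $\Int(\dom\psi)$ (Lebesgue-null since $\psi$ is locally Lipschitz there), so $\psi$ is differentiable at $\rho_1$-a.e.\ $x$, and at such $x$ one has $\partial\psi(x)=\{\nabla\psi(x)\}$. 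Finally I would disintegrate $\Pi=\int_{\R^d}\Pi_x\,d\rho_1(x)$ with respect to its first marginal $\rho_1$ (legitimate because $\Pi$ and $\rho_1$ are $\sigma$-finite; alternatively one restricts to $K\times\R^d$ with $K$ a ball, where $\Pi$ has finite mass $\rho_1(K)$, and lets $K\uparrow\R^d$), use that $\Pi$ is concentrated on $\Lambda$ to see that $\Pi_x$ is concentrated on $\partial\psi(x)$ for $\rho_1$-a.e.\ $x$, and conclude $\Pi_x=\delta_{\nabla\psi(x)}$ for $\rho_1$-a.e.\ $x$, i.e. $\Pi=(\text{id}\times\nabla\psi)_\sharp\rho_1$.

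The only step where the infinite-mass setting genuinely requires care is the cyclical-monotonicity argument: one must make sure that the rearranged plan $\tilde\Pi$ is still a nonnegative Radon measure with the prescribed marginals $\rho_1,\rho_2$. This is precisely why the perturbation is \emph{localized} in a compact region and involves only a finite amount of transferred mass; with this caveat the argument is identical to the finite-mass proofs of \cite{GanMcc-96,McCann-95}. The remaining steps (Rockafellar's theorem, Lebesgue-a.e.\ differentiability of convex functions, disintegration) are unaffected by the masses being infinite and carry over verbatim.
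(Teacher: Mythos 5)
Your argument is correct and is essentially the route the paper intends: the paper simply asserts that the proofs of \cite[Thm. 2.3]{GanMcc-96} and \cite[Prop. 10]{McCann-95} carry over to infinite mass, and what you write out — cyclical monotonicity of the support via a compactly supported, finite-mass perturbation preserving the marginals, then Rockafellar's theorem, then $\rho_1$-a.e.\ differentiability of the convex potential since $\rho_1\in L^1_{\rm loc}$ is absolutely continuous — is exactly that argument, with the correct identification of the only step where infiniteness of the mass needs care.
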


We finally give the
\begin{proof}[Proof of Proposition \ref{prop:brenier}]
Defining the measure $\Pi(t)$ on $\R^d\times\R^d$ by
$$\Pi(t)=(X_1(t),X_2(t))_\sharp W_0,$$
we have the identity
$$Q(t)=\int_{\R^d\times\R^d}|x-y|^2d\Pi(t),$$
while $\Pi(t)$ has marginals $\Pi^{(1)}(t)=\rho_1(t)$, $\Pi^{(2)}(t)=\rho_2(t)$. As a consequence, we have the lower bound
$$Q(t)\ge\inf_{\Pi':\, (\Pi')^{(1)}=\rho_1(t), (\Pi')^{(2)}=\rho_2(t)}\int_{\R^d\times\R^d}|x-y|^2d\Pi'$$
in terms of an optimal transport problem. It remains to apply Lemma \ref{lem:minimization} and Proposition \ref{prop:brenier-2} to finish the proof.
\end{proof}

\section{Some tools from semi-classical analysis}\label{sec:sc}

We recall \cite[Sec. 4.4]{Zworski-book} that $m:\R^d\times\R^d\to(0,+\ii)$ is called an order function if there exists $C>0$ and $N>0$ such that for all $w,z\in\R^d\times\R^d$, 
$$m(z)\le C (1+|z-w|^2)^Nm(w).$$
We will use only the order functions of the type
$$m(x,v)=(1+|x|^2)^k(1+|v|^2)^\ell.$$
Given an order function $m$, the class of symbols $S(m)$ is defined as
$$S(m):=\{ a\in C^\ii(\R^d\times\R^d),\ \forall\alpha\in\N^d,\ \exists C_\alpha>0,\ |\partial^\alpha a|\le C_\alpha m\}.$$
We recall the following standard results in semiclassical pseudo-differential calculus. The first one is contained in \cite[Thm. 7.9 and Prop. 7.7]{DimSjo-book}.

\begin{proposition}\label{prop:dimsjo1}
 Let $a_1\in S(m_1)$ and $a_2\in S(m_2)$. Then, we have
 $$\opw^\hbar(a_1)\opw^\hbar(a_2)=\opw^\hbar(c)$$
 where $c\in S(m_1m_2)$ has the expansion
 $$c=ab+\cO(\hbar)$$
 in the topology of $S(m_1m_2)$.
\end{proposition}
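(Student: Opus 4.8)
The statement is a standard fact of semiclassical Weyl calculus, and the plan is to follow the oscillatory–integral approach of \cite{DimSjo-book}. The first step is to establish the \emph{Moyal product formula}: for $a_1\in S(m_1)$, $a_2\in S(m_2)$, the composition $\opw^\hbar(a_1)\opw^\hbar(a_2)$ is again a Weyl quantization $\opw^\hbar(c)$ with
$$c(x,\xi)=\frac{1}{(\pi\hbar)^{2d}}\int_{\R^{2d}}\int_{\R^{2d}}e^{-\frac{2i}{\hbar}\sigma(z_1,z_2)}\,a_1\big((x,\xi)+z_1\big)\,a_2\big((x,\xi)+z_2\big)\,dz_1\,dz_2,$$
where $\sigma$ is the standard symplectic form on $\R^{2d}$. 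This identity is checked first for Schwartz symbols by a direct Fourier-transform computation from the definition of $\opw^\hbar$ given in the excerpt, and then extended to symbols in the classes $S(m_i)$ by interpreting the right-hand side as an oscillatory integral. Here the non-degeneracy of $\sigma$ is what makes the regularization rigorous: inserting a cutoff $\chi(\epsilon z_1,\epsilon z_2)$ and integrating by parts repeatedly using that $e^{-\frac{2i}{\hbar}\sigma(z_1,z_2)}$ is reproduced (up to powers of $\hbar$) by a first-order differential operator in $(z_1,z_2)$, one gains decay in $(z_1,z_2)$ and can pass to the limit $\epsilon\to0$. The order-function inequality $m_i(z)\le C(1+|z-w|^2)^N m_i(w)$ is used to absorb the shifts by $z_j$, and this same bookkeeping shows $c\in S(m_1m_2)$ with each seminorm controlled by finitely many seminorms of $a_1$ and $a_2$.

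The second step is to extract the leading term. Rescaling $z_j\mapsto\sqrt\hbar\,z_j$ absorbs the prefactor and turns the phase into $e^{-2i\sigma(z_1,z_2)}$, giving
$$c(x,\xi)=\frac{1}{\pi^{2d}}\int_{\R^{2d}}\int_{\R^{2d}}e^{-2i\sigma(z_1,z_2)}\,a_1\big((x,\xi)+\sqrt\hbar\,z_1\big)\,a_2\big((x,\xi)+\sqrt\hbar\,z_2\big)\,dz_1\,dz_2.$$
Taylor expanding the amplitude about $z_1=z_2=0$ to second order, the zeroth-order term contributes $a_1(x,\xi)a_2(x,\xi)$ by the normalization $\frac{1}{\pi^{2d}}\int\!\int e^{-2i\sigma(z_1,z_2)}\,dz_1\,dz_2=1$ (the inner integral is a constant multiple of a Dirac mass on the diagonal); the first-order term carries an explicit factor $\sqrt\hbar$, but testing the linear monomials $z_{1,j}$, $z_{2,k}$ against the oscillatory Gaussian produces a derivative of the Dirac mass, which upon evaluation against the remaining $\sqrt\hbar$-shifted factor yields another $\sqrt\hbar$, so this term is in fact $O(\hbar)$ (and equals $\tfrac{i\hbar}{2}\{a_1,a_2\}$, though this is not needed). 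The second-order Taylor remainder is $\hbar$ times an oscillatory integral of $a_1,a_2$ and their second derivatives against a Schwartz kernel, hence $\cO(\hbar)$ in $S(m_1m_2)$ after one more non-stationary-phase integration by parts. Since differentiating the formula for $c$ in $(x,\xi)$ merely replaces $a_1,a_2$ by their derivatives (living in the same symbol classes), the bound $c=a_1a_2+\cO(\hbar)$ holds in every seminorm, i.e.\ in the topology of $S(m_1m_2)$.

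The main obstacle is the rigorous treatment of the double oscillatory integral for symbols that merely satisfy order-function bounds rather than decay: one must justify the formula, the limit $\epsilon\to0$, and the differentiation under the integral sign, while tracking the dependence of the output seminorms on those of $a_1,a_2$ uniformly in $\hbar\in(0,1]$. Everything else is routine once this is in place. In the paper itself I would simply invoke \cite[Thm. 7.9 and Prop. 7.7]{DimSjo-book}, where exactly this is carried out, and include the above sketch only for the reader's convenience.
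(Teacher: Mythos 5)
Your proposal is correct and takes essentially the same route as the paper: the paper offers no proof of its own but simply invokes \cite[Thm.~7.9 and Prop.~7.7]{DimSjo-book}, which is exactly the composition theorem whose standard Moyal-product/oscillatory-integral argument you sketch (and which you yourself say you would cite). The quantitative details of your sketch (the $(\pi\hbar)^{-2d}$ normalization, the $\sqrt\hbar$ rescaling, and the identification of the leading term) are consistent with that reference, so nothing further is needed.
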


The second one is contained in \cite[Thm. 8.7 and following remarks]{DimSjo-book}.

\begin{proposition}\label{prop:dimsjo2}
 Let $p\in S(m)$ such that $p+i$ is elliptic in $S(m)$ (that is, such that $|p+i|\ge Cm$ for some $C>0$). Let $f\in C^\ii_0(\R)$. Then, $f(P)=\opw^\hbar(c)$ for some $c\in S(1/m)$ with the expansion 
 $$c=f\circ p+\cO(\hbar)$$
 in the topology of $S(1/m)$.
\end{proposition}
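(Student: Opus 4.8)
The plan is to prove this in the now-standard way, by combining the Helffer--Sj\"ostrand functional calculus with the construction of a parametrix for the resolvent $(z-P)^{-1}$ inside the symbol class $S(1/m)$. Since $f(P)$ is defined by the spectral theorem, $P=\opw^\hbar(p)$ is self-adjoint; in particular $p$ may be taken real-valued, so $\|(z-P)^{-1}\|_{L^2\to L^2}\le|\im z|^{-1}$ whenever $\im z\neq0$. First I would fix an almost analytic extension $\tilde f\in C^\ii_c(\C)$ of $f$: a compactly supported function with $\tilde f|_{\R}=f$ and $|\bar\partial\tilde f(z)|\le C_N|\im z|^N$ for every $N\ge0$. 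The Helffer--Sj\"ostrand formula then represents
$$f(P)=\frac1\pi\int_\C\bar\partial\tilde f(z)\,(z-P)^{-1}\,dL(z),$$
the integral converging absolutely in operator norm thanks to the bound on $(z-P)^{-1}$ and the rapid vanishing of $\bar\partial\tilde f$ on $\R$.

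The heart of the argument is a quantitative resolvent parametrix. For $z\in\supp\tilde f$ with $\im z\neq0$ one has $|z-p|\ge|\im z|$ everywhere and, by ellipticity of $p+i$, $|z-p|\ge\frac12|p+i|-C\ge\frac c2m$ wherever $m$ is large; hence $(z-p)^{-1}\in S(1/m)$, with every $S(1/m)$-seminorm bounded by $C_\alpha|\im z|^{-N_\alpha}$. Using the composition formula of Proposition~\ref{prop:dimsjo1} one gets
$$(z-P)\,\opw^\hbar\big((z-p)^{-1}\big)=I+\hbar\,\opw^\hbar\big(r^{(1)}_z\big),\qquad r^{(1)}_z\in S(1),$$
the $S(1)$-seminorms of $r^{(1)}_z$ again being $\le C|\im z|^{-N}$. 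Iterating, I would construct $b^{(1)}_z,b^{(2)}_z,\dots\in S(1/m)$, each with seminorms $\le C_j|\im z|^{-N_j}$, such that the partial sums $b^{[N]}_z:=(z-p)^{-1}+\sum_{j=1}^N\hbar^j b^{(j)}_z$ obey
$$(z-P)\,\opw^\hbar\big(b^{[N]}_z\big)=I+\hbar^{N+1}R^{(N)}_z,\qquad\|R^{(N)}_z\|_{L^2\to L^2}\le C_N|\im z|^{-M_N}.$$
Multiplying on the left by $(z-P)^{-1}$ and invoking $\|(z-P)^{-1}\|\le|\im z|^{-1}$ gives, uniformly in $\hbar\in(0,1]$,
$$(z-P)^{-1}=\opw^\hbar\big(b^{[N]}_z\big)+\hbar^{N+1}\,\cO_{L^2\to L^2}\big(|\im z|^{-M_N-1}\big).$$

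Next I would set $c^{[N]}:=\frac1\pi\int_\C\bar\partial\tilde f(z)\,b^{[N]}_z\,dL(z)$. Because the $S(1/m)$-seminorms of $b^{[N]}_z$ grow only polynomially in $|\im z|^{-1}$ while $\bar\partial\tilde f$ decays faster than any power of $|\im z|$, differentiation under the integral sign shows $c^{[N]}\in S(1/m)$, and, $\opw^\hbar$ being continuous, the quantization commutes with the (absolutely convergent) integral, so $\opw^\hbar(c^{[N]})=\frac1\pi\int\bar\partial\tilde f(z)\,\opw^\hbar(b^{[N]}_z)\,dL(z)$. Together with the previous display, taking $N$ large,
$$f(P)-\opw^\hbar\big(c^{[N]}\big)=\hbar^{N+1}\,\cO_{L^2\to L^2}(1),$$
and the same computation with polynomial weights shows this remainder is in fact $\hbar^\ii$-smoothing with $\cO(\hbar^{N+1})$ bounds on all weighted spaces, hence equals $\opw^\hbar(\tilde r)$ for some rapidly decaying $\tilde r\in\hbar^{N+1}S(1/m)$; absorbing $\tilde r$ into $c^{[N]}$ produces $c\in S(1/m)$ with $f(P)=\opw^\hbar(c)$. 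The leading term is then identified by applying the scalar Helffer--Sj\"ostrand formula pointwise: $\frac1\pi\int\bar\partial\tilde f(z)(z-p(x,v))^{-1}\,dL(z)=f(p(x,v))$, so $c=f\circ p+\cO(\hbar)$ in the topology of $S(1/m)$.

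The hard part is the quantitative parametrix construction: verifying that each step of the symbol iteration keeps the error symbol in $S(1)$ (resp.\ $S(1/m)$) with seminorms that blow up only polynomially as $\im z\to0$, and that the matching operator remainders obey $L^2$ and weighted bounds uniform in $\hbar\in(0,1]$. This is precisely where the ellipticity of $p+i$ is used, and it is what makes the vanishing of $\bar\partial\tilde f$ near $\R$ absorb all the singular $|\im z|^{-1}$ factors. The only other point needing care --- that an $\cO(\hbar^\ii)$-smoothing operator is itself a Weyl quantization of a negligible symbol which may be folded into $c$ without disturbing the stated expansion --- is routine once these mapping properties are in place.
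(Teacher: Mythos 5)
Your outline is correct and follows exactly the standard route: the paper does not prove this proposition but cites \cite[Thm. 8.7]{DimSjo-book}, whose proof is precisely the Helffer--Sj\"ostrand formula combined with a resolvent parametrix in $S(1/m)$ with seminorms controlled polynomially in $|\im z|^{-1}$, as you describe. So your proposal reproduces the argument behind the cited result rather than offering a different approach.
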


Finally, we will use \cite[Thm. 9.4]{DimSjo-book}:

\begin{proposition}\label{prop:dimsjo3}
 Let $a\in\cS'(\R^d\times\R^d)$ such that $\partial^\alpha a_{x,v}\in L^1(\R^d\times\R^d)$ for all $|\alpha|\le 2d+1$. Then, $\opw^\hbar(a)$ is trace-class and 
 $$\tr\opw^\hbar(a)=\int_{\R^d}\int_{\R^d}a(x,v)\frac{dx\,dv}{(2\pi\hbar)^d}.$$
\end{proposition}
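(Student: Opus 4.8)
\medskip

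\noindent\textbf{Proof proposal.} The plan is to first prove the trace identity for Schwartz symbols, where it reduces to a computation with the integral kernel, and then to pass to the general class by density together with a uniform-in-$\hbar$ trace-norm bound. If $a\in\cS(\R^d\times\R^d)$ then, by a standard fact (also recalled elsewhere in the paper), $\opw^\hbar(a)\in\gS^1$; its integral kernel
$$K(x,y)=\frac1{(2\pi\hbar)^d}\int_{\R^d}a\Big(\frac{x+y}{2},v\Big)e^{i(x-y)\cdot v/\hbar}\,dv$$
belongs to $\cS(\R^d\times\R^d)$, and $\tr\opw^\hbar(a)=\int_{\R^d}K(x,x)\,dx$ (write $\opw^\hbar(a)=\opw^\hbar(a_1)\opw^\hbar(a_2)$ with $a_1,a_2\in\cS$ and use $\tr(A_1A_2)=\iint A_1(x,y)A_2(y,x)\,dx\,dy$ for the kernels). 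Since $K(x,x)=(2\pi\hbar)^{-d}\int_{\R^d}a(x,v)\,dv$, integrating over $x$ gives the claimed formula for $a\in\cS$.

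The key step is the a priori estimate
$$\|\opw^\hbar(a)\|_{\gS^1}\le C_d\,\hbar^{-d}\,\|a\|_{W^{2d+1,1}(\R^d\times\R^d)},\qquad a\in W^{2d+1,1},\ \hbar\in(0,1).$$
I would obtain it from the Heisenberg covariance of the Weyl quantization. Write $a=G_{2d+1}*h$, where $G_s$ is the Bessel kernel ($\widehat{G_s}(\xi)=(1+|\xi|^2)^{-s/2}$, normalised so $\int G_s=1$) and $h:=(1-\Delta)^{(2d+1)/2}a\in L^1(\R^{2d})$ with $\|h\|_{L^1}\simeq\|a\|_{W^{2d+1,1}}$. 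Since $\opw^\hbar(a(\cdot-w))=U_\hbar(w)\,\opw^\hbar(a)\,U_\hbar(w)^*$ for the unitary Heisenberg operators $U_\hbar(w)$, writing $a$ as the superposition $\int_{\R^{2d}}h(w)\,G_{2d+1}(\cdot-w)\,dw$ yields, as a Bochner integral in $\gS^1$,
$$\|\opw^\hbar(a)\|_{\gS^1}\le\|h\|_{L^1}\,\|\opw^\hbar(G_{2d+1})\|_{\gS^1}.$$
Everything thus reduces to the single bound $\|\opw^\hbar(G_{2d+1})\|_{\gS^1}\le C_d\hbar^{-d}$ for the fixed function $G_{2d+1}$, which is smooth and exponentially decaying away from the origin and has only a mildly singular (still integrable) behaviour of its derivatives at $0$; this is handled by standard semiclassical trace-class criteria, e.g.\ by splitting $G_{2d+1}$ into a Schwartz piece (quantized via the harmonic-oscillator resolvent, using $\|\opw^\hbar(b)\|_{\gS^2}^2=(2\pi\hbar)^{-d}\|b\|_{L^2}^2$) and a piece supported near the origin, and using that the kernel $(2\pi\hbar)^{-d}\widehat{G_{2d+1}}\big(\tfrac{x+y}{2},\tfrac{y-x}{\hbar}\big)$ concentrates within $O(\hbar)$ of the diagonal. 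The exponent $2d+1$ appears here because it is the Sobolev threshold for $W^{\cdot,1}(\R^{2d})$, i.e.\ $G_{2d+1}$ is the borderline Bessel kernel for which this works.

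Finally, for the trace formula in general: given $a\in W^{2d+1,1}$ (so $\partial^\alpha a\in L^1$ for $|\alpha|\le 2d+1$), the a priori bound already gives $\opw^\hbar(a)\in\gS^1$. Pick $h^{(n)}\in\cS(\R^{2d})$ with $h^{(n)}\to h$ in $L^1$ and set $a^{(n)}:=G_{2d+1}*h^{(n)}\in\cS(\R^{2d})$; then $\|\opw^\hbar(a-a^{(n)})\|_{\gS^1}\le\|h-h^{(n)}\|_{L^1}\,\|\opw^\hbar(G_{2d+1})\|_{\gS^1}\to0$, and applying the Schwartz-case formula of the first step to $a^{(n)}$, together with $\int a^{(n)}=\int G_{2d+1}\cdot\int h^{(n)}=\int h^{(n)}\to\int h=\int a$, we obtain
$$\tr\opw^\hbar(a)=\lim_{n\to\infty}\tr\opw^\hbar(a^{(n)})=\lim_{n\to\infty}\frac1{(2\pi\hbar)^d}\int_{\R^{2d}}a^{(n)}=\frac1{(2\pi\hbar)^d}\int_{\R^d}\int_{\R^d}a(x,v)\,dx\,dv.$$
The main obstacle is the a priori bound: establishing $\|\opw^\hbar(G_{2d+1})\|_{\gS^1}\le C_d\hbar^{-d}$ with the sharp power of $\hbar$ requires some care because of the borderline regularity of $G_{2d+1}$; the covariance reduction and the density argument around it are routine.
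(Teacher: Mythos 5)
You should first note that the paper itself does not prove this proposition: it is quoted as \cite{DimSjo-book}, Thm.~9.4, so the benchmark is the textbook argument, which proves the quantitative bound $\|\opw^\hbar(a)\|_{\gS^1}\le C_d\,\hbar^{-d}\sum_{|\alpha|\le 2d+1}\|\partial^\alpha a\|_{L^1}$ directly from the hypotheses and then concludes, as you do, by density from the Schwartz case. Your architecture (Schwartz case, phase-space covariance, a priori trace bound, density) is reasonable, and the covariance identity $\opw^\hbar(a(\cdot-w))=U_\hbar(w)\opw^\hbar(a)U_\hbar(w)^*$ together with the Bochner-integral estimate and the final limiting argument are correct as stated. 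But the proof has two genuine gaps, both sitting exactly at the a priori bound that you acknowledge is the main obstacle, and neither is a routine repair.

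First, the representation $a=G_{2d+1}*h$ with $h=(1-\Delta)^{(2d+1)/2}a\in L^1$ and $\|h\|_{L^1}\simeq\|a\|_{W^{2d+1,1}}$ is not available under the hypothesis: the identification of the Bessel-potential space with $W^{s,1}$ fails at $p=1$. Passing from $\partial^\alpha a\in L^1$, $|\alpha|\le 2d+1$, to $(1-\Delta)^{(2d+1)/2}a\in L^1$ would require the multipliers $\xi^\alpha(1+|\xi|^2)^{-(2d+1)/2}$ with $|\alpha|=2d+1$ to be bounded on $L^1$, and at infinity these behave like iterated Riesz transforms, which are not. Since $2d+1$ is odd there is no genuine differential operator you can substitute; with one extra derivative (hypothesis $|\alpha|\le 2d+2$) you could use $1+\sum_j\partial_j^{2d+2}$ and its integrable convolution kernel, but that proves a weaker statement than the one claimed. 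Second, even granting the representation, the single estimate $\|\opw^\hbar(G_{2d+1})\|_{\gS^1}\le C_d\hbar^{-d}$ carries all the content and is not proved. Note that $G_{2d+1}$ lies just outside the class covered by the proposition itself: its derivatives of order $2d+1$ behave like $|z|^{-2d}$ at the origin, which is not integrable on $\R^{2d}$, so no appeal to a criterion of the same type is possible, and the suggested argument (splitting off a Schwartz piece plus ``the kernel concentrates within $O(\hbar)$ of the diagonal'') can only yield Hilbert--Schmidt-type information: pointwise kernel decay near the diagonal never gives trace-class bounds by itself. To push your route through you would need an actual trace-norm criterion applicable to this borderline kernel (for instance the coherent-state estimate $\|\opw^\hbar(b)\|_{\gS^1}\lesssim\hbar^{-d}\|b\|_{M^1}$ for the Feichtinger algebra, plus a proof that your kernel lies in $M^1$), at which point you have essentially rebuilt one of the standard proofs rather than avoided it. A minor further point: in the Schwartz step, the exact factorization $\opw^\hbar(a)=\opw^\hbar(a_1)\opw^\hbar(a_2)$ with $a_1,a_2\in\cS$ is not immediate (it asks to solve a Moyal-product equation); it is easier to factor through powers of the harmonic oscillator, writing $\opw^\hbar(a)$ as a product of two explicit Hilbert--Schmidt operators, to justify that the trace is the integral of the kernel along the diagonal.
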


All these results allow us to prove:

\begin{proposition}\label{prop:wigner-function}
 Let $a\in C^\ii_0(\R^d\times\R^d)$ and $f\in C^\ii_0(\R)$. Then, the Wigner transform of $f(\opw^\hbar(|v|^2+a))$ satisfies 
 $$W^\hbar[f(\opw^\hbar(|v|^2+a))]\to f(|v|^2+a)$$
 as $\hbar\to0$, in $\cD'(\R^d\times\R^d)$. 
\end{proposition}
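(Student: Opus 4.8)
The plan is to realize $f(\opw^\hbar(|v|^2+a))$ itself as a Weyl quantization and to compute its Wigner transform by duality, using only the symbolic calculus recalled in Propositions \ref{prop:dimsjo1}--\ref{prop:dimsjo3}. Write $p:=|v|^2+a$ and $\langle v\rangle:=(1+|v|^2)^{1/2}$. First I would record that $p\in S(\langle v\rangle^2)$ (the symbol itself is $\le\|a\|_{L^\ii}+|v|^2\lesssim\langle v\rangle^2$, the first $v$-derivatives are $\lesssim|v|$, the second ones are bounded, higher $v$-derivatives of $|v|^2$ vanish, and all derivatives of $a$ are bounded because $a\in C^\ii_0$), and that $p+i$ is elliptic in $S(\langle v\rangle^2)$: since $p$ is real, $|p+i|^2=p^2+1\ge1$, and for $|v|$ large one has $p\ge|v|^2/2$, so $p^2+1\gtrsim\langle v\rangle^4$, while on the remaining bounded set both $p^2+1$ and $\langle v\rangle^4$ are bounded above and below by positive constants. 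Since $a$ is real-valued, $\opw^\hbar(p)=-\hbar^2\Delta+\opw^\hbar(a)$ is self-adjoint, so $f(\opw^\hbar(p))$ is a bounded operator, and its Wigner transform is the distribution defined by $\langle W^\hbar_\gamma,\phi\rangle=(2\pi\hbar)^d\tr(\gamma\,\opw^\hbar(\phi))$ for $\phi\in\cS_{x,v}$.

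Next I would apply Proposition \ref{prop:dimsjo2} to obtain $f(\opw^\hbar(p))=\opw^\hbar(c)$ with $c\in S(\langle v\rangle^{-2})$ and $c=f\circ p+\cO(\hbar)$ in the topology of $S(\langle v\rangle^{-2})$. Fix $\phi\in C^\ii_0(\R^d\times\R^d)$; being compactly supported, $\phi$ belongs to $S\big((1+|x|^2)^{-N}(1+|v|^2)^{-N}\big)$ for every $N$. By Proposition \ref{prop:dimsjo1}, $\opw^\hbar(c)\,\opw^\hbar(\phi)=\opw^\hbar(e)$ with $e=c\phi+\cO(\hbar)$ in $S\big((1+|x|^2)^{-N}(1+|v|^2)^{-N-1}\big)$; for $N$ large, $e$ and all of its derivatives are integrable, so Proposition \ref{prop:dimsjo3} yields $(2\pi\hbar)^d\tr\opw^\hbar(e)=\int_{\R^d}\int_{\R^d}e\,dx\,dv$. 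Combining this with the duality formula,
$$\langle W^\hbar_{f(\opw^\hbar(p))},\phi\rangle=(2\pi\hbar)^d\tr\big(\opw^\hbar(c)\,\opw^\hbar(\phi)\big)=\int_{\R^d}\int_{\R^d}e(x,v)\,dx\,dv.$$

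Finally, since $e=c\phi+\cO(\hbar)$ in the above weighted symbol class and $c=f\circ p+\cO(\hbar)$ in $S(\langle v\rangle^{-2})$, and since $\phi$ is compactly supported, each remainder is pointwise bounded by $C\hbar$ times an integrable weight (respectively, by $C\hbar$ times a bounded compactly supported function), so $\int_{\R^d}\int_{\R^d}e\,dx\,dv=\int_{\R^d}\int_{\R^d}f(|v|^2+a(x,v))\,\phi(x,v)\,dx\,dv+\cO(\hbar)$. Therefore $\langle W^\hbar_{f(\opw^\hbar(|v|^2+a))},\phi\rangle\to\langle f(|v|^2+a),\phi\rangle$ as $\hbar\to0$ for every $\phi\in C^\ii_0(\R^d\times\R^d)$, which is the desired convergence in $\cD'(\R^d\times\R^d)$. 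The computation is routine once the symbol-class membership and the ellipticity of $p$ are established; the only point requiring care is the bookkeeping that converts each ``$\cO(\hbar)$ in the topology of a weighted symbol class'' into a genuine ``$\cO(\hbar)$'' after taking the trace and after pairing with $\phi$, which is why it is convenient to use the compact support of $\phi$ to make all the relevant weights integrable.
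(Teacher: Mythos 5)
Your proof is correct and follows essentially the same route as the paper: the duality formula for the Wigner transform, Proposition \ref{prop:dimsjo2} to write $f(\opw^\hbar(|v|^2+a))$ as a Weyl quantization with principal symbol $f(|v|^2+a)$, Proposition \ref{prop:dimsjo1} for the product with $\opw^\hbar(\phi)$, and Proposition \ref{prop:dimsjo3} to evaluate the trace, with the remainder controlled by the weighted symbol topology. The extra details you supply (ellipticity of $p+i$, symbol-class membership of $\phi$, and the remainder bookkeeping) are exactly the checks the paper leaves implicit.
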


\begin{proof}
 Let $\phi\in C^\ii_0(\R^d\times\R^d)$, and let us compute
 $$\langle W^\hbar[f(\opw^\hbar(|v|^2+a))],\phi\rangle = (2\pi\hbar)^d \tr f(\opw^\hbar(|v|^2+a))\opw^\hbar(\phi).$$
 Using Proposition \ref{prop:dimsjo2} with $p(x,v)=|v|^2+a(x,v)$ and $m(x,v)=m_1(x,v)=1+|v|^2$, we have $f(\opw^\hbar(|v|^2+a))=\opw^\hbar(c_1)$ with $c_1=f(|v|^2+a)+\cO(\hbar)$ in $S(1/m_1)$. Using now Proposition \ref{prop:dimsjo1} with $m_2=1+|x|^2+|v|^2$ and using that $\phi\in S(m_2^{-k})$ for all $k\in\N$, we deduce that 
 $$f(\opw^\hbar(|v|^2+a))\opw^\hbar(\phi)=\opw^\hbar(c_2)$$
 with $c_2=f(|v|^2+a)\phi+\cO(\hbar)$ in $S(m_2^{-k})$ for all $k\in\N$. Since any symbol in $S(m_2^{-k})$ for $k$ large enough satisfies the assumptions of Proposition \ref{prop:dimsjo3}, we deduce that 
 \begin{multline*}
    \tr f(\opw^\hbar(|v|^2+a))\opw^\hbar(\phi)\\
    =\frac{1}{(2\pi\hbar)^d}\left(\int_{\R^d}\int_{\R^d}f(|v|^2+a(x,v))\phi(x,v)\,dx\,dv+\cO(\hbar)\right)
 \end{multline*}
 and thus 
 $$\langle W^\hbar[f(\opw^\hbar(|v|^2+a))],\phi\rangle\to\langle f(|v|^2+a),\phi\rangle$$
 as $\hbar\to0$, which is exactly the result.
\end{proof}

\begin{corollary}\label{coro:wigner}
 The conclusions of Proposition \ref{prop:wigner-function} also hold when $f$ is continuous and vanishes at infinity.
\end{corollary}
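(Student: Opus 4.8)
The plan is to deduce Corollary \ref{coro:wigner} from Proposition \ref{prop:wigner-function} by a uniform approximation argument, controlling both the quantum and classical sides by the sup-norm of the approximation error. Write $P:=\opw^\hbar(|v|^2+a)$, which is self-adjoint (a bounded self-adjoint perturbation of $-\hbar^2\Delta$). Since $f$ is continuous and vanishes at infinity, it is bounded, so $f(P)$ is a well-defined bounded operator by the spectral theorem, and its Wigner transform is the tempered distribution acting on $\phi\in\cS_{x,v}$ by $\langle W^\hbar[f(P)],\phi\rangle=(2\pi\hbar)^d\tr\big(f(P)\,\opw^\hbar(\phi)\big)$, which makes sense because $\opw^\hbar(\phi)\in\gS^1$. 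Then I would choose, for each $\epsilon>0$, a function $f_\epsilon\in C^\ii_0(\R)$ with $\|f-f_\epsilon\|_{L^\ii(\R)}\le\epsilon$ (possible since $f\in C_0(\R)$), so that Proposition \ref{prop:wigner-function} applies to each $f_\epsilon$.

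The two needed estimates are as follows. On the quantum side, for any fixed $\phi\in C^\ii_0(\R^d\times\R^d)$,
$$\big|\langle W^\hbar[f(P)]-W^\hbar[f_\epsilon(P)],\phi\rangle\big|\le (2\pi\hbar)^d\,\big\|(f-f_\epsilon)(P)\big\|\,\|\opw^\hbar(\phi)\|_{\gS^1}\le \epsilon\,C_\phi,$$
uniformly in $\hbar\in(0,1)$, where $\|(f-f_\epsilon)(P)\|\le\|f-f_\epsilon\|_{L^\ii}$ by the spectral theorem and $(2\pi\hbar)^d\|\opw^\hbar(\phi)\|_{\gS^1}\le C_\phi$ is the standard trace-norm bound for the Weyl quantization of a Schwartz symbol (the same estimate $\|\opw^\hbar(b)\|_{\gS^1}\le C\hbar^{-d}$ already used in the proof of Lemma \ref{lem:quantiz}). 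On the classical side, since $a$ is compactly supported and $|f_\epsilon(|v|^2+a(x,v))-f(|v|^2+a(x,v))|\le\|f_\epsilon-f\|_{L^\ii}\le\epsilon$ pointwise, one gets $\big|\langle f_\epsilon(|v|^2+a)-f(|v|^2+a),\phi\rangle\big|\le\epsilon\|\phi\|_{L^1}$.

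The conclusion is then a three-term estimate: given $\phi$ and $\eta>0$, fix $\epsilon$ so small that the two displayed bounds are each $\le\eta/3$; for this $\epsilon$, Proposition \ref{prop:wigner-function} gives $\langle W^\hbar[f_\epsilon(P)],\phi\rangle\to\langle f_\epsilon(|v|^2+a),\phi\rangle$ as $\hbar\to0$, hence $\big|\langle W^\hbar[f_\epsilon(P)]-f_\epsilon(|v|^2+a),\phi\rangle\big|\le\eta/3$ for $\hbar$ small. Splitting $W^\hbar[f(P)]-f(|v|^2+a)$ through $W^\hbar[f_\epsilon(P)]$ and $f_\epsilon(|v|^2+a)$ and using the triangle inequality yields $\big|\langle W^\hbar[f(P)]-f(|v|^2+a),\phi\rangle\big|\le\eta$ for $\hbar$ small, which is precisely the asserted convergence in $\cD'(\R^d\times\R^d)$. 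The only step needing a little care — but no new idea — is the uniform-in-$\hbar$ trace-norm bound $(2\pi\hbar)^d\|\opw^\hbar(\phi)\|_{\gS^1}\le C_\phi$ (the trace-norm analogue of Proposition \ref{prop:dimsjo3}); everything else is routine, so I do not expect a genuine obstacle here.
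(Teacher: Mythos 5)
Your proof is correct and follows essentially the same route as the paper: approximate $f$ uniformly by $f_\epsilon\in C^\ii_0(\R)$, control the quantum error via $\|(f-f_\epsilon)(P)\|\le\|f-f_\epsilon\|_{L^\ii}$ together with the bound $(2\pi\hbar)^d\|\opw^\hbar(\phi)\|_{\gS^1}\le C_\phi$ (which the paper also takes from Dimassi--Sj\"ostrand and uses in Lemma \ref{lem:quantiz}), control the classical error by $\epsilon\|\phi\|_{L^1}$, and conclude by the triangle inequality after applying Proposition \ref{prop:wigner-function} to $f_\epsilon$. No gap to report.
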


\begin{proof}
 Let $f_\epsilon\in C^\ii_0(\R^d\times\R^d)$ be such that $f_\epsilon\to f$ as $\epsilon\to0$ in $L^\ii$. Let $\phi\in C^\ii_0(\R^d\times\R^d)$ and decompose 
 \begin{multline*}
  \langle W^\hbar[f(\opw^\hbar(|v|^2+a))],\phi\rangle - \langle f(|v|^2+a),\phi\rangle =\\
  \langle W^\hbar[f_\epsilon(\opw^\hbar(|v|^2+a))],\phi\rangle - \langle f_\epsilon(|v|^2+a),\phi\rangle\\
  +(2\pi\hbar)^d\tr (f-f_\epsilon)(\opw^\hbar(|v|^2+a))\opw^\hbar(\phi)+\langle (f_\epsilon-f)(|v|^2+a),\phi\rangle.
 \end{multline*}
 The second line can be estimated by 
 \begin{multline*}
    \left|(2\pi\hbar)^d\langle\tr (f-f_\epsilon)(\opw^\hbar(|v|^2+a))\opw^\hbar(\phi)+\langle (f_\epsilon-f)(|v|^2+a),\phi\rangle\right|\\
    \le\|f-f_\epsilon\|_{L^\ii}((2\pi\hbar)^d\|\opw^\hbar(\phi)\|_{\gS^1}+\|\phi\|_{L^1}),
 \end{multline*}
 and using that $\|\opw^\hbar(\phi)\|_{\gS^1}\le C_\phi \hbar^{-d}$ by \cite[Eq. (9.2)]{DimSjo-book}, we deduce that
 \begin{multline*}
  \langle W^\hbar[f(\opw^\hbar(|v|^2+a))],\phi\rangle - \langle f(|v|^2+a),\phi\rangle\\
  =\langle W^\hbar[f_\epsilon(\opw^\hbar(|v|^2+a))],\phi\rangle - \langle f_\epsilon(|v|^2+a),\phi\rangle+o_{\epsilon\to0}(1),
  \end{multline*}
  where the $o(1)$ is uniform in $\hbar$. Applying Proposition \ref{prop:wigner-function} to $f_\epsilon$ and taking the limit $\hbar\to0$ with $\epsilon$ fixed, we deduce that 
  $$\limsup_{\hbar\to0}\left|\langle W^\hbar[f(\opw^\hbar(|v|^2+a))],\phi\rangle - \langle f(|v|^2+a),\phi\rangle\right|=o_{\epsilon\to0}(1),$$
  which gives the result as $\epsilon\to0$.
\end{proof}

\section{An integral representation for the relative entropy}\label{app:int-rep}

\begin{lemma}
 Let $S:[0,M]\to\R$ a strictly concave and continuous function, differentiable on $(0,M)$. Let $A$ and $B$ two self-adjoint matrices of same size with spectrum contained in $\ran(S')$. Define $\gamma_A=(S')^{-1}(A)$ and $\gamma_B=(S')^{-1}(B)$. Then, we have the representation of the relative entropy 
 \begin{equation}\label{eq:rep-rel-ent}
    \cH_S(\gamma_A,\gamma_B)=\int_0^1\tr\Big(\left\{(S')^{-1}(tA+(1-t)B)-(S')^{-1}(A)\right\}(A-B)\Big)\,dt.
 \end{equation}
\end{lemma}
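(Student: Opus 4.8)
The plan is to reduce the matrix identity to a one‑dimensional computation involving a Legendre‑type conjugate of $S$, and then deal with non‑commutativity through the standard formula for the derivative of $\tr\Psi(C_t)$ along an affine path of self‑adjoint matrices. Write $f:=(S')^{-1}$. Since $S$ is strictly concave and differentiable on $(0,M)$, its derivative $S'$ has the Darboux property, is strictly decreasing, hence continuous, so $S':(0,M)\to I:=\ran(S')$ is a homeomorphism onto an open interval and $f$ is continuous on $I$. Because $\spec(A)\cup\spec(B)$ is a finite subset of $I$, it lies in some compact $[\alpha,\beta]\subset I$; from $\alpha\le A,B\le\beta$ one gets $\alpha\le tA+(1-t)B\le\beta$, so $\spec\big(tA+(1-t)B\big)\subset[\alpha,\beta]\subset I$ for every $t\in[0,1]$ and $f\big(tA+(1-t)B\big)$ is always well defined. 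Finally recall that for matrices $\cH_S(\gamma_A,\gamma_B)=-\tr\big(S(\gamma_A)-S(\gamma_B)-S'(\gamma_B)(\gamma_A-\gamma_B)\big)$, and note that $S'(\gamma_B)=S'\big((S')^{-1}(B)\big)=B$ by functional calculus.

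Next I would introduce $\Psi(y):=\inf_{x\in[0,M]}\big(xy-S(x)\big)$ for $y\in I$. For such $y$ the map $x\mapsto xy-S(x)$ is strictly convex with the interior critical point $x=f(y)$ (as $S'(f(y))=y$), so the infimum is attained only there and $\Psi(y)=yf(y)-S(f(y))$. Being an infimum of affine functions of $y$, $\Psi$ is concave on $I$; Danskin's theorem applied with the unique minimizer $f(y)$ gives that $\Psi$ is differentiable with $\Psi'(y)=f(y)$, and continuity of $f$ then yields $\Psi\in C^1(I)$. The pointwise identity $S(f(y))=yf(y)-\Psi(y)$ on $I$ becomes, through functional calculus, the operator identities $S(\gamma_A)=Af(A)-\Psi(A)$ and $S(\gamma_B)=Bf(B)-\Psi(B)$. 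Substituting these together with $S'(\gamma_B)=B$ into the matrix formula for $\cH_S$, the two $\tr(Bf(B))$ contributions cancel and one is left with
\begin{equation*}
\cH_S(\gamma_A,\gamma_B)=-\tr\big((A-B)f(A)\big)+\tr\Psi(A)-\tr\Psi(B).
\end{equation*}

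Finally I would compute $\tr\Psi(A)-\tr\Psi(B)$ along the affine path $C_t:=tA+(1-t)B$, with $C_1=A$, $C_0=B$, $\dot C_t=A-B$, using the standard fact that for a $C^1$ function $\Psi$ and a self‑adjoint affine curve one has $\tfrac{d}{dt}\tr\Psi(C_t)=\tr\big(\Psi'(C_t)\dot C_t\big)$. This identity can be proved by checking it for polynomials via cyclicity of the trace ($\tfrac{d}{dt}\tr(C_t^{\,j})=j\,\tr(C_t^{\,j-1}\dot C_t)$) and then approximating $\Psi'$ uniformly by polynomials on $[\alpha,\beta]$, which forces uniform convergence of both $\tr p_n(C_t)$ and its derivative. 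Integrating gives $\tr\Psi(A)-\tr\Psi(B)=\int_0^1\tr\big(f(C_t)(A-B)\big)\,dt$, and combining with the previous display yields
\begin{equation*}
\cH_S(\gamma_A,\gamma_B)=\int_0^1\tr\Big(\big\{(S')^{-1}(tA+(1-t)B)-(S')^{-1}(A)\big\}(A-B)\Big)\,dt,
\end{equation*}
which is \eqref{eq:rep-rel-ent}. The only mildly delicate points are the $C^1$‑regularity and envelope formula for $\Psi$ (clean here precisely because strict concavity of $S$ forces a unique minimizer) and the trace‑derivative identity (routine by polynomial approximation on $[\alpha,\beta]$); everything else is bookkeeping with the functional calculus.
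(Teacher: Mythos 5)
Your proof is correct and takes essentially the same route as the paper's: both rewrite $\cH_S(\gamma_A,\gamma_B)$ in terms of the Legendre-type function $y\mapsto y(S')^{-1}(y)-S\big((S')^{-1}(y)\big)$, whose derivative is $(S')^{-1}$, plus the linear remainder $\tr\big((B-A)(S')^{-1}(A)\big)$, and then integrate the trace derivative along the segment $tA+(1-t)B$. The only difference is that you spell out the details the paper leaves implicit (the envelope argument giving $\Psi'=f$ without differentiating $f$, and the polynomial-approximation proof of $\tfrac{d}{dt}\tr\Psi(C_t)=\tr(\Psi'(C_t)\dot C_t)$), which is fine.
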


\begin{remark}
 Notice that the trace under the integral is non-negative by Klein's inequality since the function $(x,y)\mapsto ((S')^{-1}(x)-(S')^{-1}(y))(y-x)$ is non-negative. 
\hfill$\diamond$\end{remark}

\begin{proof}
 We first write the relative entropy in terms of a difference of free energies with a linear remainder term:
 \begin{multline*}
    \cH_S(\gamma_A,\gamma_B)\\
    =\tr(S'(\gamma_A)\gamma_A-S(\gamma_A))-\tr(S'(\gamma_B)\gamma_B-S(\gamma_B))+\tr(S'(\gamma_B)-S'(\gamma_A))\gamma_A.
 \end{multline*}
 Next, we notice that 
 $$S'(\gamma_A)\gamma_A-S(\gamma_A)=A(S')^{-1}(A)-S((S')^{-1}(A)),$$
 and the key remark is that the derivative of the function $x\mapsto x(S')^{-1}(x)-S((S')^{-1}(x))$ is simply $x\mapsto(S')^{-1}(x)$. As a consequence,
 $$\cH_S(\gamma_A,\gamma_B)=\int_0^1\tr (S')^{-1}(tA+(1-t)B)(A-B)\,dt+\tr(B-A)(S')^{-1}(A),$$
 which is the desired formula.
\end{proof}

\begin{proposition}
 Let $S:[0,M]\to\R$ be a concave and continuous function, differentiable on $(0,M)$, such that $-S'$ is operator monotone and such that $\lim_{x\to0}S'(x)=+\ii$. Assume furthermore that $(S')^{-1}$ and $x(S')^{-1}$ are continuous and vanish at infinity, together with their first and second derivatives. Define $M_-:=\lim_{x\to M}S'(x)$. Then, the formula \eqref{eq:rep-rel-ent} extends to any pair of self-adjoint operators $A$ and $B$ on infinite-dimensional Hilbert spaces such that $A,B\ge m$ for some $m>M_-$ and such that $A-B\in\gS^1$.
\end{proposition}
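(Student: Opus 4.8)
The plan is to obtain the identity from its finite-dimensional counterpart by a double approximation: first replacing $A$ and $B$ by bounded operators through a smooth spectral cut-off, then approximating by finite-rank compressions, and removing both approximations with the help of the regularity and decay assumptions on $f:=(S')^{-1}$ and on $x\mapsto xf(x)$. As a preliminary, strict concavity and continuity of $S$ make $S'$ a continuous strictly decreasing bijection of $(0,M)$ onto $(M_-,+\ii)$, so $f$ is a bounded continuous decreasing function on $[m',+\ii)$ for any $m'\in(M_-,m)$, with $0<f\le f(m)<M$; hence $\gamma_A:=f(A)$ and $\gamma_B:=f(B)$ are bounded self-adjoint with spectra in $(0,f(m)]\subset(0,M)$, so $\cH_S(\gamma_A,\gamma_B)\in[0,+\ii]$ is well defined in the sense of \cite{LewSab-13}, while for $t\in[0,1]$ the operator $A_t:=tA+(1-t)B=B+t(A-B)$ is a bounded self-adjoint perturbation of $B$ with $A_t\ge m$, so $f(A_t)-f(A)$ is bounded by $2f(m)$ and (extending $f|_{[m,\ii)}$ to a function of $C_0(\R)$) norm-continuous in $t$, which makes the right-hand side of~\eqref{eq:rep-rel-ent} an absolutely convergent integral of the trace of a $\gS^1$ operator. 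The analytic input I will use repeatedly is that, under the stated hypotheses, $f$ is $\gS^1$-operator Lipschitz on $[m,\ii)$ ($f\in C^2$ with $f,f',f''$ vanishing at infinity suffices), so that a trace-class perturbation of $A$ produces a trace-class difference; in particular $\gamma_A-\gamma_B=f(A)-f(B)\in\gS^1$.

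To reduce to bounded operators, I would fix a smooth profile $\phi_\Lambda$ with $\phi_\Lambda(x)=x$ on $(-\ii,\Lambda]$, affine then constant beyond, $0\le\phi_\Lambda'\le 1$, and $\phi_\Lambda\nearrow\mathrm{id}$ as $\Lambda\to\ii$, and set $A_\Lambda:=\phi_\Lambda(A)$, $B_\Lambda:=\phi_\Lambda(B)$. These are bounded, still $\ge m$, and $A_\Lambda\to A$, $B_\Lambda\to B$ in the strong resolvent sense (monotone convergence), while $A_\Lambda-B_\Lambda\to A-B$ in $\gS^1$: indeed $\phi_\Lambda$ is uniformly $\gS^1$-operator Lipschitz (its derivative has uniformly bounded total variation), and the correction $(\phi_\Lambda-\mathrm{id})(A)-(\phi_\Lambda-\mathrm{id})(B)$ tends to $0$ in $\gS^1$ because $\chi_{[\Lambda,\ii)}(A)$ and $\chi_{[\Lambda,\ii)}(B)$ tend to $0$ strongly and control that correction. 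By the elementary fact that $\tr(X_kY_k)\to\tr(XY)$ whenever $X_k\to X$ strongly and uniformly boundedly and $Y_k\to Y$ in $\gS^1$, together with dominated convergence in $t$, the right-hand side of~\eqref{eq:rep-rel-ent} for $(A_\Lambda,B_\Lambda)$ converges to that for $(A,B)$. It therefore suffices to prove the identity for bounded $A,B$ and then let $\Lambda\to\ii$.

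Assume now $A,B$ bounded. Then $\gamma_A\ge f(\|A\|)>0$ and $\gamma_B\ge f(\|B\|)>0$, so the spectra of $\gamma_A,\gamma_B$ — and of the compressions $P_k\gamma_AP_k|_{\gH_k}$, $P_k\gamma_BP_k|_{\gH_k}$ for finite-rank $P_k\nearrow\1$ chosen generically so that these compressions are invertible — all lie in a fixed compact $I\subset(0,M)$. Put $\tilde A_k:=S'(P_k\gamma_AP_k|_{\gH_k})$, $\tilde B_k:=S'(P_k\gamma_BP_k|_{\gH_k})$, bounded self-adjoint on $\gH_k:=\ran P_k$ with $f(\tilde A_k)=P_k\gamma_AP_k|_{\gH_k}$. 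The finite-dimensional Lemma on $\gH_k$ gives
$$\cH_S\big(P_k\gamma_AP_k,P_k\gamma_BP_k\big)=\int_0^1\tr_{\gH_k}\!\Big(\big(f(t\tilde A_k+(1-t)\tilde B_k)-f(\tilde A_k)\big)(\tilde A_k-\tilde B_k)\Big)\,dt.$$
The left-hand side converges to $\cH_S(\gamma_A,\gamma_B)$ as $k\to\ii$ by the very definition of the relative entropy in \cite{LewSab-13}. On the right, $\tilde A_k\to A$, $\tilde B_k\to B$ in the strong resolvent sense (since $P_k\gamma_AP_k\to\gamma_A$ strongly and $(S'-z)^{-1}$ extends to a continuous bounded function on $[0,f(m)]$), and $\tilde A_k-\tilde B_k\to A-B$ in $\gS^1$, because $S'$ is $C^2$, hence $\gS^1$-operator Lipschitz, on the \emph{compact} interval $I$ and $P_k(\gamma_A-\gamma_B)P_k\to\gamma_A-\gamma_B$ in $\gS^1$; the same stability lemma and dominated convergence in $t$ then identify the limit of the right-hand side with the right-hand side of~\eqref{eq:rep-rel-ent}. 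This proves the identity for bounded $A,B$, and in particular $\cH_S(\gamma_A,\gamma_B)<+\ii$ there.

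Finally, applying this to $(A_\Lambda,B_\Lambda)$ and letting $\Lambda\to\ii$, the right-hand sides converge to $R$, the right-hand side of~\eqref{eq:rep-rel-ent} for $(A,B)$, so $\cH_S(\gamma_{A_\Lambda},\gamma_{B_\Lambda})\to R$; lower semicontinuity of $\cH_S$ under $\gamma_{A_\Lambda}\to\gamma_A$, $\gamma_{B_\Lambda}\to\gamma_B$ gives $\cH_S(\gamma_A,\gamma_B)\le R$, and it remains to prove the reverse inequality $\cH_S(\gamma_A,\gamma_B)\ge R$ (which will also give $\cH_S(\gamma_A,\gamma_B)<+\ii$). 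I expect this reverse inequality to be the main obstacle: in the unbounded case $S'$ blows up near $0$ while $\gamma_A,\gamma_B$ have $0$ in their essential spectrum, so the clean $\gS^1$-operator-Lipschitz argument of the bounded step is unavailable — equivalently, one must control $S'(P_k\gamma_AP_k)-S'(P_k\gamma_BP_k)-P_k(A-B)P_k$ in $\gS^1$, or show that $\cH_S(\gamma_{A_\Lambda},\gamma_{B_\Lambda})$ does not overshoot its limit. This is where the decay assumptions enter essentially: on the high part of the spectrum of $A$ and $B$ (where $\gamma$ is small and $\|S'(\gamma)\,\gamma\|=\sup|xf(x)|$ is small) the contributions must be shown negligible, their $\gS^1$-size being estimated against the trace-class norm of $A-B$ via the spectral-projection integral representation of $S'$ restricted to small values of $\gamma$.
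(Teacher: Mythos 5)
Your reduction to bounded operators and then to finite-dimensional compressions is sensible as far as it goes, but the argument is incomplete exactly at the point you flag yourself: after removing the cutoff $\Lambda\to\ii$ you only obtain $\cH_S(\gamma_A,\gamma_B)\le R$ from lower semicontinuity, and the reverse inequality is left as an expectation rather than proved. This is not a peripheral technicality, it is the actual content of the infinite-dimensional statement: when $A$ is unbounded the spectrum of $\gamma_A=(S')^{-1}(A)$ accumulates at $0$, where $S'$ blows up, so the operator-Lipschitz/compression mechanism of your bounded step is unavailable there, and limits of relative entropies can in general drop. The paper closes the argument by a route that bypasses any semicontinuity issue. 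Setting $g(x)=S((S')^{-1}(x))-x(S')^{-1}(x)$, so that $g'=-(S')^{-1}$, one has the algebraic identities $S(\gamma_A)-S(\gamma_B)=g(A)-g(B)+A(S')^{-1}(A)-B(S')^{-1}(B)$ and $S'(\gamma_B)(\gamma_A-\gamma_B)=(B-A)(S')^{-1}(A)+A(S')^{-1}(A)-B(S')^{-1}(B)$. Peller's theorem \cite{Peller-85}, applied to $(S')^{-1}$, $x(S')^{-1}(x)$ and $g$ (all bounded with bounded second derivatives thanks to the decay hypotheses, after a harmless extension below $m$), shows that $\gamma_A-\gamma_B$, $S(\gamma_A)-S(\gamma_B)$ and $S'(\gamma_B)(\gamma_A-\gamma_B)$ are each trace class; then \cite[Thm.~4]{DeuHaiSei-15} identifies $\cH_S(\gamma_A,\gamma_B)$ exactly with $-\tr\big(g(A)-g(B)\big)-\tr\big((S')^{-1}(A)(A-B)\big)$, with no limiting inequality left to reverse. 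After that, only the identity $\tr\big(g(A)-g(B)\big)=\int_0^1\tr\big(g'(tA+(1-t)B)(A-B)\big)\,dt$ for the single smooth bounded function $g$ remains, and it is obtained by finite-rank compressions with ranges in $D(A)$, the Helffer--Sj\"ostrand formula for the $\gS^1$-convergence of $\phi(A_n)-\phi(B_n)$ with $\phi\in C^\ii_0$, and Peller's estimate to pass from $\phi$ to $g$. This algebraic decomposition, which converts the singular quantity $S'(\gamma_B)(\gamma_A-\gamma_B)$ into differences of functions of $A$ and $B$ that are tame at infinity, is the missing idea in your proposal.

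A secondary weak point: in your bounded step, the claim that $S'(P_k\gamma_AP_k)-S'(P_k\gamma_BP_k)\to A-B$ in $\gS^1$ does not follow from $\gS^1$-operator Lipschitz continuity of $S'$ on a compact interval alone, since $P_k\gamma_AP_k\to\gamma_A$ only strongly and not in trace norm; one needs a resolvent (Helffer--Sj\"ostrand) argument of the same kind the paper uses, exploiting that the difference $P_k(\gamma_A-\gamma_B)P_k$ converges in $\gS^1$ while the resolvents converge strongly. This can be repaired, but even granting it, the unproved reverse inequality at $\Lambda\to\ii$ means the proposal does not establish the Proposition.
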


\begin{proof}
 We will first use the result of Peller \cite{Peller-85} (see also \cite{FraPus-15}) that says that if $C$ and $D$ are self-adjoint operators with $C-D\in\gS^1$ and if $g:\R\to\R$ belongs to the Besov space $B^1_{\ii,1}$ (which is for instance the case if $g$ and $g''$ are bounded), then $g(C)-g(D)\in\gS^1$ and one has the bound
 \begin{equation}\label{eq:peller}
    \|g(C)-g(D)\|_{\gS^1}\le c\|g\|_{B^1_{\ii,1}}\|C-D\|_{\gS^1},
 \end{equation}
 for some $c>0$ independent of $g$, $C$, and $D$. Under our assumptions on $(S')^{-1}$, we deduce from this result that
 $$\gamma_A-\gamma_B=(S')^{-1}(A)-(S')^{-1}(B)\in\gS^1,\ A(S')^{-1}(A)-B(S')^{-1}(B)\in\gS^1$$
 since $A-B\in\gS^1$. Notice that the functions $(S')^{-1}$ and $x(S')^{-1}(x)$ may not be defined on the whole line $\R$; however since $A,B\ge m>M_-$, we may replace extend these functions on $(-\ii,m)$ in order to obtain a function defined on $\R$, which is for instance smooth and compactly supported on $(-\ii,m)$. Now we also have that
 $$S(\gamma_A)-S(\gamma_B)=g(A)-g(B)+A(S')^{-1}(A)-B(S')^{-1}(B),$$
 with $g(x)=S((S')^{-1}(x))-x(S')^{-1}(x)$. The function $g$ is clearly bounded since $S$ and $x(S')^{-1}$ are bounded, and one may check that $g'(x)=-(S')^{-1}$, hence $g''$ is also bounded. By Peller's result, we deduce that $S(\gamma_A)-S(\gamma_B)\in\gS^1$. Finally, using that 
 \begin{align*}
    S'(\gamma_B)(\gamma_A-\gamma_B) &= B((S')^{-1}(A)-(S')^{-1}(B)) \\
    &=(B-A)(S')^{-1}(A)+A(S')^{-1}A-B(S')^{-1}(B),
 \end{align*}
 we deduce that $S'(\gamma_B)(\gamma_A-\gamma_B)\in\gS^1$ since $(S')^{-1}(A)$ is bounded. We thus have proved that $S(\gamma_A)-S(\gamma_B)$, $\gamma_A-\gamma_B$, and $S'(\gamma_B)(\gamma_A-\gamma_B)$ are all trace-class, so that by \cite[Thm. 4]{DeuHaiSei-15} we have
 $$\cH_S(\gamma_A,\gamma_B)=-\tr(S(\gamma_A)-S(\gamma_B)-S'(\gamma_B)(\gamma_A-\gamma_B)).$$
 With the same function $g$ as before, we deduce that 
 $$\cH_S(\gamma_A,\gamma_B)=-\tr(g(A)-g(B))-\tr(S')^{-1}(A)(A-B).$$
 It remains to justify the formula
 $$-\tr(g(A)-g(B))=-\int_0^1\tr g'(tA+(1-t)B)(A-B)\,dt,$$
 which holds in finite dimensions since we may separate the traces. To extend it to infinite dimensions, we apply it to $A_n:=P_nAP_n$ and $B_n=P_nBP_n$, where $(P_n)$ is a sequence of finite-rank orthogonal projections such that $P_n\to1$ strongly as $n\to\ii$ and such that $\ran P_n\subset D(A)$. Since $A-B$ is trace-class, we deduce that $D(A)$ is a core for $B$ and thus $(tA_n+(1-t)B_n)\phi\to (tA+(1-t)B)\phi$ for all $\phi\in D(A)$. By \cite[Thm. VIII.20 \& VIII.25]{ReeSim1}, this implies that $g'(tA_n+(1-t)B_n)\to g'(tA+(1-t)B)$ strongly as $n\to\ii$. Since $A-B\in\gS^1$, we deduce by dominated convergence that 
 $$\int_0^1\tr g'(tA_n+(1-t)B_n)(A_n-B_n)\,dt\to \int_0^1\tr g'(tA+(1-t)B)(A-B)\,dt$$
 as $n\to\ii$. Now we treat the left-side. First, by Peller's estimate \eqref{eq:peller}, we only have to prove that 
 $$\tr \phi(A_n)-\phi(B_n)\to \tr \phi(A)-\phi(B)$$
 for $\phi\in C^\ii_0(\R)$, approaching $g$ in $B^1_{\ii,1}$ by such a $\phi$. For such a nice $\phi$, one may use the Helffer-Sj\"ostrand formula to prove that $\phi(A_n)-\phi(B_n)\to\phi(A)-\phi(B)$ in $\gS^1$ since $A_n-B_n\to A-B$ as $n\to\ii$ in $\gS^1$, so that for all $z\in\C\setminus\R$, 
 $$\frac{1}{z-A_n}-\frac{1}{z-B_n}=\frac{1}{z-A_n}(B_n-A_n)\frac{1}{z-B_n}\to\frac{1}{z-A}(B-A)\frac{1}{z-B}$$
 as $n\to\ii$ in $\gS^1$, together with the uniform estimate
 $$\left\|\frac{1}{z-A_n}(B_n-A_n)\frac{1}{z-B_n}\right\|_{\gS^1}\le\frac{C}{|\text{Im}(z)|^2}.$$
\end{proof}

\begin{remark}
 Our assumptions on $(S')^{-1}$ include the physical cases of boltzons, bosons, and fermions since $(S')^{-1}$ is smooth and decays exponentially in these cases. They also include power like entropies of the form $S(x)=x^m-ax$ with $\max(0,1-2/d)<m<1$, because $(S')^{-1}(y)$ behaves like $y^{-1/(1-m)}$ for $y\to+\ii$, with $1/(1-m)>1$.
\hfill$\diamond$\end{remark}

\end{document}